\renewcommand{\algorithmiccomment}[1]{\bgroup\hfill\tiny//~#1\egroup}
\declaretheorem[numberwithin=section]{theorem}
\declaretheorem[sibling=theorem]{lemma}
\declaretheorem[sibling=theorem]{corollary}
\declaretheorem[sibling=theorem]{remark}
\declaretheorem[sibling=theorem]{definition}
\newcommand{\para}[1]{\paragraph{#1}} 
\newcommand{\paras}[1]{\textit{#1}}
\newcommand{\R}{\mathbb{R}}
\newcommand{\Z}{\mathbb{Z}}
\newcommand{\N}{\mathbb{N}}
\newcommand{\Vol}{\textsc{vol}}
\newcommand{\Ex}{\mathbb{E}}
\newcommand{\x}[1]{\ensuremath{\mathsf{x}_{#1}}}
\newcommand{\w}[1]{\ensuremath{\mathsf{w}_{#1}}}
\newcommand{\wmin}{\ensuremath{\mathsf{w}_{\min}}}
\def\argmax{\operatornamewithlimits{argmax}}
\newcommand{\Space}{\ensuremath{\mathds{T}^d}}
\newcommand{\Pois}{\textsc{Pois}}
\newcommand{\probSpace}{\ensuremath{V}}
\renewcommand{\epsilon}{\ensuremath{\varepsilon}}
\newcommand{\eps}{\ensuremath{\varepsilon}}
\newcommand{\temporary}[1]{}
\newcommand{\pushright}[1]{\ifmeasuring@#1\else\omit\hfill$\displaystyle#1$\fi\ignorespaces}
\newcommand{\pushleft}[1]{\ifmeasuring@#1\else\omit$\displaystyle#1$\hfill\fi\ignorespaces}
\begin{document}

\title{Greedy Routing and the Algorithmic Small-World Phenomenon}
\author{Karl Bringmann\thanks{Max-Planck-Institute for Informatics, Saarbr\"ucken, Germany, \texttt{kbringma@mpi-inf.mpg.de}} \and Ralph Keusch\thanks{Department of Computer Science, ETH Zurich, Switzerland, \texttt{rkeusch@inf.ethz.ch}} \and Johannes Lengler\thanks{Department of Computer Science, ETH Zurich, Switzerland, \texttt{lenglerj@inf.ethz.ch}} \and Yannic Maus\thanks{Department of Computer Science, University of Freiburg, Germany, \texttt{yannic.maus@cs.uni-freiburg.de}} \and Anisur R. Molla\thanks{Department of Computer Science, University of Freiburg, Germany, \texttt{armolla@cs.uni-freiburg.de}}}
\maketitle

\medskip
\begin{abstract}

The algorithmic small-world phenomenon, empirically established by Milgram's letter forwarding experiments from the 60s~\cite{milgram1967small}, was theoretically explained by Kleinberg in 2000~\cite{KleinbergModel}. However, from today's perspective his model has several severe 
shortcomings 
that limit the applicability to real-world networks.
In order to give a more convincing explanation of the algorithmic small-world phenomenon, we study decentralized greedy routing in a more flexible random graph model (geometric inhomogeneous random graphs) which overcomes all previous shortcomings. 
Apart from exhibiting good properties in theory, it has also been extensively experimentally validated that this model reasonably captures real-world networks. 

In this model, the greedy routing protocol is purely distributed as each vertex only needs to know information about its direct neighbors. We prove that it succeeds with constant probability, and in case of success almost surely finds an almost shortest path of length $\Theta(\log\log n)$, where our bound is tight including the leading constant.
Moreover, we study natural local patching methods which augment greedy routing by backtracking and which do not require any global knowledge. We show that such methods can ensure success probability 1 in an asymptotically tight number of steps.

%

These results also address the question of Krioukov et al.~\cite{krioukov2007compact} whether there are efficient local routing protocols for the internet graph. There were promising experimental studies, but the question remained unsolved theoretically. Our results give for the first time a rigorous and analytical affirmative answer.

\end{abstract}
\pagenumbering{gobble}
\clearpage

\setcounter{page}{1}
\pagenumbering{arabic}

\newpage

\tableofcontents\label{tableofcontent}
\clearpage

\section{Introduction}\label{sec:introduction}

The idea that everyone is connected to everybody else through six degrees of separation, also known as the \emph{small world} phenomenon, was empirically established by the classic experiments of Milgram~\cite{milgram1967small,travers1969experimental}. Several random people were given a target person's name and address, and asked to send a letter to the target person by forwarding it to a personal acquaintance who is more likely to know the target person, iterating this process until the letter reached the target (or was lost). Among the more than 20\% successful trials, Milgram reported an average distance of about~6. This showed not only that very short paths exist between two typical persons, but also that people are able to find them without global knowledge of the network, by greedily routing the letter.

In the last 50 years, this phenomenon inspired researchers in various disciplines such as sociology, physics, and experimental and theoretical computer science to study networks with small-world properties. The first theoretical explanation of the algorithmic small-world phenomenon was the seminal work of Kleinberg~\cite{KleinbergModel} who showed that greedy routing succeeds in $O(\log^2 n)$ steps on certain random graphs. However, as we will discuss in Section~\ref{sec:choiceofmodel}, his model has several severe shortcomings, that have not been resolved simultaneously in subsequent work~\cite{franceschetti2006navigation,martel2004analyzing,
fraigniaud2006eclecticism,manku2004know,chaintreau2008networks}: routing succeeds only if a perfect lattice structure is contained in the network, the result is fragile with respect to small changes in the model, and the graph model is unrealistically homogeneous. 

In order to give a more cogent theoretical explanation for Milgram's result, we study greedy routing in a more flexible and more convincing graph model: In \emph{geometric inhomogeneous random graphs (GIRGs)} \cite{bringmann2015euclideanGIRG}, every vertex draws independently a random position in a geometric space and a positive weight according to a power law which represents the node's influence. Two nodes are then more likely to connect if they have larger weights and if they are geometrically close.\footnote{Note that the geometric coordinates may capture more than just real world coordinates, like occupation and interests~\cite{LibenNowellPNAS05}.}
We picked this model for the following reasons. 

\begin{enumerate}[(1)] \label{enum:reasonsgirg}
\item GIRGs are a \emph{natural extension of Kleinberg's model} that overcome its three main shortcomings. 

\item Among the random graph models that have been proposed to model real-world networks, the GIRG model stands out because it not only \emph{theoretically} reproduces structural properties of social, economical, and technological networks, but a special case of GIRGs (hyperbolic random graphs~\cite{BogunaPK10}) has also been extensively \emph{experimentally validated}~\cite{shavitt2008hyperbolic,BogunaPK10,zhao2011efficient,lehman2016experimental,papadopoulos2015network,garcia2016hidden}. E.g., Bogu\~n\'a et al.~\cite{BogunaPK10} computed and studied a (heuristic) maximum likelihood fit of the internet graph into the hyperbolic random graph model, based on ideas from~\cite{Kleinberg07}.

\item In general it is unclear how greedy routing generalizes to \emph{inhomogeneous} geometric graphs. An ideal neighbor should optimize two objectives: being close to the target, and having large weight. Many possibilities have been suggested to resolve this conflict (e.g., in the physics community~\cite{thadakamalla2007search,huang2014navigation}). Since GIRGs have a specific connection probability for each pair of vertices, there is a \emph{natural interpretation of Milgram's instruction} to route to an ``acquaintance who is more likely than you to know the target person''~\cite{travers1969experimental}: pick the neighbor most likely being adjacent to the target. Thus, Milgram's experiment has a natural analogon on GIRGs.

\item In the digital age, the area of routing protocols plays a more and more important role. Krioukov et al.~\cite{krioukov2007compact} identified as a major open problem in this area ``whether we can devise routing protocols for the internet that, having no full view of the network topology, can still efficiently route messages''.  
This question stimulated further research on the embeddings from~\cite{BogunaPK10,Kleinberg07}, and by now routing on such graphs is experimentally well-studied~\cite{papadopoulos2015network,blasius2016efficient,BogunaK09, BogunaKC09, CvetkovskiC09, KrioukovPBV09, PapadopoulosKBV10,KrioukovPKVB}. It turned out that greedy routing \emph{works experimentally surprisingly well} on hyperbolic random graphs and GIRGs~\cite{KrioukovPBV09,KrioukovPKVB, PapadopoulosKBV10}.  
For example, the resulting greedy paths are only marginally longer than shortest paths in the network. 
Explaining these experimental findings theoretically was a driving question for many researchers in the theory community~\cite{uelithesis,bringmann2015euclideanGIRG,bringmann2015generalGIRG,
FriedrichKrohmer15,kiwi2015bound,fountoulakis1,
DBLP:conf/waw/CandelleroF14,gugelmann2012random}, but it remained open. 
\end{enumerate}
\bigskip

We thus believe that studying greedy routing in GIRGs, in each step maximizing the connection probability to the target, gives a convincing theoretical explanation for Milgram's experiments, and addresses \emph{at the same time} the question of Krioukov et al. We obtain the following results for any fixed source $s$ and target $t$ (for exact statements see Section~\ref{sec:results}).

\begin{itemize}
\item Greedy routing \emph{succeeds} with probability $\Omega(1)$. 
\end{itemize}

This is already surprising, as incorporating noisy positions in Kleinberg's model results in a tiny success probability, as we discuss in Section~\ref{sec:choiceofmodel} below. The result is also best possible, as in GIRGs the target node is isolated with constant probability.

\begin{itemize}
\item A.a.s.\footnote{We say that an event holds a.a.s.\ if it holds with probability $1-o(1)$ with $n \to \infty$.}, the \emph{stretch}\footnote{The stretch is the ratio of the lengths of the greedy path and the shortest path between source and target.} of greedy routing in case of success is $1+o(1)$. Hence, greedy routing is asymptotically optimal. In particular, since the average distance in the giant component of the GIRG model is \emph{ultra-small}, i.e., $\Theta(\log \log n)$, so are the lengths of the greedy paths. 
\end{itemize}

Arguably, $\Theta(\log \log n)$ is closer to six degrees of separation than Kleinberg's $O(\log^2 n)$, and average distances in modern networks are even smaller~\cite{backstrom2012four,facebook2016three}. All previous analyses of greedy routing in small-world models yielded at least $\Omega(\log n)$ steps, e.g.~\cite{KleinbergModel,FraigniaudSTOC10}.

\begin{itemize}
\item Our results are \emph{robust} in the model parameters, i.e., they hold for all parameter choices of the model, in particular for any power law exponents in the natural range $(2,3)$, cf.\ Section~\ref{sec:choiceofmodel}. Moreover, we show that nodes do not need to know exact weights and positions of their neighbors, but rather \emph{rough approximations suffice} for all results.
\end{itemize}

Since constant success probability is too low for technical application, it is natural to ask whether there are patching methods that use backtracking to enforce success (if source and target are in the same component). We give an affirmative answer in a very strong sense:

\begin{itemize}
\item \emph{Every patching method} that satisfies three natural criteria has \emph{success probability 1} (conditional on $s$ and $t$ being in the same component), and still \emph{has a.a.s.\ stretch $1+o(1)$}. 
\end{itemize}

This result gives (in our model) a strongly affirmative answer to Krioukov et al.'s question. Crucially, each node only needs to know the positions and weights of its direct neighbors and the geometric position of $t$, which we assume to be part of the message. Note that the greedy routing protocol is distributed and highly energy efficient since only one node needs to be awake at a time. 
Perhaps surprisingly, the same can be achieved for the patching protocols, 
so we can at the same time guarantee success and maintain the benefits of local protocols, with negligible additional costs. 

Finally, our results explain many findings of the extensive experimental work which has been done on greedy routing in GIRGs. We postpone the discussion to Section~\ref{sec:experiments}.


\subsection{Related Work and Choice of Model}\label{sec:choiceofmodel}

\paragraph{Kleinberg.} Kleinberg's model starts from an $n \times n$ lattice graph, i.e., we embed node $v = (i,j) \in [n]\times [n]$ at position $\x v = (i/n,j/n)$ in the unit square and connect any two nodes $u,v$ in Manhattan distance $\|\x u - \x v\| \le 1/n$ by an edge. Then we add for each node $u$ a constant number of additional long-range edges, where the other endpoint $v$ is chosen proportional to $\|\x u - \x v\|^{-\alpha\cdot 2}$ with decay parameter $\alpha$. This model has several shortcomings that limit the applicability to real-world networks and that are also present in subsequent work~\cite{franceschetti2006navigation,martel2004analyzing,
fraigniaud2006eclecticism,manku2004know,chaintreau2008networks,
kleinberg2002small}.


\begin{itemize}
\item \textbf{Fragile Exponent:} As shown by Kleinberg~\cite{KleinbergModel}, his model is very fragile with respect to the decay parameter in  exponent of the long-range probability distribution, since changing this distribution to $\|\x u - \x v\|^{-\alpha\cdot 2}$ for any $\alpha \ne 1$ increases the expected number of steps to $n^{\Omega(1)}$. This is reasonable for $\alpha < 1$, since in this regime most neighbors of most vertices lie very far 
and thus we cannot expect greedy routing to utilize locality. However, for $\alpha > 1$ there is no intuitive argument ruling out greedy routing in general. 
In particular, investigations of trajectories of bank notes in main cites of the USA suggest a value of $\alpha = 1.59 \pm 0.02$~\cite{brockmann2006scaling}.

\item \textbf{Perfect Lattice:} Kleinberg's model assumes an underlying perfect lattice substrate. In this way, every vertex knows a priori a path to the target, which is an unrealistically strong assumption. In a more realistic model each vertex might choose a random position $\x v$ in some geometric space\footnote{The positions represent real-world coordinates, but also interests and occupation.}, followed by the same edge sampling procedure as in Kleinberg's model. 
It can be shown that in this adapted model with high probability greedy routing does not reach the target, essentially since in each step the current vertex has constant probability not to have any neighbor with smaller distance to the target. Thus, the unnatural assumption of a perfect lattice (or a similar globally known structure) is crucial for Kleinberg's result. 

\item \textbf{Homogeneity:} Kleinberg's graphs are homogeneous, i.e., all vertices have the same degree. 
In contrast, most real-world networks have been found to be \emph{scale-free}, meaning that their degree distribution follows a power law $p(k) \sim k^{-\beta}$, typically with power law exponent $\beta$ in the range $(2,3)$~\cite{dorogovtsev2002evolution,albert2002statistical} (or at least a closely related distribution~\cite{mitzenmacher2004brief}). 
\end{itemize}

Despite some effort to remove these shortcomings (see related work below) there is still no model which avoids all of them. Moreover, while the corresponding extensions of Kleinberg's model were developed in the context of routing protocols, GIRGs have been developed as a model for real-world networks. Thus there is a much larger body of evidence (theoretical and experimental) that GIRGs resemble real-world networks than there is for any other graph model in this area.

\paragraph{Geometric Inhomogeneous Random Graphs (GIRGs).} 

We now substantiate claims (1) and (2) on page~\pageref{enum:reasonsgirg}. Thereby, we further explain our reasons for picking the GIRG model.

\begin{enumerate}[(1)]
\item Let us first describe the GIRG model, which is inspired by the classic Chung-Lu random graphs~\cite{chung2002avg,Chung02}, and compare it with Kleinberg's model. In a GIRG, each node $v$ is equipped with a random position $\x v$ in an underlying geometry (for technical simplicity, we fix the $d$-dimensional torus) and with a \emph{weight} $\w v$ which is drawn according to a power law $p(w) \sim w^{-\beta}$ and which represents the (expected) degree of vertex $v$ and thus its ``connectedness''. After drawing the positions $\x v$ and weights $\w v$, all edges are picked independently. Consider vertices $u,v$ for which we already sampled the weights $\w u,\w v$ but not yet the positions $\x u, \x v$. If the marginal probability of $u,v$ forming an edge is $\Theta(\min\{\w u \w v / n, 1\})$ then the expected degree of vertex $v$ is $\Theta(\w v)$ (as in the classic Chung-Lu random graphs). Reintroducing the positions $\x u,\x v$ into this picture, a short calculation shows that if the probability of $u,v$ forming an edge is 
$$ p_{uv} = \Theta\bigg( \min\bigg\{ \bigg( \frac{\w u \w v}{n \cdot \|\x u - \x v\|^{d}} \bigg)^{\alpha} ,1 \bigg\} \bigg),$$
for some $\alpha >1$. then the marginal probability is indeed $\Theta(\min\{\w u \w v / n, 1\})$, implying that the expected degree of node $v$ is $\Theta(\w v)$. Moreover, note that in terms of distances, $p_{uv}$ is proportional to Kleinberg's distribution $\|\x u - \x v\|^{-\alpha \cdot d}$, where we want to choose $\alpha > 1$. In this sense, the GIRG model is a natural scale-free variant of Kleinberg's model that removes the implausible assumption of a perfect grid. As our results do not depend on the concrete choice of $\alpha$, we also overcome the problem of fragile exponents.

\item A large portion of the theoretical random graph literature of the last 10 years (e.g. \cite{spatialpreferred,jacob2013spatial,barabasi2012network, ferretti2014duality,BogunaPK10,fountoulakis1,
deijfen2013scale,bringmann2015generalGIRG,bringmann2015euclideanGIRG,
DBLP:conf/waw/CandelleroF14,gugelmann2012random,
FriedrichKrohmer15,kiwi2015bound,
heydenreich2016structures,deprez2015percolation,BollobasSR07,
bonato2010geometric,bradonjic2008structure}) deals with the design and analysis of random graphs that have more and more of the experimentally observed properties of social, technological, and other real-world networks, e.g., sparsity, scale-freeness, small diameter, and high clustering. 
Most of these modern models come equipped with an underlying geometry, since locality naturally yields high clustering and, more generally, local community structure. This makes them candidates for analyzing geometric routing or other variants of greedy routing.


Since there are no comparative studies, it is hard to judge which of the proposed models is most realistic. However, GIRGs and their special case, hyperbolic random graphs, stand out, because this model has not only attracted theoretical research but has also been experimentally validated. On the theory side, it was proven for GIRGs and hyperbolic random graphs that they are sparse and scale-free~\cite{gugelmann2012random,fountoulakis1,bringmann2015generalGIRG}, have constant clustering coefficient~\cite{DBLP:conf/waw/CandelleroF14,gugelmann2012random,bringmann2015euclideanGIRG}, a giant component and polylogarithmic diameter~\cite{FriedrichKrohmer15,kiwi2015bound,bringmann2015generalGIRG}, and average distance $\frac{2\pm o(1)}{|\log(\beta-2)|} \log \log n$, where $\beta$ is the power law exponent~\cite{bringmann2015generalGIRG}. Furthermore, these graphs have entropy $O(n)$~\cite{bringmann2015euclideanGIRG}, which coincides with the low entropy of the web graph~\cite{BoldiV03}. 
Recently, researchers have started to design algorithms that run fast on hyperbolic random graphs~\cite{friedrich2015cliques,blasius2016hyperbolic}. In terms of practical validation, Bogu\~n\'a et al.~\cite{BogunaPK10} computed a (heuristic) maximum likelihood fit of the internet graph into the hyperbolic random graph model and demonstrated its quality by showing that greedy routing in the underlying geometry of the fit finds near-optimal shortest paths. This successful embedding of a real-world network validates the model experimentally, and was repeated and extended in~\cite{papadopoulos2015network,blasius2016efficient}.

\end{enumerate}

\paragraph{Further Related Work.}
For psychological work related to Milgram's experiments we refer to the survey by Schnettler~\cite{schnettler2009structured}. 
The enormous amount of experimental and non-rigorous work on routing in real-world networks and random graphs is surveyed by Huang et al.~\cite{huang2014navigation}.
Regarding theoretical work, a monograph by Watts summarizes the early work~\cite{WattsMonograph}. Let us highlight a few more recent theory papers on greedy routing and related decentralized routing schemes: Watts et al.~\cite{watts2002identity} and Kleinberg~\cite{kleinberg2002small} studied hierarchical network models that are well motivated by sociological principles. Alternatively, Fraigniaud et al.~\cite{FraigniaudDist14} augmented Kleinberg's model with long-range edges according to a power law. These approaches make the model somewhat more realistic, however, they still suffer from similar shortcomings as Kleinberg's original model, e.g., they either assume a perfect $b$-ary tree structure or a perfect lattice. 
On the other hand, after a long line of research~\cite{LibenNowellPNAS05,fraigniaud2005greedy,LibenNowellESA06,duchon2006could,FraigniaudESA06,abraham2006object,FraigniaudESA2007} Fraigniaud et al.~\cite{FraigniaudSTOC10} replaced the lattice structure in Kleinberg's model with arbitrary connected base networks. They showed how these base networks can be augmented with long-range edges to make them navigable, which in the most general case means that variants of greedy routing succeed with a subpolynomial number of steps, provided that all nodes know the base network. Note that this available information is much less local than in our model, where each node only needs information about its direct neighbors. In particular, since every node knows in advance a path to the target, greedy routing succeeds with probability one. 
Arguably, all these networks were designed specifically for studying greedy routing, while our approach is rather to study greedy routing on a preexisting model for real-world networks. Also, none of the previous theoretical work showed ultra-small routing distance or considered patching strategies.

\paragraph{Organization of the paper.} \label{par:structureofpaper} In Section~\ref{sec:model} we formally introduce the model, in Section~\ref{sec:results} we list and explain our main results, and in Section~\ref{sec:experiments} we discuss related experimental results. In Section~\ref{sec:patching} we discuss the patching criteria (P1)-(P3) and related literature on patching algorithms, and give two simple examples for patching algorithms which satisfy (P1)-(P3). Section~\ref{sec:proofsketch} contains a high level discussion of the typical trajectory of greedy routing, and it also contains a sketch of the main ideas of our proofs. Section~\ref{sec:preliminaries} contains preparations for the proofs. Thereby, in Section~\ref{subsec:notation} we introduce the notation needed for the proofs. This section also contains a reference table for the notation used in the various proofs. In Section~\ref{subsec:mainlemma} we prove the main Lemma~\ref{lem:main}, which lies at the heart of all subsequent proofs, and finally we give the proofs of our main results in Sections~\ref{sec:analysis} (basic algorithm), \ref{sec:proofpatching} (patching), \ref{sec:relaxations} (relaxations), and~\ref{sec:hyperbolic} (hyperbolic random graphs). 


\section{Random Graph Model \& Greedy Routing} \label{sec:model}

\subsection{Geometric Inhomogeneous Random Graphs (GIRGs)} \label{subsec:girg}

In this paper we study greedy routing on Geometric Inhomogenous Random Graphs (GIRGs). 
Informally speaking, a GIRG is a graph $G= (V,E)$ where both the set of vertices $V$ and the set of edges $E$ are random. Informally speaking, each vertex $v$ has a random position $\x v$ in a geometric space $\Space$ and a random weight $\w v \in \R^+$, where the distribution of the weights follows a power law. For each pair of vertices we flip an independent coin to determine whether they are connected by an edge, where the probability to be connected increases with the weights and decreases with the distance of the vertices. We now give the formal definition of the model, following~\cite{koch2016bootstrap}.

\paragraph{Geometric space:}\label{sec:model-geoSpace}
Let $d \in \N$ be some (constant) parameter. As our \emph{geometric space} we consider the $d$-dimensional torus $\Space=\R^d / \Z^d$, which can be described as the $d$-dimensional cube $[0,1]^d$ where opposite boundaries are identified with each other. As distance function we use the $\infty$-norm on $\Space$, i.e., for $x,y \in \Space$ we define 
$$\|x-y\| := \max_{1 \le i \le d} \{\min\{|x_i-y_i|,1-|x_i-y_i|\}\}.$$
The choice of the geometric space $\Space$ is in the spirit of the classical random geometric graphs~\cite{penrose2003}. We prefer the torus to the hyper-cube for technical simplicity, as it yields symmetry. However, it is not hard to replace $\Space$ by $[0,1]^d$. Similarly, we may replace the maximum norm by any other norm and obtain the same model since we allow constant factor deviations in the definitions~\eqref{eq:puv} and~\eqref{eq:puv2} below.\footnote{Note that when embedding a real world network into the GIRG model, e.g., as in \cite{BogunaPK10}, the real world coordinates do not necessarily resemble the coordinates in the geometric space. It rather resembles an ambient space which additionally includes more parameters such as hobbies.}

Let the set of vertices V be given by a Poisson point process on $\Space$ with intensity $n$, cf.~\cite{streit2010poisson} for a formal definition of Poisson point processes.\footnote{A very similar model is obtained by placing $n$ points uniformly at random in $\Space$, e.g.~\cite{bringmann2015euclideanGIRG}. We prefer the Poisson point process since then the number of vertices in disjoint regions of $\Space$ is independent.} In particular, the expected number of vertices is $n$, the positions of all vertices are uniformly at random in $\Space$, and for each measurable $A \subset \Space$, the number of vertices in $A$ is Poisson distributed with mean $n\cdot \Vol(A)$. Moreover, if $A, B \subseteq \Space$ are disjoint measurable sets, then the number of vertices in $A$ and in $B$ are independent random variables.

\paragraph{Weights:}\label{sec:model-weights}
Let $2 < \beta < 3$ and $\wmin > 0$ be fixed constants. Then for each vertex $v\in V$ independently we draw a weight according to the probability density function

\[f(w) := \begin{cases} \;\Theta\big(\wmin^{\beta-1}w^{-\beta}\big), &w \ge \wmin\\ \;0, &w < \wmin .\end{cases}
\]

The density $f$ is chosen such that  every vertex takes a random weight according to a power-law with parameter $\beta$ and minimal weight $\wmin$, and the probability that a random vertex has weight at least $w$ is proportional to $w^{1-\beta}$. The choice of the power law parameter ($2 <\beta < 3$ ) is standard~\cite{dorogovtsev2002evolution}.

\paragraph{Edges:}\label{sec:model-edges}

Given the vertex set $V$ and all positions and weights, we connect vertices $u \ne v$ independently with probability $p_{uv}$, which depends on the weights $\w u,\w v$ and on the distance $\|\x u - \x v\|$. We require for some constant $\alpha > 1$ the following edge probability condition:
\begin{equation}\label{eq:puv} p_{uv} = \Theta\bigg( \min\bigg\{ \frac{1}{\|\x u - \x v\|^{\alpha d}} \cdot \Big( \frac{\w u \w v}{\wmin n}\Big)^{\alpha} ,1 \bigg\} \bigg). \tag{EP1}
\end{equation}
In particular, there exists a constant $c_1>0$ such that $p_{uv}=\Theta(1)$ if $u$ and $v$ are very close to each other, i.e., $\|\x u - \x v\|^d \le c_1\tfrac{\w u \w v}{\wmin n}$.

The parameter $\alpha$ determines how fast the connection probability decays with increasing distance of the nodes. In the \emph{threshold case} $\alpha = \infty$ the connection probably instantly drops to zero if the vertices are too far apart: instead of~\eqref{eq:puv} we require that there are constants $0 < c_1 \le c_2$ such that
\begin{equation}\label{eq:puv2}
 p_{uv} = \begin{cases} \Theta(1) & \text{if } \|\x u - \x v\|^d \le c_1\tfrac{\w u \w v}{\wmin n} \\ 0 & \text{if } \|\x u - \x v\|^d \ge c_2\tfrac{\w u \w v}{\wmin n}. \end{cases}  \tag{EP2}
 \end{equation}
Note that there can be an interval $[ (c_1\tfrac{\w u \w v}{\wmin n})^{1/d},  (c_2\tfrac{\w u \w v}{\wmin n})^{1/d}]$ for $\|\x u - \x v\|$ where the behaviour of $p_{uv}$ is arbitrary. In many settings the connection probability is $1$ if $u$ and $v$ are sufficiently close to each other. 

The formulas~\eqref{eq:puv} and~\eqref{eq:puv2} depend on the parameter $\wmin$ in such a way that the expected degree of a vertex of weight $w$ is $\Theta(w)$. Moreover, for two vertices $u$ and $v$ of fixed weights, but with random positions, the connection probability is $\Ex_{\x u, \x v}[p_{uv} \mid \w u, \w v] = \Theta(\min\{\tfrac{\w u \w v}{\wmin n},1\})$. For details see Section~\ref{sec:preliminaries}.\bigskip

Summarizing, the free parameters of the model are the intensity (i.e., the expected number of vertices) $n$, minimum weight $\wmin$, the power law parameter $2<\beta<3$, the dimension $d\in\N$, the decay parameter $\alpha>1$, and the actual probability functions $p_{uv}$ which satisfy $(EP1)$ (if $\alpha < \infty$) or $(EP2)$ (if $\alpha = \infty$). In all subsequent theorems and proofs, we assume that all parameters except $n$ are constants. In particular, the $O$-notation may hide any dependence on the parameters $d, \alpha, \beta$ and on the hidden constants in \eqref{eq:puv} and \eqref{eq:puv2}, but not on $\wmin$ as Theorem~\ref{thm:greedysuccess2} will link the failure probability of the routing protocol to $\wmin$. 

\subsection{Greedy routing}\label{subsec:routing}

\paragraph{Routing protocol:}\label{subsec:routing-routing-protocol}
We will evaluate the performance of greedy routing in a GIRG, which is the following process. A message should be sent from a starting vertex $s$ (source) to a target vertex $t$. The \emph{address} of every vertex $v$ is  the pair $(\x{v},\w{v})$. Every vertex has local information, i.e., it knows the address of itself and of its neighbors. In addition, the address of the target is written on the packet. Then the routing proceeds in rounds and in every hop, the packet is sent from the current vertex $v$ to a neighbor $u$ which maximizes a given objective function $\phi$, cf.~below. When a dead end is reached, the process stops and the packet is dropped. If $u=t$, then we say that the routing was \emph{successful}. A pseudocode description is given with Algorithm~\ref{alg:greedy} below. Note that we will analyze the process for arbitrary vertices $s$ and $t$, i.e., we may explicitly choose weights and positions of $s$ and $t$, while everything else of the graph is still drawn randomly.

\paragraph{Objective function:}It is natural and in spirit of Milgram's experiment~\cite{travers1969experimental} that in every round, the packet should be sent to a neighbor $v$ which maximizes $p_{vt}$. Note that if $\alpha<\infty$, maximizing $p_{vt}$ is equivalent to maximizing $\w v/(\wmin n \|\x v - \x t\|^d)$.\footnote{We keep the normalization factor $n$ since it will turn out to be the most natural normalization. In this way, $\phi(v)$ will always increase by the same exponent in the second phase of the algorithm, cf.\! Section~\ref{sec:proofsketch}. We use the additional $\wmin$-factor only for technical reasons, as it simplifies several calculations.} Hence, for our analysis we use the objective function
$$\phi(v) := \frac{\w v}{\wmin n \|\x v - \x t\|^d }$$
and observe that the target vertex $t$ globally maximizes $\phi$, which is a necessary condition for any reasonable objective function. In particular, if $\{s,t\} \in E$, then the algorithm will send the packet directly to the target.

We emphasize that each vertex only needs to know the positions and weights of its direct neighbors, and the geometric position of $t$ (which we assume to be part of the message). This goes in line with Milgram's experiment, where participants knew geometric positions and professions of their acquaintances and made their choices accordingly \cite{milgram1967small,travers1969experimental}.

\begin{algorithm}
\begin{algorithmic}[1]
\Function{GREEDY}{$s$, $m$} \Comment{$s$: source, $m$: message (contains target information $t$)}
  \If{$s == t$} deliver message
  \Else
    \State{$v \gets \argmax\{\phi(u) \mid  u \in \Gamma(s)\}$}
    \If{$\phi(v) > \phi(s)$} GREEDY($v$, $m$)
    \Else ~return \emph{failure}
    \EndIf 
  \EndIf
\EndFunction
\end{algorithmic}
\caption{Greedy Routing Algorithm}
\label{alg:greedy}

\end{algorithm}

\section{Results} \label{sec:results}
In this Section we list and explain all our main results. Unless stated otherwise, we assume throughout the paper that $s$ and $t$ are fixed, while the rest of the graph is drawn randomly. That is, an adversary may pick weights and positions of $s$ and $t$, while the remaining vertices and all edges are drawn randomly as described in Section~\ref{sec:model}.\footnote{Equivalently, we can first choose the random graph $G$, then draw $s$ and $t$ at random, and condition the resulting probability space on $\x s$, $\x t$, $\w s$, and $\w t$.} Then we consider greedy routing from $s$ to $t$. Proof sketches and intuitive reasons for the results can be found in Section~\ref{sec:proofsketch}. The formal proofs can be found in Section~\ref{sec:analysis} (basic algorithm), Section~\ref{sec:proofpatching} (patching), Section~\ref{sec:relaxations} (relaxations), and Section~\ref{sec:hyperbolic} (hyperbolic random graphs).

\paragraph{Success Probability.} As our first result we show that greedy routing from $s$ to $t$ succeeds with constant probability. The proofs of the following theorem is in Section~\ref{subsec:constantsuccess}.

\begin{theorem} \label{thm:greedysuccess1}
Greedy routing succeeds with probability $\Omega(1)$.
\end{theorem}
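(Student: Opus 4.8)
\emph{Overall plan.} The proof should peel off the one unavoidable constant loss---that $t$ may be isolated---and show everything else works out. First condition on the whole graph except the edges incident to $t$. Since $\phi(t)=+\infty$, greedy delivers the message as soon as $t$ becomes a neighbour of the current vertex; hence if $w_0=s,w_1,\dots$ is the now deterministic walk greedy would take with all $t$-edges deleted, ending in some $w_L$, then greedy succeeds in $G$ iff $\{w_i,t\}\in E$ for some $i\le L$. These are independent coins, and by \eqref{eq:puv}/\eqref{eq:puv2} there is a constant $c>0$ such that $\Pr[\{w_i,t\}\in E]=\Omega(1)$ whenever $\phi(w_i)\ge c$. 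Taking expectations,
\[
\Pr[\text{greedy succeeds}]\;\ge\;\Omega(1)\cdot\Pr\!\big[\text{the greedy walk from }s\text{ ever reaches a }v\text{ with }\phi(v)\ge c\big],
\]
so it remains to lower-bound the last probability by $\Omega(1)$.

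\emph{Forced progress.} I would analyse the walk through the monotone potential $\phi$, which strictly increases at every step (so no vertex repeats), starts at $\phi(s)$---possibly as small as $\Theta(1/n)$---and has to climb to a constant. The structural heart, which is exactly \lemref{main}, is that greedy is essentially \emph{forced} to make fast progress. While the current weight $\w v$ is small compared with $1/\phi(v)$, nothing within $v$'s connectivity radius is close to $t$, so the $\phi$-maximiser must be a much heavier vertex; a first-moment/power-law estimate shows the heaviest likely neighbour has weight roughly $\wmin(\w v/\wmin)^{1/(\beta-2)}$, which (as $\beta-2<1$) grows doubly exponentially and raises $\phi$ by roughly a power. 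Once $\w v$ is large compared with $1/\phi(v)$, the maximiser is instead forced to hop toward $t$, again multiplying $\phi$ by a definite factor. Either way $-\log\phi$ shrinks by a constant factor per step, so the walk reaches $\phi\ge c$ within $O(\log\log n)$ steps.

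\emph{Probability bookkeeping.} \lemref{main} should give that the predicted ``good'' $\phi$-maximising neighbour exists with failure probability $\exp(-\Omega(\w v^{\delta}))$ for some fixed $\delta>0$. For weight above a polylogarithmic threshold this is $n^{-\omega(1)}$, so a union bound over the $\Theta(n)$ vertices makes every heavy vertex behave well a.a.s.; and for weights between a large constant $W_1$ and that threshold, the failure probabilities summed along the (doubly-exponentially varying) weight sequence of the walk are dominated by $\exp(-\Omega(W_1^{\delta}))$. Hence once the walk is in the regime ``weight $\ge W_1$'', all such steps succeed with probability at least $1-\eta$, where $\eta$ is as small as we like by choosing $W_1$ large. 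Two $O(1)$-length boundary layers remain: the first few steps out of $s$ (pushing the weight from $\wmin$ up to $W_1$) and the last few steps near $t$ (greedy descending to a weight-$\Theta(\wmin)$ vertex within distance $\approx n^{-1/d}$ of $t$, which has $\phi\ge c$); each of these finitely many steps succeeds with probability $\Omega(1)$ by a direct first-moment/connection computation on the Poisson process. Multiplying the $O(1)$ constant factors, the $1-\eta$ factor for the bulk, and the $\Omega(1)$ factor for the final edge to $t$, the success probability is $\Omega(1)$.

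\emph{Main obstacle.} The crux is the ``forcing'' itself, especially near $t$: because greedy always moves to the global $\phi$-maximiser over its \emph{entire} neighbourhood, we cannot merely exhibit a good path and hope it is followed---this is precisely where inhomogeneous geometry is harder than Kleinberg's lattice, and where adding noisy positions to Kleinberg's model destroys navigability. One must rule out, simultaneously for all vertices, that some ``distracting'' vertex (an atypically heavy vertex in an unhelpful direction, or one that barely improves $\phi$ while leading away from $t$) captures the maximiser and stalls the walk at a local maximum of $\phi$ below $c$; showing that the $\phi$-maximiser nonetheless always makes quantitative progress in both weight and distance to $t$, with the stated small failure probability, is the real content of \lemref{main}, to which everything above reduces.
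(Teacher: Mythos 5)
Your outer move---conditioning on the graph $G'$ obtained by deleting $t$'s incident edges, observing that the greedy walk $(w_0,\dots,w_L)$ is then deterministic and that the events $\{w_i,t\}\in E$ are independent coins---is correct, and is a slightly cleaner way to isolate the final connection than what the paper does (the paper uncovers $G[V_{\le\phi_0}]$ first, then the vertices and edges of $V_{>\phi_0}$, and finishes with a three-vertex stepping-stone construction). Your reduction to the statement ``the walk in $G'$ reaches some $v$ with $\phi(v)\ge c$ with probability $\Omega(1)$'' is sound, and you correctly locate the bulk of the work in \lemref{main}.

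There are, however, two substantive gaps. First, the per-layer failure probability is not $\exp(-\Omega(\w v^{\delta}))$: a Chernoff bound of that quality controls only the \emph{existence} of a heavy good neighbour, while ruling out a ``distracting'' neighbour of small weight and unexpectedly large objective is done by Markov's inequality in \lemrefs{firstphase}{secondphase}, giving only $O(\w v^{-\Omega(1)})$. Consequently your proposed a.a.s.\ union bound over all $\Theta(n)$ heavy vertices does not go through; one must instead union-bound over the $O(\log\log n)$ layers, uncovering the graph in increasing order of objective as in \lemref{main}---this layered uncovering is exactly the device the paper introduces to avoid the dependencies that a naive ``sum along the walk's weight sequence'' runs into. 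Second, your end phase is underspecified and partly inaccurate. In the threshold case $\alpha=\infty$ the required threshold is $c=(c_1\w t)^{-1}$, which for $\w t$ near $\wmin$ exceeds $\phi_1(\eps_1)$, the largest $\phi_0$ admissible in \lemref{main}; \lemref{secondphase}~(ii) also stops applying once $\phi(v)>\phi_1(\eps_1)$, so arriving in $V_{>\phi_0}$ is not yet enough, and the ``repeat a first-moment step'' plan stalls exactly where you need it. The paper closes this with an explicit stepping stone: with constant probability there is a heavy vertex $v_3$ within distance $O((n\phi_0)^{-1/d})$ of $t$ which is adjacent to the walk's first vertex $v_1\in V_{>\phi_0}$, so the next step $v_2$ satisfies $\phi(v_2)\ge\phi(v_3)\ge (c_1\w t)^{-1}$, and only then is the independent coin $\{v_2,t\}$ tossed. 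This fits your conditioning framework neatly, but it has to be spelled out; and the image of greedy ``descending to a weight-$\Theta(\wmin)$ vertex within distance $\approx n^{-1/d}$ of $t$'' is not the mechanism by which the final edge appears.
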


In general, we cannot expect anything better than constant conection probability, since we only assumed a connection probability of $\Theta(1)$ in \eqref{eq:puv} and~\eqref{eq:puv2} even if two vertices are very close to each other. So we may end in a vertex which is extremely close to $t$, but which fails to actually connect to $t$. If we assume connection probability $1$ in such cases, then we get a better connection probabilty. More precisely, assume that the constant $c_1$ in \eqref{eq:puv} or \eqref{eq:puv2} is chosen such that the following condition holds.
\begin{equation}\label{eq:puvlarge}
 p_{uv} = 1 \qquad \text{if} \qquad \|\x{u}-\x{v}\|^d \le c_1\tfrac{\w{u}\w{v}}{\wmin n}.
\tag{EP3}
\end{equation}
Then we can strengthen Theorem~\ref{thm:greedysuccess1} by showing that the failure probability decays exponentially with $\wmin$.

\begin{theorem}\label{thm:greedysuccess2}
\leavevmode
If \eqref{eq:puvlarge} holds in addition to the model assumptions, then:
\begin{enumerate}[(i)]
\item Greedy routing succeeds with probability $1-O(e^{-\wmin^{\Omega(1)}})$.
\item If $\min\{\w s,\w t\}=\omega(1)$ then greedy routing succeeds
with probability $1-\min\{\w s, \w t\}^{-\Omega(1)}$. In particular, in this case greedy routing succeeds a.a.s..
\end{enumerate}
\end{theorem}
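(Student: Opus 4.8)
The plan is to re-run the analysis behind \thmref{greedysuccess1} while tracking how each failure event depends on $\wmin$ (for part~(i)) and on $\min\{\w s,\w t\}$ (for part~(ii)), exploiting that under~\eqref{eq:puvlarge} two sufficiently close vertices are \emph{always} adjacent. Recall the generic shape of a successful run: an \emph{ascending phase} in which the weight of the current vertex grows doubly-exponentially (roughly $w\mapsto w^{1/(\beta-2)}$ per hop) until a high-weight core is reached, followed by a \emph{descending phase} in which $\phi$ increases by a fixed exponent per hop, so that the path approaches $\x t$ and eventually lands in the ball around $\x t$ on which~\eqref{eq:puvlarge} forces adjacency to $t$ (and then $\phi(t)=\infty$ makes greedy deliver the message). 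A run can fail only if (a)~the ascent stalls --- in particular if $s$ has no neighbour of larger potential; (b)~the descent stalls before reaching $t$'s connection ball; or (c)~some global event fails (the core is not well connected, an intermediate weight level is under-populated, etc.). The crucial observation is that~\eqref{eq:puvlarge} removes the \emph{only} event in this list that has probability $\Theta(1)$ in the general model, namely that greedy reaches a vertex with $\x v$ extremely close to $\x t$ (so $\phi(v)$ is almost globally maximal) which nonetheless fails to connect to $t$ because~\eqref{eq:puv}/\eqref{eq:puv2} only guarantee $p_{uv}=\Theta(1)$, not $p_{uv}=1$, for close pairs.

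For part~(i): under~\eqref{eq:puvlarge} every vertex within distance $(c_1\w s/n)^{1/d}$ of $\x s$ is a neighbour of $s$, and the number of these is Poisson with mean $\Theta(\w s)\ge\Theta(\wmin)$; a constant fraction of them lie strictly closer to $\x t$ than $\x s$ and hence have larger potential, while the number of neighbours of $s$ of weight $\ge\w s^{(1-\eps)/(\beta-2)}$ (for a small constant $\eps>0$) is in expectation $\w s^{\Omega(1)}\ge\wmin^{\Omega(1)}$. By Poisson tail bounds the first ascending step stalls with probability $O(e^{-\wmin^{\Omega(1)}})$, and symmetrically the final descending step --- into a vertex that~\eqref{eq:puvlarge} forces to be adjacent to $t$ --- stalls with the same probability. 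Every later level of the trajectory operates at a doubly-exponentially larger current weight, so its stalling probability decays super-geometrically in the level index and the sum over all $\Theta(\log\log n)$ levels is dominated by the first level; the global events in~(c) can be chosen to fail with probability $o(1)$ in $n$, hence for large $n$ at most $O(e^{-\wmin^{\Omega(1)}})$. A union bound therefore gives total failure probability $O(e^{-\wmin^{\Omega(1)}})$, after shrinking the constant in the exponent.

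For part~(ii), assume $\min\{\w s,\w t\}=\omega(1)$. Rerunning the endpoint estimates with $\wmin$ replaced by $\w s$, resp.\ $\w t$, shows the ascent from $s$ and the descent into $t$ stall with probability $\min\{\w s,\w t\}^{-\omega(1)}$, which is negligible. What forces the bound to be only \emph{polynomially} small is a structural event near the endpoints --- e.g.\ that the local neighbourhood of $s$ or $t$ cannot be connected to the bulk/core along an increasing-potential path --- whose probability is polynomially small in the endpoint weight, $\min\{\w s,\w t\}^{-\Omega(1)}$, and which dominates the (super-polynomially small) contributions of all other events. Summing gives total failure $O(\min\{\w s,\w t\}^{-\Omega(1)})=o(1)$, so greedy succeeds a.a.s.

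The main obstacle is the audit of the \thmref{greedysuccess1} argument (in particular \lemref{main} and the lemmas built on it): one must verify that, once the single ``close-but-not-adjacent-to-$t$'' event killed by~\eqref{eq:puvlarge} is excluded, \emph{every} remaining bad event indeed has failure probability $e^{-\wmin^{\Omega(1)}}$, resp.\ $\min\{\w s,\w t\}^{-\Omega(1)}$, and that these bounds compose correctly along the $\Theta(\log\log n)$-hop trajectory. Pinning down the endpoint behaviour precisely --- separating the exponential bound of~(i) from the polynomial bound of~(ii), and identifying which structural event is responsible for the polynomial loss --- is the delicate step; the remainder is bookkeeping layered on top of the existing analysis.
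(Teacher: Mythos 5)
Your high-level picture is right: under \eqref{eq:puvlarge} the $\Theta(1)$ failure at the very end disappears, so the failure probability is governed by stalls in the trajectory, and part~(ii) does indeed come essentially for free from \lemref{main} (set $w_0:=\w s$, $\phi_0:=(c_1\w t)^{-1}$ and apply part~(vi); the resulting error is $O(\wmin^{\beta-2}\min\{\w s,\w t\}^{-\Omega(1)})$). But for part~(i) your plan has a genuine gap, and it is the central difficulty the paper has to work around.

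You only account for the event ``the current vertex has no good neighbour'' and bound it by a Poisson/Chernoff tail, $e^{-\wmin^{\Omega(1)}}$. The event you do not address is that the current vertex $v$ \emph{does} have a good (high-weight) neighbour, but also has a \emph{bad} neighbour of small weight whose objective beats all of them, so greedy turns off the intended trajectory. The layer analysis in \lemref{main} handles this via \lemref{firstphase}~(ii), which gives
$\Ex[|\Gamma(v)\cap V^-(v,\eps)|]=O(\wmin^{\beta-2}\w v^{-\Omega(\eps)})$,
and Markov's inequality. At the very first step, where $\w v=\Theta(\wmin)$, this is only $O(\wmin^{\beta-2-\Omega(\eps_1)})$ --- \emph{polynomial} in $\wmin$, and in fact not even guaranteed to be $o(1)$ because $\eps_1$ is a small fixed constant. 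Summing over layers does not rescue you, since the sum is dominated by exactly this first term. Your remark that the remaining global events ``can be chosen to fail with probability $o(1)$ in $n$, hence for large $n$ at most $O(e^{-\wmin^{\Omega(1)}})$'' conflates asymptotics in $n$ with asymptotics in $\wmin$; the bad-neighbour probability does not go to $0$ as $n\to\infty$ for a fixed $\wmin$.

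The missing idea is that during the early ascent the \emph{objective} is not merely $o(1)$, it is doubly exponentially small in $\wmin$ (one takes $\phi_0:=w_0^{-c'}$ with $w_0:=e^{\wmin^{\Omega(1)}}$), and the bound on bad neighbours can then be improved to $\Ex[|\Gamma(v)\cap V^-(v,\eps)|]=O(\phi(v)^{\Omega(1)}\w v^{O(1)})$ --- this is \lemref{smweights} --- which decays like $\phi_0^{\Omega(1)}=e^{-\wmin^{\Omega(1)}}$. This refinement is what Lemma~\ref{lem:expwminfirststeps} is built on, and the paper needs a matching end-phase Lemma~\ref{lem:expwminlaststeps} (with a different, multiplicative layer structure and a union bound over $\wmin^{O(1)}$ layers) to bridge from objective $\phi_0$ up to $(c_1\w t)^{-1}$; neither of these can be obtained by ``symmetry'' or by reusing \lemref{firstphase}~(ii) alone. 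Without \lemref{smweights} (or an equivalent exploitation of the smallness of $\phi$), your plan for part~(i) yields at best a polynomially small bound in $\wmin$, not the claimed exponential one.
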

We will prove this result in  Section~\ref{subsec:expsuccess}. The first part of the theorem is also almost-optimal. The degree of a vertex $v \in \probSpace$ is distributed as $\Pois(\Theta(\w v))$, as we will see in Lemma~\ref{lem:degvertex} below. Therefore a vertex $v$ of weight $\w v$ is isolated with probability $e^{-\Theta(\w v)}$. If for example $\w t = \Theta(\wmin)$, then the success probability cannot be higher than $1-e^{-\Theta(\wmin)}$. 

For both statements we remark that we only need~\eqref{eq:puvlarge} for the very last step. Even without~\eqref{eq:puvlarge}, with sufficiently high probability greedy routing finds a vertex $u$ such that $u$ and $t$ satisfy the precondition of~\eqref{eq:puvlarge}. However, without~\eqref{eq:puvlarge} the vertex $u$ only has probability $\Theta(1)$ to connect to $t$ directly, and there is a considerable chance that there is no better neighbor of $u$, i.e., that $u$ is a local optimum.\medskip
 

\paragraph{Stretch.} One of the most crucial efficiency measures for routing protocols is the stretch, i.e., the ratio of the routing paths length compared to the length of a shortest path. Clearly, routing can only succeed if the source $s$ and the target $t$ are in the same component of the graph. In \cite{bringmann2015generalGIRG} it was proven that a.a.s. a GIRG possesses a giant connected component of linear size in which the average distance is $\tfrac{2\pm o(1)}{|\log(\beta-2)|} \log \log n$. The next theorem shows that the number of steps of the greedy routing algorithm (either the target is found or the routing stops) is a.a.s.\! equal to shortest paths, up to a stretch of $1+o(1)$. The proof is given in Section~\ref{subsec:lengthofrouting}.

\begin{theorem}\label{thm:length} A.a.s., greedy routing stops after at most $\tfrac{2+o(1)}{|\log(\beta-2)|} \log \log n$ steps and either finds $t$ or ends in a dead end. More precisely, for every $f_0(n) = \omega(1)$ a.a.s.\! the stronger bound
\begin{equation}\label{eq:length2}
\tfrac{1+o(1)}{|\log(\beta-2)|} (\log \log_{\w s} \phi(s)^{-1} +  \log \log_{\w t} \phi(s)^{-1}) + O(f_0(n)).
\end{equation}
holds.
\end{theorem}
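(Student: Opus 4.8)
The plan is to follow the greedy trajectory $s=v_0,v_1,v_2,\dots$ and to track the potential $\phi(v_i)$, which by the definition of Algorithm~\ref{alg:greedy} is strictly increasing along the path and hence a sound progress measure. Everything runs on the main Lemma~\ref{lem:main}: applied to the (freshly exposed) neighbourhood of the current vertex $v_i$, it produces with high probability a neighbour realising a definite one-step gain, and since the trajectory has only $O(\log\log n)$ steps one can arrange that all these guarantees hold simultaneously a.a.s. Concretely I expect the lemma to give, in the ``ascending'' regime, a neighbour $v_{i+1}$ whose weight satisfies $\w{v_{i+1}}\gtrsim\w{v_i}^{1/(\beta-2)}$ while its distance to $t$ is still $\approx\|\x{v_i}-\x t\|$ (so that $\phi^{-1}$ is essentially raised to the power $\beta-2<1$), and a mirrored statement in the ``descending'' regime near $t$, where it is the distance to $t$ that contracts like $r\mapsto r^{1/(\beta-2)}$.

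I would split the path at the first step $\tau$ at which $\phi(v_\tau)$ crosses a threshold $\phi^\ast$ tuned to the turning point of the trajectory (roughly, where the current weight matches $\wmin n\|\x{v}-\x t\|^d$). For the ascending phase $v_0,\dots,v_\tau$, iterating the weight recursion gives $\log\w{v_i}\gtrsim(\tfrac{1}{\beta-2})^{i}\log\w s - E_i$ with $E_i$ a geometric sum of the per-step additive errors that is dominated by its first terms; the phase ends once the weight reaches the turning weight $W^\ast$, which is comparable to $\phi(s)^{-1}$ up to polynomial factors that disappear under a double logarithm. Solving for $\tau$ yields $\tau\le\frac{1+o(1)}{|\log(\beta-2)|}\log\log_{\w s}\phi(s)^{-1}+O(f_0(n))$ --- the source weight $\w s$ appearing as the base of the inner logarithm because that is literally where the weight recursion starts. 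The additive slack is genuinely needed because when $\w s$ is close to $\wmin$ the weights grow only additively at the very start, costing $O(1)$ steps in expectation that must be capped by an arbitrary $\omega(1)$ function for an a.a.s.\ statement; the slack also absorbs the degenerate cases (e.g.\ $\w s$ so large that $\phi(s)^{-1}=O(1)$ and the inner logarithm is meaningless) in which greedy routing terminates in $O(1)$ steps anyway.

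By the analogous analysis of the descending phase --- which I would read off from Lemma~\ref{lem:main} directly rather than via a formal symmetry, since greedy routing moves only towards $t$ --- the number of remaining steps is $\frac{1+o(1)}{|\log(\beta-2)|}\log\log_{\w t}\phi(s)^{-1}+O(f_0(n))$, now with $\w t$ as the base and the double-logarithm argument still $\phi(s)^{-1}$ because replacing it by the $t$-centred scale $\wmin n\|\x s-\x t\|^d/\w t$ changes the argument only by the factor $\w s/\w t$, which washes out under $\log\log$ (or, when it does not, puts us in a degenerate fast case covered by the slack). Once $\phi(v_i)\gtrsim1/\w t$ the target is a neighbour of $v_i$ with probability $\Theta(1)$, and since $t$ globally maximises $\phi$ the algorithm then steps to $t$; the $O(1)$-in-expectation event that this takes a few more steps, or that routing instead reaches a dead end here, is still within the claimed bound because the theorem only asserts that routing \emph{stops} --- at $t$ or at a dead end --- within that many steps. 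Summing the two phases and the slacks gives \eqref{eq:length2}; bounding $\log\log_{\w s}\phi(s)^{-1},\log\log_{\w t}\phi(s)^{-1}\le(1+o(1))\log\log n$ then gives the coarser $\tfrac{2+o(1)}{|\log(\beta-2)|}\log\log n$ estimate.

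The step I expect to be the main obstacle is the ``middle regime'', where $\phi^{-1}$ (equivalently, the current weight against the scale $n$) lies between a constant and $n^{o(1)}$: there a clean power-$(\beta-2)$ step is either unavailable or moves $\phi^{-1}$ only by a factor tending to $1$, so one has to argue separately --- and uniformly over all admissible $\w s,\w t,\x s,\x t$ --- that a.a.s.\ only $O(f_0(n))$ steps are spent there. Coupled with this is the careful bookkeeping needed to match the ascending and descending analyses across $\phi^\ast$ without double-counting a step and without letting the geometric error series accumulate at the handover, and the verification that the per-step failure probabilities of Lemma~\ref{lem:main} really are summable over the whole $\Theta(\log\log n)$-step trajectory.
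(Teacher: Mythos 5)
Your high-level picture of the trajectory is right --- two phases, the exponent $1/(\beta-2)$ governing a doubly-exponential march in weight and then in objective, the bases $\w s$ and $\w t$ of the inner logarithms, and the $O(f_0)$ slack absorbing the boundary phases. But the mechanism you describe for deriving it has a genuine gap, and it is exactly the gap that the paper's Main Lemma is designed to circumvent. You propose to apply Lemma~\ref{lem:main} ``to the (freshly exposed) neighbourhood of the current vertex $v_i$'' to extract a one-step gain, and then union-bound over the $O(\log\log n)$ steps of the trajectory. This fails: once you condition on the greedy path having reached $v_i$, the neighbourhood of $v_i$ is \emph{not} freshly exposed. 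Reaching $v_i$ means that each earlier $v_j$ had $v_{j+1}$ as its best neighbour, which is already information about edges out of every $v_j$ and about the non-existence of certain vertices (those that would have offered $v_j$ a better option). Conditioning on that event biases the distribution of the graph near $v_i$ in a way that is not controlled by the lemma's per-vertex expectation bounds, and the union bound you gesture at is exactly the naive step-by-step argument that Section~\ref{sec:proofsketch} calls ``impossible to handle directly.''

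The actual role of Lemma~\ref{lem:main} is different: it is a \emph{global} statement about the greedy path inside $G[V_{\le\phi_0}]$, proved by a layer decomposition (layers defined by weight ranges, then objective ranges) rather than by following the path vertex by vertex, precisely so that the ``first vertex of the path in a given layer'' is a function of only the already-uncovered layers and the good/bad neighbour counts can be taken with respect to the not-yet-uncovered rest. Applied once, with $w_0:=\max\{\log f_0(n),\w s\}$ and $\phi_0:=\min\{(c_1\w t)^{-1},f_0^{-1}\}$, its part~(v) already yields the full middle-phase length bound $\tfrac{1+o(1)}{|\log(\beta-2)|}\bigl(\log\log_{w_0}\phi(u_1)^{-1}+\log\log_{\phi_0^{-1}}\phi(u_1)^{-1}\bigr)+O(f_0)$, and part~(vi) supplies the handover vertex $u_\ell$ for the end game. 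The two boundary phases, which you correctly flag as the main obstacle, are then handled by two separate pieces you did not name: reaching the first vertex of weight $\ge w_0$ costs $O(w_0^4)=o(f_0)$ steps by Lemma~\ref{lem:patchingstart}, and once $\phi\ge\phi_0$, either there are only $O(f_0)$ vertices of that objective at all (Lemma~\ref{lem:verticesofhighobjective}) or each subsequent vertex independently connects to $t$ with probability $\Omega(1)$, so a.a.s.\ only $O(f_0)$ further steps occur. So the decomposition and the recursions you write down are correct in spirit, but the per-step exposure argument on which you hang the whole proof is unsound and must be replaced by the single application of the layer-based Main Lemma.
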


Note that Theorems~\ref{thm:greedysuccess1} and~\ref{thm:length} together imply that even if we condition on greedy routing being successful, then still a.a.s.\! greedy routing needs at most $\tfrac{2+o(1)}{|\log(\beta-2)|} \log \log n$ steps. Since this agrees with the average distance in the giant, it is optimal. In particular, for two vertices $s, t$ with random weights and positions, if greedy routing between $s$ and $t$ is successful then a.a.s.\! the stretch is $1+o(1)$.\footnote{Technically speaking, here we use the slightly stronger statement that a.a.s.\! a $1-o(1)$ fraction of all pairs of vertices in the giant have distance $\tfrac{2\pm o(1)}{|\log(\beta-2)|} \log \log n$. This statement is not explicitly formulated as a lemma in~\cite{bringmann2015generalGIRG}, but is in the proof of Theorem~5.9 in~\cite{bringmann2015generalGIRG}.}

In the most typical situation (e.g., if $s$ and $t$ are chosen randomly),~\eqref{eq:length2} evaluates to the simpler expression $\tfrac{2+o(1)}{|\log(\beta-2)|} \log \log n$ in Theorem~\ref{thm:length}. For example, this is the case if $s$ and $t$ have constant weights and are far apart from each other: $\w s, \w t = O(1)$, and $\|\x s-\x t\| = \Omega(1)$. However, if $s$ and $t$ have large weights or are close to each other, then~\eqref{eq:length2} gives a much better bound (but never a worse bound, since $\phi(s) = \Omega(1/n)$ is always true).



\paragraph{Patching.} As outlined in the introduction, greedy routing may have practical applications. For these applications, it is not acceptable that the algorithm simply fails if it enters a local maximum. Therefore, several patching methods have been proposed. We show that \emph{any} patching algorithm $A$ is efficient if it satisfies the following three basic conditions. For further discussions, in particular of the conditions, see Section~\ref{sec:patching}.
\begin{enumerate}[(P1)]
\item \emph{(Greedy choices.)} If $A$ is in some vertex $v$ and decides to visit an unexplored neighbor then it must choose the unexplored neighbor of largest objective. If $A$ visits a vertex $v$ for the first time, and $v$ has a neighbor of larger objective, then $A$ proceeds to the neighbor of $v$ with largest objective.\label{cond:patching}
\item \emph{(Poly-time exploration.)} If $A$ has explored $k$ vertices, and at least one of these vertices has an unexplored neighbor, then $A$ visits an unexplored vertex in $k^{O(1)}$ steps.
\item \emph{(Poly-time exhaustive search.)} Assume that $A$ visits a vertex $v$ that has larger objective than all previously visited vertices, and let $S$ be the connected component of $v$ in $G[V_{\geq \phi(v)}]$, i.e., $S$ is the set of all vertices that can be reached from $v$ without touching vertices of worse objective than $v$. Then $A$ visits the vertices in $S$ in the next $|S|^{O(1)}$ steps (or finds $t$ in one of these steps).\\
If $\phi(s) = \Omega(1)$ then there exists a sufficiently slowly falling $\phi_0 = o(1)$ such that $A$ explores the connected component $S$ of $s$ in $G[V_{\geq \phi_0}]$ in $|S|^{O(1)}$ steps.
\end{enumerate} 

Condition (P2) ensures that greedy routing is always successful if source $s$ and target $t$ are in the same component. The surprising result is that (P1)-(P3) are already sufficient to ensure \emph{efficient} greedy routing, i.e., a.a.s.\! the stretch is $1+o(1)$. The proof can be found in Section~\ref{sec:proofpatching}.

\begin{theorem}\label{thm:patching}
Assume that $A$ is a routing algorithm that satisfies (P1)-(P3). If $s$ and $t$ are in the same component\footnote{i.e., the a.a.s.~statement holds conditioned on the event that $s$ and $t$ are both in the same component.} then $A$ always successfully routes from $s$ to $t$, and a.a.s.\! this takes at most 
$\tfrac{2+o(1)}{|\log(\beta-2)|} \log \log n$
steps.\footnote{In fact, a.a.s.\ the precise number of steps is again given by expression \eqref{eq:length2}. For the sake of readability, we omit the proof of the more precise bound and restrict ourselves to the bound as claimed in the theorem.}
\end{theorem}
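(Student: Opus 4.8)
The successful-delivery part is the easy one and needs only~(P2): as long as the set of vertices explored so far has a nonempty frontier, $A$ visits a previously unexplored vertex within a polynomial number of steps, and every explored vertex lies in the component of~$s$. Since that component is finite and (by assumption) contains~$t$, after finitely many steps $t$ itself is explored, i.e.\ $A$ routes successfully. Hence the whole content of the theorem is the bound on the number of steps, which I would establish a.a.s.\ conditioned on $s$ and $t$ being in the same component.

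The first step is to observe that the sequence of \emph{record vertices} of the run --- the vertices whose objective strictly exceeds that of every earlier visited vertex --- is exactly a maximal greedy walk $s=g_0,g_1,g_2,\dots$. Indeed, by the second half of~(P1), when $A$ first reaches a record vertex $g_i$ that has a neighbour of larger objective, it moves to the neighbour $g_{i+1}$ of $g_i$ of maximum objective; and $g_{i+1}$, having larger objective than the current record, has never been visited, so it is unexplored and becomes the next record. Consequently the ``uphill'' part of $A$'s trajectory is literally a greedy walk, and the analysis behind \thmref{length} (in particular \lemref{main}) applies to it unchanged: a.a.s.\ any maximal greedy walk has length $\tfrac{2+o(1)}{|\log(\beta-2)|}\log\log n$, and --- the structural consequence I would isolate from that analysis --- it can only end in a local maximum $v\neq t$ when $v$ is already ``at $t$'' in the sense that $\|\x v-\x t\|^d \le c_1\w v\w t/(\wmin n)$, which forces $\phi(v)=\Omega(1)$.

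It then remains to control the overhead each time $A$ reaches such a stuck record $v$. By~(P3) $A$ exhausts $S(v)$, the component of $v$ in $G[V_{\ge\phi(v)}]$, in $|S(v)|^{O(1)}$ steps (and finds $t$ if $t\in S(v)$); otherwise, since $t$ is still unexplored and reachable, (P2) forces $A$ to keep visiting new vertices --- all of objective $<\phi(v)$ until the next record --- and by~(P1) each ascending stretch is again a greedy walk, so $A$ re-climbs to a new, strictly higher record, and the situation recurs. I would bottle all of this into one estimate: $\Ex[\,|V_{\ge c}|\,]=O(1/c)$ for every constant $c>0$ by a routine first-moment computation (the integral $\int n\cdot\frac{w}{c\wmin n}\cdot\wmin^{\beta-1}w^{-\beta}\,dw$ converges because $\beta>2$), so by Poisson concentration a.a.s.\ $|V_{\ge c}|\le g(n)$ for a suitable $g(n)=\omega(1)$ with $g(n)^{O(1)}=o(\log\log n)$. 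Hence a.a.s.\ there are at most $g(n)$ possible stuck records, each with $|S(v)|\le g(n)$, so all exhaustive searches together cost $g(n)^{O(1)}=o(\log\log n)$; a similar argument bounds the length of each re-climb and the~(P2) search time between explorations, using that throughout these detours $A$ has explored only $O(\log\log n)+g(n)$ vertices. Adding the greedy part and the detours gives $\tfrac{2+o(1)}{|\log(\beta-2)|}\log\log n$, as claimed, and the case $\phi(s)=\Omega(1)$ is treated the same way, the last clause of~(P3) providing a cheap initial exploration around~$s$.

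\textbf{Main obstacle.}
The delicate point is the detour analysis of the last paragraph: one must show simultaneously, and a.a.s., that greedy walks get stuck only in the $\Omega(1)$-objective ``shell'' around $t$, that this shell is negligibly small, that the number of re-climbs is small and each re-climb short, and that the polynomial blow-ups permitted by~(P2) and~(P3) never accumulate past $o(\log\log n)$. This is more about careful bookkeeping of how~(P1)--(P3) interact with the geometry near $t$ than about any single clean inequality, and it is where the bulk of the work lies.
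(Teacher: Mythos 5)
Your observation that the record vertices of $A$'s trajectory form greedy walk segments governed by Lemma~\ref{lem:main} is correct, and it is the same middle-phase mechanism the paper uses (running $A$ on $G[V_{\le\phi_0}]$ and invoking the main lemma). The ``structural consequence'' you then extract, however --- that a.a.s.\ a maximal greedy walk can only get stuck at a local maximum $v\neq t$ satisfying $\phi(v)=\Omega(1)$ --- is false, and this is exactly the gap the paper spends most of its effort filling. Lemma~\ref{lem:main} bounds the failure probability of the greedy walk by $O(\wmin^{\beta-2}M^{-\Omega(1)})$ with $M=\min\{w_0,\phi_0^{-1}\}$, which is $o(1)$ only when the walk is launched from a vertex of weight $\omega(1)$. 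Launched from a constant-weight vertex --- $s$ itself, or any early record --- the greedy walk gets stuck with \emph{constant} probability, possibly already at $s$, which typically has very small objective; Theorem~\ref{thm:greedysuccess1}'s $\Omega(1)$ (rather than $1-o(1)$) success guarantee is precisely this phenomenon. So your estimate $|V_{\ge c}|=O(1/c)$ controls only the end-of-trajectory shell near $t$ and says nothing about the potentially large set of low-objective vertices $A$ must visit between getting stuck early and climbing to the next record; neither (P2) alone nor your $g(n)$-bookkeeping bounds that detour.

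The paper closes the early-phase gap via Lemma~\ref{lem:patchingstart}: a grid/bulk argument shows that, while the exploration stays confined to a small ball around $s$, visiting $O(w_0^4)$ \emph{distinct} vertices a.a.s.\ forces an encounter with a vertex of weight $\ge w_0=\omega(1)$, whereupon (P1) makes $A$ move there. Combined with (P2), the initial phase costs only $w_0^{O(1)}=o(\log\log n)$ steps, and only then does Lemma~\ref{lem:main} apply with vanishing failure probability. Your end-phase sketch also elides an essential step: (P3) guarantees that $A$ exhausts the component $S$ of the current record in $G[V_{\ge\phi(v)}]$, but you never establish that $t\in S$. The paper needs Lemma~\ref{lem:patchingend} (the bulk argument run from $t$'s side) together with Corollary~\ref{cor:subcore} to join $t$, inside $G[V_{>\phi(u_j)}]$, to a carefully pinned record $u_j$ whose objective lies in a controlled range (using Lemma~\ref{lem:nojumps}); the exhaustive search then terminates on $t$ because $t$ is genuinely in $S$. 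Without these two geometric ingredients, the ``bookkeeping'' you flag as your main obstacle cannot actually be closed.
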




\paragraph{Relaxations.}\label{subsec:routing-objective-function} So far we assumed that the objective $\phi$ can be computed exactly, which may be unrealistic in practice. For example, in the Milgram experiment the participants can only estimate the objective values of their neighbors. So we also study an approximate version of greedy routing, where the greedy algorithm uses an objective function $\tilde{\phi}$ which is only an approximation of $\phi$. We allow that $\tilde{\phi}$ deviates from $\phi$ by a bounded constant factor, or even more. This adds quite some \emph{flexibility}: for any of the best $\min\{\w v, \phi(v)^{-1}\}^{o(1)}$ neighbors of a vertex $v$, there is a ``good enough'' approximation $\tilde{\phi}$ which puts this vertex to the top. In particular, it is not necessary to compute the optimal neighbor w.r.t.~$\phi$, which may be costly. Moreover, the routing process is also \emph{robust}: for example, it is no problem if some of the edges fail during execution of the routing, since the current vertex can send the message to any other good neighbor instead.


\begin{theorem}\label{thm:relaxations}
Let $\tilde{\phi}: \probSpace \rightarrow \R$ be a function which is maximized with $t$ such that 
\begin{align}\label{eq:relaxation}
\tilde{\phi}(v) = \Theta\left(\phi(v) \cdot \min\{\w v, \phi(v)^{-1}\}^{o(1)}\right),
\end{align}
holds for all $v \in \probSpace$. Then Theorems~\ref{thm:greedysuccess1},~\ref{thm:greedysuccess2},~\ref{thm:length}, and~\ref{thm:patching} also hold for greedy routing with respect to $\tilde{\phi}$ instead of $\phi$. 

Moreover, let $\delta >0$ be a small constant. Then for Theorems~\ref{thm:greedysuccess1},~\ref{thm:greedysuccess2}, and~\ref{thm:length} the previous statement is still true if for vertices $v \in \probSpace$ with $\phi(v) \ge O(\w t^{-1+\delta})$ we relax Condition~\eqref{eq:relaxation} to the weaker condition
\begin{equation}\label{eq:weakrelaxation}
\tilde{\phi}(v) = \Omega\left(\w t^{-1+\delta/2}\right).
\end{equation}

Finally, for transferring Theorem~\ref{thm:patching} this weaker condition is also sufficient if $t$ is a random vertex.
\end{theorem}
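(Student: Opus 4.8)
The plan is to revisit the proofs of Theorems~\ref{thm:greedysuccess1}--\ref{thm:patching} and observe that the objective function is used only through a short list of structural facts, each of which survives a multiplicative perturbation of size $n^{o(1)}$. Concretely, $\phi$ enters the analysis in four ways: (a) at each hop greedy moves to a neighbor of \emph{strictly} larger objective and stops otherwise, and $t$ is the global maximizer (so $\{v,t\}\in E$ forces a direct step to $t$); (b) Lemma~\ref{lem:main} and the phase analysis guarantee that along the greedy path the objective grows \emph{multiplicatively} per step by a factor that is eventually super-constant and in fact polynomial in the relevant scale, which is what drives the $\Theta(\log\log n)$ step count of Theorem~\ref{thm:length}; (c) a vertex visited by greedy has weight controlled by its objective (bounding the number of competing neighbors); and (d) the sub-level structure $G[V_{\geq\phi(v)}]$ appearing in (P3) of Theorem~\ref{thm:patching}. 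I would go through Sections~\ref{sec:analysis} and~\ref{sec:proofpatching} and re-derive each of these for $\tilde\phi$. The unifying observation is that, since on the torus $\phi(v)^{-1}=O(n)$ and a.a.s.\ all weights are $n^{o(1)}\cdot\max\{\wmin,1\}^{}$-polynomially bounded, the correction factor $\rho(v):=\min\{\w v,\phi(v)^{-1}\}^{o(1)}$ satisfies $\rho(v)=n^{o(1)}$ uniformly and, more sharply, $\log\rho(v)=o(1)\cdot\min\{\log\w v,\log\phi(v)^{-1}\}$.

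For the strong relaxation \eqref{eq:relaxation}, the key point is that perturbing $\phi$ to $\tilde\phi=\Theta(\phi\cdot\rho^{\pm1})$ is equivalent to allowing greedy to pick, at each vertex $v$, \emph{any} of the best $\min\{\w v,\phi(v)^{-1}\}^{o(1)}$ neighbors with respect to $\phi$: if $u'$ is $\tilde\phi$-optimal and $w$ is any other neighbor, then $\phi(w)\le C^2\rho(w)\rho(u')\,\phi(u')=n^{o(1)}\phi(u')$, so $u'$ is $\phi$-optimal up to $n^{o(1)}$, hence among the top $\min\{\w v,\phi(v)^{-1}\}^{o(1)}$ neighbors. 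One then checks that the proof of Lemma~\ref{lem:main} tolerates this slack (the number of vertices of a given weight close to $t$ is already controlled up to $n^{o(1)}$ factors in that proof). Monotonicity (a) is not literally preserved, but fact (b) gives the $\phi$-optimal neighbor $u$ a gain $\phi(u)/\phi(v)$ far exceeding $\rho(u)\rho(v)$, so $\tilde\phi(u)>\tilde\phi(v)$ and hence greedy-$\tilde\phi$ never stops prematurely, and the chosen $u'$ still has $\phi(u')\ge\phi(u)\cdot n^{-o(1)}>\phi(v)$. The step-count identity is preserved by the double-log bookkeeping: since $\log(\phi\rho)^{-1}=(1-o(1))\log\phi^{-1}$ whenever $\phi^{-1}$ is the active term in $\rho$, we get $\log\log_{\w s}\tilde\phi(s)^{-1}=(1+o(1))\log\log_{\w s}\phi(s)^{-1}$ and likewise with $\w t$, so \eqref{eq:length2} changes only by $1+o(1)$, the remaining $O(1)$-per-step errors being absorbed into the $O(f_0(n))$ slack. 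The success-probability statements of Theorems~\ref{thm:greedysuccess1} and~\ref{thm:greedysuccess2} transfer identically, the only delicate step being the last one (governed by \eqref{eq:puvlarge}), which does not involve $\phi$ at all; and for Theorem~\ref{thm:patching} conditions (P1)--(P3) are phrased in terms of ``the objective'', so one simply runs them with $\tilde\phi$, noting that $G[V_{\geq\phi(v)}]$ is sandwiched between $G[V_{\geq\tilde\phi(v)n^{-o(1)}}]$ and $G[V_{\geq\tilde\phi(v)n^{o(1)}}]$, which enlarges the polynomial exploration bounds only by a further $n^{o(1)}$ factor.

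For the weaker hypothesis \eqref{eq:weakrelaxation}, note it is imposed only on vertices with $\phi(v)\ge O(\w t^{-1+\delta})$, i.e.\ in the \emph{final phase} where routing has already entered the dense core around $t$. Once $\phi(v)$ is that large, the standard analysis finishes in $O(f_0(n))$ further true-greedy steps --- steps that contribute only the additive $O(f_0(n))$ term and do not affect the leading constant --- so it suffices that $\tilde\phi$ does not send the message \emph{backwards} out of the core, which is exactly the requirement $\tilde\phi(v)=\Omega(\w t^{-1+\delta/2})>$ the plateau value; thus Theorems~\ref{thm:greedysuccess1}--\ref{thm:length} follow by invoking \eqref{eq:relaxation} only for $\phi(v)=o(\w t^{-1+\delta})$ and \eqref{eq:weakrelaxation} thereafter. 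For Theorem~\ref{thm:patching} under the weak condition, monotonicity can genuinely fail inside the core, so we additionally take $t$ random: then a.a.s.\ $t$ lies in the giant, has weight $\Theta(\wmin)$-well-behaved, and admits a constant-length chain of neighbors of geometrically increasing weight, so that from \emph{any} vertex with $\tilde\phi\ge\w t^{-1+\delta/2}$ the bounded-size exhaustive search of (P3) reaches $t$; a union bound over the (a.a.s.\ constant-size) relevant local configurations around $t$ bounds the extra cost by $O(f_0(n))$.

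The main obstacle is fact (b): certifying that the $n^{o(1)}$ per-step slack never accumulates, over $\Theta(\log\log n)$ hops, to shift the leading constant $\tfrac{2}{|\log(\beta-2)|}$. This requires that the multiplicative gain from Lemma~\ref{lem:main} dominates $\rho$ \emph{at every scale traversed} --- including the slow first phase, where $\phi$ is still close to its starting value $\Omega(1/n)$ and the per-step gain is smallest --- and that the double-logarithmic estimate $\log\log_w(\phi\rho)^{-1}=(1+o(1))\log\log_w\phi^{-1}$ holds uniformly in the relevant ranges of $\phi$ and $w$. A secondary subtlety is excluding the ``false dead end'' scenario in which greedy-$\tilde\phi$ jumps to a neighbor that is $\phi$-worse than the current vertex and has no good continuation; this is ruled out because the $\phi$-optimal neighbor's gain exceeds $\rho^2$ by the phase analysis, so the $\tilde\phi$-chosen vertex keeps $\phi$ strictly increasing and, a.a.s., inherits the same rich neighborhood structure used in the proof of Lemma~\ref{lem:main}.
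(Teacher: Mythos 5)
Your architecture matches the paper's: recast the $o(1)$ exponent against a growing scale $f_0(n)$, re-run the layer proof of Lemma~\ref{lem:main} for $\tilde\phi$, and handle start/main/end phases separately. The weak-relaxation observation (it only bites in the core, where \eqref{eq:weakrelaxation} stops the message from being sent out of it) and the diagnosis of why $t$ random is needed for Theorem~\ref{thm:patching} (the patching proof needs $\phi_0$ to fall arbitrarily slowly, which conflicts with $\phi_0 \le \underline{c}\w t^{-1+\delta/2}$ when $\w t$ is adversarial) are also essentially the paper's reasons.

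However, the one thing your write-up names as ``the main obstacle'' — that the per-step slack from $\rho$ never accumulates to shift the leading constant — is precisely the thing the paper actually proves, and you have not carried it out. The paper's resolution is not ``$\rho = n^{o(1)}$ uniformly'' (which is too crude to isolate from the per-step gain at small weights) but a scale-by-scale calibration: set $\varepsilon_2 = (\log\log f_0(n))^{-1}$, strengthen the relaxation to $\tilde\phi(v) = \Theta(\phi(v)\cdot\min\{\w v^{\gamma(\varepsilon_1)},\phi(v)^{-1}\}^{\pm\varepsilon_2/c})$ with $c = 4\gamma(\varepsilon_1)$, shrink the good set from $V^+(v,\varepsilon)$ to $V^+(v,\varepsilon/3)$, invoke Lemma~\ref{lem:weighttoohigh} to cap the weight of good neighbors by $\w v^{(1+\varepsilon/3)/(\beta-2)}$, and then verify the inequality
\[
\gamma(\varepsilon/3)-\gamma(\varepsilon) = \frac{2\varepsilon}{3(\beta-2)} > (\gamma(\zeta\varepsilon)+\gamma')\frac{\gamma(\varepsilon_1)\varepsilon_2}{c}
\]
for $\varepsilon_2 \le \varepsilon$. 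This inequality is the content: it says the exponent gained by moving to a $V^+(v,\varepsilon/3)$-neighbor strictly exceeds the exponent wasted by the $\tilde\phi$-vs-$\phi$ discrepancy at both endpoints of the hop, uniformly across scales (including the slow early layers). Your ``pick any of the top $\min\{\w v,\phi(v)^{-1}\}^{o(1)}$ neighbors'' reformulation also does not follow from the bound $\phi(w)\le n^{o(1)}\phi(u')$: that inequality bounds the objective-ratio of the $\tilde\phi$-choice to the $\phi$-optimum, not its rank. Without the calibration above, the layer argument does not close, so the proposal as written leaves the decisive step unproved.
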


\noindent{The proof of Theorem~\ref{thm:relaxations} can be found in Section~\ref{sec:relaxations}.}

A consequence of Theorem~\ref{thm:relaxations} is that the results of this paper also hold for variations of the model. For example, most of the experimental work is performed in slightly different models, e.g., they consider geometric routing on hyperbolic random graphs (see Section~\ref{sec:experiments} for details). In \cite{bringmann2015euclideanGIRG} it was proven that one-dimensional GIRGs ($d=1$) contain the model of hyperbolic random graphs as a special case, but still geometric routing in hyperbolic space is not exactly the same as greedy routing in our sense. Theorem~\ref{thm:relaxations} shows that the differences are negligible. Formally, geometric routing on hyperbolic random graphs induces an objective function $\phi_H$ on the corresponding GIRG, which falls into the class of objective functions considered in Theorem~\ref{thm:relaxations}. The proof and further background on hyperbolic random graphs is contained in Section~\ref{sec:hyperbolic}.

%


\begin{corollary}\label{cor:hyperbolic}All theorems about success probability and path length apply as well for geometric routing on hyperbolic random graphs. Moreover, if $t$ is a random vertex, then Theorem~\ref{thm:patching} for patching algorithms also holds for hyperbolic random graphs.
\end{corollary}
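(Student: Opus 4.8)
The plan is to realize geometric routing on hyperbolic random graphs (HRGs) as \emph{approximate} greedy routing on a one-dimensional GIRG and then invoke \thmref{relaxations}. Recall from~\cite{bringmann2015euclideanGIRG} that an HRG on the disc of radius $R$ with $e^R = \Theta(n)$ coincides, up to constant factors in all connection probabilities, with a GIRG for $d=1$ in the threshold regime~\eqref{eq:puv2}, under the correspondence that a point at radius $r$ and angle $\theta$ becomes a vertex of weight $\w{} = \Theta(e^{(R-r)/2})$ and position $\x{} = \theta/(2\pi) \in \Space$; since HRGs connect exactly the pairs at hyperbolic distance at most $R$, the resulting GIRG even satisfies~\eqref{eq:puvlarge}, so the strengthened \thmref{greedysuccess2} is in scope. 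Geometric routing forwards the packet, in each step, to the neighbour minimizing the hyperbolic distance $\dist_H(\cdot,t)$ to the target; equivalently, it is greedy routing with respect to the objective $\phi_H(v) := e^{-\dist_H(v,t)/2}$, a strictly decreasing transform of $\dist_H(v,t)$, which is globally maximized by $v=t$, as any admissible objective must be.

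The first step is to translate $\phi_H$ into GIRG coordinates via the hyperbolic law of cosines, $\cosh\dist_H(v,t) = \cosh r_v\cosh r_t - \sinh r_v\sinh r_t\cos(2\pi\|\x v-\x t\|)$. Substituting $r = R - 2\log\w{} + \Theta(1)$ and simplifying yields
\begin{equation*}
\phi_H(v) = \Theta\!\left(\min\!\left\{\frac{\w v\,\w t}{n\,\|\x v-\x t\|},\,1\right\}\right)
\qquad\text{whenever } \|\x v-\x t\| = \Omega\!\left(\tfrac{\w v\w t}{n}\right),
\end{equation*}
so in this ``far'' regime $\phi_H(v) = \Theta(\wmin\w t)\cdot\phi(v) = \Theta(\phi(v))$ and the strong condition~\eqref{eq:relaxation} of \thmref{relaxations} holds. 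In the complementary ``near'' regime $\|\x v-\x t\| = O(\w v\w t/n)$ the law-of-cosines expansion degenerates: $\dist_H(v,t)$ ceases to decrease and $\phi_H(v)$ saturates at $\Theta(1)$ (and, for very small angular gaps, at $\Theta(\min\{\w v/\w t,\w t/\w v\})$), while $\phi(v)$ keeps growing — so the two objectives diverge there. However, the bulk of this regime, namely $\{v : \phi(v) \ge O(\w t^{-1+\delta})\}$, is exactly where \thmref{relaxations} only demands the weak condition~\eqref{eq:weakrelaxation}, i.e.\ $\tilde\phi(v)=\Omega(\w t^{-1+\delta/2})$, which is met since $\phi_H(v)=\Theta(1)$ whenever $\w t$ is not too large; this is why the patching transfer is stated for a \emph{random} target, for which $\w t = n^{o(1)}$ a.a.s.

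Finally, with this correspondence in hand, \thmrefss{greedysuccess1}{greedysuccess2}{length} transfer verbatim to geometric routing on HRGs through \thmref{relaxations}, and \thmref{patching} transfers when $t$ is a random vertex, giving the corollary. The main obstacle is the bookkeeping around the boundary between the two regimes. First, one must treat the thin transition annulus $\phi(v)\in[\Theta(\w t^{-1}),\,O(\w t^{-1+\delta})]$, which lies in the near regime yet outside the range covered by~\eqref{eq:weakrelaxation}: here $\phi_H$ and $\phi$ agree only up to a factor $\w t^{O(\delta)}$, which must either be absorbed by choosing $\delta$ small relative to the tolerances in the proof of \thmref{relaxations}, or be shown harmless by the observation that such vertices — being at hyperbolic distance $O(1)$ from $t$ — are directly adjacent to $t$, so that geometric routing finishes within $O(1)$ further hops and the discrepancy affects only lower-order terms. (The same adjacency observation handles the deep-near regime, where $\phi_H(v)$ can even be much smaller than $\w t^{-1+\delta/2}$ for low-weight $v$: such $v$ are reached, if at all, only in the last $O(1)$ steps.) Second, one must propagate the $\Theta(1)$-multiplicative errors coming from the $r\leftrightarrow\w{}$ identification and the distance approximation, checking that they remain within the slack built into~\eqref{eq:relaxation} and~\eqref{eq:weakrelaxation}.
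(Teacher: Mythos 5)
Your proposal takes essentially the same route as the paper: translate geometric routing on the embedded HRG into greedy routing on a one-dimensional GIRG with objective $\phi_H$, expand $\cosh(d_H)$ via the hyperbolic law of cosines, verify that $\phi_H$ meets the relaxation conditions of \thmref{relaxations}, and conclude by citing that theorem. That is exactly what the paper does, where the verification step is packaged as \lemref{hypobjective}. However, two of your intermediate claims are off, and both are resolved more cleanly in the paper than your hedging suggests. First, your far/near split sits at $\phi(v) = \Theta(\w t^{-1})$, which leaves you with a ``transition annulus'' $\phi(v) \in [\Theta(\w t^{-1}),\, O(\w t^{-1+\delta})]$ that you can only treat heuristically; \lemref{hypobjective} shows the annulus does not exist for the purposes of the argument, because the law-of-cosines expansion (and hence $\phi_H(v) = \Theta(\phi(v))$) remains valid throughout $\phi(v) \le O(\w t^{-1+\delta})$, the key point being that the geometric condition $\nu_v \ge e^{-\min\{r_v,r_t\}}$ holds on that full range once one conditions on no vertex of weight $\ge n^{1-2\delta}$. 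Second, your explanation for the random-$t$ restriction in the patching transfer is not quite right: it is not that $\phi_H(v) = \Theta(1)$ requires $\w t$ small. In the regime $\phi(v) = \Omega(\w t^{-1+\delta})$, \lemref{hypobjective} gives $\phi_H(v) = \Omega(\w t^{-1+\delta}) \ge \Omega(\w t^{-1+\delta/2})$ unconditionally, so \eqref{eq:weakrelaxation} holds for every $t$. The true obstruction, discussed in the proof of \thmref{relaxations}, is that the weak relaxation forces the main-lemma threshold $\phi_0$ to obey $\phi_0 \le \underline{c}\,\w t^{-1+\delta/2}$, whereas the patching proof needs $\phi_0$ to be an arbitrarily slowly decaying $o(1)$ function; these two requirements are only compatible when $\w t$ is bounded, which holds a.a.s.\ for a random target. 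You should also double-check your normalization: taking $\phi_H(v) = e^{-d_H(v,t)/2}$ rather than the paper's $\phi_H(v) = n/(\w t\wmin\sqrt{\cosh d_H})$ leaves a $v$-independent prefactor of order $\w t/ n$, which is harmless for the ordering but means the identity ``$\phi_H(v) = \Theta(\phi(v))$'' cannot hold literally, so \eqref{eq:relaxation} must be read modulo a global rescaling of $\tilde\phi$.
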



\section{Comparison with Experimental Results}\label{sec:experiments}

There are several recent papers which experimentally study 'greedy processes' in graphs with hidden metric spaces \cite{BogunaK09, BogunaKC09, BogunaPK10, CvetkovskiC09, KrioukovPBV09, PapadopoulosKBV10}. In these papers, either GIRGs are sampled for various parameter combinations, sometimes under different names; or given real networks (e.g., the internet graph) are embedded into the space $\Space \times \mathcal{W}$, where $\mathcal W$ is the space of weights (mostly in the special case $d=1$, where $\Space \times \mathcal{W}$ can be interpreted as hyperbolic disc~\cite[Theorem 6.3]{bringmann2015euclideanGIRG}, cf.\! Corollary~\ref{cor:hyperbolic}). Afterwards, in both settings, the authors experimentally study how greedy routing algorithms perform on these graphs. Thus our paper gives theoretical explanations for the effects that were observed in these papers, which we discuss in detail below.
%
%

In contrast, \cite{BogunaK09} and \cite{BogunaKC09} study geometric greedy processes, i.e., the routing is degree-agnostic and only uses the distance in an homogeneous and isotropic space. It turns out that this geometric type of routing still works in some settings, but is far less efficient and robust (e.g., it completely fails for some values of $\beta \in [2,3]$). This suggests that greedy routing as considered in this paper is superior to geometric routing. It remains an intriguing open task to understand theoretically how geometric routing behaves for realistic network models.

\paragraph{Success Probability:} As the main contributions the aforementioned papers observe a large success probability of greedy routing for most studied combinations of parameters. This is explained by Theorem~\ref{thm:greedysuccess1}, which gives a general lower bound $\Omega(1)$ on the success probability. Often, the experimentally observed success probability is surprisingly large (i.e., larger than  $97\%$ in \cite{BogunaPK10}). We believe that we can give the reason for this observation with Theorem~\ref{thm:greedysuccess2} (i), where we show that the failure probability drops exponentially with the parameter $\wmin$ controlling the minimum expected degree. Thus, even with very moderate values of $\wmin$ it is not surprising to obtain large success probabilities. 

\paragraph{Trajectory of a Greedy Path:} In their experiments the authors in \cite{BogunaPK10,  KrioukovPBV09,KrioukovPKVB, PapadopoulosKBV10} observed that a successful greedy path starting at a small degree node will first visit nodes of larger degrees and continue along this path until it reaches the core of the network; afterwards, the greedy path leaves the core towards nodes with smaller degrees but with much closer geometric distance to the target until it reaches its destination. We prove this observation formally. In our proofs, we partition our network into layers, and show that the greedy path follows closely this layer structure. In the first half of the process, the layers are characterized by increasing weights, in the second by decreasing geometric distances to the target. We prove that a.a.s.\! the greedy algorithm visits each layer at most once, and that it visits a $(1-o(1))$-fraction of all layers. Thus we characterize accurately the trajectory that greedy routing takes.

\paragraph{Stretch:} Besides the success probability, the stretch of a greedy path, i.e., the quotient of a greedy paths length vs.\! the length of a shortest path, is one of the most meaningful measures to evaluate the usefulness of greedy routing in applications. Experiments revealed a surprisingly small stretch close to $1$, both for embedded real graphs (\cite{BogunaPK10}) and for sampled random graphs \cite{KrioukovPBV09,KrioukovPKVB, PapadopoulosKBV10}. In Theorem \ref{thm:length} we give the theoretical explanation and prove formally that successful greedy routing has a stretch of $1+o(1)$ in the case of success. 

\paragraph{Patching:} Patching has been experimentally found to be highly efficient, and to maintain an average stretch close to $1$~\cite{CvetkovskiC09,huang2014navigation, KrioukovPBV09, KrioukovPKVB, PapadopoulosKBV10}. However, most experiments in sampled models have been performed with a small number of vertices (e.g., $n=1000$). Also, not all methods guarantee successful routing if $s$ and $t$ are in the same component. Interestingly, the gravity-pressure method considered in~\cite{CvetkovskiC09} shows good results, although it does not fall into our class of patching algorithms, and it suffers from the problems described in Section~\ref{sec:patching}. So it remains open whether this algorithm is also theoretically efficient, or whether the good results were simulation artefacts (see also~\cite{sahhaf2013link}). 
 
As Theorem~\ref{thm:patching} shows, a broad class of natural patching algorithms are highly efficient, even if they boost the success probability to $100\%$ for vertices in the same component. Most notably, we prove in Theorem~\ref{thm:patching} that a stretch of $1+o(1)$ is not only achievable, but that it is guaranteed under the weak and natural assumptions (P1)-(P3). 

\section{Patching Algorithms: Discussion and Examples} \label{sec:patching}
The naive greedy protocol gets stuck if it enters a local optimum, which is not acceptable for technical applications. Several different patching strategies have been suggested to overcome this issue, and experimental results showed that many of them work quite well in practice~\cite{CvetkovskiC09, huang2014navigation,KrioukovPBV09,KrioukovPKVB, PapadopoulosKBV10,sahhaf2013link}. We formally prove that in fact \emph{any} patching strategy works well if it satisfies some basic conditions, namely (P1)~\emph{greedy choices}, (P2)~\emph{poly-time exploration}, and (P3)~\emph{poly-time exhaustive search} (cf.\! page~\pageref{cond:patching}). 

Note that all three conditions are natural: the first one is that the algorithm makes greedy choices whenever there is an obvious decision to be made. Note that still there are many choices left to the algorithm. For example, if it enters a local optimum, then it is free to decide if and how much it backtracks before continuing exploration. The second condition ensures that $A$ never gets stuck (or only for a polynomial time). Finally, the third condition (P3) forces the algorithm to search exhaustively a connected set of ``good'' candidates in reasonable time before turning to less promising candidates. The second part of (P3) is concerned with the exceptional case $\phi(s) = \Omega(1)$ (for random $s$ and $t$ this only happens with probability $O(1/n)$) and is necessary to avoid trivialities: if $\phi(s) = \Omega(1)$ then there may be no (or only few) vertices of better objective than $s$, and the precondition of the first part of (P3) may never be fufillled. We remark that the requirement that $\phi_0$ is ``sufficiently slowly falling'' can be specified to $\phi_0 = (\log \log n)^{-o(1)}$.

The last condition (P3) does exclude some algorithms like the gravity-pressure algorithm introduced in~\cite{PapadopoulosKBV10}. This algorithm always visits the best neighbor $v'$ of the current search point $v$, even if $\phi(v') < \phi(v)$, and thus does not satisfy (P3). E.g., assume that the algorithm visits two consecutive vertices $v_1,v_2$ of increasing objective, and assume further that $v_2$ has exactly one more neighbor $u$ of very bad objective ($\phi(u) \ll \phi(v_1)$), while $v_1$ has other neighbors of better objective. Then continuing the search from $u$ before exploring the other neighbors of $v_1$ may be a bad strategy, and this is what (P3) rules out. For example, the gravity-pressure algorithm will prefer to visit any unexplored vertex over returning to $v_1$. So if $v_1$ lies on the only path to $t$ (which happens with probability $\Omega(1)$), then the algorithm may potentially explore large parts of the giant before returning to $v_1$, and thus before finding $t$. This may explain that the algorithm is especially vulnerable in sparse networks~\cite{sahhaf2013link}.

Generally, there are two relevant types of algorithms if the global structure is unknown to the vertices: either information about the routing history is stored in the message (e.g., the protocol SMTP for emails~\cite{postel1982simple}); or for every message a small amount of information is stored in each vertex, yielding rather exploration than classical routing (e.g., flooding algorithms like~\cite{harras2005delay,ko2000location}, the gravity pressure algorithm~\cite{PapadopoulosKBV10}, or tree-based approaches~\cite{sahhaf2013link}). In both cases, it is very easy to design a routing algorithm that satisfies conditions (P1)-(P3), as we outline in the following. For simplicity, we assume that $\phi(s) = o(1)$.

For the first case, we may simply store the list of visited vertices in the message, and for each vertex $v$ we additionally store the objective of the best unexplored incident edge in the message (i.e., the objective of the best neighbor $u$ of $v$ for which the algorithm did not traverse the edge $uv$). Compared to SMTP this only increases the required memory by a single value per visited node. With this information, a trivial way to satisfy conditions (P1)-(P3) is to use the greedy algorithm if possible (i.e., if we are not in a local optimum), and otherwise explore the best unexplored edge that goes out from any visited vertex. 

For the second case, a variant of depth-first search satisfies condition (P1)-(P3), in which the message and each visited vertex only need to store a constant number of pointers and objective values. More precisely, for a vertex $v \in V$ and a value $\Phi \in \R^+$ consider a greedy depth first search on the subgraph $G[V_{\geq \Phi}]$ of vertices which have objective at least $\Phi$, starting in $v$. ``Greedy'' here means that if there are several unexplored edges going out from a vertex, then the DFS algorithm picks the edge that leads to the vertex of highest objective. We call this algorithm (greedy) $\Phi$-DFS for short. Note that since $G[V_{\geq \Phi}]$ does not need to be connected, the $\Phi$-DFS does not necessarily visit all vertices in $G[V_{\geq \Phi}]$.


The idea is now the following. Whenever we encounter a vertex $v$ which has strictly larger objective than all previously visited vertices, then we start a $\phi(v)$-DFS at $v$. We do this recursively, so when in this $\phi(v)$-DFS we encounter another vertex $v'$ which has larger objective than all previously visited vertices (including the vertices visited during the $\phi(v)$-DFS), then we pause the $\phi(v)$-DFS, and start a $\phi(v')$-DFS in $v'$. It may happen that the $\phi(v')$-DFS is completed without success, i.e., that we return to $v'$ after recursively exploring all its better neighbors without finding $t$. In this case, we simply discard the $\phi(v')$-DFS and resume the paused $\phi(v)$-DFS. Note that we treat all vertices visited during the $\phi(v')$-DFS as unvisited for the resumed $\phi(v)$-DFS. However, we do store the best objective value that we have ever seen (regardless in which DFS we have seen them) in the message. A pseudocode description is given in Algorithm~\ref{alg:patching}. The algorithm is distributed, i.e, there is no shared global memory, and no global stack of function calls is required to execute the algorithm. Moreover, at each time only one vertex is active, and each vertex only needs to know positions and weights of its direct neighbors.

Let us argue that this algorithm only needs to store a constant number of pointers and objective values in the message and the visited vertices. During a $\Phi$-DFS, we store at each visited vertex the identity of the parent vertex (to allow backtracking) together with the value $\Phi$, and we call this pair the \emph{$\Phi$-information}. In the message we keep track of the current value of $\Phi$, the identity of the last visited vertex, and the best seen objective. Note that this information already allows us to perform the $\Phi$-DFS in a greedy depth-first manner, and also to decide whether we want to start a new DFS. If we start a new DFS in a vertex $v$, we need to store in $v$ all information that is necessary to resume the previous DFS: the parent vertex, the previous value of $\Phi$, and a flag indicating that we started a new DFS in $v$. So for each value of $\Phi$, the $\Phi$-DFS requires a constant memory in each vertex. It only remains to show that no vertex ever needs to store the $\Phi$-information for two different values of $\Phi$ simultaneously. So assume we enter a vertex during a $\Phi$-DFS, and it already contains the $\Phi'$-information for some $\Phi' > \Phi$. Then we started the $\Phi'$-DFS after the $\Phi$-DFS, and the only way to resume the $\Phi$-exploration is that the $\Phi'$-DFS terminated without finding $t$. So in this case we can safely delete the $\Phi'$-information. On the other hand, if we run a $\Phi$-DFS then we claim that we can never encounter a vertex $v$ which carries a $\Phi'$-information for some $\Phi' < \Phi$. Indeed, this could only happen if $\phi(v)\geq \Phi$ (to visit $v$ during the $\Phi$-DFS), and if moreover we would have visited $v$ during the $\Phi'$-DFS, i.e., before the start of the $\Phi$-DFS. But if such a vertex would exist then we wouldn't have started the $\Phi$-DFS in the first place, since $\Phi$ wouldn't have been the best encountered objective at this point. So indeed we can not visit a vertex $v$ which carries some $\Phi'$-information for $\Phi' < \Phi$.  Summarizing, we never need to store the $\Phi$-information for more than one value of $\Phi$ in the same vertex. 

\clearpage
\begin{algorithm}
\thispagestyle{empty}
\begin{algorithmic}[1]
\Function{ROUTING}{$s$, $m$} \Comment{$s$: source, $m$: message (contains target information)}
\State{$m$.best\_seen\_objective $\gets$ $-\infty$}\Comment{initialisation start}
\State{$m$.Phi $\gets$ $-\infty$} 
\State{$m$.last\_visited\_vertex $\gets$ s} 
\State{$s$.Phi $\gets$ $\phi(s)$}
\State{EXPLORE($s$, $m$)}
\EndFunction

\Function{EXPLORE}{$v$, $m$} \Comment{we visit $v$ by an unexplored edge in current $\Phi$-DFS}
  \If{$v$.Phi == $m$.Phi} \Comment{if $v$ already visited in current $\Phi$-DFS ...}
    \State{BACKTRACK\_TO($m$.last\_visited\_vertex, $m$)} \Comment{... then we immediately return}
  \Else \Comment{if $v$ not visited in current $\Phi$-DFS }
	\If{$\phi(v) > m$.best\_seen\_objective} \Comment{potentially start new DFS with $\Phi = \phi(v)$}
      \State{SET\_NEW\_PHI($v$, $m$)} 
    \EndIf
	  \State{INIT\_VERTEX($v$, $m$)} 
    
    \If{$\{u \in\Gamma(v) \mid \phi(u) \geq m\text{.Phi}\} \neq \emptyset$} \Comment{if there is neighbor better than $\Phi$}
      \State{$u$ $\gets$ $\argmax\{\phi(u) \mid u \in\Gamma(v)\}$} \Comment{go to best neighbor}
      \State{EXPLORE($u$, $m$)}
      \Else{ BACKTRACK\_TO($m$.last\_visited\_vertex, $m$)} \Comment{no unexplored neighbors better than $\Phi$}
    \EndIf
  \EndIf
\EndFunction

\Function{BACKTRACK\_TO}{$v$, $m$} \Comment{we backtrack from $m$.last\_visited\_vertex to $v$}
    \State{$S := \{u\in\Gamma(v)\mid u\neq v\text{.parent},  m\text{.Phi} \leq \phi(u) < \phi(m\text{.last\_visited\_vertex})\}$}
    \If{$ S \neq \emptyset $} \Comment{if there is still unexplored child better than $\Phi$ ...}
      \State{$u$ $\gets$ $\argmax\{\phi(u) \mid u\in S\}$} \Comment{... go to best of them}
      \State{EXPLORE($u$, $m$)}
      \Else \Comment{if no child left to explore}
        \If{$v$.started\_new\_dfs == TRUE} \Comment{if DFS for current $\Phi$ was unsuccessful ...}
          \State{RESET\_TO\_OLD\_PHI($v$, $m$)}  \Comment{... we resume $\Phi$-DFS for previous value of $\Phi$ ...}
          \State{$m$.last\_visited\_vertex $\gets$ $v$.parent}\label{algline:resume} \Comment{... in the state where we left it, coming from $v$.parent}
          \State{EXPLORE($v$, $m$)} 
       \Else \Comment{if current $\Phi$-DFS continues, but no child left to explore ...}
      \State{BACKTRACK\_TO($v$.parent, $m$)}  \Comment{...  backtrack to parent}
      \EndIf
    \EndIf
\EndFunction

\Function{SET\_NEW\_PHI}{$v$, $m$} \Comment{$v$ starts a new DFS with $\Phi = \phi(v)$.}
  \State{$m$.best\_seen\_objective $\gets \phi(v)$}
  \If{$\{u \sim v \mid \phi(u) \geq \phi(v)\} \neq \emptyset$} \Comment{we only start the new DFS if there are actually better neighbors}
    \State{$v$.started\_new\_dfs $\gets$ TRUE}
    \State{$v$.previous\_Phi $\gets$ $m$.Phi} \Comment{store this in case we need to resume previous $\Phi$-DFS}
    \State{$m$.Phi $\gets$ $\phi(v)$}
  \EndIf
\EndFunction

\Function{RESET\_TO\_OLD\_PHI}{$v$, $m$} \Comment{to resume $\Phi$-DFS for previous value of $\Phi$}
  \State{$v$.started\_new\_dfs $\gets$ FALSE}
  \State{$m$.Phi $\gets$ $v$.previous\_Phi}
  \State{$v$.Phi $\gets$ $v$.previous\_Phi}
\EndFunction

\Function{INIT\_VERTEX}{$v$, $m$} \Comment{we visit $v$ for the first time in the current $\Phi$-DFS}
  \State{$v$.Phi $\gets$ $m$.Phi} \Comment{$v$ has been visited in current $\Phi$-DFS}
  \State{$v$.parent $\gets$ $m$.last\_visited\_vertex} \Comment{ parent for backtracking}
\EndFunction

\end{algorithmic}
\caption{Example of a distributed exploration algorithm that satisfies (P1)-(P3) and that uses only a constant number of pointers and variables in the message and in each node. \hspace{5cm} 
Whenever EXPLORE or BACKTRACK\_TO are called, the parent process may terminate.
For readability, we did not spell out updates of $m$.last\_visited\_vertex except the special case in line \ref{algline:resume}. Furthermore we assumed that no vertex has two neighbors of equal objective.
}\label{alg:patching}
\end{algorithm}

\clearpage


Finally we argue that the algorithm satisfies (P1)-(P3). It is evident that the algorithm satisfies conditions (P1) and (P2). Moreover, it also satisfies~(P3): if $x \in V$ and $S \subseteq V$ is the set of all vertices of objective at least $\phi(v)$, then a $\phi(v)$-DFS takes at most $O(|E(S,S)|) = O(S^2)$ steps. For every vertex $u \in S$, we may interrupt the $\phi(v)$-DFS for another $\phi(u)$-DFS, but each of these also takes at most $O(S^{2})$ steps. So in total we will explore the set $S$ in at most $O(S^3)$ steps. 

\section{Proof Sketches and the Typical Trajectory of a Greedy Path}\label{sec:proofsketch}

In this section we describe the evolution of the basic greedy process and give a very rough outline of the proof ideas and the main difficulties. At the same time, we describe the trajectory of a typical greedy path, which is also depicted in Figure~\ref{fig:greedypath} below. Full proofs can be found in the subsequent sections.

\paragraph{Mean field analysis and typical trajectories:}For simplicity, assume that both the source $s$ and the target $t$ have constant weight and assume that $\|\x s - \x t\|=\Omega(1)$, which is the typical case. As $s$ has small weight, a.a.s. all its neighbors (if any) have relatively small weight as well and are located in a ball of small geometric distance around $\x s$, which we call ``region of influence'' for illustrative purposes. Since the region of influence of $s$ is small and far away from $t$, all points in this region have the same distance from $t$, up to factors $(1+o(1))$. In particular, all neighbors of $s$ have the same distance from $t$, up to factors $(1+o(1))$. On the other hand, the weights of the neighbors fluctuate by non-negligible constant factors. (So do the distances \emph{from $s$}, but they do not influence the objective values.) Hence, the fluctuations of the weights dominate the fluctuations of the distances from $t$,  and consequently, routing proceeds to a neighbor of $s$ with higher weight. More precisely, we can expect that $s$ has a neighbor of weight roughly $\w s^{1/(\beta-2)}$. We repeat the argument and observe that the routing exhibits a first phase in which the weight of the current vertex increases by an exponent $\approx \frac{1}{\beta-2}>1$ within every step. This phase stops when a vertex $v$ with the property $\phi(v) \gtrsim \w v^{-1/(\beta-2)}$ is reached. Since the weight increases by an exponent of $1/(\beta-2)$ with every hop, this first phase needs only $\approx \log\log n / |\log(\beta-2)|$ steps.

Once the routing reaches a vertex $v$ such that $\phi(v) \gtrsim \w v^{-1/(\beta-2)}$, the relation between the current weight and the current objective change, since now the region of influence of $v$ contains $t$. That does not mean that $v$ is adjacent to $t$ (in fact, this is very unlikely), but $v$ does have neighbors which are much closer to $t$. These neighbors typically have smaller weight than $v$, but the gain in distance is enough to make up for the smaller weight. In this way we can hope for a neighbor of objective $\approx \phi(v)^{\beta-2}$. (Note that $\phi(v)<1$, so indeed the objective increases.) Moreover, the best neighbor still satisfies the relation $\phi(v) \gtrsim \w v^{-1/(\beta-2)}$, so we can repeat the argument.  Thus in this second phase, the objective increases by the same exponent in every step. 
After $\approx \log\log n /|\log(\beta-2)|$ rounds, this second phase reaches a vertex $v$ of objective $\phi(v)=\Omega(1)$. Then, in the case $\alpha<\infty$ we have $p_{vt}\ge (\phi(v) \w t)^{\alpha}=\Omega(1)$, so with  probability $\Omega(1)$ vertex $v$ connects directly to the target vertex $t$. If $\alpha=\infty$, we need a small number of additional steps to find $t$. This already gives an idea why greedy routing succeeds with probability $\Omega(1)$ (Theorem~\ref{thm:greedysuccess1}), and why it only needs $\tfrac{2+o(1)}{|\log(\beta-2)|} \log \log n$ steps (Theorem~\ref{thm:length}). It also gives a typical trajectory of the greedy path.

\paragraph{Main difficulties:}
 In the formal proofs of our statements we show that with sufficiently high probability, the routing does not deviate too much from this typical trajectory. For a single vertex $v$ it is not difficult to calculate the weight $\w u$ and the objective $\phi(u)$ of its best neighbor $u$. However, there is a major difficulty: the greedy algorithm only proceeds to $u$ when $u$ is the \emph{best} neighbor of $v$. So if we want to compute the probability that the best neighbor of $u$ is again what we expect, then we need to compute a \emph{conditional} probability, namely conditioned on $v$ having no better neighbor. Unfortunately, this introduces dependencies which are impossible to handle directly.

The main technical contribution of this paper is in overcoming these dependencies. The basic idea is to uncover the vertices by increasing objective. However, note that the greedy path $P_{\phi}$ in the graph $G_{\leq \phi}$ induced by vertices of objective at most $\phi$ does not need to coincide with the greedy path $P$ in $G$, since the vertices in $P_{\phi}$ may have a neighbor of objective larger than $\phi$, thus shortcutting the rest of $P_{\phi}$. Unfortunately, the number of vertices which lie on a greedy path in \emph{some} $G_{\leq \phi}$ is so large that many of them have undesired properties. Thus we can not prove iteratively that all nodes on the temporary greedy paths have good properties, because they do not. On the other hand, uncovering the next node on $P$ is impossible without introducing the aforementioned dependencies.

\paragraph{Layer technique:} Instead of looking at individual vertices, we define carefully crafted layers $A_i$ for the first phase (defined via weights) and the second phase (defined via objectives) such that the typical trajectory contains at most one vertex per layer. Then we consider the greedy path $P_i$ induced by the first $i$ layers and prove that with sufficiently high probability either $P_i$ has no vertex in $A_i$ or the first vertex $v \in A_i \cap P_i$ has a neighbor $u$ outside the first $i$ layers with better objective than all neighbors \emph{inside} the first $i$ layers (which we already have uncovered). Note that this is an event where we do not suffer from dependencies: the vertex $v$ is defined only in terms of the first $i$ layers, so we can determine $v$ without uncovering any parts of the remaining graph. Consequently, the number of vertices of $v$ in larger layers is Poisson distributed, and it suffices to show that its expectation is large. It is not hard to see that the above events together imply that greedy routing succeeds throughout all layers, and that it visits no layer twice. So we may apply a union bound over all layers, and obtain that the routing algorithms succeeds with probability $\Omega(1)$ and visits at most one vertex per layer.\footnote{We do not visit every layer. But since a.a.s.\! the stretch is $1+o(1)$, we visit a.a.s.\! a $(1-o(1))$-fraction of all layers.} The whole idea is formalized in our main Lemma~\ref{lem:main}. Then Theorem~\ref{thm:greedysuccess1} directly follows from the main lemma. A more careful analysis along the same lines gives the relaxation result, Theorem~\ref{thm:relaxations}. All other theorems also rely heavily on the layer technique, but require more tricks to handle start and end phases.
 
\begin{figure}[ht!]
  \centering
        \includegraphics[width=\textwidth, clip = true, trim= 35 380 0 0]{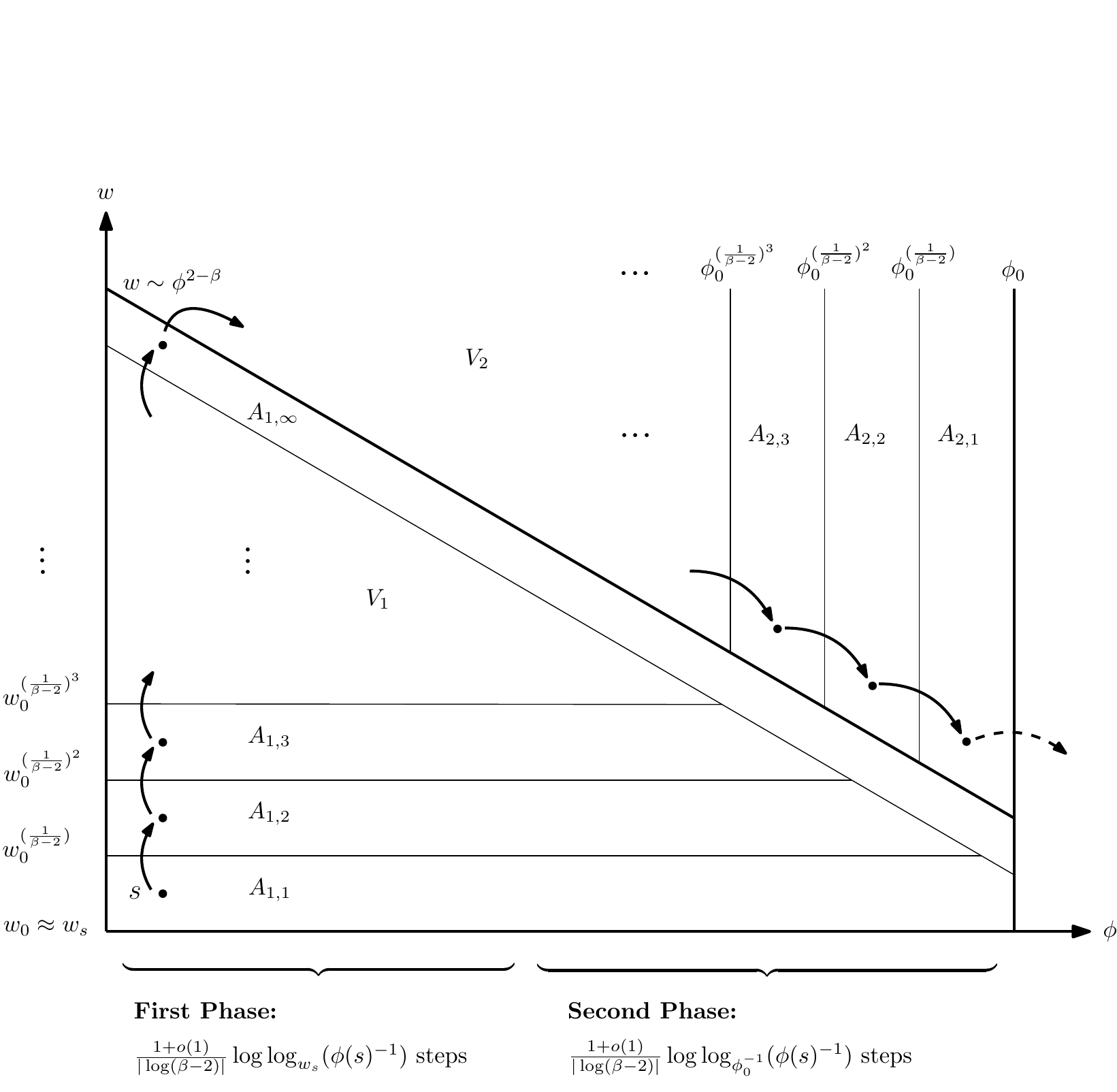}
  \caption{A typical greedy path. In the main Lemma~\ref{lem:main} we prove that if for some $\phi_0 = o(1)$ we uncover the graph $G[V_{\leq \phi_0}]$ then the trajectory looks as depicted if the process gets started at all. Moreover in the last layer we find a vertex which has in expectation (over the randomness in $V_{> \phi_0}$) still many neighbors in the next layer. Note that the greedy path in $G$ does not need to coincide with the greedy path on $G[V_{\leq \phi_0}]$, but it can only deviate by jumping to a vertex of objective larger than $\phi_0$. The special layer $A_{1,\infty}$ is tine and will be needed only for technical reasons.\label{fig:greedypath}}
\end{figure}

\paragraph{Start and end phases:}
It turns out that the failure probability of greedy routing is dominated by the first few steps (when the weight is still constant) and the last few steps (when the objective is constant). In each of these steps the algorithm has probability $\Omega(1)$ to fail. For example, in the typical case that $s$ has a constant weight, there is a constant probability that $s$ has no neighbors at all. However, the number of neighbors of $s$ is Poisson distributed with mean $\Theta(\w s)$. In particular, the probability to have no neighbors decays exponentially with $\w s$. Similar considerations apply for the number of neighbors with better objective. This is the reason why the failure probability decays exponentially with $\wmin$ (Theorem~\ref{thm:greedysuccess2}). The actual proof is much more tricky and is carried out in Section~\ref{subsec:expsuccess}.

Finally, to prove the patching result (Theorem \ref{thm:patching}) we show three intermediate results. \emph{(i)} If we have explored $k$ vertices, starting from $s$, then with probability $1-\exp\{-k^{\Omega(1)}\}$ at least one of these vertices is adjacent to a vertex of weight at least $k^{\Omega(1)}$. By condition (P2) it takes only $k^{O(1)}$ steps to explore $k$ vertices, so after a short exploration phase of $o(\log \log n)$ steps, a.a.s.\! we find a vertex of weight $\omega(1)$. \emph{(ii)} As in the purely greedy case, starting from a vertex of weight $\omega(1)$ a.a.s.\! we follow a typical trajectory of the greedy algorithm as described above, until we find a vertex $v$ with almost constant objective (say, with objective $\phi(v) \approx (\log\log \log n)^{-1}$). This middle phase is purely greedy and does not require patching. In particular, $v$ has better objective than any previously visited vertices. \emph{(iii)} We study the graph $G_{\geq \phi(v)}$ induced by vertices of objective larger than $v$ and find that a.a.s. $G_{\geq \phi(v)}$ contains at most $O(\phi(v)^{-1})$ vertices and a giant component, that contains both the vertices $t$ and $v$. Thus by condition (P3) the algorithm will explore the giant component of $G_{\geq \phi(v)}$ and find $t$ in additional $o(\log \log n)$ steps.

\section{Notation \& Basic Properties of GIRGs}\label{sec:preliminaries}
Before proving our main results, we make several preparations in this chapter. First we introduce some basic notation. Then, in Section~\ref{subsec:girgprops} we give some properties of the GIRG model which hold in general and have no direct connection to greedy routing. Finally, Section~\ref{subsec:basiclemmas} contains lemmas which are more directly related to greedy routing. In particular, there we calculate for a given vertex $v \in \probSpace$ where we expect to find its neighbors with highest objective.

\subsection{Notation}\label{subsec:notation}

Throughout the paper $G= (V,E)$ always denotes an (undirected) random graph given by the GIRG model with vertex set $V$ and edge set $E$. For a vertex $v$, we denote its weight and position by $\w v$ and $\x v$, respectively, and $\Gamma(v)$ is the set of neighbors of $v$ in $G$. For $A \subseteq V$, we denote by $G[A]$ the subgraph of $G$ induced by $A$. Moreover, for $\phi \in \R$ we denote by $V_{\leq \phi}$ and $V_{> \phi}$ the set of vertices with objective at most $\phi$ and larger than $\phi$, respectively.

We remind the reader that the Landau notation $O, \Omega$, etc. may hide the power law exponent $\beta$, the decay parameter $\alpha$, the dimension $d$, and the constants in~\eqref{eq:puv} and~\eqref{eq:puv2}, but not the minimal weight $\wmin$. The Landau notation is always used with respect to $n\to\infty$. We say that an event $\mathcal{E}= \mathcal{E}(n)$ holds \emph{asymptotically almost surely (a.a.s.)} if $\Pr[\mathcal{E}(n)] \to 1$ for $n\to \infty$. We refer to Table~1 on page~\pageref{tab:notation} for general notation used throughout the paper.

\subsection{Properties of GIRGs}\label{subsec:girgprops}
In this short section we collect some basic properties of GIRGs.
We start with some basic lemmas taken from~\cite{bringmann2015generalGIRG}.\footnote{The model in~\cite{bringmann2015generalGIRG} slightly differs from our version (apart from being more general, ~\cite[Theorem 7.3]{bringmann2015generalGIRG}) because the vertices are not given by a Poisson point process, but rather $n$ vertices are placed uniformly at random in $\Space$. However, for any two given vertices $u$ and $v$, the distribution of weights and positions and the connection probabilities are the same. Moreover, the models coincide if we condition our model on the number of vertices. 
In particular, if a statement holds a.a.s.~in the model in~\cite{bringmann2015generalGIRG} then it also holds a.a.s.~in our model.\label{foot:models}}
First we give a formula for the marginal probability for an edge between two vertices $u$ and $v$ with given weights.

\begin{lemma}[{\cite[Lemma 4.3]{bringmann2015generalGIRG}}] \label{lem:marginal}
Let $u,v$ be two vertices of a GIRG. Then
$$\Pr_{\x u, \x v}\big[\{u,v\} \in E \mid \w u, \w v\big] = \Theta\left(\min\left\{\frac{\w u \w v}{\wmin n},1\right\}\right).$$
\end{lemma}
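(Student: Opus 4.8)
The statement asserts that for two vertices $u,v$ with weights $\w u, \w v$ fixed but positions $\x u, \x v$ drawn uniformly at random on the torus $\Space$, the edge probability, after averaging over the positions, is $\Theta(\min\{\w u \w v/(\wmin n), 1\})$. The natural approach is to directly integrate the pointwise edge probability $p_{uv}$ over the random positions. By translation invariance of the torus, only the difference $r := \|\x u - \x v\|$ matters, and its distribution has density $\Theta(r^{d-1})$ for $r \in [0, 1/2]$ (up to the torus' constant-factor corrections, which are harmless since we allow constant deviations). So I would write
\[
\Pr_{\x u, \x v}[\{u,v\}\in E \mid \w u, \w v] = \int_{\Space} p_{uv}\big(\|x\|\big)\, dx = \Theta\!\left(\int_0^{1/2} p_{uv}(r)\, r^{d-1}\, dr\right),
\]
and then plug in the edge probability condition \eqref{eq:puv} (or \eqref{eq:puv2} in the threshold case).

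\textbf{Key steps.} Write $M := \w u \w v/(\wmin n)$ and split the integral at the threshold radius $r_0 := M^{1/d}$, i.e., the value of $r$ where the two branches of $\min\{\cdot, 1\}$ in \eqref{eq:puv} cross. First consider the case $M \le 1$ (equivalently $r_0 \le 1$), which is the interesting one. On the ``near'' range $r \le r_0$, $p_{uv} = \Theta(1)$, contributing $\Theta(\int_0^{r_0} r^{d-1}\, dr) = \Theta(r_0^d) = \Theta(M)$. On the ``far'' range $r_0 \le r \le 1/2$, $p_{uv} = \Theta\big((M/r^d)^{\alpha}\big) = \Theta(M^\alpha r^{-\alpha d})$, contributing
\[
\Theta\!\left(M^\alpha \int_{r_0}^{1/2} r^{d - \alpha d - 1}\, dr\right) = \Theta\!\left(M^\alpha \cdot \big[r^{d(1-\alpha)}\big]_{r_0}^{1/2}\right) = \Theta\!\left(M^\alpha \cdot r_0^{d(1-\alpha)}\right) = \Theta(M),
\]
where the integral converges at its upper end because $\alpha > 1$ makes the exponent $d(1-\alpha) < 0$ negative, so the dominant contribution is from the lower limit $r_0$, and $M^\alpha \cdot M^{1-\alpha} = M$. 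Adding the two ranges gives $\Theta(M) = \Theta(\min\{M,1\})$, as claimed. In the remaining case $M \ge 1$, we have $r_0 \ge 1 \ge 1/2$, so $p_{uv} = \Theta(1)$ on the whole domain $[0,1/2]$ (the $\min$ is always pinned at $1$), and the integral is $\Theta(1) = \Theta(\min\{M,1\})$. For the threshold case $\alpha = \infty$ using \eqref{eq:puv2}: the ``near'' range $r^d \le c_1 M$ again gives $\Theta(c_1 M) = \Theta(M)$ as a lower bound, and since $p_{uv} = 0$ for $r^d \ge c_2 M$, the integral is at most $\Theta(\int_0^{(c_2 M)^{1/d}} r^{d-1}\,dr) = \Theta(c_2 M) = \Theta(M)$ as an upper bound (capped by $1$ when $M$ is large), so again $\Theta(\min\{M,1\})$.

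\textbf{Main obstacle.} There is no deep obstacle here — this is a routine computation — but the one point requiring mild care is the role of the hypothesis $\alpha > 1$: it is precisely what makes the tail integral $\int_{r_0}^{1/2} r^{d(1-\alpha)-1}\, dr$ dominated by its lower endpoint rather than diverging or being dominated by the upper endpoint, which is what yields the clean $\Theta(M)$ bound and thus an expected degree $\Theta(\w v)$ matching Chung--Lu. I would also note explicitly that working on the torus (rather than $[0,1]^d$) only affects the density of $r$ by constant factors and only in the regime $r = \Theta(1)$, which contributes negligibly when $M \le 1$ and is absorbed into the $\Theta(\cdot)$ otherwise; since the model is defined up to constant factors this requires no further fuss. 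This lemma is exactly \cite[Lemma 4.3]{bringmann2015generalGIRG}, so strictly the proof can also be deferred by citation, but the self-contained integral argument above is short enough to include.
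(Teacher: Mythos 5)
The paper does not prove this lemma: it is stated with a bracketed citation of \cite[Lemma~4.3]{bringmann2015generalGIRG} and no argument, a choice you correctly note at the end. Your self-contained integral computation is the natural way to verify it and is essentially correct: reduce by translation invariance to the distribution of $r = \|\x u - \x v\|$ (density $\Theta(r^{d-1})$ on $[0,1/2]$ for the $\infty$-norm torus), split at $r_0 = M^{1/d}$ with $M := \w u \w v/(\wmin n)$, and use $\alpha>1$ to control the tail; the threshold case follows by the sandwich between the $c_1$ and $c_2$ radii. One precision nit: the far-range term $M^\alpha\int_{r_0}^{1/2} r^{d(1-\alpha)-1}\,dr$ evaluates exactly to $\frac{1}{d(\alpha-1)}\bigl(M - 2^{d(\alpha-1)}M^\alpha\bigr)$, which is $O(M)$ but not uniformly $\Theta(M)$ over $M \le 2^{-d}$ --- it vanishes as $r_0 \to 1/2$, so the ``dominant contribution from the lower limit'' claim fails near that boundary. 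The conclusion is unaffected because the near range already supplies the matching lower bound $\Omega(M)$ and an $O(M)$ upper bound; it would simply be cleaner to present the far-range contribution as $O(M)$ and let the near range carry the $\Omega(M)$. Everything else, including the $M \ge 1$ case and the handling of $\alpha=\infty$, is correct.
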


Note that this marginal probability is the same as for Chung-Lu graphs. This is why GIRGs can be interpreted as a geometric variant of Chung-Lu random graphs. Furthermore, the above expression does not change if we fix one of the two coordinates (but not both!). From Lemma~\ref{lem:marginal}, we can directly deduce the following statement about the degree of a given vertex.

\begin{lemma} \label{lem:degvertex}
Let $v \in V$ be a fixed vertex with weight $\w v$. Then the random variable $\deg(v)$ is distributed as $\Pois(\Theta(\w v))$, and in particular $\Ex[\deg(v)]=\Theta(\w v)$.
\end{lemma}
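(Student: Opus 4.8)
\textbf{Proof plan for Lemma~\ref{lem:degvertex}.}
The plan is to condition on the weight sequence of all other vertices and on their positions, and to use the Poisson structure of the vertex set together with the independence of edge-indicator coins. First I would fix the vertex $v$ with weight $\w v$ and recall that, by the defining property of the Poisson point process, the remaining vertices form a Poisson point process on $\Space$ with intensity $n$, with i.i.d.\ weights drawn from the density $f$, independently of everything concerning $v$. The degree $\deg(v)$ is the number of these vertices $u$ for which the independent edge coin (with success probability $p_{uv}$, itself a function of $\x u$, $\x v$, $\w u$, $\w v$) comes up heads. This is exactly a thinning of the (weighted, positioned) Poisson point process, and thinning a Poisson process yields a Poisson process; hence $\deg(v)$ is Poisson distributed with mean equal to $n$ times the integral over $\Space$, and the expectation over weights, of $p_{uv}$.

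The second step is to evaluate that mean. By Lemma~\ref{lem:marginal}, for a fixed $u$ with weight $\w u$ (and random position $\x u$, or equivalently integrating $p_{uv}$ over $\x u \in \Space$) we have $\Pr_{\x u}[\{u,v\}\in E \mid \w u,\w v] = \Theta(\min\{\w u \w v/(\wmin n),1\})$; crucially the remark after Lemma~\ref{lem:marginal} notes this holds even when one coordinate ($\x v$) is fixed. Therefore
\[
\Ex[\deg(v)] = n \cdot \Ex_{\w u}\!\left[\Theta\!\left(\min\!\left\{\tfrac{\w u \w v}{\wmin n},1\right\}\right)\right]
= \Theta\!\left( n \int_{\wmin}^{\infty} \min\!\left\{\tfrac{w \w v}{\wmin n},1\right\} f(w)\, dw \right).
\]
Now I would split the integral at $w^* := \wmin n/\w v$. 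For $w \le w^*$ the minimum is $w\w v/(\wmin n)$ and, since $f(w) = \Theta(\wmin^{\beta-1} w^{-\beta})$ with $\beta < 3$, the integral $\int_{\wmin}^{w^*} w^{1-\beta}\,dw = \Theta(\wmin^{2-\beta})$ (the upper limit contributes a lower-order term because $2-\beta<0$ means the mass concentrates near $\wmin$; one checks $w^* \ge \wmin$ exactly when $\w v \le n$, which holds a.a.s., and otherwise the tail piece is negligible anyway). For $w > w^*$ the minimum is $1$ and $\int_{w^*}^\infty f(w)\,dw = \Theta(\wmin^{\beta-1}(w^*)^{1-\beta}) = \Theta((\w v/n)^{\beta-1})$, which contributes $n \cdot \Theta((\w v/n)^{\beta-1}) = \Theta(\w v^{\beta-1} n^{2-\beta}) = o(\w v)$ for $\w v \le n$. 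Assembling the pieces gives $\Ex[\deg(v)] = \Theta(n \cdot \wmin^{\beta-1}\cdot \wmin^{2-\beta} \cdot \w v/(\wmin n)) = \Theta(\w v)$, as claimed.

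The main obstacle is not really a conceptual one but a matter of bookkeeping: one must be careful that $p_{uv}$ is only known up to constant factors (via~\eqref{eq:puv}/\eqref{eq:puv2}), so the thinned process has a mean that is only pinned down up to a constant, and one should phrase the argument so that ``Poisson'' survives despite this (it does, since for each fixed configuration of positions and weights the coins are independent Bernoullis, and a mixture/superposition of independent Bernoulli thinnings of a Poisson process is still Poisson with the corresponding mean). A minor additional care is the case $\w v > n$, where the $\min$ saturates; there the integral computation still yields $\Ex[\deg(v)] = \Theta(\w v)$ but one should note it, or simply observe that the lemma is typically applied for weights $\w v = n^{o(1)}$. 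I expect the whole argument to be short once the thinning observation is made explicit.
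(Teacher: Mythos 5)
Your proof is correct and takes the same route as the paper: the paper likewise asserts Poisson-ness directly from the Poisson point process (your thinning argument just spells out why) and, for the expectation, cites Lemma~4.4 of~\cite{bringmann2015generalGIRG}, which is exactly the integral over $f$ against the marginal connection probability of Lemma~\ref{lem:marginal} that you carry out by hand. One small nit: in the regime $\w v > n$ (which you flag yourself) the integral actually yields $\Theta(n)$, not $\Theta(\w v)$, so the lemma as stated implicitly assumes $\w v = O(n)$; but this is an edge case the paper does not dwell on either, and since a.a.s.\ no vertex has weight exceeding $n$ it is immaterial in applications.
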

\begin{proof}
The fact that $\deg(v)$ is a Poisson distributed random variable follows immediately since the vertices are given by a Poisson point process. The formula $\Ex[\deg(v)]=\Theta(\w v)$ was shown in~\cite[Lemma 4.4]{bringmann2015generalGIRG}, and the expectation of this model agrees with the expectation in our model, cf.~Footnote~\ref{foot:models}.
\end{proof}

The next lemma asserts that there is a unique giant component (i.e., a component of linear size). Moreover, the average distance in the giant component is the same as the number of steps that greedy routing takes (Theorem~\ref{thm:length}), up to a stretch factor of $1+o(1)$.

\begin{lemma}[{\cite[Theorem 2.2, 2.3]{bringmann2015generalGIRG}}]\label{lem:giant}
A.a.s.~the largest component has linear size, while all other components have size at most $\log^{O(1)} n$. Moreover, a.a.s.~the average distance between vertices in the largest component is $\tfrac{2 \pm o(1)}{|\log(\beta-2)|}\log \log n$. 
\end{lemma}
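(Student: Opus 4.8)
The statement is the ``ultra-small-world'' theorem for GIRGs, and as the attribution indicates it is proven in full in \cite{bringmann2015generalGIRG}; in practice one simply cites it, but here is the argument I would reconstruct. It has two parts: a \textbf{component-structure} part (one giant of linear size, every other component of size $\log^{O(1)}n$) and a \textbf{distance} part (average intra-giant distance $\tfrac{2\pm o(1)}{|\log(\beta-2)|}\log\log n$). The unifying device is a hierarchy of weight layers $\Lambda_0,\Lambda_1,\dots$, where $\Lambda_i$ collects the vertices of weight roughly in $[\wmin^{(\beta-2)^{-i}},\wmin^{(\beta-2)^{-i-1}})$, together with the elementary GIRG fact following from \eqref{eq:puv}/\eqref{eq:puv2} and the power law: a vertex of weight $w$ has, inside its geometric region of influence of volume $\Theta(ww'/(\wmin n))$, in expectation $\Theta(w\,w'^{\,2-\beta})$ vertices of weight at least $w'$, which is $\omega(1)$ precisely when $w'=o(w^{1/(\beta-2)})$. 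Hence a weight-$w$ vertex typically has a neighbour of weight up to $\approx w^{1/(\beta-2)}$, and this single-hop growth of the exponent by the factor $1/(\beta-2)>1$ is what produces all the $\log\log n$ terms.

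\textbf{Component structure.} First I would show that the vertices of weight at least $n^{\gamma}$, for a suitable constant $\gamma$, together with their geometric neighbourhoods, form a connected set of linear size: their induced subgraph is a dense Chung--Lu-type graph (each pair connected with probability $\Theta(1)$ by \lemref{marginal}), hence whp connected, and a constant fraction of all vertices ``climb'' the layers $\Lambda_0\to\Lambda_1\to\cdots$ up into this core, each climb using only $O(\log\log n)$ layers and succeeding with probability $\Omega(1)$ (this is exactly a truncated version of the trajectory analysis carried out later in the present paper). For the upper bound on the remaining components, fix a moderate weight threshold $c$ depending only on the model constants; conditionally on the heavy vertices, the subgraph on vertices of weight below $c$ has all degrees $\Pois(O(1))$ by \lemref{degvertex} and is dominated by a subcritical branching process — equivalently, a first-moment bound over connected subgraphs, using that a connected set of $k$ light vertices has probability $k^{-\Omega(k)}$ since light vertices connect only inside small geometric balls — so all its components have size $O(\log n)$; the vertices of weight in $[c,n^{\gamma})$ that fail to reach the core can enlarge these only by a further $\log^{O(1)}n$ factor. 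Tuning the constants so that ``failing to reach the core'' is suitably unlikely yields the claim.

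\textbf{Distance, upper bound.} For a $1-o(1)$ fraction of pairs $s,t$ in the giant (by Poisson concentration, the typical pair has both weights $O(1)$ and $\|\x s-\x t\|=\Omega(1)$), the climb above reaches the core from $s$ in $\tfrac{1+o(1)}{|\log(\beta-2)|}\log\log n$ hops, crosses it in $O(1)$ hops, and descends symmetrically to $t$: total $\tfrac{2+o(1)}{|\log(\beta-2)|}\log\log n$. For the $o(1)$ fraction of atypical pairs one only needs the crude fact that the giant has diameter $\log^{O(1)}n$ (which the same layering gives, or is quoted); since this polylogarithmic bound is weighted by probability $o(1)$, the \emph{average} distance is still $\tfrac{2+o(1)}{|\log(\beta-2)|}\log\log n$.

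\textbf{Distance, lower bound --- the main obstacle.} One must show a typical pair is at distance at least $\tfrac{2-o(1)}{|\log(\beta-2)|}\log\log n$. The core estimate is a tail bound on the heaviest vertex reachable in $k$ hops: a union bound combining the power-law tail ($\Theta(n w^{1-\beta})$ vertices of weight $\ge w$) with the marginal edge probability $\Theta(\min\{1,\w u\w v/(\wmin n)\})$ shows that with probability $1-o(1)$ no vertex within $k$ hops of a weight-$O(1)$ vertex has weight exceeding $\exp\!\big(O((\beta-2)^{-k})\big)$, and the geometry only sharpens this (a light vertex sees heavy vertices only in a tiny ball). Hence reaching weight $n^{\Omega(1)}$ forces $(\beta-2)^{-k}=\Omega(\log n)$, i.e. $k\ge \tfrac{1-o(1)}{|\log(\beta-2)|}\log\log n$. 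One then shows every $s$--$t$ path must contain a vertex of weight $n^{\Omega(1)}$ — a path that stays at moderate weight is geometrically forced to make $n^{\Omega(1)}$ hops, since each such hop advances the geometric coordinate by only a small amount while $\|\x s-\x t\|=\Omega(1)$ is fixed — so at least $\tfrac{1-o(1)}{|\log(\beta-2)|}\log\log n$ hops are spent on each of the ``up'' and ``down'' portions. The genuinely delicate points are (i) propagating the multiplicative error terms cleanly through $\Theta(\log\log n)$ successive layers so the double-exponential bases really are $(\beta-2)^{\pm1}$ with only $1+o(1)$ slack, and (ii) the dichotomy ruling out long low-weight detours; both are the technical heart of \cite{bringmann2015generalGIRG}, and rather than reprove them I would invoke that theorem directly.
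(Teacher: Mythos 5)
The paper gives no proof of this lemma: it is imported verbatim as Theorems~2.2 and~2.3 of~\cite{bringmann2015generalGIRG}, and you correctly recognize this at the outset. Your sketch of the cited argument --- a hierarchy of weight layers with doubly-exponential growth, a dense high-weight core giving the giant, a first-moment/branching-process bound on light components, and a union-bound lower bound on the weight reachable in $k$ hops --- matches the structure of that paper's proof, so citing the result (as you ultimately do for the delicate steps) is exactly what is expected here.
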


\begin{tabular}{ |c|c|c| }
 \hline
 Notation
 & Definition & Defined on page \\ 
 \hline
 \bf General&&\\
 \hline
 $G=(V,E)$ & random graph from the GIRG model with & \pageref{sec:model} \\
 & vertex set $V$ and edge set $E$ & \\
 $\Gamma(v)$ & set of neighbors of $v$ in $G$ & \pageref{subsec:notation} \\
 $G[S]$ & subgraph of $G$ induced by $S \subseteq V$ & \pageref{subsec:notation} \\
 $\deg(v)$ & degree of vertex $v$ in $G$ & \pageref{lem:degvertex} \\
  $\Vol(A)$ & volume of $A \subseteq \Space$ &  \pageref{sec:model-geoSpace}\\
 a.a.s. & ``asymptotically almost surely'', probability  $\to 1$ as $n\to \infty$ &  \pageref{subsec:notation}\\
 $\Pois(n)$ &  Poisson distribution with parameter $n$ & \pageref{sec:model} \\
\hline
\bf Model &&\\
\hline
 $n$ & expected number of vertices & \pageref{sec:model-geoSpace}\\
 $d$ & dimension of the geometric space & \pageref{sec:model-geoSpace} \\ 
 $\beta$ & $2 < \beta < 3$, power-law parameter & \pageref{sec:model-weights} \\
 $\wmin$ & minimal weight of a vertex & \pageref{sec:model-weights} \\
 $\alpha$ & decay parameter, which balances between & \pageref{sec:model-edges} \\ & vertex degrees and the geographical distance in $p_{uv}$ & \\
 $p_{uv}$ & edge connection probability between $u$ and $v$ & \pageref{sec:model-edges} \\
  $\Space$ & considered geometric space:$d$ dim. torus ($\R^d / \Z^d$)& \pageref{sec:model-geoSpace} \\
 $\x v$ & position of $v$ in the geometric space $\Space$ & \pageref{sec:model} \\
 $f$ & probability density function of the weights & \pageref{sec:model-weights} \\
 $\w v$ & weight of $v$, $\w v \in \R$ & \pageref{sec:model} \\
 $c_1, c_2$ & constants in the connection probability $p_{uv}$ & \pageref{sec:model-edges} \\

\hline
\bf Routing&&  \\
\hline
 $s, t$ & starting and destination vertex of the routing & \pageref{sec:model} \\
 $m$ & message to be routed & \pageref{alg:greedy}\\
 $\phi$ & objective function & \pageref{subsec:routing-routing-protocol} \\
 $V_{\leq \phi}$, ($V_{> \phi}$) & set of vertices with objective $\leq \phi$, ($> \phi$) & \pageref{subsec:notation}\\
 $\tilde{\phi} $ & approximative objective function & \pageref{thm:relaxations} \\
\hline
\bf Proofs && \\
\hline
 $\gamma(\varepsilon)$ & $ \frac{1-\varepsilon}{\beta-2}$, a constant & \pageref{subsec:basiclemmas} \\
 $\varepsilon_1$ & $\varepsilon_1(\alpha,\beta)>0$, a fixed global constant & \pageref{subsec:basiclemmas} \\
 $\probSpace_1$ & vertex set corresponding to the first phase of routing, & \pageref{subsec:basiclemmas}  \\
 & $\{v \in \probSpace: \phi(v) \le \w v^{-\gamma(\varepsilon_1)}\}$ &  \\
 $\probSpace_2$ & vertex set corresponding to the second phase of routing, & \pageref{subsec:basiclemmas} \\
 & $\{v \in \probSpace: \phi(v) \ge \w v^{-\gamma(\varepsilon_1)}\}$ & \\
 $\zeta$ & $\max\{\frac{3}{2},\frac{2\alpha-1}{2\alpha+4-2\beta}\}$ for $\alpha<\infty$,\, $\frac{3}{2}$ for $\alpha=\infty$ & \pageref{subsec:basiclemmas-definition-zeta} \\
 $V^+(v,\varepsilon)$ & set of good vertices: large weight and larger objective & \pageref{subsec:basiclemmas-definition-zeta} \\
 $V^-(v,\varepsilon)$ & set of bad vertices: small weight and large objective  & \pageref{subsec:basiclemmas-definition-zeta} \\
 $w_1(\varepsilon)$ & $O(e^{d/\varepsilon})$ & \pageref{subsec:mainlemma} \\
 $\phi_1(\varepsilon)$ & $\Omega(e^{-d/\varepsilon})$ & \pageref{subsec:mainlemma} \\
 $w$-grid & grid splitting $\Space$ into $\frac{n}{w}$ equal and disjoint cubes & \pageref{def:grid}\\
 $\overline{\probSpace}(w,\phi)$ & $\{v \in \probSpace_1 \mid \w v \ge w  \wedge \phi(v) \le \phi\} \cup \{v \in \probSpace_2 \mid \phi(v) \le \phi\}$ & \pageref{subsec:mainlemma}\\
 \hline
\end{tabular}

 \label{tab:notation}
\begin{center}
Table~1: A general notation table
\end{center}
\clearpage

For a fixed vertex $v$, the following lemma shows that it is unlikely for $v$ to have a neighbor of much larger weight. We will need this auxiliary result only for the proof of Theorem~\ref{thm:relaxations}, i.e., for studying the relaxations.

\begin{lemma} \label{lem:weighttoohigh}
Let $\varepsilon>0$ and let $v \in V$ be a fixed vertex of weight $\w v$. Then the expected number of neighbors of $v$ with weight at least $w^+ := \w v^{(1+\varepsilon)/(\beta-2)}$ is $O(\wmin^{\beta-2}\w v^{-\varepsilon})$.
\end{lemma}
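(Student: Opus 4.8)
The plan is a routine first-moment computation. Write $N$ for the expected number of neighbors of $v$ of weight at least $w^+$. Since the vertices other than $v$ form a Poisson point process of intensity $n$ on $\Space$ with i.i.d.\ weights of density $f$, the expected number of vertices with position in an infinitesimal region $dx$ and weight in $dw$ is $n\, dx\, f(w)\, dw$. Integrating the connection probability against this intensity, and using that $\x u$ is uniform on the unit-volume torus $\Space$ (Tonelli lets us integrate over the position first), gives
\[
N \;=\; \int_{w^+}^{\infty} n\, f(w)\cdot \Pr_{\x u}\big[\{u,v\}\in E \mid \w u = w\big]\; dw ,
\]
where the inner probability is over the random position $\x u$ while the weight and position of $v$ are fixed. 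Recall $f(w) = \Theta(\wmin^{\beta-1}w^{-\beta})$ for $w\ge\wmin$ and $f(w)=0$ otherwise, so the integral effectively starts at $\max\{w^+,\wmin\}$.

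The next step is to bound the connection probability via \lemref{marginal}, which (also when one of the two coordinates is fixed, as remarked after that lemma) gives $\Pr_{\x u}[\{u,v\}\in E\mid \w u = w] = \Theta(\min\{w\w v/(\wmin n),\,1\}) = O(w\w v/(\wmin n))$. Substituting this together with the form of $f$,
\[
N \;=\; O\!\left(\wmin^{\beta-2}\,\w v \int_{\max\{w^+,\wmin\}}^{\infty} w^{1-\beta}\; dw\right).
\]
Here lies the one point needing care: because $\beta>2$ we have $1-\beta<-1$, so this integral converges, with $\int_a^{\infty} w^{1-\beta}\,dw = a^{2-\beta}/(\beta-2) = O(a^{2-\beta})$. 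Since $2-\beta<0$ and $\max\{w^+,\wmin\}\ge w^+$, we obtain $N = O(\wmin^{\beta-2}\,\w v\,(w^+)^{2-\beta})$.

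It then remains to plug in $w^+ = \w v^{(1+\varepsilon)/(\beta-2)}$, which gives $(w^+)^{2-\beta} = \w v^{-(1+\varepsilon)}$ and hence $N = O(\wmin^{\beta-2}\,\w v\cdot\w v^{-(1+\varepsilon)}) = O(\wmin^{\beta-2}\,\w v^{-\varepsilon})$, as claimed. No case analysis on $w^+$ is needed: if $w^+<\wmin$ (so that ``neighbors of weight $\ge w^+$'' simply means all neighbors), the integral starts at $\wmin\ge w^+$ and the displayed bound holds a fortiori. I do not expect any genuine obstacle here — the only substantive ingredient is the convergence of the first-moment integral, which is exactly where $2<\beta$ is used (bounding $\min\{x,1\}$ by $x$ would be illegitimate for $\beta\le 2$); everything else is one short integral.
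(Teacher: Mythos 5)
Your proof is correct and follows essentially the same route as the paper's: integrate the marginal connection probability from \lemref{marginal} against the weight density $f$ and evaluate the resulting $\int w^{1-\beta}\,dw$. The extra remarks about the Poisson intensity, Tonelli, and the $\max\{w^+,\wmin\}$ lower limit are harmless elaborations of details the paper leaves implicit.
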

\begin{proof}
We calculate the expected value with a straight-forward integral over all vertex weights larger than $w^+$. The probability measure is given by the density $f(w)$, and for a vertex $u$ of fixed weight $\w u$ and random coordinate $\x u$, the probability to have an edge with $v$ is at most $O(\frac{\w u \w v}{\wmin n})$ by Lemma~\ref{lem:marginal}. Then the expected number of such neighbors of $v$ is at most
$$n \cdot \int_{w^+}^{\infty} f(w) \cdot O\left(\frac{\w v w}{\wmin n}\right) dw= O\left(\wmin^{\beta-2} \w v \int_{w^+}^{\infty} w^{1-\beta} dw\right)=O\left(\wmin^{\beta-2} \w v^{-\varepsilon}\right).$$
\end{proof}

Next, we compute the total number of vertices whose objective is larger than a given $\phi_0$.

\begin{lemma} \label{lem:verticesofhighobjective}
Let $\phi_0$ be a given objective, and denote by $V_{\ge \phi_0}$ the set of vertices which have objective at least $\phi_0$. Then with probability $1-e^{-\Omega(\phi_0^{-1})}$, it holds $|V_{\ge \phi_0}|=\Theta(\phi_0^{-1})$.
\end{lemma}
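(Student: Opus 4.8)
\textbf{Proof plan for Lemma~\ref{lem:verticesofhighobjective}.}

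The plan is to use that the vertices together with their weights form a Poisson point process on $\Space\times[\wmin,\infty)$ with intensity measure $n\,\mathrm dx\times f(w)\,\mathrm dw$ (marking property of Poisson processes). Since $V_{\ge\phi_0}=\{v:\phi(v)\ge\phi_0\}=\{v:\|\x v-\x t\|^d\le \w v/(\phi_0\wmin n)\}$ corresponds to a measurable region of this product space, $|V_{\ge\phi_0}|$ is, up to the bounded additive term $|\{s,t\}\cap V_{\ge\phi_0}|\in\{0,1,2\}$ coming from the two fixed vertices, a Poisson random variable with mean $\mu:=n\int_{\wmin}^{\infty}f(w)\,\Vol\big(B_t(w)\big)\,\mathrm dw$, where $B_t(w)=\{x\in\Space:\|x-\x t\|^d\le w/(\phi_0\wmin n)\}$.

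I would first show $\mu=\Theta(\phi_0^{-1})$. The volume of an $\infty$-norm ball of radius $r$ on the torus is $\Theta(\min\{r^d,1\})$, so $\Vol(B_t(w))=\Theta(\min\{w/(\phi_0\wmin n),1\})$; together with $f(w)=\Theta(\wmin^{\beta-1}w^{-\beta})$, split the integral at $w^\ast:=\phi_0\wmin n$. On $[\wmin,w^\ast]$ the integrand is $\Theta(\wmin^{\beta-1}w^{1-\beta}/(\phi_0\wmin n))$ and, using $2<\beta<3$ and $w^\ast\ge 2^d\wmin$ (valid since any attained objective satisfies $\phi_0\ge 2^d/n$), one gets $\int_{\wmin}^{w^\ast}w^{1-\beta}\,\mathrm dw=\Theta(\wmin^{2-\beta})$, so this part contributes $\Theta(\phi_0^{-1})$ to $\mu$ with the $\wmin$-dependence cancelling. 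On $[w^\ast,\infty)$ the integrand is $\Theta(\wmin^{\beta-1}w^{-\beta})$ and $\int_{w^\ast}^{\infty}w^{-\beta}\,\mathrm dw=\Theta((w^\ast)^{1-\beta})$, contributing $\Theta(n^{2-\beta}\phi_0^{1-\beta})$, which is $O(\phi_0^{-1})$ precisely because $\beta>2$ and $\phi_0=\Omega(1/n)$. Hence $\mu=\Theta(\phi_0^{-1})$.

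Finally, I would invoke standard Chernoff bounds for the Poisson distribution, $\Pr[\Pois(\mu)\ge 2\mu]\le e^{-\Omega(\mu)}$ and $\Pr[\Pois(\mu)\le\mu/2]\le e^{-\Omega(\mu)}$. Writing $c_1\phi_0^{-1}\le\mu\le c_2\phi_0^{-1}$ and taking a union bound, with probability $1-e^{-\Omega(\phi_0^{-1})}$ the Poisson part lies in $[\tfrac{c_1}{2}\phi_0^{-1},2c_2\phi_0^{-1}]$, and adding back the additive term in $\{0,1,2\}$ still leaves $|V_{\ge\phi_0}|=\Theta(\phi_0^{-1})$.

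I do not expect a serious obstacle. The one point requiring care is the admissible range of $\phi_0$: the bound $\mu=O(\phi_0^{-1})$ genuinely fails when $\phi_0=o(1/n)$ (then $V_{\ge\phi_0}=V$ has size $\Theta(n)\gg\phi_0^{-1}$), so the argument must use $\phi_0\ge C_0/n$ for a suitable constant $C_0$; this is harmless since $\phi_0=\omega(1/n)$ in every application of the lemma, and an attained objective in any case satisfies $\phi_0\ge 2^d/n$. The mean computation is the most calculation-heavy step, but it is routine once the integral is split at $w^\ast$.
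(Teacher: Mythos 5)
Your proof is correct and follows essentially the same route as the paper: compute the Poisson mean by integrating the (weight-dependent) hitting probability against the weight density, then apply a Chernoff bound. You are actually slightly more careful than the paper's own proof, which drops the $\min\{\cdot,1\}$ cap on the volume and does not mention the implicit constraint $\phi_0 = \Omega(1/n)$ or the bounded correction from the fixed vertices $s,t$; your version is the tight one, while the paper's uncapped integral happens to give the same $\Theta(\phi_0^{-1})$ only because, as you observe, attained objectives satisfy $\phi_0 \ge 2^d/n$.
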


\begin{proof}
By definition of the objective function $\phi$, a vertex $u$ satisfies $\phi(u)\ge \phi_0$ if and only if $\|\x u - \x t\|^d \le \frac{\w u}{\phi_0 \wmin n}$. We integrate over all weights $w$ and see that the expected number of vertices with objective at least $\phi_0$ is 
$$n \wmin^{\beta-1}\int_{\wmin}^{\infty} w^{-\beta} \frac{w}{\phi_0\wmin n} dw = \Theta(\phi_0^{-1}).$$
Then by a Chernoff bound, with probability $1-e^{-\Omega(\phi_0^{-1})}$, this random variable is concentrated around its expectation and thus $\Theta(\phi_0^{-1})$.
\end{proof}

The following  lemma is a consequence of the Poisson point process. It will help us below to deal with events which are not independent as its statement shows how certain random variables are correlated.
\begin{lemma} \label{lem:correlatedprobs}
Let $A_1, A_2 \subset \Space$ and let $u_1, u_2 \notin A_1 \cup A_2$ be two vertices. Then
$$\Pr[\Gamma(u_2) \cap A_2 = \emptyset \mid \Gamma(u_1) \cap A_1 = \emptyset] \ge \Pr[\Gamma(u_2) \cap A_2 = \emptyset].$$
\end{lemma}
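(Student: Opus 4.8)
The plan is to expose the randomness of the graph in two independent pieces and then use a standard FKG-type / positive-correlation argument. First I would note that by the definition of the GIRG model, once we condition on all the weights $\w u$ and positions $\x u$ of the vertices of $V$, the edges are drawn mutually independently. Hence it suffices to prove the inequality after conditioning on the point configuration of the underlying Poisson process together with all weights; the unconditional statement then follows by taking expectations over the configuration (note that $u_1, u_2, A_1, A_2$ depend only on positions and weights, which are the things we condition on). So fix the configuration. Now the event $\{\Gamma(u_1) \cap A_1 = \emptyset\}$ says: for every vertex $z$ lying in $A_1$, the potential edge $\{u_1, z\}$ is absent. Likewise $\{\Gamma(u_2) \cap A_2 = \emptyset\}$ says every potential edge $\{u_2, z'\}$ with $z' \in A_2$ is absent.

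The key observation is that, since $u_1, u_2 \notin A_1 \cup A_2$, the two families of edge-indicators involved are \emph{almost} disjoint: the only possible overlap is a single edge, namely $\{u_1, u_2\}$ itself, which can appear in the first family (if $u_2 \in A_1$) and simultaneously in the second (if $u_1 \in A_2$). Each of the two events is a decreasing function of the vector of independent edge-indicator Bernoulli variables (it is an AND of "edge absent" conditions, i.e.\ a conjunction of the variables $1 - \mathbf{1}[\text{edge present}]$). Two decreasing events in independent (product-measure) Bernoulli variables are positively correlated by Harris's inequality (the easy special case of FKG), which gives exactly
\[
\Pr[\Gamma(u_2) \cap A_2 = \emptyset \mid \Gamma(u_1) \cap A_1 = \emptyset] \ge \Pr[\Gamma(u_2) \cap A_2 = \emptyset].
\]
Conditioning is legitimate since $\Pr[\Gamma(u_1)\cap A_1 = \emptyset] > 0$ (it is a finite product of probabilities $1 - p_{u_1 z} \ge $ some positive constant, or $=1$ if $A_1$ is empty). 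Finally I would un-condition on the configuration: writing $\mathcal{C}$ for the configuration, $\Pr[\Gamma(u_2)\cap A_2 = \emptyset] = \Ex_{\mathcal C}[\Pr[\cdot \mid \mathcal C]] \le \Ex_{\mathcal C}[\Pr[\Gamma(u_2)\cap A_2 = \emptyset \mid \Gamma(u_1)\cap A_1 = \emptyset, \mathcal C]]$, and the right-hand side is $\le \Pr[\Gamma(u_2)\cap A_2 = \emptyset \mid \Gamma(u_1)\cap A_1 = \emptyset]$ — wait, this direction needs care, so I would instead argue directly that the conditional inequality holds configuration-by-configuration and that the events $\Gamma(u_1)\cap A_1 = \emptyset$ and the target probability both factor through the configuration in a compatible way; cleanest is to observe the whole statement is really about the joint law of edge-indicators given positions/weights and to prove it there, then note the claimed (unconditional over edges, but the positions are part of the model's randomness too) inequality follows because for \emph{fixed} $u_1,u_2,A_1,A_2$ the sets are deterministic.

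The main obstacle is purely bookkeeping: making sure the overlap edge $\{u_1,u_2\}$ does not break the independence structure needed for Harris. This is handled by the fact that Harris's inequality does not require the two decreasing events to depend on disjoint coordinate sets — it only requires a single product measure on all edge-indicators, which we have. So no extra case analysis for $u_2 \in A_1$ or $u_1 \in A_2$ is actually needed; a reader-friendly proof would simply invoke Harris directly on the product space of all edge-indicators (conditioned on the vertex configuration), after remarking that both events are monotone decreasing in these indicators.
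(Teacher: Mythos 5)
Your proof has a genuine gap, and it stems from misidentifying where the positive correlation actually lives. After conditioning on the full point configuration $\mathcal{C}$ (positions and weights of all vertices), the event $\{\Gamma(u_1)\cap A_1=\emptyset\}$ is determined by the edge indicators for pairs $\{u_1,z\}$ with $z\in V\cap A_1$, while $\{\Gamma(u_2)\cap A_2=\emptyset\}$ is determined by indicators for $\{u_2,z'\}$ with $z'\in V\cap A_2$. The hypothesis $u_1,u_2\notin A_1\cup A_2$ rules out $u_2\in A_1$ and $u_1\in A_2$ (the very overlap you flagged as possible), so these two families of potential edges are \emph{disjoint}. Given $\mathcal C$ the edge indicators are independent, hence the two events are exactly conditionally independent --- your invocation of Harris on the edge indicators is vacuous and gives nothing beyond independence. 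All of the claimed correlation therefore comes from the randomness of the Poisson point process: both $g(\mathcal C):=\Pr[\Gamma(u_1)\cap A_1=\emptyset\mid\mathcal C]$ and $h(\mathcal C):=\Pr[\Gamma(u_2)\cap A_2=\emptyset\mid\mathcal C]$ are non-increasing functions of the point configuration restricted to $A_1\cap A_2$, and one needs an FKG-type inequality for the Poisson process to conclude $\Ex[gh]\ge\Ex[g]\Ex[h]$. That is precisely the step you flag as ``needing care'' and then attempt to dismiss; the proposed fix (arguing ``the whole statement is really about the joint law of edge-indicators given positions'' and that ``the sets are deterministic'') is not a proof and, as the disjointness observation shows, is in fact the opposite of where the action is.

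The paper's proof is organized around the correct decomposition: it splits $A_2$ into $A_1\cap A_2$ and $A_2\setminus A_1$, notes that the Poisson configuration in the disjoint region $A_2\setminus A_1$ is independent of everything in $A_1$ (so conditioning on $\Gamma(u_1)\cap A_1=\emptyset$ leaves the probability of $\Gamma(u_2)\cap(A_2\setminus A_1)=\emptyset$ unchanged), and applies a monotonicity/positive-correlation argument only to the overlap $A_1\cap A_2$, where both events are decreasing in the number of Poisson points. Your plan can be salvaged by replacing the edge-indicator Harris step with an FKG inequality for the Poisson process on $A_1\cap A_2$ (after factoring out the independent regions), or simply by following the paper's decomposition; but as written the un-conditioning step is where the lemma actually lives, and it is not justified.
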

\begin{proof}
Let $A_1, A_2 \subset \Space$ be two subsets and let $u_1, u_2$ be two vertices not contained in the sets $A_1, A_2$. Suppose we know that $u_1$ has no neighbors in $A_1$. Clearly, conditioning on this event decreases the expected number of vertices in $A_1$ and in every subset of $A_1$, hence
$$\Pr[\Gamma(u_2) \cap (A_1 \cap A_2) = \emptyset \mid \Gamma(u_1) \cap A_1 = \emptyset] \ge \Pr[\Gamma(u_2) \cap (A_1 \cap A_2) = \emptyset].$$
Furthermore, by the Poisson point process, for every set $A_3$ which is disjoint from $A_1$, the number of vertices in $A_1$ and $A_3$ is independent. We put $A_3 := A_2 \setminus A_1$ and deduce
$$\Pr[\Gamma(u_2) \cap (A_2 \setminus A_1) = \emptyset \mid \Gamma(u_1) \cap A_1 = \emptyset] = \Pr[\Gamma(u_2) \cap (A_2 \setminus A_1) = \emptyset].$$
Combining the two inequalities gives the desired property.
\end{proof}
 
Some of the proofs require that we split the geometric space into small subspaces. We will do this with grids which divide the torus into smaller $d$-dimensional cubes.

\begin{definition} \label{def:grid}For a given weight $w=w(n)$, a $w$-grid is a grid which splits the geometric ground space $\Space$ into $\frac{n}{w}$ equal and disjoint cubes of side lengths $(\frac{w}{n})^d$.
\end{definition}

Here, for simplifying notations, we assume that $w$ divides $n$. One application of $w$-grids is the following lemma. For a given $w$-grid and a given vertex set $S$, it states that if $S$ contains vertices of sufficiently many different cells of the grid, then at least one vertex $v \in S$ will have a neighbor of weight at least $w$.

\begin{lemma}[Bulk lemma] \label{lem:bulk} 
Let $w_0=w(n)= \omega(1)$ be a weight growing in $n$ and let $S$ be a set of vertices of weight at most $w_0$ such that $S$ contains vertices in at least $w_0$ cells of a fixed $w_0$-grid. Then a.a.s., there exists a vertex $v \in S$ and a vertex $u$ of weight at least $w_0$ such that $u$ and $v$ are neighbors and $u$ and $v$ are contained in the same cell of the $w_0$-grid.
\end{lemma}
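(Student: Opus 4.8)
\emph{Proof strategy.} The plan is to work cell by cell and to exploit that, under the Poisson point process, disjoint regions are independent. I would fix $w_0$ distinct cells $C_1,\dots,C_{w_0}$ of the $w_0$-grid, each containing a vertex of $S$, pick one representative $v_i\in S\cap C_i$, and for $1\le i\le w_0$ let $E_i$ be the event that $C_i$ contains a vertex $u$ with $\w u\ge w_0$ that is a neighbour of $v_i$; clearly the lemma follows once some $E_i$ holds. The point of this reduction is an independence statement: the vertex set of the GIRG decomposes into two independent Poisson processes, one of vertices of weight $\le w_0$ and one of vertices of weight $>w_0$ (each with the appropriate conditional weight law). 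Since $S$ lies entirely in the first process, conditioning on $S$, on the positions and weights of its vertices, and hence on the $v_i$, leaves the high-weight process and the (independent) coin flips for the edges from it to the $v_i$ untouched. As the cells $C_i$ are pairwise disjoint, $E_1,\dots,E_{w_0}$ are then mutually independent, and it suffices to show $\sum_i\Pr[E_i]\to\infty$.

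For a single cell one argues as follows. A cell of the $w_0$-grid has volume $w_0/n$, hence any two of its points are at distance at most $(w_0/n)^{1/d}$. Therefore, for any $u\in C_i$ with $\w u\ge w_0$, using $\w{v_i}\ge\wmin$,
\[
\frac{1}{\|\x u-\x{v_i}\|^{\alpha d}}\Big(\frac{\w u\w{v_i}}{\wmin n}\Big)^{\alpha}\ \ge\ \Big(\tfrac{n}{w_0}\Big)^{\alpha}\Big(\tfrac{w_0}{n}\Big)^{\alpha}=1,
\]
so \eqref{eq:puv} gives $p_{uv_i}=\Theta(1)$; in the threshold case \eqref{eq:puv2} with $c_1<1$ one instead passes to a concentric subcube $R_i\subseteq C_i$ of volume $\Theta(w_0/n)$ on which $p_{uv_i}=\Theta(1)$ is guaranteed. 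Thinning the Poisson process of weight-$\ge w_0$ vertices in $R_i$ by their connection probability to $v_i$, the number of weight-$\ge w_0$ neighbours of $v_i$ in $R_i$ is Poisson distributed with mean $\Omega\big(n\cdot\tfrac{w_0}{n}\cdot\wmin^{\beta-1}w_0^{1-\beta}\big)=\Omega(w_0^{2-\beta})$, where I used that a vertex has weight $\ge w_0$ with probability $\Theta(\wmin^{\beta-1}w_0^{1-\beta})$ (from the weight density) and that $\wmin=\Theta(1)$. Since $\beta>2$, this mean is $o(1)$, hence $\Pr[E_i]\ge 1-e^{-\Omega(w_0^{2-\beta})}=\Omega(w_0^{2-\beta})$.

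Combining these bounds via the independence of the $E_i$,
\[
\Pr\Big[\textstyle\bigcup_{i=1}^{w_0}E_i\Big]\ \ge\ 1-\prod_{i=1}^{w_0}(1-\Pr[E_i])\ \ge\ 1-\exp\!\Big(-\sum_{i=1}^{w_0}\Pr[E_i]\Big)\ \ge\ 1-\exp\big(-\Omega(w_0^{3-\beta})\big)=1-o(1),
\]
since $\beta<3$ and $w_0=\omega(1)$. On the event $\bigcup_i E_i$ there is an index $i$ and a vertex $u\in C_i$ with $\w u\ge w_0$ and $\{u,v_i\}\in E$, and $v:=v_i\in S$ lies in the same cell $C_i$, which is exactly the claim.

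\emph{Main obstacle.} The delicate point is the independence step: one must make sure that $S$ — which in the intended applications is produced by an earlier stage acting on the low-weight part of the instance (e.g.\ a set of already-explored vertices) — is genuinely independent of the high-weight Poisson process and of the edges joining the two parts, so that conditioning on $S$ and on the representatives $v_i$ does not bias the events $E_i$. This is precisely what the decomposition of the vertex point process into independent weight-$\le w_0$ and weight-$>w_0$ processes provides, and it should be spelled out carefully. Everything else — the $\Theta(1)$ lower bound on $p_{uv_i}$ for two vertices in the same cell, the subcube workaround for \eqref{eq:puv2}, and the Poisson estimates — is routine.
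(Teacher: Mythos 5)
Your proposal is correct and is essentially the same argument as in the paper: fix $w_0$ cells with representatives $v_i\in S$, reveal only the low-weight part of the Poisson process first so that $S$ is fixed before the high-weight process is sampled, use \eqref{eq:puv}/\eqref{eq:puv2} to get $p_{uv_i}=\Omega(1)$ for high-weight $u$ in the same cell, and compute that the total mean number of such (high-weight $u$)–($v_i$) connections across cells is $\Omega(\wmin^{\beta-1}w_0^{3-\beta})=\omega(1)$. The only cosmetic difference is that the paper concludes with a single Chernoff bound on the resulting Poisson count of connected high-weight vertices, whereas you make the cell-to-cell independence explicit and bound $\Pr[\bigcup_i E_i]$ directly; these are two phrasings of the same estimate.
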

\begin{proof}
We uncover the graph in two steps: First we only consider the vertices of weight less than $w_0$ and afterwards we uncover the remaining vertices. Let $S$ be a fixed set of vertices with weight less than $w_0$, and let $C_1, \ldots, C_{w_0}$ be cells of a $w_0$-grid which contain at least one vertex of $v_i \in S \cap C_i$.

Now, we insert the remaining vertices of high weight at random, according to the definition of the model. Together, the $w_0$ cells have volume $w_0 \cdot \frac{w_0}{n}$. Therefore, a single vertex $u$ of weight $\w u \ge w_0$ falls with probability $\frac{w_0^2}{n}$ into one of the $w_0$ cells. Furthermore, if $u$ falls into such a cell $C_i$, then $\|\x u-\x {v_i}\|^d=O(\frac{w_0}{n})$. By definition, $v_i$ has weight at least $\wmin$, and $(EP1)$ or $(EP2)$ imply
$$\Pr[\{u,v_i\} \in E \mid u,v_i \in C_i] = \Omega(1).$$
Thus every vertex of weight at least $w_0$ is connected to the set $\{v_1, \ldots, v_{w_0}\}$ with probability at least $\Omega(\frac{w_0^2}{n})$. Note that the expected number of vertices of weight at least $w_0$ is $\Theta(n \wmin^{\beta-1}w_0^{1-\beta})$, hence the expected number of vertices of weight at least $w_0$ which are connected to a vertex of $\{v_1, \ldots, v_{w_0}\}$ is at least $\Omega(\wmin^{\beta-1}w_0^{3-\beta})=\omega(1)$. By a Chernoff bound, it follows that a.a.s., there exists a vertex $v_i$ in a cell $C_i$ such that $v_i$ has at least one neighbor $u$ with $\w u \ge w_0$.
\end{proof}

Note that the vertices of a subcube $A$ induce a random subgraph. With the next lemma we show that this induced subgraph is itself a GIRG. 
\begin{lemma} \label{lem:subgirg}
Let $r \leq 1/4$, let $A$ be a subcube of $\Space$ with radius $r$. Then with appropriate rescaling the induced subgraph $G[A]$ itself is a GIRG with intensity $n \Vol(A)$, except with geometric space $[0,1]^d$ instead of $\Space$. 
\end{lemma}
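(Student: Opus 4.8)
The plan is to verify directly that $G[A]$, with vertex positions transported by the affine bijection $T\colon A \to [0,1]^d$, $T(x) = (x-x_0)/\ell$, is a GIRG with intensity $n' := n\Vol(A) = n\ell^d$ on geometric space $[0,1]^d$; here $\ell = 2r$ is the side length of $A$ (recall that an $\infty$-norm ball of radius $r$ is a cube of side $2r$), and $x_0$ is the corner of $A$ in a representation of the torus in which $A$ does not wrap around (such a representation exists since $\ell \le 1/2$). Concretely, I would check in turn the three defining ingredients of the model: the Poisson point process for the vertices, the power-law weights, and an edge-probability condition of the form \eqref{eq:puv} (resp.\ \eqref{eq:puv2}).

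First I would handle the vertex set. By the restriction property of Poisson point processes, $V \cap A$ is a Poisson point process on $A$ of intensity $n\,\mathds{1}_A$; pushing it forward by the volume-scaling map $T$ yields a Poisson point process on $[0,1]^d$ of uniform intensity $n\ell^d = n'$, exactly as required by the model. Next, since in the GIRG model the weights are drawn independently of the positions, the weights of the vertices landing in $A$ are still i.i.d.\ with density $f$; in particular the power-law exponent $\beta$ and the minimum weight $\wmin$ are unchanged.

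For the edges, the role of the hypothesis $r \le 1/4$ is precisely that $\ell = 2r \le 1/2$: for any $x,y \in A$ every coordinate satisfies $|x_i-y_i| \le \ell \le 1/2$, hence $\min\{|x_i-y_i|,\,1-|x_i-y_i|\} = |x_i-y_i|$, so the torus distance restricted to $A$ coincides with the flat $\infty$-distance, and therefore $\|x-y\| = \ell\,\|T(x)-T(y)\|$ where the right-hand norm is the $\infty$-norm on $[0,1]^d$. Substituting this identity into \eqref{eq:puv} for a pair $u,v \in A$, the factor $\ell^{-\alpha d}$ is absorbed into the weight term: $\frac{1}{\|\x u - \x v\|^{\alpha d}}\big(\tfrac{\w u \w v}{\wmin n}\big)^{\alpha} = \frac{1}{\|T(\x u) - T(\x v)\|^{\alpha d}}\big(\tfrac{\w u \w v}{\wmin \ell^d n}\big)^{\alpha} = \frac{1}{\|T(\x u) - T(\x v)\|^{\alpha d}}\big(\tfrac{\w u \w v}{\wmin n'}\big)^{\alpha}$, which is literally the right-hand side of \eqref{eq:puv} with intensity $n'$, so the $\Theta$-condition is preserved with the same hidden constants. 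The threshold case \eqref{eq:puv2} goes through identically with the same $c_1, c_2$, since the rescaling multiplies both $\|\x u - \x v\|^d$ and $\tfrac{\w u \w v}{\wmin n}$ by $\ell^d$. Finally, in the GIRG model all edges are sampled independently given the positions and weights, and this independence is trivially inherited by the induced subgraph. Hence $G[A]$, suitably rescaled, is a GIRG with intensity $n' = n\Vol(A)$ on $[0,1]^d$.

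I expect no deep obstacle: the argument is essentially a bookkeeping check against the definition. The one point that genuinely requires care is the metric rescaling — verifying that $r \le 1/4$ is exactly what makes the torus metric on $A$ a scaled copy of the flat $\infty$-metric on $[0,1]^d$, and then tracking how the $\ell^d$ factor migrates from the distance term into the intensity, so that the resulting object is a GIRG with the precise claimed intensity $n\Vol(A)$ rather than merely something ``GIRG-like''.
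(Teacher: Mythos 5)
Your proposal is correct and follows essentially the same route as the paper's proof: rescale $A$ to $[0,1]^d$, invoke the Poisson restriction property for the vertices, note the weights are unaffected, and observe that the factor $\ell^{\alpha d}$ coming from the distance rescaling is exactly absorbed into replacing $n$ by $n' = n\ell^d$ in \eqref{eq:puv} (resp.\ \eqref{eq:puv2}). You spell out the role of $r \le 1/4$ (so that the torus metric restricted to $A$ coincides with the flat $\infty$-metric) a bit more explicitly than the paper does, but the argument is the same.
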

\begin{proof}
We rescale the cube $A$ by a factor $1/\Vol(A) = (2r)^{-d}$, thus transforming it into $A' = [0,1]^d$. Then the vertices of $A'$ are given by a Poisson point with intensity $n_{A'}:=n\Vol(A)$, and the weights follow the same distribution as before. Note that by the condition $r\leq 1/4$ the distances in $A$ are the same as the distances in $A'$, only scaled up by a factor $2r$. We only consider the case $\alpha <\infty$; the case $\alpha = \infty$ is analogous. For any two vertices $u,v \in A$, by \eqref{eq:puv} the connection probability is
\[
p_{uv} = \Theta\Big( \min\Big\{ \frac{1}{\|\x u - \x v\|_A^{\alpha d}} \cdot \Big( \frac{\w u \w v}{\wmin n}\Big)^{\alpha} ,1 \Big\} \Big) = \Theta\Big( \min\Big\{ \frac{1}{\|\x u - \x v\|_{A'}^{\alpha d}} \cdot \Big( \frac{\w u \w v}{\wmin n_{A'}}\Big)^{\alpha} ,1 \Big\} \Big),
\]
which is exactly~\eqref{eq:puv} for the distances of $A'$ and intensity $n'$. 
\end{proof}

Lemma~\ref{lem:subgirg} is useful since all results from~\cite{bringmann2015generalGIRG} were proven for more general geometric spaces, including the space $[0,1]^d$. Thus they apply to the induced graph $G[A']$. In particular, assume $r=\omega(n^{-d})$, so that the intensity is $n'=\omega(1)$. Then a.a.s.\!~there is a giant component of size $\Theta(n')$ in $G[A]$ by~\cite[Theorem 2.2]{bringmann2015generalGIRG}. Moreover, by~\cite[Lemma 5.2]{bringmann2015generalGIRG} there exists a constant $c>0$ such that a.a.s.\! every vertex of weight at least $(\log n')^c$ is contained in the giant connected component of such a random graph. Thus we obtain the following corollary.

\begin{corollary} \label{cor:subcore}
Let $r=\omega(n^{-d})$, $r \leq 1/4$, let $A$ be a subcube of $\Space$ with radius $r$, and let $G[A]$ be the random subgraph induced by all vertices in $A$. Then there exists a constant $c>0$ such that all vertices in $A$ of weight at least $(\log (r^d n))^c$ are in the same connected component of $G[A]$.
\end{corollary}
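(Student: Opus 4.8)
The plan is to reduce the statement to the connectivity results of \cite{bringmann2015generalGIRG} via the embedding provided by Lemma~\ref{lem:subgirg}, so the proof is essentially a change of parameters plus a citation. Concretely, I would first rescale the subcube $A$ by the factor $1/\Vol(A)=(2r)^{-d}$, turning it into $A'=[0,1]^d$. Since $r\le 1/4$ there is no wrap-around, so distances in $A$ agree with Euclidean distances in $A'$ up to the global factor $2r$; by Lemma~\ref{lem:subgirg} the induced graph $G[A]$ is then (isomorphic to) a GIRG on the ground space $[0,1]^d$ with intensity $n' := n\Vol(A) = \Theta(r^d n)$, the same weight distribution (same $\beta$, same $\wmin$), and connection probabilities of exactly the form \eqref{eq:puv} (resp.\ \eqref{eq:puv2}) with intensity $n'$ and with the \emph{same} hidden constants and the same $\alpha$. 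The hypothesis on $r$ guarantees $n' = \omega(1)$, i.e.\ $n'\to\infty$ as $n\to\infty$.

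Next I would invoke the results of \cite{bringmann2015generalGIRG} for this rescaled GIRG, which are valid for the ground space $[0,1]^d$ and (via Footnote~\ref{foot:models}, conditioning on the number of vertices in $A$, which is $n'(1+o(1))$ a.a.s.\ by a Chernoff bound) apply to our model as well: by \cite[Theorem~2.2]{bringmann2015generalGIRG} the graph a.a.s.\ has a unique giant component of size $\Theta(n')$, and by \cite[Lemma~5.2]{bringmann2015generalGIRG} there is a constant $c'>0$, depending only on $\alpha,\beta,d$ and the constants in \eqref{eq:puv}/\eqref{eq:puv2} (all unchanged under the rescaling), such that a.a.s.\ every vertex of weight at least $(\log n')^{c'}$ lies in that giant component. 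Since this a.a.s.\ statement is with respect to $n'\to\infty$ and $n'\to\infty$ whenever $n\to\infty$, it also holds a.a.s.\ with respect to $n\to\infty$. In particular, a.a.s.\ all vertices of $G[A]$ of weight at least $(\log n')^{c'}$ lie in a common component.

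Finally I would translate the weight threshold back: because $n' = \Theta(r^d n)$ we have $\log n' = \log(r^d n)+O(1)$, so for $r^d n$ large it holds $(\log n')^{c'} \le (\log(r^d n))^{c'+1}$; choosing $c := c'+1$ yields that a.a.s.\ every vertex of $G[A]$ of weight at least $(\log(r^d n))^{c}$ lies in the common component, which is the claim. There is no genuinely hard step here — this is a corollary — and the only points requiring care are exactly the two used above: that the model parameters (hence the constant $c'$) are literally unchanged by the rescaling of Lemma~\ref{lem:subgirg}, and that ``a.a.s.\ as $n'\to\infty$'' transfers to ``a.a.s.\ as $n\to\infty$'' even though $n'$ may grow slowly, since it still tends to infinity.
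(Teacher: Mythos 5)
Your proposal is correct and follows essentially the same route as the paper: the paper's derivation of Corollary~\ref{cor:subcore} is precisely the paragraph preceding it — rescale $G[A]$ via Lemma~\ref{lem:subgirg} to a GIRG of intensity $n'=n\Vol(A)$ on $[0,1]^d$, invoke \cite[Theorem~2.2]{bringmann2015generalGIRG} and \cite[Lemma~5.2]{bringmann2015generalGIRG} (using Footnote~\ref{foot:models} to transfer between the Poisson and fixed-$n$ models), and read off the weight threshold $(\log n')^{c'}=\Theta((\log(r^dn))^{c'})$. Your two explicit checks (that the rescaling preserves all model parameters and hence the constant $c'$, and that a.a.s.\ w.r.t.\ $n'\to\infty$ transfers to a.a.s.\ w.r.t.\ $n\to\infty$) are exactly the implicit points the paper relies on.
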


\subsection{Where to Expect Neighbors \& Unlikely Jumps} \label{subsec:basiclemmas}

We start by introducing technical notation. For all $\varepsilon>0$, we put $\gamma(\varepsilon) := \frac{1-\varepsilon}{\beta-2}$. Next, let \mbox{$\varepsilon_1 = \varepsilon_1(\alpha,\beta)>0$} be a fixed constant which we choose sufficiently small during proofs, e.g., such that $\gamma(\varepsilon_1)>1$. Next, we define the following two classes of vertices:
$$\probSpace_1 := \{v \in \probSpace: \phi(v) \le \w v^{-\gamma(\varepsilon_1)}\}
\quad 
\text{and}
\quad \probSpace_2 := \{v \in \probSpace: \phi(v) \ge \w v^{-\gamma(\varepsilon_1)}\}.$$

In every hop, the process chooses a neighbor $u$ of the current vertex $v$ which maximizes $\phi(u)$. As discussed above in Section~\ref{sec:proofsketch}, it turns out that if $v \in \probSpace_1$, then typically $\w u$ is by an exponent $\gamma(\varepsilon)$ larger than $\w v$. On the other hand, if $v \in \probSpace_2$, then we expect that $\phi(u) \sim \phi(v)^{1/\gamma(\varepsilon)}$. Therefore, vertices in $\probSpace_1$ will correspond to the first phase of the routing, where the current weight increases by an exponent with every hop. The vertices in $\probSpace_2$ correspond to the second phase where the objective is increased by an exponent with every hop (due to $\phi(v)<1$, indeed the objective increases). 

In order to find such a trajectory, we want to classify the vertices in $\probSpace$ which accelerate the routing as desired. Let $\zeta := \max\{\frac{3}{2},\frac{2\alpha-1}{2\alpha+4-2\beta}\}$ for $\alpha<\infty$ and $\zeta := \frac{3}{2}$ in the limit case $\alpha=\infty$.\label{subsec:basiclemmas-definition-zeta} Then for $v \in \probSpace_1$ and $\varepsilon=\varepsilon(n)>0$ we define
\begin{align} \label{eq:areas1}
\begin{split}
V^+(v,\varepsilon) & :=\{u \in \probSpace \mid \w u \ge \w v^{\gamma(\varepsilon)} \wedge \phi(u) \ge \phi(v) \w v^{\gamma(\varepsilon)-1} \}, \text{ and}\\
V^-(v, \varepsilon) & :=\{u \in \probSpace \mid  \w u \le \w v^{\gamma(\zeta\cdot\varepsilon)} \wedge \phi(u) \ge  \phi(v) \w v^{\gamma(\varepsilon)-1}\}.
\end{split}
\end{align}
In the first phase, the goal will be to increase the weight within every hop of the routing. Therefore, $\Gamma(v) \cap V^+(v,\varepsilon)$ is the set of \emph{good} neigbors as every $u \in \Gamma(v) \cap V^+(v,\varepsilon)$ has significantly larger weight than $v$. On the other side, the set $\Gamma(v) \cap V^-(v,\varepsilon)$ contains the \emph{bad} neighbors as its vertices have small weight but nevertheless large objective and therefore could force the routing to proceed with a low-weight-vertex.

In the second phase, the goal is to increase the objective by an exponent in every step of the routing. Furthermore, once we reach vertices of $\probSpace_2$, we want to stay in $\probSpace_2$. Let $v \in \probSpace_2$. Similarly as above, we introduce a set $V^+(v,\varepsilon)$ of good vertices and a set $V^-(v,\varepsilon)$ of bad vertices. More precisely, for $\varepsilon=\varepsilon(n)$ we define
\begin{align} \label{eq:areas2}
\begin{split}
V^+(v,\varepsilon) & :=\{u \in \probSpace_2 \mid \phi(u) \ge \phi(v)^{1/\gamma(\varepsilon)} \}, \text{ and}\\
V^-(v, \varepsilon) & :=\{u \in \probSpace_1 \mid  \phi(u) \ge \phi(v)^{1/\gamma(\varepsilon)} \}.
\end{split}
\end{align}

In the following we collect several technical statements about the expected number of good and bad neighbors. For $0<\varepsilon \le \varepsilon_1$, let $w_1(\varepsilon) := O(e^{d/\varepsilon})$ and $\phi_1(\varepsilon) := \Omega(e^{-d/\varepsilon})$ with sufficiently large respectively small hidden constants. We start with the first lemma which considers vertices in $\probSpace_1$ and shows that the expected number of good neighbors is large whereas the expected number of bad neighbors is small. Notice that for $\varepsilon=\varepsilon_1$, every vertex $v \in \probSpace_1$ satisfies $\phi(v) \w v^{\gamma(\varepsilon_1)} \le 1$ by the definition of $\probSpace_1$, and in this case, the statement can be applied for all vertices in $\probSpace_1$.

\begin{lemma}[Neighborhoods in first phase] \label{lem:firstphase} 
Let $0<\varepsilon \le \varepsilon_1$ and let $v \in \probSpace_1$ be a vertex such that $\phi(v) \w v^{\gamma(\varepsilon)} \le 1$. Then 
\begin{enumerate}[(i)]
\item $\Ex[|\Gamma(v) \cap V^+(v,\varepsilon)|]=\Omega\left(\wmin^{\beta-2}\w v^{\varepsilon}\right)$, and
\item if in addition $\w v \ge w_1(\varepsilon)$, then $\Ex[|\Gamma(v) \cap V^-(v,\varepsilon)|]=O\left(\wmin^{\beta-2}\w v^{-\Omega(\varepsilon)}\right)$.
\end{enumerate}
\end{lemma}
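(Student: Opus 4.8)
\textbf{Proof plan for Lemma~\ref{lem:firstphase}.}

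The plan is to compute both expectations directly as integrals over weights and positions, exploiting that for a fixed vertex $v\in\probSpace_1$ the remaining vertices form a Poisson point process, so the expected number of neighbors of $v$ lying in any measurable region $R\subseteq\probSpace$ equals $n\int_R f(\w u)\,p_{uv}\,d\x u\,d\w u$, with $p_{uv}$ given by~\eqref{eq:puv} (or~\eqref{eq:puv2} when $\alpha=\infty$). The key geometric observation is that the objective constraint $\phi(u)\ge\phi(v)\w v^{\gamma(\varepsilon)-1}$ translates, by the definition $\phi(u)=\w u/(\wmin n\|\x u-\x t\|^d)$, into a bound $\|\x u-\x t\|^d \le \w u/(\wmin n\,\phi(v)\w v^{\gamma(\varepsilon)-1})$, i.e.\ $u$ must lie in a ball $B_u$ around $\x t$ whose volume is linear in $\w u$. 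For the lower bound~(i) I would further restrict to the sub-region of $B_u$ that is \emph{deep} inside, say $\|\x u - \x t\|^d$ a constant fraction smaller, so that $p_{uv}=\Theta(1)$ holds throughout (using $v\in\probSpace_1$ and the hypothesis $\phi(v)\w v^{\gamma(\varepsilon)}\le 1$, which guarantees the ``$1$'' branch of the min is attained, i.e.\ $u$ and $v$ are close enough that the connection probability is $\Theta(1)$); integrating $f(\w u)$ over $\w u\ge\w v^{\gamma(\varepsilon)}$ against this volume and the intensity $n$ then gives, after the routine power-law integral $\int_{\w v^{\gamma(\varepsilon)}}^\infty \Theta(\wmin^{\beta-1})w^{-\beta}\cdot\Theta(w)\,dw=\Theta(\wmin^{\beta-1}(\w v^{\gamma(\varepsilon)})^{2-\beta})$, a quantity of order $\wmin^{\beta-2}\w v^{\gamma(\varepsilon)(2-\beta)}$. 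One then checks that $\gamma(\varepsilon)(2-\beta)=\frac{1-\varepsilon}{\beta-2}\cdot(2-\beta) = -(1-\varepsilon)$... which is $\w v^{-(1-\varepsilon)}$, \emph{not} $\w v^{\varepsilon}$, so the bookkeeping must instead pick up the extra factor $\w v$ coming from the volume of $B_u$ being proportional to $\w u/(n\,\phi(v)\w v^{\gamma(\varepsilon)-1})$ and from $\phi(v)\ge\wmin/(\wmin n) \cdot$(something); carefully, the volume factor contributes $\w u/(\phi(v)\w v^{\gamma(\varepsilon)-1}n)$ and one uses $\phi(v)\le\w v^{-\gamma(\varepsilon)}$ to bound $1/\phi(v)\ge \w v^{\gamma(\varepsilon)}$, turning the $\w v^{\gamma(\varepsilon)-1}$ in the denominator together with $1/\phi(v)$ into a net factor $\ge \w v$, and after the weight integral this yields $\Omega(\wmin^{\beta-2}\w v^{\varepsilon})$ as claimed. (The $w_1(\varepsilon)=O(e^{d/\varepsilon})$-type constants enter only to absorb $d$-dependent losses from restricting to the ``deep'' sub-ball.)

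For the upper bound~(ii) the region is $\{u: \w u\le\w v^{\gamma(\zeta\varepsilon)}\ \wedge\ \phi(u)\ge\phi(v)\w v^{\gamma(\varepsilon)-1}\}$. Here I would split into two contributions according to which branch of the min in~\eqref{eq:puv} is active. For vertices so close to $v$ that $p_{uv}=\Theta(1)$: such $u$ must lie in a ball around $\x v$ of volume $O(\w u\w v/(\wmin n))$ \emph{and} in the objective ball around $\x t$; bounding $p_{uv}\le 1$ and the available volume by $O(\w u\w v/n)$, integrating $n f(\w u)$ against this over $\wmin\le\w u\le\w v^{\gamma(\zeta\varepsilon)}$ gives $O(\wmin^{\beta-1}\w v\cdot (\w v^{\gamma(\zeta\varepsilon)})^{2-\beta}) = O(\wmin^{\beta-1}\w v^{1-(1-\zeta\varepsilon)})=O(\wmin^{\beta-1}\w v^{-\Omega(\varepsilon)})$ precisely because $\zeta<1/\varepsilon\cdot(\cdots)$—more honestly, because $1-(1-\zeta\varepsilon)=\zeta\varepsilon$, wait that is positive; so the decisive cancellation must come from combining \emph{both} the $\x t$-ball volume (linear in $\w u/\phi(v)\cdots$) and the $\x v$-proximity, and it is exactly the choice of $\zeta$ that makes the two competing volume constraints incompatible enough to force a net negative power of $\w v$. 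For vertices farther from $v$, where $p_{uv}=\Theta((\w u\w v/(\wmin n\|\x u-\x v\|^d))^\alpha)$, one integrates the polynomial tail $\|\x u-\x v\|^{-\alpha d}$ over the annulus between the ``$\Theta(1)$'' radius and the boundary of the $\x t$-ball; since $\alpha>1$ this integral converges at the inner radius and the dominant term is again controlled by the inner radius, reducing to essentially the previous case up to constants, and the $\max$ in the definition of $\zeta$ is precisely tuned so that \emph{both} the $\alpha<\infty$ annulus estimate and the crude $\alpha=\infty$ estimate give the same $\w v^{-\Omega(\varepsilon)}$ bound.

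\textbf{Main obstacle.} The routine part is setting up the integrals; the delicate part is the \emph{exponent accounting} in~(ii) and verifying that the specific value $\zeta=\max\{3/2,\ (2\alpha-1)/(2\alpha+4-2\beta)\}$ is exactly what is needed so that the bad-neighbor count is a genuinely \emph{negative} power of $\w v$ uniformly over the allowed range $2<\beta<3$, $\alpha>1$, and in the threshold limit $\alpha=\infty$. Concretely, one must track how the two geometric constraints (closeness to $\x v$ for a non-trivial connection probability, and closeness to $\x t$ for large objective) interact: their combined volume is minimized by the constraint coming from the objective ball, and the weight cutoff $\w u\le\w v^{\gamma(\zeta\varepsilon)}$ must be small enough relative to the good-neighbor threshold $\w v^{\gamma(\varepsilon)}$ that, after integrating the power law, the $(2-\beta)$-exponentiated weight bound beats the linear volume factor $\w u$. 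I expect the bulk of the real work to be checking this one inequality between exponents — i.e.\ that for the chosen $\zeta$ one has $\gamma(\zeta\varepsilon)(\beta-2) - 1 = -\Omega(\varepsilon)$ after incorporating the annulus/threshold corrections — and confirming it degrades gracefully as $\alpha\to\infty$.
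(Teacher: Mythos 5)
There is a genuine gap in both parts, and it traces to the same omission: you never pin down \emph{where} the good and bad neighbors actually sit relative to $\x v$, and that geometric fact is the crux of the lemma.

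For~(i), you propose to restrict to the region ``deep inside'' the ball $B_u$ around $\x t$ and claim $p_{uv}=\Theta(1)$ there. That claim is false in general: $p_{uv}=\Theta(1)$ requires $\|\x u-\x v\|^d\lesssim \w u\w v/(\wmin n)$, i.e.\ closeness to $\x v$, not to $\x t$. The hypothesis $\phi(v)\w v^{\gamma(\varepsilon)}\le 1$ only gives a \emph{lower} bound $\|\x v-\x t\|^d\ge \w v^{1+\gamma(\varepsilon)}/(\wmin n)$; when $\phi(v)$ is well below the threshold $\w v^{-\gamma(\varepsilon)}$, $\|\x v-\x t\|$ is much larger than the $p_{uv}=\Theta(1)$ radius, so most of $B_u$ (including all of its ``deep interior'') is far from $\x v$ and contributes essentially nothing. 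The correct region is a ball of radius $\delta$ around $\x v$ with $\delta^d\asymp\w v^{1+\gamma(\varepsilon)}/(\wmin n)$, intersected with the halfspace $\|\x u-\x t\|\le\|\x v-\x t\|$; the hypothesis guarantees $\delta\lesssim\|\x v-\x t\|$, so this intersection is a constant fraction of the $\delta$-ball, $p_{uv}=\Theta(1)$ holds throughout, and the objective lower bound follows from (2) plus $\w u\ge\w v^{\gamma(\varepsilon)}$. Your ``fix'' recovers $\wmin^{\beta-2}\w v^{\varepsilon}$ only under the unjustified assumption of constant connection probability over all of $B_u$.

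For~(ii), you split into a ``$p_{uv}=\Theta(1)$'' region near $\x v$ and an annulus, but the first region is actually empty (under $\w v\ge w_1(\varepsilon)$): every $u\in V^-(v,\varepsilon)$ satisfies $\|\x u-\x t\|\le\delta(\w u)$ with $\delta(w)=\|\x v-\x t\|(w\w v^{-\gamma(\varepsilon)})^{1/d}$, and for $w\le\w v^{\gamma(\zeta\varepsilon)}$ and $\w v\ge w_1(\varepsilon)$ one has $\delta(w)\le 0.5\|\x v-\x t\|$; by the triangle inequality every bad vertex then satisfies $\|\x u-\x v\|\ge 0.5\|\x v-\x t\|$. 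This is the load-bearing step — it kills the $\alpha=\infty$ case outright and, for $\alpha<\infty$, gives $p_{uv}=O((\w u\phi(v))^\alpha)$ with the crucial factor $\phi(v)^\alpha$. Without it, your first sub-case produces a \emph{positive} power $\w v^{\zeta\varepsilon}$ (as you yourself notice mid-sentence), and the crude bound via the $\x t$-ball volume alone involves $1/\phi(v)$, which is unbounded and cannot be absorbed. Your closing remark that ``the decisive cancellation must come from combining both volume constraints'' is the right intuition, but it does not by itself salvage the bound; you need to explicitly derive the geometric separation and then integrate $\delta(w)^d\cdot(w\phi(v))^\alpha$ against the weight density over $[\wmin,\w v^{\gamma(\zeta\varepsilon)}]$, at which point the $\phi(v)$-factors cancel (using $\phi(v)\w v^{\gamma(\varepsilon)}\le 1$) and the choice of $\zeta$ yields the exponent $\tau=\frac{\alpha\varepsilon-\zeta\varepsilon(\alpha+2-\beta)}{\beta-2}=-\Omega(\varepsilon)$.
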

\begin{proof}
 Let $0<\varepsilon\le\varepsilon_1$ be such that $\phi(v) \w v^{\gamma(\varepsilon)} \le 1$. Let $v \in \probSpace_1$.
\medskip

\textbf{Proof of {\boldmath$(i)$}:}
 In order to prove the first statement, we want to lower-bound the expected number of neighbors of $v$ in $V^+(v,\varepsilon)$, i.e., the neighbors with larger weight and slightly larger objective. For this, define $\delta:=(\frac{c_1 \w v^{1+\gamma(\varepsilon)}}{\wmin n})^{1/d}$, where $c_1$ is the constant given by (EP1) and (EP2). Then let $A(v,\varepsilon)$ be the following set of vertices:
\[A(v,\varepsilon) :=\{u\in \probSpace \mid (1): w_u\geq \w v^{\gamma(\varepsilon)}; (2): \|\x u - \x t\| \le \|\x v - \x t\|; (3): \|\x u - \x v\| \le \delta\}.\] 
Note that $ A(v,\varepsilon)\subseteq V^+(v,\varepsilon) $, because for $u\in A(v,\varepsilon)$ 
$$\phi(u) = \frac{\w u}{\wmin n\|\x u - \x t\|^d } \ge \frac{\w v^{\gamma(\varepsilon)}}{\wmin n\|\x v - \x t\|^d } \ge \phi(v)\w v^{\gamma(\varepsilon)-1}$$
holds.
Hence for proving the first statement it is sufficient to lower-bound $\Ex[|\Gamma(v) \cap A(v,\varepsilon)|]$. 
By the probability distribution of the weights there are in expectation $n \wmin^{\beta-1} \w v^{\gamma(\varepsilon)(1-\beta)}$ vertices satisfying Condition~$(1)$ and Condition~$(3)$ is satisfied independently with probability $\delta^d$.
Now, observe that by assumption the vertex $v$ satisfies $\phi(v)\w v^{\gamma(\varepsilon)} \le 1$, which implies
$$
\delta^d = \frac{c_1 \w v^{1+\gamma(\varepsilon)}}{\wmin n} =c_1\phi(v)\w v^{\gamma(\varepsilon)}\|\x v - \x t\|^d\le c_1\|\x v - \x t\|^d.
$$
Hence a random vertex $v$ fufillling $(3)$ satisfies with constant probability $(2)$ as well.
  Furthermore, Condition~$(3)$ ensures that by (EP1) and (EP2), every vertex $u \in A(v,\varepsilon)$ is connected to $v$ with constant probability. Combining this yields
$$\Ex[|\Gamma(v) \cap A(v,\varepsilon)|] = \Omega\left(n \wmin^{\beta-1} \w v^{\gamma(\varepsilon)(1-\beta)}\delta^d\right)=\Omega\left(\wmin^{\beta-2}\w v^{1-\gamma(\varepsilon)(\beta-2)}\right)=\Omega\left(\wmin^{\beta-2}\w v^{\varepsilon}\right).$$

\textbf{Proof of {\boldmath$(ii)$}:} We want to upper-bound the expected number of bad neighbors of $v$, i.e., the neighbors in $V^-(v,\varepsilon)$. 

At first, we will show that the geometric distance between a bad node $u\in V^-(v,\varepsilon)$ and $v$ is not much less than the geometric  distance between $v$ and $t$. This will imply a sufficiently small connection probability between $u$ and $v$. For this purpose, define \label{defn-4nt:delta(w)}$\delta(w) := \|\x v - \x t\| (w \w v^{-\gamma(\varepsilon)})^{1/d}$ for all weights $w \le \w v^{\gamma(\zeta \cdot \varepsilon)}$.  Due to $\w v \ge w_1(\varepsilon)=2^{O(d/\varepsilon)}$, for $w_1$ large enough every weight $w \le \w v^{\gamma(\zeta \cdot \varepsilon)}$ satisfies $\delta(w)^d \le \|\x v - \x t\|^d \w v^{\gamma(\zeta\cdot\varepsilon)-\gamma(\varepsilon)} \le (0.5 \|\x v - \x t\|)^d$ (the second inequality follows since $\zeta>1$). Moreover, for a vertex $u \in V^-(v,\varepsilon)$ we observe that the condition $\phi(u)\ge \phi(v) \w v^{\gamma(\varepsilon)-1}$  implies $\|\x u - \x t\| \le \delta(\w u)$ and with the triangle inequality, we obtain that $u$ fufillls $\|\x u - \x v\| \ge 0.5 \|\x v - \x t\|$. 

In order to upper-bound $\Ex[|\Gamma(v) \cap V^-(v,\varepsilon)|]$, we proceed with a case distinction regarding the parameter $\alpha$. 
\medskip

\textbf{Case {\boldmath$\alpha=\infty$}:}
Let $c_2>0$ be the constant given by $(EP2)$, and let $w_1(\varepsilon)$ be large enough. Then 
$$0.5^d \ge c_2 w_1^{-\varepsilon(\zeta-1)/(\beta-2)} \ge c_2\w v^{-\varepsilon(\zeta-1)/(\beta-2)}=c_2 \w v^{\gamma(\zeta \cdot \varepsilon)-\gamma(\varepsilon)}.$$ 
Furthermore, $\phi(v) \w v^{\gamma(\varepsilon)}\le 1$ implies 
$$\frac{\w v^{1+\gamma(\varepsilon)}}{\wmin n} = \phi(v)\w v^{\gamma(\varepsilon)}\|\x v - \x t\|^d\le \|\x v - \x t\|^d.$$ 
Together with the above observation we deduce
$$\|\x u - \x v\|^d \ge 0.5^d \|\x v - \x t\|^d \ge c_2\|\x v - \x t\|^d \w v^{\gamma(\zeta\cdot\varepsilon)-\gamma(\varepsilon)}\ge \frac{c_2\w v^{1+\gamma(\zeta\cdot\varepsilon)}}{ \wmin n} \ge \frac{c_2\w v \w u}{\wmin n}.$$
But then, by (\ref{eq:puv2}) the two vertices $u$ and $v$ can not be connected, hence $v$ has \emph{deterministically} no neighbor in $V^-(v,\varepsilon)$ in the case $\alpha=\infty$.
\medskip

\textbf{Case {\boldmath$\alpha<\infty$}:} By the geometrical property $\|\x u - \x v\| \ge 0.5 \|\x v - \x t\|$ and (\ref{eq:puv}) the probability that a vertex $u \in V^-(v,\varepsilon)$ is connected to $v$ is at most 
$$O\left(\left(\frac{\w u \w v}{\|\x u - \x v\|^d \wmin n}\right)^{\alpha}\right)=O\left(\left(\frac{\w u \w v}{\|\x v - \x t\|^d \wmin n}\right)^{\alpha}\right)=O\left((\w u \phi(v))^{\alpha}\right).$$ 
Furthermore, we have already seen that such a vertex of weight $w$ needs to be in distance at most $\delta(w)$ to $t$, which a random vertex of weight $w$ does with probability $\delta(w)^d$. We integrate over all weights between $\wmin$ and $\w v^{\gamma(\zeta \cdot \varepsilon)}$, where we use the density function $f(w)$ for the distribution of the weights. We deduce that 
\begin{align*}
\Ex[|\Gamma(v) \cap V^-(v,\varepsilon)|] & = O\left(n \int_{\wmin}^{\w v^{\gamma(\zeta \cdot \varepsilon)}}f(w)\delta(w)^d (\w u \phi(v))^{\alpha} dw \right)\\
& = O\left(\wmin^{\beta-2}\phi(v)^{\alpha-1} \w v^{1-\gamma(\varepsilon)}\int_{\wmin}^{\w v^{\gamma(\zeta\cdot\varepsilon)}} w^{\alpha+1-\beta}dw\right)\\
& = O\left( \wmin^{\beta-2}\phi(v)^{\alpha-1} \w v^{1-\gamma(\varepsilon)+\gamma(\zeta\cdot\varepsilon)(\alpha+2-\beta)} \right).
\end{align*}
Next we use that $\phi(v)\w v^{\gamma(\varepsilon)} \le 1$ holds. Then the above term is at most $O(\wmin^{\beta-2} \w v^{\tau})$, where
$$\tau = 1-\alpha\gamma(\varepsilon)+\gamma(\zeta\cdot\varepsilon)(\alpha+2-\beta)=\frac{\alpha\varepsilon - \zeta\varepsilon(\alpha+2-\beta)}{\beta-2} = -\Omega(\varepsilon)$$
by our choice of $\zeta$.
\end{proof}

Next, similar to Lemma~\ref{lem:firstphase} for the first phase we give a lemma for the second phase where the routing process reached $\probSpace_2$. We calculate for a vertex $v\in \probSpace_2$ the expected number of good and bad neighbors. 

\begin{lemma}[Neighborhoods in second phase] \label{lem:secondphase} Let $0<\varepsilon\le\varepsilon_1$ and let $v \in \probSpace_2$ be a vertex such that $\phi(v) \le 1$. Then 
\begin{enumerate}[(i)]
\item $\Ex[|\Gamma(v) \cap V^+(v,\varepsilon)|]=\Omega\left(\wmin^{\beta-2}\phi(v)^{-\Omega(\varepsilon)}\right)$, and
\item if in addition $\phi(v) \le \phi_1(\varepsilon)$, then $\Ex[|\Gamma(v) \cap V^-(v,\varepsilon)|]=O\left(\wmin^{\beta-2}\phi(v)^{\Omega(\varepsilon)}\right)$.
\end{enumerate}
\end{lemma}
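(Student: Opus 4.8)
The plan is to follow the same scheme as \lemref{firstphase}, with objectives now playing the role that weights played there. Write $R := \|\x v - \x t\|$, so that $R^d = \w v/(\wmin n\,\phi(v))$ by definition of $\phi$, and note that $v \in \probSpace_2$ together with $\varepsilon \le \varepsilon_1$ gives $\w v \ge \phi(v)^{-1/\gamma(\varepsilon_1)} \ge \phi(v)^{-1/\gamma(\varepsilon)}$, while $\phi(v)^{1/\gamma(\varepsilon)} \ge \phi(v)$ since $1/\gamma(\varepsilon) < 1$ and $\phi(v) \le 1$. For part (i) I would exhibit a convenient subset $A(v,\varepsilon) \subseteq V^+(v,\varepsilon)$ and bound $\Ex[|\Gamma(v) \cap A(v,\varepsilon)|]$ from below. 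Fix $w^\star := \Theta(\phi(v)^{-1})$ with a sufficiently large hidden constant, and let $A(v,\varepsilon)$ be the set of vertices $u$ with $\w u \in [w^\star, 2w^\star]$ lying in the ball around $\x t$ of radius $\rho := (w^\star/(\wmin n\,\phi(v)^{1/\gamma(\varepsilon)}))^{1/d}$. Every such $u$ has $\phi(u) \ge \w u/(\wmin n\,\rho^d) \ge \phi(v)^{1/\gamma(\varepsilon)}$, and since $\w u = \Theta(\phi(v)^{-1})$ and $\varepsilon_1$ is chosen small enough that $(\beta-2)^2 \le (1-\varepsilon)(1-\varepsilon_1)$, one also gets $\phi(u) \ge \w u^{-\gamma(\varepsilon_1)}$, i.e.\ $u \in \probSpace_2$; hence $A(v,\varepsilon) \subseteq V^+(v,\varepsilon)$. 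Using $\w v \ge \phi(v)^{-1/\gamma(\varepsilon)}$ one checks $\rho = O(R)$, whereas the constant-connection radius of $v$ for weight-$w^\star$ vertices is $(c_1 w^\star\w v/(\wmin n))^{1/d} = (c_1 w^\star\phi(v))^{1/d}R$, which is $\Theta(R)$ with an arbitrarily large constant; a short triangle-inequality/volume estimate then shows that a constant fraction of the $\rho$-ball around $\x t$ lies inside $v$'s \eqref{eq:puv}/\eqref{eq:puv2} connection ball, so each $u \in A(v,\varepsilon)$ is a neighbor of $v$ with probability $\Omega(1)$ (for $\alpha < \infty$ one may instead plug $\|\x u - \x v\| = \Theta(R)$ directly into \eqref{eq:puv}). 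Since the point set is a Poisson process, $\Ex[|\Gamma(v) \cap A(v,\varepsilon)|] = \Omega\big(n\wmin^{\beta-1}(w^\star)^{1-\beta}\rho^d\big) = \Omega\big(\wmin^{\beta-2}(w^\star)^{2-\beta}\phi(v)^{-1/\gamma(\varepsilon)}\big)$, and with $w^\star = \Theta(\phi(v)^{-1})$ the exponent of $\phi(v)$ is $(\beta-2)-1/\gamma(\varepsilon) = -(\beta-2)\varepsilon/(1-\varepsilon) = -\Omega(\varepsilon)$, giving (i).

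For part (ii), first note that $u \in V^-(v,\varepsilon) = \{u \in \probSpace_1 : \phi(u) \ge \phi(v)^{1/\gamma(\varepsilon)}\}$ forces $\phi(v)^{1/\gamma(\varepsilon)} \le \phi(u) \le \w u^{-\gamma(\varepsilon_1)}$, hence $\w u \le W^\star := \phi(v)^{-1/(\gamma(\varepsilon)\gamma(\varepsilon_1))}$ (if $W^\star < \wmin$ there is nothing to show, so assume $W^\star \ge \wmin$). The bound $\phi(u) \ge \phi(v)^{1/\gamma(\varepsilon)}$ places $u$ in the ball of radius $(\w u/(\wmin n\,\phi(v)^{1/\gamma(\varepsilon)}))^{1/d}$ around $\x t$; feeding in $\w u \le W^\star$ and $\w v \ge \phi(v)^{-1/\gamma(\varepsilon_1)}$, a short computation gives $\|\x u - \x t\|^d \le \phi(v)^{(1+1/\gamma(\varepsilon_1))(1-1/\gamma(\varepsilon))}R^d$, where the exponent is bounded below by a positive constant (it is at least $1-1/\gamma(\varepsilon_1)$, which is close to $3-\beta$ for $\varepsilon_1$ small). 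Thus for $\phi(v) \le \phi_1(\varepsilon)$ (the constant in the definition of $\phi_1$ being taken small enough, which is allowed since $\phi_1$ is only lower-bounded) we obtain $\|\x u - \x t\| \le R/2$, and hence $\|\x u - \x v\| \ge R/2 = \tfrac12\|\x v - \x t\|$, exactly as in \lemref{firstphase}(ii). From here I would split on $\alpha$: for $\alpha = \infty$, plugging $\|\x u - \x v\|^d \ge (R/2)^d = 2^{-d}\w v/(\wmin n\,\phi(v))$ and $\w u \le W^\star \le \tfrac{1}{2^d c_2}\phi(v)^{-1}$ (again valid for $\phi(v)$ small) into \eqref{eq:puv2} shows $p_{uv} = 0$, so $v$ has no bad neighbor deterministically; for $\alpha < \infty$, \eqref{eq:puv} gives $p_{uv} = O((\w u\phi(v))^\alpha)$ (and $\w u\phi(v)\le 1$ for $\w u \le W^\star$), and integrating this against the weight density $f(w)$ and the positional probability $\w u/(\wmin n\,\phi(v)^{1/\gamma(\varepsilon)})$ over $w \in [\wmin, W^\star]$ yields, using $\alpha+2-\beta>0$,
\[
\Ex[|\Gamma(v)\cap V^-(v,\varepsilon)|] = O\big(\wmin^{\beta-2}\phi(v)^{\alpha - 1/\gamma(\varepsilon)}(W^\star)^{\alpha+2-\beta}\big) = O\big(\wmin^{\beta-2}\phi(v)^{\,\tau}\big),
\]
with $\tau = \alpha - 1/\gamma(\varepsilon) - (\alpha+2-\beta)/(\gamma(\varepsilon)\gamma(\varepsilon_1))$; evaluating at $\varepsilon = \varepsilon_1 = 0$ gives $\tau = (\alpha-(\beta-2))(1-(\beta-2)^2) > 0$, so for $\varepsilon_1$ small $\tau$ stays bounded below by a positive constant and therefore $\phi(v)^{\tau} \le \phi(v)^{\Omega(\varepsilon)}$, which is what is claimed.

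The main obstacle, as in \lemref{firstphase}, is the geometric bookkeeping rather than any single hard idea. For (i) one must certify that the ``objective ball'' around $\x t$ and the ``connection ball'' around $\x v$ really overlap in a constant fraction of the smaller one; this is exactly where the hypothesis $v \in \probSpace_2$ (controlling $\w v$, hence $R$, in terms of $\phi(v)$) enters, and one has to track the $\Theta$-constants in \eqref{eq:puv}/\eqref{eq:puv2} — in the unfavourable regime (e.g.\ $\varepsilon$ close to $\varepsilon_1$ together with a small lower constant $c_1$ in \eqref{eq:puv2}) the intersection is only a constant fraction of the smaller of the two ball volumes rather than of the objective-ball volume, but substituting $\w v \ge \phi(v)^{-1/\gamma(\varepsilon_1)}$ still yields a bound of the form $\Omega(\wmin^{\beta-2}\phi(v)^{-\Omega(\varepsilon)})$, which suffices. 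For (ii) the corresponding point is that $\phi(v) \le \phi_1(\varepsilon)$ must be strong enough to push a would-be bad neighbor out of $v$'s connection range, which comes down to the exponent $(1+1/\gamma(\varepsilon_1))(1-1/\gamma(\varepsilon))$ being bounded away from $0$; everything else reduces to the same power-law integration as in \lemref{firstphase}.
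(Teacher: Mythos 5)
Your proof is correct and follows essentially the same approach as the paper: exhibit a high-weight, near-$t$ set $A(v,\varepsilon)\subseteq V^+(v,\varepsilon)$ whose members connect to $v$ with constant probability, and for $V^-(v,\varepsilon)$ use $u\in\probSpace_1$ to bound $\w u$, push $u$ to distance $\Omega(R)$ from $v$ via the triangle inequality, and integrate the resulting $O((\w u\phi(v))^\alpha)$ edge probability. The only cosmetic differences are that you use a bounded weight window $[w^\star,2w^\star]$ with $\rho$ scaling alongside $w^\star$ (so that inflating the constant inflates both the objective ball and the connection ball proportionally — your fallback to an $\Omega(1)$ volume-overlap argument is therefore genuinely needed, whereas the paper fixes $\delta$ independently of the weight lower bound constant $c'$ and gets containment outright), that you use the tight cutoff $W^\star=\phi(v)^{-1/(\gamma(\varepsilon)\gamma(\varepsilon_1))}$ rather than the paper's looser $\phi(v)^{-1/\gamma(\varepsilon)}$, and that you state the constraint $(\beta-2)^2\le(1-\varepsilon)(1-\varepsilon_1)$ explicitly — none of which changes the conclusion.
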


\begin{proof}
We prove the statement similar as we proved Lemma~\ref{lem:firstphase}.  Let $0<\varepsilon\le\varepsilon_1$.
\medskip

\textbf{Proof of {\boldmath $(i)$}:}
Let $v \in \probSpace_2$ such that $\phi(v)\le 1$.
 In order to prove the first statement, we want to lower-bound the expected number of neighbors of $v$ in the set $V^+(v,\varepsilon)$. In particular, such neighbors have objective at least $\phi(v)^{1/\gamma(\varepsilon)}$. Let $\delta := (\wmin^{-1}n^{-1}\phi(v)^{-1-1/\gamma(\varepsilon)})^{1/d}$ and let $c' = \min\{1,c_1/2^d\}$, where $c_1$ is the constant given by $(EP1)$ and $(EP2)$. We consider the set 
$$A(v,\varepsilon) :=\left\{u \in \probSpace \mid (1) :  \|\x u - \x t\|\le \delta ; (2): \w u \ge (c'\phi(v))^{-1} \ge \phi(v)^{-1}\right\}.$$
By the definition of $\delta$ and Condition~(2), a vertex $u \in A(v,\varepsilon)$ satisfies 
$$\phi(u) \ge\frac{\w u }{\delta^d \wmin n} \ge \w u\phi(v)^{1+1/\gamma(\varepsilon)}\ge \phi(v)^{1/\gamma(\varepsilon)}.$$ 
On the other hand, by $(2)$ and the assumption $\phi(v) \le 1$ it holds $\w u \ge 1$, and then we have
$\w u\phi(v)^{1+1/\gamma(\varepsilon)} \ge \w u^{1/\gamma(\varepsilon)} \ge \w u^{-\gamma(\varepsilon_1)}$. Thus, $A(v,\varepsilon) \subseteq V^+(v,\varepsilon)$.

Our next goal is to upper-bound the geometric distance between vertices of $A(v,\varepsilon)$ and the vertex $v$ itself. We observe that the assumptions $v \in \probSpace_2$ and $\varepsilon\le\varepsilon_1$ imply 
$$\phi(v)^{-1/\gamma(\varepsilon)} \le \w v^{\gamma(\varepsilon_1)/\gamma(\varepsilon)} \le \w v,$$ and it follows
$$n \wmin \delta^d = \phi(v)^{-1-1/\gamma(\varepsilon)}  \le \phi(v)^{-1} \w v  =  n \wmin \|\x v - \x t\|^d.$$
By the triangle inequality we obtain $\|\x u - \x v\| \le 2 \|\x v - \x t\|$ for all $u \in A(v,\varepsilon)$. 

Next we want to verify that all vertices in $A(v,\varepsilon)$ are connected to $v$ with constant probability. We notice that for all $u \in A(v,\varepsilon)$, by Condition $(2)$ we have 
$$\frac{c_1\w u \w v}{\wmin n} =c_1\w u \phi(v) \|\x v - \x t\|^d \ge \frac{c_1}{c'} \|\x v - \x t\|^d \ge \frac{c_1}{2^{d}c'}\|\x u - \x v\|^d\ge \|\x u - \x v\|^d.$$ 
by our choice of $c'$. Then, indeed by (\ref{eq:puv}) and (\ref{eq:puv2}), all vertices of $A(v,\varepsilon)$ are connected to $v$ with constant probability. The expected number of vertices with weight at least $(c' \phi(v))^{-1}$ is $\Omega(n (\wmin \phi(v))^{\beta-1})$, and thus
$$
\Ex[|\Gamma(v) \cap A(v,\varepsilon)|]= \Omega\left(n (\wmin \phi(v))^{\beta-1}\delta^d\right) = \Omega\left( \wmin^{\beta-2}\phi(v)^{\beta-2-1/\gamma(\varepsilon)}\right) = \Omega\left( \wmin^{\beta-2}\phi(v)^{-\Omega(\varepsilon)}\right),
$$
as $\varepsilon$ is chosen small enough.
\medskip

\textbf{Proof of {\boldmath $(ii)$}:} Let $v \in \probSpace_2$ such that $\phi(v) \le \phi_1$ and let $u \in V^-(v,\varepsilon)$. We first prove that $u$ has the property $\|\x u - \x v\| \ge 0.5 \|\x v - \x t\|$. Since $u \in V^-(v,\varepsilon)$, we have $\phi(u)\ge\phi(v)^{1/\gamma(\varepsilon)}$, and by the assumption $\phi(v)\le \phi_1$ for $\varepsilon_1$ small enough this is at least $\phi(v)^{1-\varepsilon} \ge 2^d \phi(v)$, where we used $\phi(v) \le \phi_1$ in the last step. Because $u \in \probSpace_1$, it follows
$$\|\x u - \x t\|^d = \phi(u)^{-1}\frac{\w u }{\wmin n} \le  \frac{\phi(u)^{-1-1/\gamma(\varepsilon_1)}}{\wmin n}\le \frac{(2^d \phi(v))^{-1-1/\gamma(\varepsilon_1)}}{\wmin n}.
$$
On the other hand, $v \in \probSpace_2$, and then
$$\|\x u - \x t\|^d = \frac{(2^d \phi(v))^{-1-1/\gamma(\varepsilon_1)}}{\wmin n} \le \frac{0.5^d  \phi(v) \w v}{\wmin n} = 0.5^d\|\x v - \x t\|^d.
$$
By the triangle inequality $\|\x u - \x v\| \ge 0.5 \|\x v - \x t\|$ holds as desired. Moreover, $u \in \probSpace_1$ and thus $\w u \le \phi(u)^{-1/\gamma(\varepsilon_1)}$. Since $\phi(u)\ge\phi(v)$, for $\varepsilon_1$ small enough it follows 
$$\w u \phi(v) \le \phi(u)^{-1/\gamma(\varepsilon_1)}\phi(v) \le \phi(v)^{1-1/\gamma(\varepsilon)}\le \phi(v)^{\varepsilon}.$$ 

\textbf{Case {\boldmath$\alpha=\infty$}:} We obtain
$$\frac{\w u \w v}{\|\x u - \x v\|^d} \le \frac{\w u \w v}{0.5^d\|\x v - \x t\|^d}=2^d\w u \phi(v) \wmin n \le 2^d \phi(v)^{\Omega(\varepsilon)}\wmin n.$$
Together with the assumption $\phi(v) \le \phi_1$ this implies $\frac{c_2\w u \w v}{\|\x u - \x v\|^d \wmin n} \le c_2 2^d \phi_1^{\Omega(\varepsilon)}$, and by taking $\phi_1$ small enough, this expression is smaller than $1$. Then (\ref{eq:puv2}) yields $u \notin \Gamma(v)$ deterministically.
\medskip

\textbf{Case {\boldmath$\alpha<\infty$}:} We have 
$$p_{uv}=O\left(\left(\frac{\w u \w v}{\|\x u - \x v\|^d \wmin n}\right)^{\alpha}\right)=O\left(\left(\frac{\w u \w v}{\|\x v - \x t\|^d \wmin n}\right)^{\alpha}\right)=O\left((\w u \phi(v))^{\alpha}\right).$$ 
Recall that we observed already above that $\w u \le \phi(v)^{-1/\gamma(\varepsilon)}$, and note that every vertex $u \in V^-(v,\varepsilon)$ has the geometric property $\|\x u - \x t\|^d \le \phi(v)^{-1/\gamma(\varepsilon)} \frac{\w u }{\wmin n}$, which a random vertex of given weight $w$ satisfies with probability at most $\phi(v)^{-1/\gamma(\varepsilon)} \frac{w }{\wmin n}$. We conclude that
\begin{align*}
\Ex[|\Gamma(v) \cap V^-(v)|] & = O\left(\int_{\wmin}^{\phi(v)^{-1/\gamma(\varepsilon)}}n \wmin^{\beta-1}w^{-\beta} \phi(v)^{-1/\gamma(\varepsilon)} \frac{w}{\wmin n} (w \phi(v))^{\alpha}dw\right)\\
& = O\left(\wmin^{\beta-2} \phi(v)^{-1/\gamma(\varepsilon)+\alpha} \int_{\wmin}^{\phi(v)^{-1/\gamma(\varepsilon)}} w^{\alpha+1-\beta}dw\right)\\
& = O\left(\wmin^{\beta-2} \phi(v)^{\frac{\tau}{1-\varepsilon}}\right),
\end{align*}
where $\tau=(\alpha+2-\beta)(1-1/\gamma(\varepsilon))-\alpha\varepsilon=\Omega(\varepsilon)$ for $\varepsilon\le\varepsilon_1$ and $\varepsilon_1$ small enough. This proves $(ii)$.

\end{proof}

In order to prove our main results, we want to show that with sufficiently high probability, greedy routing follows our anticipated trajectory. If $v \in \probSpace_1$ is a vertex on the greedy path, we want that the best neighbor of $v$ is in $V^+(v,\varepsilon)$, i.e., that its weight is significantly larger than $\w v$. By Lemma~\ref{lem:firstphase}~(ii) the expected number vertices in $\Gamma(v)$ with small weight and relatively large objective is polynomially small in $\w v$. By Markov's inequality, this will give us an error probability which is polynomially small in $\w v$. If $v$ has constant weight, this is not precise enough as for the proof of Theorem~\ref{thm:greedysuccess2}, we need an error probability which is \emph{exponentially} small in $\wmin$. With our next lemma, we improve this as follows: We consider vertices $v\in\probSpace_1$ whose objective is relatively small compared to the weight $\w v$, and calculate the expected size of $V^-(v,\varepsilon)$ depending on $\phi(v)$. Later on, this statement can be applied in particular to vertices of constant weight with large geometric distance to the target.
%
\begin{lemma} \label{lem:smweights} There exists a constant $\bar{c}>0$ such that for all $0<\varepsilon\le\varepsilon_1$ and all vertices $v \in \probSpace_1$ with $\w v \ge w_1(\varepsilon)$ and $\phi(v) \w v^{\gamma(\zeta \cdot \varepsilon)} \le \bar{c}$, it holds 
$$\Ex[|\Gamma(v) \cap V^-(v,\varepsilon)|] = O(\phi(v)^{\Omega(1)}\w v^{O(1)}).$$
\end{lemma}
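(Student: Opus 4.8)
The plan is to re-run the estimate of $\Ex[|\Gamma(v) \cap V^-(v,\varepsilon)|]$ from Lemma~\ref{lem:firstphase}~(ii), but now track the dependence on $\phi(v)$ explicitly instead of using the crude bound $\phi(v)\w v^{\gamma(\varepsilon)}\le 1$. Recall from the proof of Lemma~\ref{lem:firstphase}~(ii) that a bad vertex $u \in V^-(v,\varepsilon)$ of weight $w \le \w v^{\gamma(\zeta\cdot\varepsilon)}$ lies within distance $\delta(w) = \|\x v - \x t\|(w\w v^{-\gamma(\varepsilon)})^{1/d}$ of $t$, and (using $\w v \ge w_1(\varepsilon)$ and $\zeta > 1$) that this forces $\|\x u - \x v\| \ge 0.5\|\x v - \x t\|$. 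I would reuse both of these facts verbatim; they only require $\w v \ge w_1(\varepsilon)$, which is assumed here. The first case, $\alpha=\infty$, is even easier than before: the same computation shows that under the (weaker) hypothesis $\phi(v)\w v^{\gamma(\zeta\cdot\varepsilon)}\le\bar c$ one still gets $\|\x u-\x v\|^d \ge \tfrac{c_2 \w u\w v}{\wmin n}$ for $\bar c$ small enough, so $v$ deterministically has no bad neighbor and the bound holds trivially.

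For the main case $\alpha < \infty$, I would redo the integral exactly as in Lemma~\ref{lem:firstphase}~(ii): the connection probability of $u$ to $v$ is $O((\w u\phi(v))^\alpha)$ by the geometric property and~\eqref{eq:puv}, and a random vertex of weight $w$ satisfies the distance condition with probability $\delta(w)^d = \|\x v-\x t\|^d w\w v^{-\gamma(\varepsilon)}$. Integrating $f(w)$ against these over $w \in [\wmin, \w v^{\gamma(\zeta\cdot\varepsilon)}]$ gives
\[
\Ex[|\Gamma(v)\cap V^-(v,\varepsilon)|] = O\!\left(\wmin^{\beta-2}\,\phi(v)^{\alpha-1}\,\w v^{1-\gamma(\varepsilon)+\gamma(\zeta\cdot\varepsilon)(\alpha+2-\beta)}\right),
\]
which is the second line of the display in Lemma~\ref{lem:firstphase}~(ii) \emph{before} substituting $\phi(v)\w v^{\gamma(\varepsilon)}\le 1$. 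The point of the present lemma is that we do not make that substitution. Instead, since $\alpha > 1$ we have $\phi(v)^{\alpha-1} = \phi(v)^{\Omega(1)}$, and the exponent $1-\gamma(\varepsilon)+\gamma(\zeta\cdot\varepsilon)(\alpha+2-\beta)$ of $\w v$ is just some constant (depending on $\alpha,\beta,\varepsilon$), hence $\w v^{O(1)}$. That already yields the claimed form $O(\phi(v)^{\Omega(1)}\w v^{O(1)})$.

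The one subtlety — and the place where the hypothesis $\phi(v)\w v^{\gamma(\zeta\cdot\varepsilon)}\le\bar c$ is actually used — is that the exponent of $\w v$ above might a priori be positive, in which case "$\w v^{O(1)}$" would be growing, not small, and one has to check the statement still says what is wanted (it does: the lemma only claims $\w v^{O(1)}$, not a decaying factor). However, to get a genuinely useful bound one wants the whole expression to be small, and this is where one trades: under $\phi(v)\w v^{\gamma(\zeta\cdot\varepsilon)}\le\bar c$ one can absorb a power of $\w v$ into a power of $\phi(v)^{-1}$. Concretely, writing $\w v \le (\bar c/\phi(v))^{1/\gamma(\zeta\cdot\varepsilon)}$, substitute this into the $\w v^{O(1)}$ factor whenever its exponent is positive; this converts it into an extra $\phi(v)^{-O(1)}$ factor, and one must verify that the surviving power of $\phi(v)$ is still positive, i.e.\ that $\alpha - 1$ dominates the lost exponent — this is exactly the kind of inequality that the choice of $\zeta = \max\{\tfrac32, \tfrac{2\alpha-1}{2\alpha+4-2\beta}\}$ was designed to guarantee, mirroring the computation $\tau = -\Omega(\varepsilon)$ at the end of Lemma~\ref{lem:firstphase}. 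I expect this bookkeeping — confirming that after the trade the net exponent of $\phi(v)$ is $\Omega(1)$ for all $\alpha>1$ and all $0<\varepsilon\le\varepsilon_1$, uniformly — to be the main obstacle; it is not deep but requires carefully carrying the exponents and invoking the definition of $\zeta$ and the smallness of $\varepsilon_1$, just as in the parent lemma. The constant $\bar c$ is then fixed small enough to make the $\alpha=\infty$ deterministic argument go through and to keep all the "for $\varepsilon_1$ small enough" slack available.
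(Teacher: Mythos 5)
Your core strategy is sound and the resulting bound is correct, but you take a modest variant of the paper's route and you attach a spurious "main obstacle" to it. The paper does not directly integrate over $V^-(v,\eps)$: it defines a radius $r_0(w) := 0.5\bigl(\w v^{1-\gamma(\zeta\eps)}w/(\wmin n \phi(v))\bigr)^{1/d}$, introduces the superset $B(v,\eps)$ of low-weight vertices at distance $\ge r_0(\w u)$ from $v$, bounds $\Ex[|\Gamma(v)\cap B(v,\eps)|]$ by a \emph{radial} integral $\int_{r_0(w)}^{1/2}(\cdot)\,r^{d-1}\,dr$, and only then shows $V^-(v,\eps)\subseteq B(v,\eps)$. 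This yields $O\bigl(\w v^{\gamma(\zeta\eps)(\alpha-1)+1}\phi(v)^{\alpha-1}\bigr)$ with the $\wmin$ factors cancelling exactly. Your version instead reuses the Lemma~\ref{lem:firstphase}~(ii) volume constraint (bad vertices lie within $\delta(w)$ of $t$) and lands on $O\bigl(\wmin^{\beta-2}\phi(v)^{\alpha-1}\w v^{1-\gamma(\eps)+\gamma(\zeta\eps)(\alpha+2-\beta)}\bigr)$; this is in fact a slightly \emph{smaller} power of $\w v$, and the stray $\wmin^{\beta-2}$ is harmless since $\wmin\le\w v$. Either way, $\alpha-1=\Omega(1)$ and the $\w v$-exponent is uniformly bounded over $0<\eps\le\eps_1$, so the claim $O(\phi(v)^{\Omega(1)}\w v^{O(1)})$ is immediate. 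Your handling of the $\alpha=\infty$ case and of the role of $\bar c$ there (to get $\|\x u-\x v\|^d\ge c_2\w u\w v/(\wmin n)$) also matches the paper.

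Where your write-up goes astray is the "trade" paragraph, which you flag as the main obstacle but which is neither needed nor workable. If you actually substitute $\w v\le(\bar c/\phi(v))^{1/\gamma(\zeta\eps)}$ into $\w v^{E}$ with $E\approx(\alpha-1)/(\beta-2)$, the net $\phi(v)$-exponent becomes $\alpha-1-E/\gamma(\zeta\eps)\approx 0$, not $\Omega(1)$: the constraint $\phi(v)\w v^{\gamma(\zeta\eps)}\le\bar c$ is genuinely weaker than $\phi(v)\w v^{\gamma(\eps)}\le 1$, and it cannot reproduce the $\tau=-\Omega(\eps)$ cancellation of Lemma~\ref{lem:firstphase}~(ii). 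Fortunately the lemma never asks for a decaying bound, only for the displayed product form, so you should simply drop the trade and the claim that the choice of $\zeta$ guarantees the trade works. The role of $\zeta$ here is the one you already used implicitly: $\zeta>1$ makes $\gamma(\zeta\eps)<\gamma(\eps)$, which (together with $\w v\ge w_1(\eps)$) is what yields $\delta(w)\le 0.5\|\x v-\x t\|$ and hence the geometric separation underlying the connection-probability bound.
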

\begin{proof}Let $0<\varepsilon<\varepsilon_1$ and let $v \in \probSpace$ be a vertex such that $\phi(v) \w v^{\gamma(\zeta\cdot\varepsilon)}  \le\bar{c}:= 1/(2^d c_2)$, where $c_2$ is the constant given by (EP2). We want to show that with relatively high probability, every neighbor of $v$ with small weight is located close to $v$ on the torus. For all weights $w \le \w v^{\gamma(\zeta\cdot\varepsilon)}$ we define a critical radius 
$$r_0(w):= 0.5\left(\frac{\w v^{1-\gamma(\zeta\cdot\varepsilon)} w}{\wmin n \phi(v)}\right)^{1/d}.$$ 
Next we define
$$ B(v,\varepsilon) := \{u \in \probSpace \mid (1): \w u \le \w v^{\gamma(\zeta\cdot\varepsilon)}; (2): \|\x v - \x u\|\ge r_0(\w u)\}.$$
We first show that the expected size of the set $\Gamma(v) \cap B(v,\varepsilon)$ is very small. Afterwards, we will prove that $V^-(v,\varepsilon) \subseteq B(v,\varepsilon)$ holds under the assumptions of the lemma.

With the assumption $\phi(v) \le\bar{c}\w v^{-\gamma(\zeta\cdot\varepsilon)}$ we deduce that $r_0(w)^d \ge   \frac{w \w v}{2^d \bar{c}\wmin n} = \frac{c_2w \w v}{\wmin n}$ holds for all weights $w \le \w v^{\gamma(\zeta\cdot\varepsilon)}$.
\medskip

\textbf{Case {\boldmath$\alpha=\infty$}:} By $(\ref{eq:puv2})$ the vertex $v$ has deterministically no neighbors in $B(v,\varepsilon)$.
\medskip

\textbf{Case {\boldmath$\alpha<\infty$}:} By $(\ref{eq:puv})$ a vertex $u \in B(v)$ with distance $r$ to $v$ is connected to $v$ with probability $O((\frac{\w v \w u}{r^d \wmin n})^{\alpha})$. We calculate $\Ex[|\Gamma(v) \cap B(v,\varepsilon)|]$ by integrating over all weights smaller than $\w v^{\gamma(\zeta\cdot\varepsilon)}$ and all distances $r \ge r_0(w)$. Using the density function $f(w)=\Theta(\wmin^{\beta-1}w^{-\beta})$ we deduce
\begin{align*}
\Ex[|\Gamma(v) \cap B(v,\varepsilon)|] & = O\left( \int_{\wmin}^{\w v^{\gamma(\zeta\cdot\varepsilon)}} n \wmin^{\beta-1} w^{-\beta} \int_{r_0(w)}^{1/2}\left(\frac{\w v w}{r^d \wmin n}\right)^{\alpha}  r^{d-1}dr dw\right)\\
& =  O\left(n \wmin^{\beta-1} \int_{\wmin}^{\w v^{\gamma(\zeta\cdot\varepsilon)}} \left(\frac{\w v w}{\wmin n}\right)^{\alpha}r_0(w)^{d(1-\alpha)}w^{-\beta}dw\right)\\ 
& =  O\left(n \wmin^{\beta-1} \int_{\wmin}^{\w v^{\gamma(\zeta\cdot\varepsilon)}} \frac{\w v w}{\wmin n} \w v^{\gamma(\zeta\cdot\varepsilon)(\alpha-1)}\phi(v)^{\alpha-1} w^{-\beta}dw\right)\\
& = O\left(\wmin^{\beta-2}\w v^{\gamma(\zeta\cdot\varepsilon)(\alpha-1)+1}\phi(v)^{\alpha-1} \int_{\wmin}^{\w v^{\gamma(\zeta\cdot\varepsilon)}} w^{1-\beta}dw\right)\\
& = O\left(\w v^{\gamma(\zeta\cdot\varepsilon)(\alpha-1)+1}\phi(v)^{\alpha-1}\right).
\end{align*}

We finish the proof of the lemma by showing that  $V^-(v,\varepsilon) \subseteq B(v,\varepsilon)$ holds under the assumption that $v$ satisfies in addition $\w v \ge w_1(\varepsilon)$.  By assumption, $v \in \probSpace_1$ and all $u \in V^-(v,\varepsilon)$ fufilll $\w u \le \w v^{\gamma(\zeta\cdot\varepsilon)}$. Next we observe that by the definition of the set $V^-(v,\varepsilon)$ for a vertex $v \in \probSpace_1$, all $u \in V^-(v,\varepsilon)$ satisfy
$$\|\x u - \x t\|^d = \frac{\w u}{n\wmin\phi(u)} \le \frac{\w v^{\gamma(\zeta\cdot\varepsilon)}}{n\wmin \phi(v)\w v^{\gamma(\varepsilon)-1}}  = \|\x v - \x t\|^d \w v^{\gamma(\zeta\cdot\varepsilon)-\gamma(\varepsilon)}= \|\x v - \x t\|^d \w v^{-\Omega(\varepsilon)}.$$
Hence $\|\x u - \x t\| \le \|\x v - \x t\| \w v^{-\Omega(\varepsilon)/d} \le 0.5 \|\x v - \x t\|$ as we assume $\w v \ge w_1(\varepsilon)$. By the triangle inequality we deduce 
$\|\x v - \x u\| \ge 0.5\|\x v - \x t\|$. On the other side, since $\w u \le \w v^{\gamma(\zeta\cdot\varepsilon)}$, the definition of $r_0(w)$ implies
$$r_0(\w u)^d = 0.5^d \frac{\w v^{1-\gamma(\zeta\cdot\varepsilon)} \w u}{\wmin n \phi(v)} \le 0.5^d \|\x v- \x t\|^d.$$ 
It follows 
$\|\x v - \x u\| \ge 0.5\|\x v - \x t\| \ge r_0(\w u)$,
which proves $V^-(v,\varepsilon) \subseteq B(v,\varepsilon)$.
\end{proof}

We turn to the last lemma of this chapter. In order to prove Theorem~\ref{thm:length} and Theorem~\ref{thm:patching}, we will need to control the trajectory in particular at the start and at the end of the routing. For technical reasons, we require that the routing will not see any edges which increase the objective too fast. Statement $(i)$ of this lemma states that a.a.s.\ there are no vertices of both high weight and high objective at all. Statements~$(ii)$ and $(iii)$ upper-bound the probability that a vertex of small weight has an incident edge of unexpected length, i.e., a neighbor which is located geometrically far away. Finally $(iv)$ upper-bounds this probability for a vertex of high weight.

\begin{lemma}[Unlikely jumps] \label{lem:nojumps}
Let $w_0=w_0(n)=\omega(1)$ be a growing function in $n$ and $v \in \probSpace$. Then the following statements hold.
\begin{enumerate}[(i)]
\item With prob. $1-w_0^{-\Omega(1)}$, there exists no vertex $u$ (except potentially $t$) s.t. $\w u \ge w_0$ and $\phi(u) \ge w_0^{-\Omega(1)}$.
\end{enumerate}
Let $\w v \le w_0$:
\begin{enumerate}[(i)]
\setcounter{enumi}{1}
\item With prob. $1-O(w_0^{-5})$ there exists no neighbor $u$ of $v$ with $\w u \le w_0$ and $\|\x u - \x v\|^d \ge \frac{w_0^{O(1)}}{n}$.

\item Let $\phi(v) \le w_0^{-\Omega(1)}$. With probability $1-O(w_0^{-5})$ there exists no neighbor $u$ of $v$ with $\w u \ge w_0$ and
$\phi(u) \le \phi_0^{O(1)}$.
\end{enumerate}
Let $\w v \ge w_0$:
\begin{enumerate}[(i)]
\setcounter{enumi}{3}
\item Let  $\phi(v) \le w_0^{-1-\Omega(1)}$, and let $M := \min\{\w v,\phi(v)^{-1}\}$. \\
With prob. $1-O(M^{-\Omega(1)})$ there exists no neighbor $u$ of $v$ with $\w u \le w_0$ and $\phi(u) \ge w_0^{-\Omega(1)}$.
\end{enumerate}
\end{lemma}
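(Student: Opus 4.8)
# Proof Proposal for Lemma~\ref{lem:nojumps}

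\textbf{Overall strategy.} All four statements are first-moment/Markov bounds: in each case I bound the expected number of ``bad'' vertices or ``bad'' neighbors by integrating the weight density $f(w)=\Theta(\wmin^{\beta-1}w^{-\beta})$ against the relevant geometric probability (obtained from \eqref{eq:puv}/\eqref{eq:puv2} or from the marginal bound in Lemma~\ref{lem:marginal}), show the expectation is polynomially small in the governing parameter ($w_0$ or $M$), and conclude by Markov's inequality. The heavy lifting for $(iii)$ and $(iv)$ is essentially already done inside the proofs of Lemma~\ref{lem:firstphase}(ii) and Lemma~\ref{lem:smweights}, so the plan is largely to invoke or re-run those computations with the concrete parameter choices here.

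\textbf{Statement $(i)$.} A vertex $u$ with $\phi(u)\ge\phi_0$ lies within geometric distance $(\w u/(\phi_0\wmin n))^{1/d}$ of $t$; with $\phi_0 = w_0^{-c}$ for a suitable small $c>0$, integrate over all weights $w\ge w_0$: the expected count is $n\wmin^{\beta-1}\int_{w_0}^\infty w^{-\beta}\cdot \frac{w}{\phi_0\wmin n}\,dw = \Theta(\phi_0^{-1} w_0^{2-\beta}) = \Theta(w_0^{c+2-\beta})$. Since $\beta>2$, choosing $c$ small enough (say $c<\beta-2$) makes this $w_0^{-\Omega(1)}$, and Markov finishes it. (Note $t$ itself must be excluded since we fixed $\w t$, which may be large and close to itself.)

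\textbf{Statement $(ii)$.} For a fixed $v$ with $\w v\le w_0$ and a vertex $u$ of weight $\w u\le w_0$ at distance $r$, \eqref{eq:puv}/\eqref{eq:puv2} give connection probability $O\big((\w u\w v/(r^d\wmin n))^{\alpha}\big)$, which for $r^d\ge w_0^{K}/n$ is $O\big((w_0^{2}/(w_0^{K}\wmin))^{\alpha}\big)$. Integrating $n$ over the weight density and over all $r$ with $r^d\ge w_0^K/n$ out to $r\le 1/2$ gives an expected number of such neighbors that is $O(w_0^{-\Omega(1)})$ provided the exponent $K$ is chosen as a large enough constant (using $\alpha>1$ so the $r$-integral converges and contributes a factor $r_0^{d(1-\alpha)} = (w_0^K/n)^{1-\alpha}$); picking $K$ large makes this $O(w_0^{-5})$. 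This is the same type of calculation as in Lemma~\ref{lem:smweights}, just with explicit polynomial radii.

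\textbf{Statements $(iii)$ and $(iv)$.} For $(iii)$, a neighbor $u$ with $\w u\ge w_0$ and $\phi(u)\le\phi_0^{K}$ must lie far from $t$ (distance $\ge (\w u/(\phi_0^K\wmin n))^{1/d}$); combined with $\phi(v)\le w_0^{-c}$, i.e. $v$ is itself far from $t$, a triangle-inequality argument shows $u$ is forced to be at distance $\Omega$ of $v$'s distance to $t$ from $v$, hence the connection probability is tiny, and integrating over $w\ge w_0$ gives $O(w_0^{-5})$ after choosing $K$ large. For $(iv)$, this is structurally identical to Lemma~\ref{lem:firstphase}(ii) / Lemma~\ref{lem:smweights}: $v$ has high weight $\w v\ge w_0$ and small objective $\phi(v)\le w_0^{-1-c}$, and $u\in V^-$-type neighbors (small weight, not-too-small objective) are forced geometrically far from $v$; re-running the $\alpha<\infty$ integral there with $w$ ranging up to $w_0$ yields expectation $O(M^{-\Omega(1)})$ with $M=\min\{\w v,\phi(v)^{-1}\}$, and the $\alpha=\infty$ case again gives \emph{deterministically} no such neighbor via \eqref{eq:puv2}.

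\textbf{Main obstacle.} The routine integrals are not the difficulty; the delicate points are (a) choosing the hidden constants in the $\Omega(1)$/$O(1)$ exponents consistently so that every bound simultaneously beats the target rate (in particular making the $-\Omega(1)$ in the hypothesis on $\phi(v)$ strong enough to drive the triangle-inequality separation in $(iii)$ and $(iv)$), and (b) in $(iv)$, correctly identifying which of $\w v$ or $\phi(v)^{-1}$ controls the bound depending on whether the binding constraint on a bad neighbor's weight comes from $\probSpace_1$-membership ($\w u\le\phi(u)^{-1/\gamma(\varepsilon_1)}$) or from the weight cutoff $w_0$ — this is exactly the $M=\min\{\w v,\phi(v)^{-1}\}$ bookkeeping and is the one place where care is genuinely needed rather than mechanical.
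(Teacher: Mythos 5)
Your proposal matches the paper's proof essentially step for step: first-moment bounds over the weight density combined with the geometric connection probability from \eqref{eq:puv}/\eqref{eq:puv2}, a triangle-inequality separation argument for $(iii)$ and $(iv)$, a separate deterministic exclusion for $\alpha=\infty$ in each of $(ii)$--$(iv)$ (you only made this explicit in $(iv)$, but the same point applies to $(ii)$ and $(iii)$), and Markov's inequality to conclude. One small correction to your ``main obstacle'' remark: in $(iv)$ the paper's case split is not between $\probSpace_1$-membership and the cutoff $w_0$ on the bad neighbor's weight (the bad set is defined purely by $\w u\le w_0$ and $\phi(u)\ge w_0^{-\eps}$); rather, the computed bound is $O\big(w_0^{\eps}(\phi(v)w_0)^{\alpha}\big)$ and the case split is on whether $\phi(v)\le\w v^{-1-2\eps}$ or not, which determines whether one uses $\phi(v)\le\w v^{-1-2\eps}$ together with $w_0\le\w v$, or instead $w_0\le\phi(v)^{-1/(1+\eps)}$, to extract the $M^{-\Omega(1)}$ rate.
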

\begin{proof}
Let $c>0$ be a constant chosen sufficiently large, $\varepsilon$ be a constant chosen sufficiently small, let $r_0 := (w_0^c/n)^{1/d}$ and  for every weight $w \ge \wmin$ define $\delta(w) := (\frac{w w_0^{\varepsilon}}{\wmin n})^{1/d}$.
\medskip

\textbf{Proof of {\boldmath$(i)$}:}
 We first observe that a vertex $u$ of given weight $\w u$ satisfies $\phi(u) \ge w_0^{-\varepsilon}$ if and only if $\|\x u - \x t\|^d \le \delta(\w u)^d$. Then excluding $t$ which is potentially fixed, the expected number of vertices with weight at least $w_0$ and objective at least $w_0^{-\varepsilon}$ is at most
$$
\int_{w_0}^{\infty} n \wmin^{\beta-1} w^{-\beta}\delta(w)^d dw = \int_{w_0}^{\infty} \wmin^{\beta-2} w_0^{\varepsilon} w^{1-\beta} dw = O\left(\wmin^{\beta-2}w_0^{2-\beta+\varepsilon} \right) = w_0^{-\Omega(1)},$$
where we used that for $\varepsilon$ sufficiently small it holds $\beta-2 \ge 2\varepsilon$. By Markov's inequality with probability $1-w_0^{-\Omega(1)}$ there exist no such vertices of weight at least $w_0$.
\medskip

\textbf{Proof of {\boldmath$(ii)$}:} Let $v$ be a vertex of weight at most $w_0$ and define
\[A_0(v):=\{u\in \probSpace \mid \w u \le w_0\text{ and }\|\x u - \x v\|^d \ge r_0\}.\]
For $c$ sufficiently large and for any vertex $u \in A_0(v)$ we have $\frac{\w u \w v}{\wmin n} \le \frac{w_0^2}{\wmin n} = o(r_0^d)$.
\medskip

\textbf{Case {\boldmath$\alpha=\infty$}:} The claim follows as by $(EP2)$ deterministically $u$ and $v$ do not share an edge, i.e., $\Gamma(v) \cap A_0(v) = \emptyset$. 
\medskip

\textbf{Case {\boldmath$\alpha < \infty$}:} By $(EP1)$ and the assumption $\w v \le w_0$, the probability $p_{uv}$ is at most $O\left((w_0^{2}/(n\|\x u - \x v\|^d))^{\alpha}\right)$. Thus we have
$$\Ex[|\Gamma(v) \cap A_0(v)|] =  O\left( n \int_{r_0}^{1/2} r^{d-1} \left(\frac{w_0^2}{n r^d}\right)^{\alpha}dr\right) = O\left(w_0^{2\alpha}n^{1-\alpha}r_0^{d(1-\alpha)}\right)=O\left(w_0^{2\alpha+c(1-\alpha)}\right).$$
If we take $c \ge \frac{5+2\alpha}{\alpha-1}$, then we obtain 
$\Ex[|\Gamma(v) \cap A_0(v)|] = O(w_0^{-5})$, and by Markov's inequality with probability at least $1-O(w_0^{-5})$, the set $\Gamma(v) \cap A_0(v)$ is empty.
\medskip

\textbf{Proof of {\boldmath$(iii)$}:} Let $v$ be a vertex of weight at most $w_0$ which satisfies $\phi(v) \le w_0^{-\varepsilon}$, and define
$$A_1(v) := \{u \in \probSpace \mid (1): \w u \ge w_0; (2): \phi(u) \le \phi(v)^{c}\}.$$
We want to show that a.a.s.\ $v$ has no neighbor in the set $A_1(v)$. Define $\delta'(w) := (\frac{w}{\phi(v)^c\wmin n})^{1/d}$ \label{defn-4nt:delta'(w)} and observe that a vertex $u\in\probSpace$ with $\w u\ge w_0$ is contained in $A_1(v)$ if and only if $\delta'(w_u) \le \|\x u - \x t\|$. However, due to $\w v \le w_0$, for $c>1/\varepsilon$ we have $\delta'(\w u)^d =  \omega(\frac{\w u \w v}{\wmin n})$.
\medskip

\textbf{Case {\boldmath$\alpha=\infty$}:}  The claim follows because $(EP2)$ implies that, deterministically, $v$ does not have neighbors in $A_1(v)$.
\medskip

\textbf{Case {\boldmath$\alpha < \infty$}:} Note that  for $w \ge w_0$ we have $\delta'(w) = \omega(\|\x v - \x t\|)$, and by the triangle inequality we deduce that for a vertex $u \in A_1(v)$ it holds $\|\x u - \x v\| \ge 0.5\|\x u - \x t\|$. After this preparation, we can calculate the expected number of vertices in $\Gamma(v) \cap A_1(v)$ by integrating over the complete set $A_1(v)$. We obtain
\begin{align*}
\Ex[|\Gamma(v) \cap A_1(v)|] & = O\left( \int_{w_0}^{\infty} n \wmin^{\beta-1} w^{-\beta} \int_{\delta'(w)}^{1/2} r^{d-1} \left(\frac{w \w v}{(0.5r)^d \wmin n}\right)^{\alpha}dr dw\right)\\
& = O\left( \int_{w_0}^{\infty} n \wmin^{\beta-1} w^{-\beta}  \left(\frac{w \w v}{\wmin n}\right)^{\alpha} \delta'(w)^{d(1-\alpha)} dw\right)\\
& = O\left( \wmin^{\beta-2} \w v^{\alpha} \phi(v)^{c(\alpha-1)} \int_{w_0}^{\infty} w^{1-\beta} dw \right)\\
& = O\left( \wmin^{\beta-2} \w v^{\alpha} \phi(v)^{c(\alpha-1)}w_0^{2-\beta} \right) = O\left( \w v^{\alpha} w_0^{\varepsilon c(1-\alpha)} \right).
\end{align*}

Since $\w v \le w_0$ by assumption, we can choose $c$ large enough such that this value is at most $O(w_0^{-5})$. Then Markov's inequality implies that indeed with probability $1-O(w_0^{-5})$, the set $\Gamma(v) \cap A_1(v)$ is empty. This proves Statement~$(iii)$.
\medskip

\textbf{Proof of {\boldmath$(iv)$}:} Let $v$ be a vertex with $\w v\ge w_0$ and objective $\phi(v)\le w_0^{-1-\varepsilon}$, and define $M := \min\{\w v,\phi(v)^{-1}\}$. We want to upper-bound the probability that $v$ has a neighbor in 
$$A_2(v) := \{u \in \probSpace \mid (1): \w u \le w_0; (2): \phi(u) \ge w_0^{-\varepsilon}\}.$$
We know that every vertex $u \in A_2(v)$ satisfies 
$$\|\x u - \x t\|^d \le \delta(\w u)^d=\frac{\w u w_0^{\varepsilon}}{\wmin n}=o\left(\frac{\w u \w v}{\wmin n}\right).$$ 
On the other hand, the vertex $v$ satisfies
$$\|\x v - \x t\|^d = \frac{\w v}{\phi(v)\wmin n} \ge \frac{\w v w_0^{1+\varepsilon}}{\wmin n} = \omega\left(\frac{\w u \w v}{\wmin n}\right).$$
By the triangle inequality it follows $\|\x v - \x u\|=\omega\left(\frac{\w u \w v}{\wmin n}\right)$.
\medskip

\textbf{Case {\boldmath$\alpha=\infty$}:} The claim follows by $(EP2)$ because there exists deterministically no edge between $u$ and $v$. 
\medskip

\textbf{Case {\boldmath$\alpha < \infty$}:} We deduce $p_{uv} = O((\phi(v) \w u)^{\alpha})$. By integrating over all weights \mbox{$w \in [\wmin,w_0]$} we obtain
\begin{align*}
\Ex[|\Gamma(v) \cap A_2(v)|] & = O\left(\int_{\wmin}^{w_0}n\wmin^{\beta-1}w^{-\beta} \delta(w)^d (\phi(v)w)^{\alpha}dw\right)\\ & = O\left(\wmin^{\beta-2} w_0^{\varepsilon} (\phi(v)w_0)^{\alpha}\int_{\wmin}^{w_0}w^{1-\beta}dw\right) = O\left(w_0^{\varepsilon}(\phi(v)w_0)^{\alpha} \right).
\end{align*}
We continue with a case distinction. First suppose $\phi(v) \le \w v^{-1-2\varepsilon}$. Since $\alpha>1$, $\phi(v) \le w_0^{-1}$, and $\w v \ge w_0$, in this case the above expression is at most
$$O\left(w_0^{1+\varepsilon}\phi(v)\right)=O(w_0^{1+\varepsilon}\w v^{-1-2\varepsilon})=O(\w v^{-\varepsilon})=O(M^{-\Omega(1)}).$$
On the other hand, if $\phi(v) \ge \w v^{-1-2\varepsilon}$ we use 
$$\phi(v) \le w_0^{-1-\varepsilon} \Leftrightarrow w_0 \le \phi(v)^{-1/(1+\varepsilon)}$$ 
and get
$$\Ex[|\Gamma(v) \cap A_2(v)|] = O\left(w_0^{\varepsilon+\alpha}\phi(v)^{\alpha}\right)= O\left(\phi(v)^{\alpha - \frac{\alpha+\varepsilon}{1+\varepsilon}}\right) = O\left(\phi(v)^{\frac{\alpha\varepsilon-\varepsilon}{1+\varepsilon}}\right)=O\left(M^{-\Omega(1)}\right).$$
Then again by Markov's inequality, with probability $1-O(M^{-\Omega(1)})$ the set $\Gamma(v) \cap A_2(v)$ is empty, which finishes the proof.
\end{proof}

\section{Proof: Success Probability and Length of Greedy Routing}\label{sec:analysis}

In this section we analyze the basic greedy process and prove Theorem~\ref{thm:greedysuccess1} and Theorem~\ref{thm:greedysuccess2} about the success probability and Theorem~\ref{thm:length} about the running time. We advise the reader to first read Section~\ref{sec:proofsketch} where we describe the high-level ideas that we use to prove our main Lemma~\ref{lem:main}. In a nutshell, we distinguish two phases of the process (governed by weights and by objectives, respectively), and for each of the phases we divide the space $\probSpace$ into layers. Then we prove that with sufficiently large probability a greedy path never visits more than one vertex per layer, and a.a.s.\! it does not fail in any layer except for the very first or the very last ones. This observation is formalized in Lemma~\ref{lem:main} and lies at the heart of all our main results, including the results for patching and for the relaxations. All algorithms follow very similar trajectories, only the starting phase and the end phase differs for the various algorithms. The layer structure is depicted in Figure~\ref{fig:greedypath} on page~\pageref{fig:greedypath}.



\subsection{Main Lemma} \label{subsec:mainlemma}

Let us start by repeating the notations which we introduced in Section~\ref{subsec:basiclemmas}. For all $\varepsilon>0$, we denote $\gamma(\varepsilon) := \frac{1-\varepsilon}{\beta-2}$. Furthermore $\varepsilon_1 = \varepsilon_1(\alpha,\beta)>0$ is a fixed constant, chosen sufficiently small, $\zeta=\zeta(\alpha,\beta)$ is also a fixed constant, $w_1(\varepsilon)=O(e^{d/\varepsilon})$, and $\phi_1(\varepsilon) = \Omega(e^{-d/\varepsilon})$. Then we have the following two classes of vertices:
$$\probSpace_1 := \{v \in \probSpace: \phi(v) \le \w v^{-\gamma(\varepsilon_1)}\}
\quad 
\text{and}
\quad \probSpace_2 := \{v \in \probSpace: \phi(v) \ge \w v^{-\gamma(\varepsilon_1)}\}.$$
Finally, for a given vertex $v \in \probSpace$ and a given $\varepsilon>0$ we defined good sets $V^+(v,\varepsilon)$ and bad sets $V^-(v,\varepsilon)$ above in (\ref{eq:areas1}) and (\ref{eq:areas2}). 

In Section~\ref{sec:proofsketch} we claimed that the expected trajectory of the routing process is such that soon after the start the current weight is $\omega(1)$. We also expect that, at the end of the routing, when reaching rather large objectives, very few additional hops are needed to reach $t$. Therefore the main part of the process plays in-between. In this section we state and prove a lemma which describes the routing process in this main part. For any pair $(w,\phi)$ of a weight and an objective we define
$$\overline{\probSpace}(w,\phi) := \{v \in \probSpace_1 \mid (1):\w v \ge w; (2): \phi(v) \le \phi\} \cup \{v \in \probSpace_2 \mid \phi(v) \le \phi\}.$$
Suppose that both $w$ and $\phi$ are constants. Then the main part of the routing process will happen in this set $\overline{\probSpace}(w,\phi)$. With the following lemma, we study the routing process inside such a set. We show that with sufficiently high probability, the routing will not die out in $\overline{\probSpace}(w,\phi)$ and will be ultra-fast in traversing this set. Thereby the probabilities for the failure events will depend on $w$ and $\phi$, which can be both growing functions as well. We will apply the lemma for both the basic greedy algorithm and the patching algorithms.

\begin{lemma}[Main lemma]
\label{lem:main} 
Let $w_0 \ge w_1(\varepsilon_1)$, $\phi_0 \le \phi_1(\varepsilon_1)$, and $M := \min\{w_0,\phi_0^{-1}\}$. Furthermore let \mbox{$f_0(n)=\omega(1)$} be any growing function such that $f_0(n)=o(\log\log n)$, and let $A$ be a routing protocol which satisfies $(P1)$, i.e., it makes greedy choices. Suppose $\phi(s) \le \phi_0$, denote by $P$ the greedy path induced by all vertices of objective at most $\phi_0$ starting from $s$, and suppose that there exists a first vertex $u_1 \in P \cap \overline{\probSpace}(w_0,\phi_0)$. Then with probability 
$$1-O\left(\wmin^{\beta-2} M^{-\Omega(1)}\right),$$ 
there exists a subpath $P' = (u_1,\ldots,u_\ell)$ of $P$ starting in $u_1$ with the following properties:
\begin{enumerate}[(i)]
\item $P'$ is contained in $\overline{\probSpace}(w_0,\phi_0)$.
\item Either $P' \subset \probSpace_1$, or $P' \subset \probSpace_2$, or there exists a vertex $u_{\ell'}$ such that $\{u_1, \ldots, u_{\ell'}\} \in \probSpace_1$ and $\{u_{\ell'+1}, \ldots, u_{\ell}\} \in \probSpace_2$.
\item Let $\{u_i,u_{i+1},u_{i+2}\}$ be three subsequent vertices on $P' \cap \probSpace_1$. Then $\w {u_{i+2}} \ge \w {u_i}^{\gamma(\zeta \varepsilon_1)}$.
\item Let $\{u_i,u_{i+1},u_{i+2}\}$ be three subsequent vertices on $P' \cap \probSpace_2$. Then $\phi(u_{i+2}) \ge \phi(u_i)^{1/\gamma( \varepsilon_1)}$.
\item The length $\ell$ of the path $P'$ is at most $\frac{2+o(1)}{|\log(\beta-2)|}\log\log n$. 

Moreover and more precisely, the length of $P'$ is upper bounded by
$$
\frac{1+o(1)}{|\log(\beta-2)|}\left(\log\log_{w_0}(\phi(u_1)^{-1})+\log\log_{\phi_0^{-1}}(\phi(u_1)^{-1})\right)+O\left(f_0(n)\right).$$ 
\item The following holds 
$$\Ex_{> \phi_0}[|\Gamma(u_\ell) \cap V^+(u_\ell,\varepsilon_1) \cap V_{> \phi_0}|]= \Omega\left(\wmin^{\beta-2}M^{\Omega(1)}\right).$$ 
Here $\Ex_{> \phi_0}$ means expectation w.r.t.\! uncovering the vertices of objectives larger than $\phi_0$, conditioned on position and weight of $u_\ell$.
\end{enumerate}
\end{lemma}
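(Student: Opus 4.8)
The plan is to establish the main lemma by the \emph{layer technique} sketched in Section~\ref{sec:proofsketch}, which is tailored to circumvent the conditioning problem discussed there. First I would partition $\overline{\probSpace}(w_0,\phi_0)$ into layers. For the first phase set $W_0:=w_0$, $W_{j+1}:=W_j^{\gamma(\zeta\varepsilon_1)}$ and $A_j:=\{v\in\probSpace_1: W_j\le\w v<W_{j+1},\ \phi(v)\le\phi_0\}$; for the second phase let $\Phi_0$ be the smallest objective attainable by a vertex of $\probSpace_2\cap\overline{\probSpace}(w_0,\phi_0)$, put $\Phi_{j+1}:=\Phi_j^{1/\gamma(\varepsilon_1)}$ and $B_j:=\{v\in\probSpace_2:\Phi_j\le\phi(v)<\Phi_{j+1}\}$, stopping once $\Phi_j\ge\phi_0$. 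Since $w_0$ and $\phi_0$ are essentially constant and all weights are $n^{O(1)}$, there are $\tfrac{1+o(1)}{|\log(\beta-2)|}\log\log n$ layers of each kind, plus a tiny transition layer $A_{1,\infty}$ bridging $\probSpace_1$ and $\probSpace_2$ that is handled by hand. Listing the $A_j$ in increasing order of $j$, then $A_{1,\infty}$, then the $B_j$, and finally a virtual last ``layer'' $V_{>\phi_0}$, yields a linear order on the layers in which the phase-1 weight windows and the phase-2 objective windows strictly increase along the order; by the choice of the window exponents $\gamma(\zeta\varepsilon_1)$ and $1/\gamma(\varepsilon_1)$, and since $V^+(v,\varepsilon_1)$ always consists of vertices of larger weight (if $v\in\probSpace_1$) or of larger objective in $\probSpace_2$ (if $v\in\probSpace_2$), for $v$ in a layer $L$ every vertex of $V^+(v,\varepsilon_1)$ lies in a layer strictly after $L$. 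One then uncovers the Poisson point process \emph{layer by layer in this order}: after uncovering the first $k$ layers the induced graph $G_k$ is known, and so is the greedy path $P_k$ in $G_k$ started at $u_1$.

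\textbf{The per-layer good event.} For the $k$-th layer $L$ define the event $\mathcal{G}_L$: \emph{either $P_k$ never enters $L$; or, for the first vertex $v$ of $P_k$ in $L$, the vertex $v$ has no neighbour in $V^-(v,\varepsilon_1)$, it has a neighbour $u\in V^+(v,\varepsilon_1)$ lying in a strictly later layer with $\phi(u)>\phi(w)$ for every neighbour $w$ of $v$ in the first $k$ layers, and it has no ``unlikely jump'' neighbour in the sense of Lemma~\ref{lem:nojumps}}. The crucial point is that $v$ is measurable with respect to $G_k$ alone; conditioned on the address of $v$, the vertices in the later layers are still an independent Poisson process, so the number of neighbours of $v$ that lie in $V^+(v,\varepsilon_1)$, resp.\ in $V^-(v,\varepsilon_1)$, within later layers is a Poisson variable with mean computed by Lemmas~\ref{lem:firstphase} and~\ref{lem:secondphase} (and by Lemma~\ref{lem:smweights} when $v$ has constant weight and an exponentially small bound is needed). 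Hence: $v$ has a good later-layer neighbour except with probability $e^{-\Omega(\mu_L)}$, where $\mu_L=\Omega(\wmin^{\beta-2}\cdot(\text{scale of }L)^{\Omega(1)})$ is the mean; $v$ has no bad later-layer neighbour except with probability $\wmin^{\beta-2}(\text{scale of }L)^{-\Omega(1)}$ by Markov; a bad neighbour of $v$ inside the already-uncovered first $k$ layers cannot beat $u$ because, by the window exponents together with the induction hypothesis that all earlier good events held, its objective is at most the threshold defining $V^+(v,\varepsilon_1)$; and Lemma~\ref{lem:nojumps}(i)--(iv) dispatches the unlikely jumps. The ``no bad neighbour'' and ``good neighbour exists'' events over the un-uncovered part are combined without loss via Lemma~\ref{lem:correlatedprobs} (conditioning on absence of a neighbour in one region only increases the probability of presence of a neighbour in a disjoint region). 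Altogether $\Pr[\overline{\mathcal{G}_L}]\le\wmin^{\beta-2}(\text{scale of }L)^{-\Omega(1)}$.

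\textbf{Union bound and properties (i)--(v).} Because the scales of successive layers grow doubly exponentially, $\sum_L\Pr[\overline{\mathcal{G}_L}]$ is dominated by the two boundary layers and equals $O(\wmin^{\beta-2}M^{-\Omega(1)})$; this is the claimed failure probability. On the complementary event a short induction along the layer order shows that, as long as it stays in $\overline{\probSpace}(w_0,\phi_0)$, the true greedy path $P$ in $G$ agrees with the greedy path in the relevant $G_k$: each time $P$ reaches the first vertex $v$ of a layer, $\mathcal{G}_L$ forces $v$'s globally best neighbour to be the promised $u\in V^+(v,\varepsilon_1)$, which sits in a strictly later layer and, by the two sub-cases in the definition of $V^+$, stays in $\probSpace_1$ or in $\probSpace_2$. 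Taking $P'=(u_1,\dots,u_\ell)$ to be this portion of $P$ gives (i) and (ii) directly. For (iii) and (iv): three consecutive vertices of $P'\cap\probSpace_1$ (resp.\ $\cap\probSpace_2$) lie in three layers with strictly increasing indices, hence span at least one extra window, and the exponent $\gamma(\zeta\varepsilon_1)$ (resp.\ $1/\gamma(\varepsilon_1)$) of the windows converts this into the stated inequality. For (v) one counts layers: at most $\tfrac{1+o(1)}{|\log(\beta-2)|}\log\log_{w_0}(\phi(u_1)^{-1})$ phase-1 layers, at most $\tfrac{1+o(1)}{|\log(\beta-2)|}\log\log_{\phi_0^{-1}}(\phi(u_1)^{-1})$ phase-2 layers, one vertex per layer, a factor-$2$ slack from ``progress every two steps'', and an $O(f_0(n))$ term absorbing $A_{1,\infty}$ and lower-order effects.

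\textbf{Property (vi), and the main obstacle.} Since $u_\ell$ is the last vertex of $P$ inside $\overline{\probSpace}(w_0,\phi_0)$, event $\mathcal{G}_L$ for its layer already provides a good neighbour in the virtual layer $V_{>\phi_0}$; and by maximality of $\ell$ together with the layer structure, $\phi(u_\ell)$ is within a bounded power of $\phi_0$ (resp.\ $\w{u_\ell}$ within a bounded power of the transition weight), so the objective threshold in the definition of $V^+(u_\ell,\varepsilon_1)$ is at least $\phi_0^{O(1)}\ge M^{-O(1)}$. Re-running the lower-bound integral from the proof of Lemma~\ref{lem:firstphase}(i)/Lemma~\ref{lem:secondphase}(i), intersected with the region $\{\phi(u)>\phi_0\}$, still yields $\Ex_{>\phi_0}[|\Gamma(u_\ell)\cap V^+(u_\ell,\varepsilon_1)\cap V_{>\phi_0}|]=\Omega(\wmin^{\beta-2}M^{\Omega(1)})$, as required. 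I expect the main obstacle to be the bookkeeping in $\mathcal{G}_L$: making ``the first vertex of $P_k$ in $L$'' genuinely $G_k$-measurable, proving that bad neighbours inside already-uncovered layers can never be selected by the greedy step (where the precise window exponents, the parameter $\zeta$, and the inductive use of all earlier good events must cooperate), and combining the un-uncovered ``no bad / exists good'' events without loss, all while keeping every failure probability polynomially small in the \emph{local} scale so that the union bound telescopes to $M^{-\Omega(1)}$. The limit case $\alpha=\infty$ is uniformly easier, as many of these events become deterministic.
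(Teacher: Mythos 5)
Your high-level plan is the same as the paper's: partition $\overline{\probSpace}(w_0,\phi_0)$ into layers ordered by increasing weight (phase 1) then increasing objective (phase 2), uncover the Poisson process layer by layer, define a ``good'' event for the first vertex $P$ puts in each layer, and apply Lemmas~\ref{lem:firstphase}, \ref{lem:secondphase}, \ref{lem:smweights}, \ref{lem:correlatedprobs}, and a telescoping union bound. However, there are two genuine gaps.

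\textbf{Gap 1: fixed layer exponents cannot give the sharp constant.} You build the phase-1 layers with a \emph{fixed} exponent $\gamma(\zeta\varepsilon_1)=\tfrac{1-\zeta\varepsilon_1}{\beta-2}$ and the phase-2 layers with $1/\gamma(\varepsilon_1)$. The number of phase-1 layers needed to span the weight range is then $\log\log_{w_0}(\phi(u_1)^{-1})/\log\gamma(\zeta\varepsilon_1)$, and $\log\gamma(\zeta\varepsilon_1)$ is a constant \emph{strictly smaller} than $|\log(\beta-2)|$ because $\zeta\varepsilon_1>0$ is a fixed constant. This inflates your layer count (and hence the length bound in (v)) by a constant factor $|\log(\beta-2)|/\log\gamma(\zeta\varepsilon_1)>1$, not by $1+o(1)$. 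You assert the $1+o(1)$ count in passing, but it does not follow from your choice of windows. The paper avoids this by a two-tier construction: only the first $f_0(n)$ layers of each phase use the $\varepsilon_1$-based exponent (their count is absorbed into the $O(f_0(n))$ slack), while the remaining layers use exponent $\gamma(\varepsilon_2)$ with $\varepsilon_2=(\log\log f_0(n))^{-1}=o(1)$, so that $\log\gamma(\varepsilon_2)\to|\log(\beta-2)|$ and the constant becomes $1+o(1)$. The choice of $\varepsilon_2$ is delicate because the failure probability of each layer also scales with the chosen $\varepsilon$; the paper chooses $\varepsilon_2$ small enough for the sharp length bound but large enough that, via \eqref{eq:w0epsi}, the failure probabilities of the $\varepsilon_2$-layers are dominated by those of the boundary $\varepsilon_1$-layers and the union bound still telescopes to $O(\wmin^{\beta-2}M^{-\Omega(1)})$.

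\textbf{Gap 2: the good event is missing the ``(vi) already holds'' disjunct, and (vi) is not recoverable afterwards.} Your per-layer event $\mathcal G_L$ requires that $v$ has a neighbour in $V^+(v,\varepsilon_1)$ in a strictly later layer. But to keep the greedy path $P$ (which lives in $G[V_{\leq\phi_0}]$) moving, you need a good neighbour \emph{inside} $V_{\leq\phi_0}$. Lemmas~\ref{lem:firstphase}(i)/\ref{lem:secondphase}(i) give $\Ex[|\Gamma(v)\cap V^+(v,\varepsilon)|]=\Omega(\wmin^{\beta-2}\cdot(\text{scale})^{\Omega(1)})$, but a priori most of that mass could sit in $V_{>\phi_0}$, in which case the Chernoff bound for a $V_{\leq\phi_0}$-neighbour does not go through. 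The paper's event $\mathcal E_{i,j}$ therefore has an explicit third disjunct: either $v$ already satisfies (vi), i.e.\ $\Ex_{>\phi_0}[|\Gamma(v)\cap V^+(v,\varepsilon_1)\cap V_{>\phi_0}|]\geq E$ (in which case $u_\ell:=v$ and we stop), or this conditional expectation is small, in which case most of the expectation in $V^+(v,\varepsilon)$ lies in $V_{\leq\phi_0}$ and the Chernoff argument applies. Your reconstruction instead tries to establish (vi) afterwards from the maximality of $u_\ell$ and the claim that $\phi(u_\ell)$ is within a bounded power of $\phi_0$; but nothing in your events forces $P$ to reach the last objective layer before stopping, so that claim does not hold in general. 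The cleaner route is exactly the paper's: bake the (vi) bound into the per-layer event, both as the termination criterion and as the lemma hypothesis needed for the Chernoff step.

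A smaller point: you should also keep the additional transition layer $A_{1,\infty}$ and the requirement that the good neighbour of its first vertex lands in $\probSpace_2$; you mention this in passing but it needs the $\gamma(\varepsilon_1)$-to-$\gamma(\varepsilon_2)$ interplay above, so it is not a cosmetic detail.
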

Note that Lemma~\ref{lem:main} naturally applies to the situation that we have uncovered all vertices of objective at most $\phi_0$, but not the rest of the graph. More precisely, conditions $(i)-(v)$ are independent of $V_{> \phi_0}$, and $(vi)$ makes a statement about the marginal expectation \emph{after} uncovering $V_{\leq \phi_0}$ and \emph{before} uncovering $V_{> \phi_0}$.
\begin{proof}
Let $P$ be the greedy path induced by all vertices of objective at most $\phi_0$ We analyze the structure of $P$ by partitioning the set $\overline{\probSpace}(w_0,\phi_0)$ into several small layers. Then the idea will be to show that with sufficiently high probability, there exists a vertex $u_{\ell}$ on the path $P$ such that until reaching $u_{\ell}$, $P$ visits every layer at most once, and the vertex $u_{\ell}$ has neighbors of objective at least $\phi_0$. There will be two classes of layers: The layers $A_{1,j}$ divide the area $\probSpace_1 \cap \overline{\probSpace}(w_0,\phi_0)$ and are defined via weights, and the layers $A_{2,j}$ divide the area $\probSpace_2 \cap \overline{\probSpace}(w_0,\phi_0)$ and are defined via objectives. 

We may assume $f_0(n)=o(\log\log n)$. Recall that we already put $\varepsilon_1=\Theta(1)$. In addition, we put $\varepsilon_2 := (\log\log f_0(n))^{-1}=o(1)$,\label{lem:main-varepsilon_2-defn} and we may assume $\eps_2 < \eps_1$. Furthermore, we define the landmarks
$$w_0' := w_0^{(\gamma(\zeta\varepsilon_1)^{f_0(n)})}\quad\text{and}\quad\phi_0' := \phi_0^{(\gamma(\varepsilon_1)^{f_0(n)})}.$$
These landmarks will be thresholds for slightly different definitions of the layers.
For later reference we note that, since $\gamma(\zeta \varepsilon_1)>1$,
\begin{equation}\label{eq:w0epsi}
w_0'^{\Omega(\varepsilon_2)}=w_0^{\Omega(\gamma(\zeta\varepsilon_1)^{f_0(n)} /\log\log f_0(n))}=\omega\left(w_0^{\Omega(1)}\right)\quad \text{ and } \quad\phi_0'^{\Omega(\varepsilon_2)} = o\left(\phi_0^{\Omega(1)}\right),
\end{equation}
where the second equation follows analogously to the first. 

\para{Layers by weight:} We define a sequence $y_0:=w_0 < y_1 < \ldots < y_j < \ldots$ for the weights such that this sequence grows doubly exponentially. More precisely, we require\label{def:yj}
$$y_{j+1} = \begin{cases} y_j^{\gamma(\zeta\varepsilon_1)} & \mbox{if } y_j < w_0',\\
y_j^{\gamma(\varepsilon_2)} & \mbox{if } y_j \ge w_0'.\end{cases}$$ 
Then we define the layers by weight which partition the set \label{def:Aij}
$\probSpace_1' := \{v \in\overline{\probSpace}(w_0,\phi_0) \mid \phi(v)\w v^{\gamma(\varepsilon_2)} \le 1 \}$. 
$$A_{1,j} := \{v \in \probSpace_1' \mid y_{j-1} \le \w v < y_j \}~~~\forall j\ge 1.$$
Note that by definition, $\probSpace_1' \subset \probSpace_1$ and no vertex of $\probSpace_1$ has larger weight than $(\wmin n)^{(1+\gamma(\varepsilon_1))^{-1}}$. Thus we only need to consider sets $A_{1,j}$ for which $y_{j-1} \le (\wmin n)^{(1+\gamma(\varepsilon_1))^{-1}}$. Furthermore, and only for technical reasons we introduce the additional layer
$$A_{1,\infty} := \{v \in \overline{\probSpace}(w_0,\phi_0) \mid \phi(v)\w v^{\gamma(\varepsilon_1)} \le 1 \le \phi(v) \w v^{\gamma(\varepsilon_2)} \},$$
which covers the remaining vertices of $\probSpace_1 \cap \overline{\probSpace}(w_0,\phi_0)$.

\para{Layers by objective:}
Similarly, define a sequence\label{def:psi_j} $\psi_0=\phi_0 > \psi_1 > \ldots > \psi_j > \ldots$ for the objectives such that this sequence falls doubly exponentially, i.e.,
$$\psi_{j+1} = \begin{cases}
\psi_j^{\gamma(\varepsilon_1)}& \mbox{if } \psi_j > \phi_0',\\
\psi_j^{\gamma(\varepsilon_2)}& \mbox{if } \psi_j \le \phi_0'.\end{cases}$$ 
Next, we define similar layers for the vertices in $\probSpace_2 \cap \overline{\probSpace}(w_0,\phi_0)$. 
$$A_{2,j} := \{v \in \probSpace_2 \cap \overline{\probSpace}(w_0,\phi_0) \mid \psi_{j-1} \ge \phi(v) >  \psi_j \}~~~\forall j\ge 1.$$
For this second set of layers, note that every vertex in the graph has objective at least $\wmin/n$ and we do not need to consider layers $A_{2,j}$ which would contain vertices of even smaller objective. 

\para{Definition of the events {\boldmath $\mathcal{E}$} and {\boldmath$\mathcal{E}_{i,j}$}:}

The layers $A_{i,j}$ as defined above classify all vertices of objective at most $\phi_0$. We want to show that with sufficiently high probability, $P$ contains at most one vertex per layer $A_{i,j}$ until reaching a vertex $u_{\ell}$ for which the expected number of neighbors in $V_{> \phi}$ is large. We will do this by considering the layers in the following consecutive order:
$$A_{1,1} \prec \ldots \prec A_{1,j} \prec A_{1,j+1} \prec \ldots \prec A_{1,\infty} \prec \ldots \prec A_{2,j} \prec A_{2,j-1} \prec \ldots \prec A_{2,1} .$$
Given this ordering of the layers, we denote by $B_{i,j}$ the union of the layer $A_{i,j}$ and all previous layers, and by $P_{i,j}$ we denote the greedy path induced by the set $B_{i,j} \subset \probSpace$. 

In the following, we consider for a given pair $(i,j)$ the first vertex $v \in P_{i,j} \cap A_{i,j}$, if such a vertex exists. We will first show that with sufficiently high probability the neighbor of $v$ with highest objective is located outside $B_{i,j}$, which we consider a ``good'' event.  

More precisely, let $i \in \{1,2\}$, $j\ge1$, and let $\varepsilon\in\{\varepsilon_1,\varepsilon_2\}$ be the $\varepsilon$ which was used for the definition of $A_{i,j}$. Moreover, let $\eta >0$ be a sufficiently small constant, and let $E := \eta\wmin^{\beta-2}M^{\eta}$. 
Note that for any $c>0$, by choosing $\eta$ small enough we can achieve for all $v \in A_{i,j}$ that $c\wmin^{\beta-2}\w v^{\eps} > 2 E$ and $c\wmin^{\beta-2}\phi(v)^{-c\eps} > 2 E$. 
Then we denote by 
$\mathcal{E}_{i,j}$ the event that either $P_{i,j} \cap A_{i,j} = \emptyset$, or the first vertex $v \in P_{i,j} \cap A_{j,i}$ 
\begin{itemize}
\item  satisfies condition $(vi)$, or 
\item has at least one \emph{good} neighbor, i.e, a neighbor $v' \in \overline{\probSpace}(w_0,\phi_0) \setminus B_{i,j}$ with $\phi(v')\ge\phi(v)$ s.t.\\  $\phi(v') > \phi(u)$ holds for all $u \in \Gamma(v) \cap B_{i,j}$. 
\end{itemize}
Finally, we denote by $\mathcal{E} := \cap_{i,j} \mathcal{E}_{i,j}$ the intersection of all good events.

\para{Lower bound for {\boldmath$\Pr[\mathcal{E}_{i,j}]$}:}
In this part, our goal is to lower-bound $\Pr[\mathcal{E}_{i,j}]$ for every pair $(i,j)$ in order to lower-bound $\Pr[\mathcal{E}]$. Thereby, our construction of layers gives rise to several different subcases, corresponding to different subphases of the routing.

Let $(i,j)$ be a given pair for which we want to bound $\Pr[\mathcal{E}_{i,j}]$ from below. Clearly we have $\Pr[\mathcal{E}_{i,j} \mid P_{i,j} \cap A_{i,j}=\emptyset] = 1$, and we only need to consider \mbox{$\Pr[\mathcal{E}_{i,j} \mid P_{i,j} \cap A_{i,j} \neq \emptyset]$}. Let $v$ be the first vertex on $P_{i,j} \cap A_{i,j}$. Again, if $v$ satisfies condition $(vi)$ then $\mathcal{E}_{i,j}$ holds and there is nothing to show, so we assume otherwise,
 \begin{equation}\label{eq:E_icondition}
 \Ex_{> \phi_0}[|\Gamma(v) \cap V^+(v,\varepsilon_1) \cap V_{> \phi_0}|] < E.
\end{equation}
In all cases let $\varepsilon \in \{\varepsilon_1, \varepsilon_2\}$ be the same $\varepsilon$ as used for the definition of the considered layer.\medskip

\paras{Case { $i=1$} and { $j<\infty$}:} 
We are in the first phase of the routing, and the layer $A_{1,j}$ is contained in $\probSpace_1' \subset \probSpace_1$. Then by Lemma~\ref{lem:firstphase}~(i) we have that the expected number of neighbors of $v$ in $V^+(v,\varepsilon)$ is large, i.e., $\Ex[|\Gamma(v) \cap V^+(v,\varepsilon)|] = \Omega(\wmin^{\beta-2}\w v^{\eps})$. Since $\eps \leq \eps_1$ we have $V^+(v,\varepsilon) \subseteq V^+(v,\varepsilon_1)$, so in particular we do not expect many of these neighbors in $V_{> \phi_0}$, i.e., 
\begin{equation}\label{eq:choiceofEi}
\Ex_{> \phi_0}[|\Gamma(v) \cap V^+(v,\varepsilon) \cap V_{> \phi_0}|] \stackrel{\eqref{eq:E_icondition}}{<} E \leq \frac12 \Ex[|\Gamma(v) \cap V^+(v,\varepsilon)|],
\end{equation}
where the second inequality comes from our choice of the constant $\eta$ in $E$. Thus we expect most of these good neighbors in $V_{\leq \phi_0}$, i.e.,  
$$\Ex[|\Gamma(v) \cap V^+(v,\varepsilon) \cap V_{\leq \phi_0}|] = \Omega(\wmin^{\beta-2}\w v^{\eps}).$$
 By a Chernoff bound, using  $\w v \ge y_{j-1}$, with probability at least $1-\exp(-\Omega(\wmin^{\beta-2}y_{j-1}^{\varepsilon}))$ there exists a neighbor $v'$ of $v$ in $V^+(v,\varepsilon) \cap V_{\leq \phi_0}$. By the definition of $V^+(v,\varepsilon)$, such a neighbor fufillls $v' \notin B_{1,j}$ and $\phi(v') \ge \phi(v) \w v^{\gamma(\varepsilon)-1}$. 

Now, we want to show that $v'$ actually is a good neighbor of $v$, i.e., every $u \in \Gamma(v) \cap B_{i,j}$ has objective less than \mbox{$\phi(v) \w v^{\gamma(\varepsilon)-1}$}. We observe that every vertex with objective at least $\phi(v) \w v^{\gamma(\varepsilon)-1}$ and weight at most $y_j$ is contained in $V^-(v,\varepsilon)$, thus we want to upper-bound $\Ex[| \Gamma(v) \cap V^-(v,\varepsilon)| ]$. Note that here we need to condition on that no earlier vertex on the path $P_{1,j}$ had a neighbor $u \in V^-(v,\varepsilon)$ (because otherwise, $u$ would be the first vertex in $A_{i,j} \cap P_{i,j}$ and not $v$). By Lemma~\ref{lem:correlatedprobs}, this condition decreases the expected number of vertices in $V^-(v,\varepsilon)$, and therefore also the expected number of neighbors of $v$ in $V^-(v,\varepsilon)$. Hence, in order to give an upper bound on this value, we are allowed to neglect the dependencies. Since $y_{j-1} \ge w_1(\varepsilon) = O(e^{d/\varepsilon})$, by Lemma~\ref{lem:firstphase}~(ii) we deduce
$$
\Ex[|\Gamma(v) \cap V^-(v,\varepsilon)| \mid v \in A_{1,j} \cap P_{1,j}]\le\Ex[|\Gamma(v) \cap V^-(v,\varepsilon)|] = O(\wmin^{\beta-2}y_{j-1}^{-\Omega(\varepsilon)}).
$$
Then we apply Markov's inequality to see that with probability at least $1-O(\wmin^{\beta-2}y_{j-1}^{-\Omega(\varepsilon)})$, $v$ has no neighbor in $V^-(v,\varepsilon)$ and therefore every $u \in \Gamma(v,\varepsilon) \cap B_{1,j}$ has objective at most \mbox{$\phi(u) < \phi(v) \w v^{\gamma(\varepsilon)-1} \le \phi(v')$} as desired. It follows
\begin{equation}\label{eq:pr1}
\Pr[\mathcal{E}_{1,j}] \ge 1-O(\wmin^{\beta-2}y_{j-1}^{-\Omega(\varepsilon)}).
\end{equation}
Note that depending on how the weight $y_{i-1}$ compares to $w_0'$, we are either using $\varepsilon_1$ or $\varepsilon_2$, and the last exponent of the above expression is either $-\Omega(1)$ or $-\Omega((\log\log f_0(n))^{-1})$.
\medskip

\paras{Case { $i=1$} and { $j=\infty$}:}
Next we give a lower bound for $\Pr[\mathcal{E}_{1,\infty}]$ in exactly the same way. Let $v$ be a vertex in this extra layer $A_{1,\infty}$, then by definition it has the property $\w v^{-\gamma(\varepsilon_2)} \le \phi(v) \le \w v^{-\gamma(\varepsilon_1)}$. Therefore, we can apply Lemma~\ref{lem:firstphase}~(i) only with $\varepsilon_1$ and not with $\varepsilon_2$. We observe that every vertex $v'$ in the considered set $V^+(v,\varepsilon_1)$ satisfies
$$\phi(v') \ge \phi(v) \w v^{\gamma(\varepsilon_1)-1} \ge \w v^{\gamma(\varepsilon_1)-1-\gamma(\varepsilon_2)} \ge \w{v'}^{(\gamma(\varepsilon_1)-1-\gamma(\varepsilon_2))/\gamma(\zeta \varepsilon_1)}\ge \w{v'}^{-\gamma(\varepsilon_1)}$$
for $\varepsilon_1$ small enough. Then indeed $v' \notin B_{1,\infty}$, and thus $V^+(v,\varepsilon_1) \subset \probSpace_2$. This allows us to argue exactly in the same arguments as above in the case $j<\infty$. Then we apply Lemma~\ref{lem:firstphase}~(ii) and deduce that
\begin{equation}\label{eq:pr1inf}
\Pr[\mathcal{E}_{1,\infty}] \ge 1-O(\wmin^{\beta-2}\w 0^{-\Omega(\varepsilon)}).\medskip
\end{equation} 

\paras{Case { $i=2$}:}
We continue with the events $\mathcal{E}_{2,j}$ for the layers $A_{2,j}$. Here, the argument works similar: Again we can assume that the set $A_{2,j} \cap P_{2,j}$ is non-empty, and that the first vertex $v \in A_{2,j} \cap P_{2,j}$ satisfies \eqref{eq:E_icondition}. 
By Lemma~\ref{lem:secondphase}~(i) we have $\Ex[|\Gamma(v) \cap V^+(v,\varepsilon) |] = \Omega(\wmin^{\beta-2}\phi(v)^{-\Omega(\varepsilon)})$, and as in~\eqref{eq:choiceofEi} our choice of $E$ ensures that also $\Ex[|\Gamma(v) \cap V^+(v,\varepsilon) \cap V_{\leq \phi_0}|] = \Omega(\wmin^{\beta-2}\phi(v)^{-\Omega(\varepsilon)})$. By the Chernoff bound, and since $\psi_{j-1} \ge \phi(v) > \psi_j$, with probability at least $1-\exp(-\Omega(\wmin^{\beta-2} \psi_{j-1}^{-\Omega(\varepsilon)}))$ the set $\Gamma(v) \cap V^+(v,\varepsilon) \cap V_{\leq\phi_0}$ is non-empty. In this case, there exists a neighbor $v'$ in $\overline{\probSpace}(w_0,\phi_0) \setminus B_{2,j}$ with objective larger than $\psi_{j-1}$. It remains to show that there exists no neighbor in $\probSpace_1$ with higher objective. By Lemma~\ref{lem:correlatedprobs} and the same arguments as above we can use Lemma~\ref{lem:secondphase}~(ii) to see that the expected number of neighbors of $v$ inside $B_{2,j}$ with objective larger than $\psi_{j-1}$ is $O(\wmin^{\beta-2}\psi_{j-1}^{\Omega(\varepsilon)})$. Then Markov's inequality shows that with probability $1-O(\wmin^{\beta-2}\psi_{j-1}^{\Omega(\varepsilon)})$, there exists no such neighbor. Thus
\begin{equation}\label{eq:pr2}
\Pr[\mathcal{E}_{2,j}] \ge 1-O(\wmin^{\beta-2}\psi_j^{\Omega(\varepsilon)}).
\end{equation}
Here, the last exponent is either $\Omega(1)$ or $\Omega((\log\log f_0(n))^{-1})$, depending on which $\varepsilon$ we consider.
\para{Lower bound for {\boldmath$\Pr[\mathcal{E}]$}:}

We have computed a lower bound for $\Pr[\mathcal{E}_{i,j}]$ for every layer $A_{i,j}$. Combining Equations~(\ref{eq:pr1}),(\ref{eq:pr1inf}), and (\ref{eq:pr2}) yields
\begin{align*}
\Pr[\mathcal{E}] &\ge 1 - \sum_j \Pr[\mathcal{E}_{1,j}^c]-\Pr[\mathcal{E}_{1,\infty}] - \sum_{j\ge1} \Pr[\mathcal{E}_{1,j}^c]\\
& = 1 - O\left(\wmin^{\beta-2}\left(w_0^{-\Omega(1)}+w_0'^{-\Omega(\varepsilon_2)}+\phi_0^{\Omega(1)}+\phi_0'^{\Omega(\varepsilon_2)} \right)\right) \\
& \stackrel{\eqref{eq:w0epsi}}{=} 1 - O\left(\wmin^{\beta-2}\min\{w_0,\phi_0^{-1}\}^{-\Omega(1)}\right).
\end{align*}

\para{{\boldmath$\mathcal{E}$} implies {\boldmath $(i)$-$(vi)$}:} 
Having proven that $\mathcal{E}$ occurs with sufficiently high probability, it remains to show that this event implies all desired properties $(i)$-$(vi)$. Hence suppose that $\mathcal{E}$ occurs and that $P$ visits at least one vertex of $\overline{\probSpace}(w_0,\phi_0)$. Then the first vertex $u_1 \in P \cap \overline{\probSpace}(w_0,\phi_0)$ is contained in some layer $A_{i,j}$. We show that there exists a subpath $P' = \{u_1, \ldots, u_{\ell}\} \subseteq P$ which visits every layer at most once in both phases and then implies $(i)$-$(vi)$.

\paras{First Phase ($i=1$):} 
Suppose that $u_1 \in \probSpace_1$. Then there exists a maximal subpath $$P_1 =\{u_1, \ldots, u_{\ell'}\} \subseteq P$$ starting at $u_1$ such that $P_1 \subset \probSpace_1$ and no vertex before $u_{\ell'}$ satisfies Condition~(vi). We start with an analysis of this subpath $P_1$ and show by induction that every vertex of $P_1$ is located in a higher layer than its predecessor. Clearly, the induction hypothesis holds for the base case $\{u_1\}$. Assume that the hypothesis holds for $\{u_1, \ldots, u_{i}\}$. Then $u_i$ is contained in layer $A_{1,j}$, and by the induction hypothesis, $u_i$ is the \emph{first} vertex that $P$ visits in this layer. Moreover, by the induction hypothesis it follows that $\{u_1, \ldots, u_i\}$ is also a subpath of the induced greedy path $P_{1,j}$. Then by the event $\mathcal{E}_{1,j}$ we have one of the following two cases: 
\begin{itemize}
\item $u_i$ satsifies condition $(vi)$: Then set $P'=(u_1,\ldots,u_{\ell})$ with $u_{\ell} := u_i$.
\item  $u_i$ has a good neighbor $v'$: Let $u_{i+1}$ be the best neighbor of $u_i$. Then $u_{i+1} \in \probSpace_2$, or $u_{i+1}$ is located in a layer $A_{1,j'}$ where $j'>j$ because 
$u_{i+1}$ has higher weight than every vertex in $A_{1,j}$. Since the protocol makes {greedy choices}, $u_{i+1}$ follows $u_i$ on the path $P$.
\end{itemize}
This proves the induction hypothesis. We see that $P_1$ traverses the layers $A_{1,j}$ according to our ordering and visits every layer at most once. Furthermore, applying the same argument for the last vertex $u_{\ell'} \in P_1$ shows that either $u_{\ell'}$ satisfies $(vi)$, or it has a good neighbor in a layer $A_{2,j}$. 
Note that in the former case, we would choose $u_{\ell} := u_{\ell'}$ and $P'=P_1$, and (i), (ii), (vi) follow directly. 
Otherwise, we claim that we will find a path $P'$ such that $P' \cap \probSpace_1 = P_1$.

\paras{Proof of $(iii)$:} If $\{u_i,u_{i+1},u_{i+2}\}$ are three subsequent vertices on the subpath $P_1$, then the weight increases at least by an exponent $\gamma(\zeta\varepsilon_1)$ between $u_i$ and $u_{i+2}$ since there exists at least one layer in-between containing $u_{i+1}$. Then (iii) follows as we are assuming that $P'$ will not visit any vertex in $\probSpace_1 \setminus P_1$.

\paras{Second Phase ($i=2$):} Suppose that either the last vertex $u_{\ell'} \in P_1$ does not satisfy (vi) or $s \in \probSpace_2$. Then there exists a first vertex $u_{\ell'+1} \in P \cap \probSpace_2'$, and a maximal subpath 
$$P_2 = P_1 \cup \{u_{\ell'+1}, \ldots, u_{\ell}\} \subseteq P$$
starting at $u_1$ such that $(P_2 \setminus P_1) \subset \probSpace_2$ and no vertex before $u_{\ell}$ satisfies Condition (vi). We show by induction that every vertex of $P_2 \setminus P_1$ is located in a higher layer (w.r.t.\ the objective $\phi$) than its predecessor. Again, the hypothesis holds for $P_1 \cup \{u_{\ell'+1}\}$.
Assume that the induction hypothesis holds for $P_1 \cup \{u_{\ell'+1}, \ldots, u_i\}$. Then $u_i$ is contained in a layer $A_{2,j}$, and by the induction hypothesis, $u_i$ is the \emph{first} vertex that $P$ visits in this layer. Furthermore, $P_1 \cup \{u_{\ell'+1}, \ldots, u_i\}$ is a subpath of the induced greedy path $P_{2,j}$. Then by the event $\mathcal{E}_{2,j}$, we have one of the following two cases:
\begin{itemize}
\item $u_i$ satisfies condition $(vi)$: Then set $P'=(u_1,\ldots,u_{\ell})$ with $u_{\ell} := u_i$.

\item $u_i$ has a good neighbor $v'$: Let $u_{i+1}$ be the best neighbor of $u_i$.
Then $u_i \in A_{2,j'}$ holds for $j'<j$. Since the protocol makes {greedy choices}, $u_{i+1}$ follows $u_i$ on the path $P$.
\end{itemize}
This proves the induction hypothesis, and we observe that $P_2$ traverses the layers $A_{2,j}$ according to our ordering and visits every layer at most once. However, by repeating the argument for the last vertex $u_{\ell}$ it follows that the best neighbor of $u_{\ell}$ can \emph{not} be in $\probSpace_1$. Then the only possibility is that $u_{\ell}$ satisfies Condition~(vi). Hence we put $P'=P_2$, and Properties~(i), (ii), and (vi) follow. In particular, we see that $P' \cap \probSpace_1 = P_1$ as claimed above.
 
\paras{Proof of $(iv)$:} For three subsequent vertices $\{u_i,u_{i+1},u_{i+2}\}$ on the subpath $\{u_{\ell'+1},\ldots,u_{\ell}\}$, the objective increases always at least by an exponent $\gamma(\varepsilon_1)$ between $u_i$ and $u_{i+2}$ as there lies at least one layer in-between containing $u_{i+1}$. This proves $(iv)$, and the only remaining property is $(v)$.

\paras{Proof of $(v)$:} We already know that the path $P'$ visits every layer at most once. Therefore we can upper-bound the length $|P'|$ by counting the total number of potentially visited layers. By construction we have $f_0(n)$ first layers $A_{1,j}$ which are defined via $\varepsilon_1$, and similarly there are $f_0(n)$ final layers $A_{2,j}$ defined via $\varepsilon_1$. Let $L_2$ denote the number of layers $A_{2,j}$ defined via $\varepsilon_2=o(1)$ which the routing potentially visited. In order to upper-bound $L_2$, we recall that between two neighboring vertices of $P'$, the objective is always increasing. Therefore $P'$ contains only vertices of objective at least $\phi(u_1)$. Clearly, this implies that we only need to consider layers $A_{2,j}$ where $\psi_{j-1}\ge \phi(u_1)$. Therefore $L_2$ is upper-bounded by the solution of $\phi_0^{-\gamma(\varepsilon_2)^{L_2}}=\phi(u_1)^{-1}$, and we obtain
$$L_2 = \frac{\log\log_{\phi_0^{-1}} (\phi(u_1)^{-1})}{\log \gamma(\varepsilon_2)} \le \frac{\log\log_{\phi_0^{-1}} (\phi(u_1)^{-1})}{\log ((\beta-2)^{-1/(1+o(1))})}= \frac{1+o(1)}{|\log(\beta-2)|} \log\log_{\phi_0^{-1}} (\phi(u_1)^{-1}).$$

It remains to upper-bound the number $L_1$ of layers $A_{1,j}$, defined via $\varepsilon_2$, which $P'$ potentially visits. We observe that every vertex $v \in P'$ satisfies $\phi(v)\ge\phi(u_1)$. Thus, every vertex $v \in P'$ with weight at least $\phi(u_1)^{-1/\gamma(\varepsilon_1)}<\phi(u_1)^{-1}$ belongs to $\probSpace_2$, and we only need to count layers $A_{1,j}$ which contain vertices of weight at most $\phi(u_1)^{-1/\gamma(\varepsilon_1)}$. Then $L_1$ is bounded from above by the solution of $w_0^{\gamma(\varepsilon_2)^{L_1}}=\phi(u_1)^{-1}$. It follows
$$L_1=\frac{\log\log_{w_0} (\phi(u_1)^{-1})}{\log \gamma(\varepsilon_2)} \le \frac{\log\log_{w_0} (\phi(u_1)^{-1})}{\log ((\beta-2)^{-1/(1+o(1))})}= \frac{1+o(1)}{|\log(\beta-2)|} \log\log_{w_0} (\phi(u_1)^{-1}).$$
Then the length of $P'$ is at most $L_1+L_2+O(f(n))$, which proves $(v)$.
\end{proof}

\subsection{Proof of Theorem~\ref{thm:greedysuccess1} (Success Probability)}
\label{subsec:constantsuccess}
In this subsection we prove that for all initial choices of $s$ and $t$, greedy routing succeeds with probability $\Omega(1)$. We do this in three steps: In a first step we show that with probability $\Omega(1)$, the best neighbor $u_1$ of $s$ has weight $\w {u_1} \ge w_0$, where $w_0$ is a given constant. In a second step we use Lemma~\ref{lem:main} to see that then with probability $\Omega(1)$, from $u_1$ the greedy algorithm on $V_{\leq \phi_0}$ finds a path to a vertex $u_{\ell}$ which has in expectation $\Omega(1)$ neighbors of objective at least $\phi_0$. Since the number of such neighbors is Poisson distributed, with probability $\Omega(1)$ there is at least one such neighbor, and we call the best of them $v_1$. Finally, in a last step we will show that with constant probability, the routing will reach $t$ from $v_1$ via at most one intermediate vertex $v_2$. For all steps, the success probability is a constant larger than zero, but not large enough such that we can do a union bound over the error events. Therefore, we always compute the probabilities conditioned on the previous steps being successful. 

\begin{proof}[Proof of Theorem~\ref{thm:greedysuccess1}]
We assume that the parameter $\wmin$ is lower-bounded by a small constant, say $\varepsilon_1$. Let us start by choosing a weight $w_0=\Theta(1)$ and an objective $\phi_0=\Theta(1)$ such that they satisfy the preconditions of Lemma~\ref{lem:main} and such that the events $(i)$-$(vi)$ of Lemma~\ref{lem:main} occur with constant probability $\Omega(1)>0$.

\para{Start of routing process:}
Suppose for now that $\phi(s) \le \phi_0$. We define the  event
 $$\mathcal{E}_s: \text{$s$ itself or its best neighbor have weight at least $w_0$}.$$
 If $\w s \ge w_0$, this happens with probability $1$. Thus we assume $\w s \le w_0$. Let \mbox{$r := (\frac{c_1\w s w_0}{\wmin n})^{1/d}$}, where $c_1$ is the constant given by \eqref{eq:puv} and \eqref{eq:puv2}. Since $\phi(s) \le \phi_0$, we have
\begin{equation}\label{eq:rs}
r^d = c_1 w_0 \phi(s) \|\x s - \x t\|^d \le c_1 w_0 \phi_0 \|\x s - \x t\|^d = O(\|\x s - \x t\|^d).
\end{equation}
Next we define a subset of $\probSpace$ in which we want to find a neighbor of $s$. Let
$$A_s := \left\{u \in \probSpace\  \middle| \begin{array}{l} (1): \w u \ge w_0; \\ (2): \|\x u - \x t\| \le \|\x s - \x t\|; \\ (3): \|\x u - \x s \| \le r \end{array}\right\}.$$
By $(\ref{eq:puv})$ and $(\ref{eq:puv2})$, every vertex $u \in A_s$ is connected to $s$ with probability $\Omega(1)$. Furthermore, by Equation~(\ref{eq:rs}) a random vertex which satisfies $(3)$ has property~$(2)$ with probability $\Omega(1)$ as well. Therefore a random vertex satisfies $(2)$ and $(3)$ with probability $\Omega(r^d)$, and by integrating over all weights larger than $w_0$ it follows that the expected number of vertices in $A_s$ is at least
$$\Omega(n \wmin^{\beta-1} w_0^{1-\beta}r^d)=\Omega(\wmin^{\beta-2} w_0^{2-\beta} \w s)=\Omega(1).$$
Hence with probability $\Omega(1)$, there exists at least one vertex $u \in \Gamma(s) \cap A_s$. For every \mbox{$u \in \Gamma(s) \cap A_s$} it holds $\phi(u)\ge\phi(s)$, as the distance to $t$ decreases and the weight increases. Then we take $u_1$ as the neighbor of $s$ in $A_s$ with maximal objective. By definition of $A_s$ we have $\w {u_1} \ge w_0$.

Since $\w s=O(1)$, the degree distribution of $s$ is a Poisson random variable with rate $\Theta(1)$, and thus with probability $\Omega(1)$, $s$ has no neighbor of weight less than $w_0$ with higher objective than $\phi(u_1)$. In this case, the routing indeed proceeds with the vertex $u_1$ and we obtain $\Pr[\mathcal{E}_s]=\Omega(1)$. 

\para{Main part of routing process:} Our aim is to show that, conditioned on $\mathcal{E}_s$, greedy routing proceeds to a vertex of objective larger than $\phi_0$. If $\phi(u_1) > \phi_0$ then there is nothing to show, so assume otherwise. We want to show that on the graph induced by the vertex set $V_{\le \phi_0}$, greedy routing finds a vertex $u_{\ell}$ which we expect to have neighbors in $V_{>\phi_0}$. More formally, we consider the event 
$$\mathcal{E}_m: \text{the algorithm on $G[V_{\leq \phi_0}]$ finds a vertex $u_{\ell}$ with $\Ex_{>\phi_0}[|\Gamma(u_{\ell}) \cap V_{>\phi_0}|]=\Omega(1)$}.$$ 
Then by our choice of $w_0$ and $\phi_0$, Lemma~\ref{lem:main}~(vi) gives the lower bound $\Pr[\mathcal{E}_m \mid \mathcal{E}_s] = \Omega(1)$.

\para{End of routing process:}
Suppose that either both events $\mathcal{E}_s$ and $\mathcal{E}_m$ hold, or that already $\phi(s)\ge\phi_0$. We want to prove that with constant probability, the routing finds the target $t$ while visiting at most two other vertices of $V_{> \phi}$. Let $\overline{w}$ be a constant weight chosen large enough and let
$$A_t := \{u \in \probSpace \mid \w u \ge \overline{w} \wedge \|\x u - \x t\|^d \le (n \phi_0)^{-1}\}.$$
We claim that the expected number of vertices in the set $A_t$ is $\Omega(1)$. Note that here, we need to condition on the events $\mathcal{E}_s$ and $\mathcal{E}_m$. However, $A_t$ is independent of $\mathcal{E}_m$ since $A_t \subseteq V_{>\phi_0}$, and conditioning on $\mathcal{E}_s$ only increases the expected size of $A_t$. Thus
$$
\Ex[|A_t| \mid \mathcal{E}_s \wedge \mathcal{E}_m] \ge \Ex[|A_t|] = \Omega(n \wmin^{\beta-1} (\phi_0)^{\beta-1}(n\phi_0)^{-1})=\Omega(1).
$$
Now, we uncover all remaining vertices, and all edges that involve at least one vertex from $V_{\le \phi_0}$, but not the edges within $V_{> \phi_0}$. Due to the Poisson point process, both $|A_t|$ and $|\Gamma(u_{\ell}) \cap V_{> \phi_0}|$ are Poisson random variables with constant mean, therefore with constant probability both sets are non-empty. In this case, there exists a vertex $v_1 \in \Gamma(u_{\ell}) \cap V_{> \phi_0}$ with maximal objective. Note that if $v_1=t$, then we are already done. Furthermore, in the case $\alpha<\infty$ we can toss a coin whether the edge $\{v_1,t\}$ is present. By \eqref{eq:puv} it holds $p_{v_1 t} = \Omega\big(\min\{1,(\phi(v_1) \w t)^{\alpha}\}\big) = \Omega(1)$, which is already sufficient for finding $t$ with constant probability.

It remains the case $\alpha=\infty$. Recall that we already know that there exists at least one vertex $v_3 \in A_t$. We claim that for every $u' \in V_{> \phi_0}$ and $v' \in A_t$ it holds  $p_{u'v'}=\Omega(1)$. Indeed, by the triangle inequality we have $\|\x{u'} - \x {v'}\| \le 2 \max\{\|\x {u'} - \x t\|,\|\x {v'} - \x t\|\}$, and for $\overline{w}$ large enough we deduce
$$\frac{\w {u'} \w {v'}}{\wmin n \|\x {u'} - \x {v'}\|^d} \ge \frac{1}{2}\max\{\phi(u')\w {v'}, \phi(v')\w {u'}\}=\Omega(\overline{w}\phi_0) \ge \frac{1}{c_1},$$
and then by \eqref{eq:puv2} it follows $p_{u'v'}=\Omega(1)$. In particular, it also holds $p_{t v'}=\Omega(1)$. 

We can finish the proof as follows: We toss the coins for all edges in $V_{> \phi_0}$ except the edges incident to $t$. By the above observation, with constant probability the edge $\{v_1, v_3\}$ is present. In this case, either the best neighbor $v_2$ of $v_1$ satisfies $\phi(v_2) \ge \phi(v_1)$ or we even have $\phi(v_1)\ge \phi(v_3)$ (and put $v_2 = v_1$). We see that 
$$\frac{\w {v_2} \w t}{\wmin n \|\x {v_2} \x t\|^d} = \phi(v_2) \w t \ge \phi(v_3) \w t \ge \frac{1}{c_1}.$$
Finally, we uncover the edge $\{v_2, t\}$ and independently of the previous events, this edge is also present with constant probability and with probability $\Omega(1)>0$, the routing finds $t$ via $v_2$.
\end{proof}

\subsection{Proof of Theorem~\ref{thm:greedysuccess2} (Failure Probability is Exponentially Small in {\boldmath$\wmin$})}
\label{subsec:expsuccess}

In this section, we assume in addition to (EP1) and (EP2) that whenever two vertices $u$ and $v$ satisfy $\|\x u - \x v\|^d \le \frac{c_1\w u \w v}{\wmin n}$, then the edge $\{u,v\}$ is present deterministically. This assumption is very natural and in particular satisfied by hyperbolic random graphs. We first prove the second statement of the theorem which is a direct consequence of Lemma~\ref{lem:main}.
\begin{proof}[Proof of Theorem~\ref{thm:greedysuccess2}~(ii)]
Let $s$ and $t$ be two vertices such that both $\w s = \omega(1)$ and $\w t = \omega(1)$. Then we apply Lemma~\ref{lem:main} for $w_0:=\w s$ and $\phi_0:=(c_1\w t)^{-1}$. By Statement~(vi) it follows that with probability 
$$1-O(\wmin^{\beta-2} \min\{\w s, \w t\}^{-\Omega(1)})=1-\min\{\w s, \w t\}^{-\Omega(1)}$$
the greedy routing finds a vertex $u'$ of objective at least $\phi_0 = (c_1\w t)^{-1}$. Let $P$ be the final greedy path, and let $u''$ be its first vertex which has objective at least $\phi_0$. Then either $u''=t$, or
$$c_1\frac{\w u'' \w t}{\wmin n\|\x u'' - \x t\|^d}\ge c_1 \phi_0 \w t =1.$$
In this case, $\{u'',t\}\in E$ and thus the routing finds $t$ via $u''$.
\end{proof}

For part~(i) of Theorem~\ref{thm:greedysuccess2}, note that the statement becomes trivial if $\wmin$ is small. So we can assume that $\wmin$ is larger than a given constant $c$ which depends on $\alpha,\beta,d$ and $\varepsilon$.  
In particular, this allows us to assume $\wmin > w_1(\varepsilon_1)$.

For the proof of Theorem~\ref{thm:greedysuccess2}~(i), we can not directly apply Lemma~\ref{lem:main} because this yields only a failure probability which is polynomially small in $w_0 \ge \wmin$. Instead, we will apply Lemma~\ref{lem:smweights} for the very first steps, which will allow us to increase the lower-bound on the success probability. Afterwards, we will apply Lemma~\ref{lem:main} for values $w_0$, $\phi_0$ which depend exponentially in $\wmin$ such that the lemma indeed yields a sufficiently small failure probability. Finally, we need to ensure that after seeing vertices of objective at least $\phi_0$, the routing does not die out and continues until it reaches vertices which connect deterministically to $t$. 

For the first step, we will apply the following lemma.

\begin{lemma} \label{lem:expwminfirststeps}
Let $w_0 := e^{\wmin^{\Omega(1)}}$ and assume that $\wmin$ is sufficiently large. Then there exists a constant $c'>0$ such that with probability
$$1-O(e^{-\wmin^{\Omega(1)}})$$
the greedy routing visits at least one vertex $u$ which satisfies either (i) $\w u \ge w_0$, (ii) $\phi(u)\ge w_0^{-c'}$, or (iii) $u \in \probSpace_2$.
\end{lemma}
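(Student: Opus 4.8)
The plan is to track the greedy process during its very first steps, using Lemma~\ref{lem:smweights} to control bad neighbors, and to show that within $O(\log\log n)$ — in fact within $o(\log\log n)$ — steps the walk must either leave the regime where its weight is still close to $\wmin$, or reach an objective or a vertex that certifies we are out of the ``dangerous'' start phase. Concretely, set $w_0 := e^{\wmin^{\Omega(1)}}$ and consider the greedy path on $G[V_{\le \phi_0}]$ for a suitable $\phi_0$, say $\phi_0 := w_0^{-c'}$ with $c'$ a small constant to be fixed. As long as the current vertex $v$ lies in $\probSpace_1$, has weight $\w v < w_0$, and satisfies $\phi(v) < w_0^{-c'}$, it is a vertex to which Lemma~\ref{lem:smweights} can be applied (we need $\w v \ge w_1(\varepsilon)$, which is guaranteed since $\wmin > w_1(\varepsilon_1)$ by the preamble of Theorem~\ref{thm:greedysuccess2}(i), and we need $\phi(v)\w v^{\gamma(\zeta\varepsilon)} \le \bar c$, which follows from $\phi(v) \le w_0^{-c'}$ if $c'$ is chosen large enough relative to $\gamma(\zeta\varepsilon)$ and $w_0$). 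Lemma~\ref{lem:smweights} then gives $\Ex[|\Gamma(v)\cap V^-(v,\varepsilon)|] = O(\phi(v)^{\Omega(1)}\w v^{O(1)})$, and since $\phi(v) \le w_0^{-c'} = e^{-c'\wmin^{\Omega(1)}}$ while $\w v < w_0 = e^{\wmin^{\Omega(1)}}$, by taking $c'$ large enough this bound is $e^{-\wmin^{\Omega(1)}}$.

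The key structural observation is the same as in the proof of Lemma~\ref{lem:main}: by Lemma~\ref{lem:correlatedprobs}, conditioning on the event that no earlier vertex on the path had a neighbor in the respective ``bad'' set only decreases the relevant expectations, so we may bound the conditional probabilities by the unconditional ones and take a union bound. I would partition the weight range $[\wmin, w_0)$ into $O(\log\log w_0) = O(\wmin^{\Omega(1)})$ — actually we want far fewer — layers growing doubly exponentially with exponent $\gamma(\zeta\varepsilon_1) > 1$, exactly as the layers $A_{1,j}$ in Lemma~\ref{lem:main} below $w_0'$. There are only $O(\log\log_{\wmin} w_0) = O(\log(\wmin^{\Omega(1)})/\log\log\wmin) = \wmin^{O(1)}$ such layers; more importantly, the number of layers is polynomial in $\wmin$, hence negligible against the per-layer failure probability $e^{-\wmin^{\Omega(1)}}$. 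For each layer, Lemma~\ref{lem:firstphase}(i) (applicable since $\phi(v)\w v^{\gamma(\varepsilon_1)} \le \phi(v)\w v^{\gamma(\varepsilon_2)} \le 1$ in this range, using $\gamma(\zeta\varepsilon_1)>1$) shows $\Ex[|\Gamma(v)\cap V^+(v,\varepsilon_1)|] = \Omega(\wmin^{\beta-2}\w v^{\varepsilon_1}) \ge \Omega(\wmin^{\beta-2}\wmin^{\varepsilon_1})$, so by a Chernoff bound there is a good neighbor of larger weight except with probability $e^{-\Omega(\wmin^{\beta-2+\varepsilon_1})} = e^{-\wmin^{\Omega(1)}}$; and Lemma~\ref{lem:smweights} together with Markov's inequality shows there is no bad neighbor except with probability $O(e^{-\wmin^{\Omega(1)}})$, so the greedy choice genuinely moves to a higher weight layer. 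Summing the $\wmin^{O(1)}$ failure probabilities, each $e^{-\wmin^{\Omega(1)}}$, over all layers still gives $O(e^{-\wmin^{\Omega(1)}})$.

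Thus, with probability $1 - O(e^{-\wmin^{\Omega(1)}})$, the greedy path on $G[V_{\le \phi_0}]$ either (a) reaches a vertex of weight at least $w_0$ — outcome (i); or (b) at some point the current vertex $v \in \probSpace_1$ has $\phi(v) \ge w_0^{-c'}$ — outcome (ii); or (c) leaves $\probSpace_1$, i.e.\ enters $\probSpace_2$ — outcome (iii). (The case that the path on $G[V_{\le\phi_0}]$ ends because the true greedy path jumps to a vertex of objective $>\phi_0$ is subsumed in outcome (ii), since then before the jump the current vertex already had a neighbor of objective $> \phi_0 = w_0^{-c'}$, so an intermediate vertex of objective $\ge w_0^{-c'}$ is reached.) One last point to check is that this phase really terminates quickly, so that later steps are still within the $o(\log\log n)$ budget needed for Theorem~\ref{thm:greedysuccess2}: since the weight increases by a factor-exponent $\gamma(\zeta\varepsilon_1)>1$ in each layer, the number of layers traversed before reaching $w_0 = e^{\wmin^{\Omega(1)}}$ is $O(\log\log w_0) = O(\log\wmin) = O(1)$ with $\wmin$ constant — so this phase takes $O(1)$ steps and does not affect the running-time bound at all.

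The main obstacle I anticipate is bookkeeping the interplay between the two error bounds: Lemma~\ref{lem:firstphase}(i) needs $\phi(v)\w v^{\gamma(\varepsilon_1)}\le 1$ (fine in this range) and its Chernoff bound on the \emph{good} set must be applied to $V^+(v,\varepsilon_1)\cap V_{\le\phi_0}$, not all of $V^+(v,\varepsilon_1)$; but since we have not yet reached objective $\phi_0$, and good neighbors have objective only $\phi(v)\w v^{\gamma(\varepsilon_1)-1}$ which we must verify stays $\le \phi_0$, we need to choose $c'$ so that $\phi(v)\w v^{\gamma(\varepsilon_1)-1} \le w_0^{-c'}$ throughout — this is where the precise relation between $c'$, $w_0$, and $\gamma(\varepsilon_1)$ has to be pinned down, and it is the only genuinely delicate calculation. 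Everything else is a direct transcription of the layering and correlation arguments already developed for Lemma~\ref{lem:main}.
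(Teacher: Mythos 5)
Your proposal follows essentially the same route as the paper's proof: the same doubly-exponential weight layers driven by $\gamma(\zeta\varepsilon_1)$, the same use of Lemma~\ref{lem:firstphase}(i) plus a Chernoff bound for good neighbors, the same use of Lemma~\ref{lem:smweights} plus Markov for bad neighbors, Lemma~\ref{lem:correlatedprobs} to handle the conditioning, and a union bound over layers whose geometric decay makes the first term dominate. The one bookkeeping worry you flag at the end (restricting the Chernoff count to $V^+(v,\varepsilon_1)\cap V_{\le\phi_0}$ and pinning down $c'$) is actually moot here, because the paper defines the per-layer good event merely as ``the best neighbor of $v$ lies outside $B_i$,'' where $B_i$ already incorporates the constraint $\phi \le \phi_0$; a good neighbor of objective exceeding $\phi_0$ then witnesses outcome~(ii) directly, so no such refinement of the Chernoff target is needed.
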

\begin{proof}
We use similar techniques as in the proof of Lemma~\ref{lem:main}. Notice that if the starting vertex $s$ satisfies at least one of the three properties, there is nothing to show. Let $\phi_0 := w_0^{-c'}$ and assume that $\w s \le w_0$, $\phi(s) \le \phi_0$, and $s \in \probSpace_1$. We start by defining a weight sequence $y_{(i)}$ as follows: We put $y_0 := \w s$, and for $i \ge 1$ we put $y_i := y_{i-1}^{\gamma(\zeta\varepsilon_1)}$. Notice that there exists a first $y_j$ such that $y_j \ge w_0$. Moreover, for all $1 \le i \le j$ we define
$$A_i := \{v \in \probSpace_1 \mid (1): y_{i-1} \le \w v < y_i; (2): \phi(v) \le \phi_0\}$$
and
$$B_i := \{v \in \probSpace_1 \mid (1): \w v \le y_i; (2): \phi(v) \le \phi_0\}.$$
Furthermore, denote by $P_i$ the greedy path induced by the set $B_i$ and by $\mathcal{E}_i$ the event that either $P_i \cap A_i$ is empty or the first vertex $v \in P_i \cap A_i$ has at least one neighbor $v' \notin B_i$ such that $\phi(v') \ge \phi(v)$ and $\phi(v') \ge \phi(u)$ holds for all $u \in \Gamma(v) \cap B_i$.
Let $1\le i \le j$. Clearly, we have $\Pr[\mathcal{E}_i \mid P_i \cap A_i = \emptyset]=1$, and in order to lower-bound $\Pr[\mathcal{E}_i]$ we can assume $P_i \cap A_i \neq \emptyset$. Let $v$ be the first vertex of $P_i \cap A_i$. Then by Lemma~\ref{lem:firstphase}~(i), it holds $\Ex[|\Gamma(v) \cap V^+(v,\varepsilon_1)|]=\Omega(\wmin^{\beta-2} \w v^{\varepsilon_1})$. By a Chernoff bound, with probability at least $1-\exp(-\Omega(\wmin^{\beta-2} y_{i-1}^{\varepsilon_1}))$ there exists a neighbor $v'$ of $v$ in the set $V^+(v,\varepsilon_1)$. In particular $v' \notin B_i$ and $\phi(v') \ge \phi(v) \w v^{\gamma(\varepsilon_1)-1}$.

Next we observe that every vertex $u \in B_i$ with objective at least $\phi(v) \w v^{\gamma(\varepsilon_1)-1}$ is contained in $V^-(v,\varepsilon_1)$. Hence in order to verify that $v'$ is better than every neighbor of $v$ in $B_i$, it is sufficient that $v$ has no neighbor in $V^-(v,\varepsilon_1)$.   We observe that by Lemma~\ref{lem:correlatedprobs}, our conditioning on $v$ being the first vertex in $P_i \cap A_i$ decreases the expected size of $V^-(v,\varepsilon_1)$. Furthermore, 
for $c'$ large enough we have $\phi(v) \w v^{ \gamma(\zeta\varepsilon_1)} \le \phi_0 w_0^{\gamma(\zeta\varepsilon_1)} \le \bar{c}$, where $\bar{c}$ is the constant given by Lemma~\ref{lem:smweights}. Hence by Lemma~\ref{lem:smweights}, we obtain
$$\Ex[|\Gamma(v) \cap V^-(v,\varepsilon)|] = O(\phi(v)^{\Omega(1)}\w v^{O(1)})=O(\phi_0^{\Omega(1)}y_{i-1}^{O(1)}).$$
By Markov's inequality, with probability $1-O(\phi_0^{\Omega(1)}y_{i-1}^{O(1)})$ this set is empty and in this case, $\phi(v')$ is higher than the objective of every neighbor of $v$ in $B_i$. It follows that
$$\Pr[\mathcal{E}_i] \ge 1 - O(\phi_0^{\Omega(1)}y_{i-1}^{O(1)}) - \exp(-\Omega(\wmin^{\beta-2} y_{i-1}^{\varepsilon_1})).$$
Recall that the sequence $y_{(i)}$ grows exponentially, and we obtain
\begin{align*}
\Pr\left[ \wedge_{i=1}^j \mathcal{E}_i \right] & = 1 - O\left(\phi_0^{\Omega(1)}y_0^{O(1)}\right) - \exp\left(-\Omega(\wmin^{\beta-2} y_0^{\varepsilon_1})\right)\\
& = 1 - O\left(w_0^{-c' \cdot \Omega(1)+O(1)} + \exp\left(-\Omega(\wmin^{\beta-2} w_0^{\Omega(1)})\right)\right)\\
& = 1 - O\left(w_0^{-\Omega(1)}\right)
\end{align*}
if $c'$ is chosen sufficiently large. This allows us to assume that all good events $\mathcal{E}_i$ occur. Let us consider the final greedy path $P=\{s=v_0,v_1,v_2, \ldots\}$ and suppose that all good events $\mathcal{E}_i$ occur. By construction $s \in A_1$, and $\mathcal{E}_1$ implies that the routing does not die out at $s$ (i.e., $s$ is not isolated) and $v_1 \notin B_1$. Then we distinguish two cases: If $v_1 \notin B_j$, then we are done since $v_1$ would satisfy (i),(ii) or (iii). Else, $v_1$ is contained in a layer $A_i$, and in particular it is also the first vertex of $P_i \cap A_i$. Then we can repeat the argument, apply the event $\mathcal{E}_i$ and see again that there exists a subsequent vertex $v_2$ on the greedy path which is either located outside $B_j$ or contained in a layer $A_{i'}$ for $i' > i$. We can repeat this argument inductively. However, at the latest when we reach a vertex $v_k \in P \cap A_j$, the event $\mathcal{E}_j$ implies that the best neighbor $u$ of $v_k$ is located outside $B_j$. Then $u$ must fufill at least one of the properties (i)-(iii), which proves the lemma. 
\end{proof}

Next, we study the end-phase of the routing and show that once the routing arrives at a vertex of high objective, with sufficiently high probability it continues to a vertex $u$ such that $p_{ut}=\Omega(1)$.

\begin{lemma} \label{lem:expwminlaststeps}
Let $\phi_0=e^{-\wmin^{O(1)}}$ and assume that $\wmin$ is sufficiently large. Then if the greedy routing reaches a vertex $v$ with $\phi(v) \ge \phi_0$, with probability
$$1-O(e^{-\wmin^{\Omega(1)}})$$
it also visits a vertex $u$ with $\phi(u) \ge (c_1 \w t)^{-1}$.
\end{lemma}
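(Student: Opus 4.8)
The plan is to run a layering argument in the spirit of the proof of \lemref{main}, but a considerably lighter one: since \eqref{eq:puvlarge} is in force throughout and we may take $\wmin$ arbitrarily large (for small $\wmin$, \thmref{greedysuccess2}(i) is trivial), only ``one-sided'' good-neighbour estimates will be needed. First dispose of trivialities: if $\phi(v)\ge(c_1\w t)^{-1}$ we are done with $u=v$, so assume $\phi(v)<(c_1\w t)^{-1}$; since $\w t\ge\wmin$ and $\wmin$ is large, $(c_1\w t)^{-1}\le(c_1\wmin)^{-1}\le\phi_1(\varepsilon_1)<1$, and every vertex has weight $\ge\wmin>w_1(\varepsilon_1)$. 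The structural point is that the greedy path started in $v$ has \emph{strictly increasing} objective (a step goes only to a better neighbour). Hence it either attains objective $\ge(c_1\w t)^{-1}$ --- where \eqref{eq:puvlarge} makes the current vertex adjacent to $t$, so that the routing reaches $t$ --- or it stalls in a local optimum of objective in $[\phi_0,(c_1\w t)^{-1})$. So it suffices to show that, with probability $1-O(e^{-\wmin^{\Omega(1)}})$, no vertex the routing visits of objective in $[\phi_0,(c_1\w t)^{-1})$ is a local optimum.

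To make this precise despite the usual conditioning issues I would copy the layering machinery of \lemref{main}: partition the relevant part of $\probSpace$ into layers --- $\probSpace_1$-vertices by weight, $\probSpace_2$-vertices by objective, each layer spanning a factor $\gamma(\varepsilon_1)$ in the exponent --- and, processing the layers in the order dictated by the typical trajectory, let the good event for a layer be that the induced greedy path either avoids the layer or that its first vertex $v'$ there has a neighbour, outside the already-uncovered part, of strictly larger objective. If $v'\in\probSpace_2$ then \lemref{secondphase}(i) (applicable since $\phi(v')<1$) gives $\Ex[|\Gamma(v')\cap V^+(v',\varepsilon_1)|]=\Omega(\wmin^{\beta-2})$ with all these neighbours of objective $\ge\phi(v')^{1/\gamma(\varepsilon_1)}>\phi(v')$; if $v'\in\probSpace_1$ then $\phi(v')\w{v'}^{\gamma(\varepsilon_1)}\le1$, so \lemref{firstphase}(i) gives $\Ex[|\Gamma(v')\cap V^+(v',\varepsilon_1)|]=\Omega(\wmin^{\beta-2}\w{v'}^{\varepsilon_1})=\Omega(\wmin^{\beta-2})$ with all these neighbours of larger weight and of objective $\ge\phi(v')\w{v'}^{\gamma(\varepsilon_1)-1}>\phi(v')$. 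In both cases the relevant neighbours lie in a region of $\probSpace$ disjoint from --- hence, by the Poisson process, independent of --- the uncovered part, so their number is Poisson with mean $\Omega(\wmin^{\beta-2})$ and the good event fails with probability $\le e^{-\Omega(\wmin^{\beta-2})}=e^{-\wmin^{\Omega(1)}}$.

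Since $\ln(1/\phi_0)=\wmin^{O(1)}$, the objective range $[\phi_0,(c_1\w t)^{-1})$ and the weight range $[\wmin,\phi_0^{-1/\gamma(\varepsilon_1)}]$ of the $\probSpace_1$-vertices that can occur (a $\probSpace_1$-vertex of objective $\ge\phi_0$ has weight at most $\phi_0^{-1/\gamma(\varepsilon_1)}$) are each spanned by only $O(\log\wmin)$ layers; a union bound over all layers then gives total failure probability $O(\log\wmin\cdot e^{-\wmin^{\Omega(1)}})=O(e^{-\wmin^{\Omega(1)}})$. On the intersection of all the good events the layering induction of \lemref{main} (which here is simpler, see below) shows that the true greedy path follows the induced ones, so no visited vertex of objective $<(c_1\w t)^{-1}$ is a local optimum, and the claim follows.

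The point where one has to be careful --- and the genuine obstacle --- is the upper end of the objective range: when $\w t$ is only a moderate multiple of $\wmin$, $(c_1\w t)^{-1}$ is merely polynomially small in $\wmin$, and there the bad-neighbour estimates (\lemref{secondphase}(ii), \lemref{firstphase}(ii), \lemref{smweights}) are all $\omega(1)$ and useless, so one cannot force the greedy step onto a high-weight neighbour. This is precisely why the scheme above is designed to need only good-neighbour estimates: because $\wmin$ is large, even a low-weight ``bad'' neighbour still has weight $\ge\wmin$ and hence, by \lemref{firstphase}(i)/\lemref{secondphase}(i), a strictly better neighbour except with probability $e^{-\Omega(\wmin)}$, so it is never a dead end; and because \eqref{eq:puvlarge} makes $(c_1\w t)^{-1}$ an absorbing objective threshold, monotone increase of the objective is all that matters. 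The price --- the weaker per-layer error $e^{-\Omega(\wmin^{\beta-2})}$ rather than something doubly exponential --- is harmless, being summable over the $O(\log\wmin)$ layers.
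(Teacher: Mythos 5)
Your plan correctly identifies the right structural idea — drop the bad-neighbour ($V^-$) estimates (they are indeed useless when $\w t = O(\wmin)$), rely only on $V^+$ estimates, use the absorbing threshold from \eqref{eq:puvlarge} and the strict monotonicity of $\phi$ along the path, and do a per-layer union bound — and this is in fact what the paper does. But the specific layering you propose does not survive the omission of the $V^-$ estimates. You partition $\probSpace_1$-vertices by weight and $\probSpace_2$-vertices by objective, and you invoke "the layering induction of \lemref{main}" to claim the true greedy path follows the induced ones. In \lemref{main}, that induction works precisely because the good event asserts that the first vertex $v$ of the induced path in a layer has a neighbour outside the uncovered set $B$ that \emph{dominates every neighbour of $v$ inside $B$}, and establishing that dominance is exactly what the $V^-$ estimates of \lemref{firstphase}(ii)/\lemref{secondphase}(ii) are for. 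Your good event only says "$v$ has \emph{some} better neighbour outside $B$." The best neighbour of $v$ may then lie inside $B$: since the greedy step is monotone only in objective, not in weight, the path may move from a $\probSpace_1$-vertex $v$ to a vertex $u$ with $\phi(u)>\phi(v)$ but $\w u<\w v$, i.e., into a weight-layer you have already processed. Such a $u$ is generically not the first vertex of the induced path in its layer, so none of your good events covers it, and your union bound over $O(\log\wmin)$ layers does not cover all vertices the greedy path can visit.

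The paper sidesteps this by layering purely by objective, $A_i=\{v:\psi_{i-1}\le\phi(v)<\psi_i\}$, with a \emph{multiplicative} (not doubly-exponential) step $\psi_i=\psi_{i-1}\wmin^{(1-1/\gamma(\varepsilon_1))/2}$. Because $\phi$ strictly increases along the greedy path, each objective-layer is visited at most once, and a short agreement argument (all vertices of $P$ before $v$ and their best neighbours have objective $<\psi_i$) shows the first vertex of $P$ in $A_i$ coincides with the first vertex of $P_i$ in $A_i$; so the good events do enumerate exactly the visited vertices, with no $V^-$ needed. The multiplicative layer width is forced: by \lemref{firstphase}(i) a $\probSpace_1$-vertex of weight near $\wmin$ improves its objective by only a constant factor $\wmin^{\gamma(\varepsilon_1)-1}$, so doubly-exponential objective layers would not be crossed in one step — which is why you were led to weight-layers in $\probSpace_1$, reintroducing the problem above. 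The price is $O(\wmin^{O(1)}/\log\wmin)$ layers rather than $O(\log\wmin)$, but the per-layer failure probability $e^{-\wmin^{\Omega(1)}}$ (your estimate of the mean as $\Omega(\wmin^{\beta-2})$ is in fact $\Omega(\wmin^{\beta-2+\Omega(\varepsilon_1)})$, a harmless slip) still absorbs the union bound.
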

\begin{proof}
If $\phi_0 \ge \phi_1 := \frac{1}{c_1 \w t}$, there is nothing to show, so we can assume $\phi_0 < \phi_1$. In the following we partition the vertex set with objective between $\phi_0$ and $\phi_1$ in small layers. Let $\psi_0 := \phi_0$, and for $i\ge1$ put $\psi_i := \psi_{i-1} \cdot \wmin^{(1-1/\gamma(\varepsilon_1))/2}$. We assume w.l.o.g. that there exists an index $j$ such that $\psi_j = \phi_1$. Then for all $1 \le i \le j$ we define $A_i := \{v \in \probSpace \mid \psi_{i-1} \le \phi(v) < \psi_i\}$. Our goal is to show that with probability sufficiently high, the greedy routing visits at most one vertex per layer $A_i$, and if so, it will not stop at such a vertex. Therefore, we denote by $P_i$ the greedy path induced by all vertices of objective less than $\psi_i$ and by $\mathcal{E}_i$ the event that either $P_i \cap A_i$ is empty or the first vertex $v \in P_i \cap A_i$ has a neighbor $v'$ such that $\phi(v') \ge \psi_i$.

Let $1 \le i \le j$. If $P_i \cap A_i = \emptyset$, then $\mathcal{E}_i$ holds with probability $1$. Hence suppose that $P_i \cap A_i$ is non-empty and let $v$ be the first vertex on $P_i \cap A_i$. We distinguish two cases: First suppose $v \in \probSpace_1$. Note that for $\varepsilon_1$ small enough it holds
$$\gamma(\varepsilon_1)-1 =\frac{3-\beta-\varepsilon_1}{\beta-2} \ge \frac{3-\beta-\varepsilon_1}{2-2\varepsilon} = \frac{1-1/\gamma(\varepsilon_1)}{2}.$$ 
Then by Lemma~\ref{lem:firstphase}~(i) and a Chernoff bound, with probability at least $1-\exp(-\Omega(\wmin^{\Omega(1)}))$ there exists a neighbor $v'$ of $v$ with objective
$$\phi(v') \ge \phi(v) \w v^{\gamma(\varepsilon_1)-1} \ge \phi(v) \wmin^{\gamma(\varepsilon_1)-1}\ge \phi(v) \wmin^{(1-1/\gamma(\varepsilon_1))/2}\ge \psi_i.$$ 

It remains the case $v \in \probSpace_2$. Since $\w t \ge \wmin$ and since we are assuming that $\wmin$ is sufficiently large, it holds $\phi(v) \le (c_1 \w t)^{-1} \le (c_1 \wmin)^{-1} \le 1$. This allows us to apply Lemma~\ref{lem:secondphase}~(i), and we see that the expected number of neighbors of $v$ with objective at least $\phi(v)^{1/\gamma(\varepsilon_1)}$ is at least
$\Omega(\wmin^{\beta-2}\phi(v)^{-\Omega(\varepsilon_1)}) = \Omega(\wmin^{\Omega(\varepsilon_1)})$. By a Chernoff bound, the probability that such a neighbor $v'$ exists is at least $1-\exp(-\Omega(\wmin^{\Omega(1)}))$. By definition of $V^+(v,\varepsilon_1)$ we have at least $\phi(v') \ge \phi(v)^{1/\gamma(\varepsilon_1)}$. In addition, we can use that $\phi(v) \ge \phi_1$ and $\wmin$ is a sufficiently large constant. Then it follows
\begin{align*}
\phi(v') & \ge \phi(v)^{1/\gamma(\varepsilon_1)} = \phi(v)^{1 - (3-\beta-\varepsilon)/(1-\varepsilon)} \ge \phi(v)(c_1 \wmin)^{(3-\beta-\varepsilon)/(1-\varepsilon)}\\ &\ge \phi(v) \wmin^{(3-\beta-\varepsilon)/(2-2\varepsilon)}=\phi(v) \wmin^{(1-1/\gamma(\varepsilon_1))/2} \ge \psi_i.
\end{align*}

It follows that $\Pr[\mathcal{E}_i] \ge 1-O(e^{-\wmin^{\Omega(1)}})$ holds for all $1 \le i \le j$. Note that by construction we have
$\phi_0 \wmin^{j (1-1/\gamma(\varepsilon_1))/2}=\phi_1$, which implies $j = O(\frac{|\log \phi_0|}{\log \wmin}) = O(\frac{\wmin^{O(1)}}{\log \wmin})$. By a union bound it follows
$$\Pr\left[\wedge_{i=1}^j\mathcal{E}_i\right]= 1-O(\wmin^{O(1)}e^{-\wmin^{\Omega(1)}}) = 1-O(e^{-\wmin^{\Omega(1)}}).$$

Therefore we are allowed to assume that all good events occur and finish the proof as follows: Let $P$ be the final greedy path and let $v \in P$ be the first vertex contained in a layer $A_i$. Then $v \in P_i \cap A_i$, and the event $\mathcal{E}_i$ implies that $v$ has a neighbor $v'$ with higher objective than $\psi_i$. Then either $\phi(v') \ge \phi_1$ or $v' \in A_{i'}$ for some $i' > i$. We can repeat the argument until we reach the last layer $A_j$, and it follows that indeed $P$ must contain a vertex of objective at least $\phi_1$.
\end{proof}

\begin{proof}[Proof of Theorem~\ref{thm:greedysuccess2}~(i)]
We prove the statement by combining the two previous lemmas with Lemma~\ref{lem:main}. Let $c'>0$ be a sufficiently large constant and let $P$ be the resulting greedy path. We put $w_0 := \exp(\wmin)$ and $\phi_0 := w_0^{-c'}$. First suppose that $\phi(s) \le \phi_0$ and $s  \notin \overline{\probSpace}(w_0,\phi_0)$. In this case, we apply Lemma~\ref{lem:expwminfirststeps} and see that if $c'$ is large enough, then with probability
$$1-O(e^{-\wmin^{\Omega(1)}})$$
there exists a vertex $u \in P$ which is either contained in the large set $\overline{\probSpace}(w_0,\phi_0)$ or satisfies $\phi(u)> \phi_0$.

Next suppose that there exists a first vertex $u_1 \in P \cap  \overline{\probSpace}(w_0,\phi_0)$ (in case $s \in \overline{\probSpace}(w_0,\phi_0)$, we would put $u_1 = s$). If there exists such a vertex, by Lemma~\ref{lem:main} with probability
$$1-O(\wmin^{\beta-2} \min\{w_0,\phi_0^{-1}\}^{-\Omega(1)}) = 1-O(e^{-\wmin^{\Omega(1)}})$$
there exists a vertex $u_{\ell}$ on the greedy path induced by $V_{\le \phi_0}$ such that $u_{\ell} \in \overline{\probSpace}(w_0,\phi_0)$ and 
$$\Ex_{> \phi_0}[|\Gamma(u_{\ell}) \cap V_{> \phi_0}|]=\Omega\left(\wmin^{\beta-2} \min\{w_0,\phi_0^{-1}\}^{\Omega(1)}\right)=\Omega\left(e^{\wmin^{\Omega(1)}}\right).$$

So far, we uncovered the graph $G[V_{\le \phi_0}]$. We continue by tossing the coins for all remaining vertices and edges. By a Chernoff bound, with probability $1-\exp(-\Omega(\exp(\wmin^{\Omega(1)})))$ the set $\Gamma(u_{\ell}) \cap V_{>\phi_0}$ is non-empty, and $u_{\ell}$ has a neighbor $u'$ of objective at least $\phi_0$. Then there are two possible situations: If $P$ visits $u_{\ell}$, then the vertex following after $u_{\ell}$ in $P$ has objective at least $\phi_0$. If $P$ does not visit $u_{\ell}$, then there must exist an earlier vertex $u_i \in P$ from which $P$ jumps to $G[V_{\ge \phi_0}]$. In both cases, the routing arrives at vertex $u'$ satisfying $\phi(u') \ge \phi_0$.

Finally suppose that the routing visits a vertex $u'$ of objective at least $\phi_0$ or even starts at such a vertex. Then by Lemma~\ref{lem:expwminlaststeps}, with probability $1-O(e^{-\wmin^{\Omega(1)}})$ it also visits a vertex $u''$ such that $\phi(u'') \ge (c_1 \w t)^{-1}$.

By a union bound over all error probabilities it follows that with probability $1-O(e^{-\wmin^{\Omega(1)}})$ the greedy path $P$ arrives at a vertex $u''$ fufillling $\phi(u'') \ge (c_1 \w t)^{-1}$. If $u''=t$, we are done, otherwise we have
$$\|\x {u''} - \x t\|^d = \frac{\w {u''}}{\wmin n \phi(u'')} \le \frac{c_1 \w {u''} \w t}{\wmin n}$$
and by our assumption on the model it follows that $\{u'',t\} \in E$ and therefore $t \in P$.
\end{proof}

\subsection{Proof of Theorem~\ref{thm:length} (Length of Greedy Path)}
\label{subsec:lengthofrouting}

Before bounding the length of the resulting greedy path, we look at the very first steps of the routing process. With the following lemma, we show for both the basic routing algorithm and its patching variants that they can not visit too many vertices of low weight before reaching a vertex of growing weight. We will use the lemma both for proving Theorem~\ref{thm:length} and later in Chapter~\ref{sec:proofpatching} for analyzing patching protocols.

\begin{lemma} \label{lem:patchingstart}
Let $w_0=w_0(n)=\omega(1)$ be a function growing in $n$, and suppose that the starting vertex $s$ satisfies $\w s \le w_0$ and $\phi(s) \le e^{-w_0}$. Let $A$ be a routing protocol which satisfies (P1), i.e., it makes greedy choices. Then a.a.s., either $A$ visits at most $O(w_0^4)$ different vertices in total, or after visiting at most $O(w_0^4)$ different vertices, $A$ reaches a vertex of weight at least $w_0$.
\end{lemma}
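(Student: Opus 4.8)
The claim is a statement about the very beginning of the routing: starting from a low-weight vertex $s$ with extremely small objective, the protocol either terminates quickly (visiting only $O(w_0^4)$ vertices) or quickly reaches a vertex of weight at least $w_0$. The natural approach is to uncover only the low-weight vertices first and argue that the induced subgraph $G[V_{< w_0}]$ restricted to the relevant region is a small Chung--Lu-type graph that cannot support long greedy trajectories, and that jumping out of this region forces us to a vertex of weight $\ge w_0$. Concretely, I would first invoke Lemma~\ref{lem:nojumps}~(ii) applied to $s$ and to every low-weight vertex the protocol might visit: with probability $1-O(w_0^{-5})$ per vertex, no vertex of weight $\le w_0$ has a neighbor of weight $\le w_0$ at geometric distance more than $(w_0^{O(1)}/n)^{1/d}$. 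Since we are only trying to control the first $O(w_0^4)$ visited vertices, a union bound over these gives that a.a.s.\ all visited low-weight vertices have all their low-weight neighbors within a ball of radius $\approx (w_0^{O(1)}/n)^{1/d}$ around them. By the triangle inequality, after $k$ such steps the current vertex is within geometric distance $O(k \cdot (w_0^{O(1)}/n)^{1/d})$ of $s$, so within $O(w_0^4)$ steps the whole trajectory (as long as it stays in $V_{< w_0}$) is confined to a ball $A$ around $\x s$ of radius $r = (w_0^{O(1)}/n)^{1/d}$.

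Next I would apply Lemma~\ref{lem:subgirg}: rescaling $A$, the induced subgraph $G[A]$ is itself a GIRG on intensity $n' = n\Vol(A) = \Theta(w_0^{O(1)})$, which is $\poly(w_0) = \omega(1)$. Then Lemma~\ref{lem:giant} applied to $G[A]$ says that a.a.s.\ all connected components of $G[A]$ other than the giant have size $\log^{O(1)}(n') = (\log w_0)^{O(1)} = O(w_0^4)$, and the giant itself has size $\Theta(n') = \poly(w_0)$; in either case every component of $G[A]$ that contains only low-weight vertices has size $\poly(w_0) = O(w_0^4)$ for an appropriate choice of the polynomial exponent. (Here I need to make the exponent in $r = (w_0^{O(1)}/n)^{1/d}$ from Lemma~\ref{lem:nojumps}~(ii) small enough that $n' = w_0^{O(1)}$ with the bound below $w_0^4$, and bump up the $4$ to a larger fixed constant if necessary — the statement only claims $O(w_0^4)$, so I have room, though I should double-check the exact exponent bookkeeping.) Now the protocol makes greedy choices by (P1), so it only ever moves to a strictly better vertex or to the best neighbor; as long as it stays among vertices of weight $< w_0$ and within the ball, it is walking inside $G[A]$, and since it never repeats... actually it \emph{may} backtrack to already-visited vertices in a patching protocol, so I cannot simply say it visits each vertex once. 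Instead the right statement is: the set of \emph{distinct} vertices of weight $< w_0$ that $A$ can ever visit while confined to $A$ is contained in a single connected component of $G[A \cap V_{<w_0}]$, since $A$ moves along edges; hence this set has size $O(w_0^4)$ a.a.s.

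Finally I would combine: either the protocol never leaves $V_{<w_0}$ confined to the ball $A$, in which case it visits at most $O(w_0^4)$ distinct vertices total (the confinement to $A$ being guaranteed a.a.s.\ once the number of visited vertices stays $\le O(w_0^4)$ — a slightly circular-sounding but standard argument: the confinement holds as long as we haven't yet taken more than $O(w_0^4)$ steps, and the component bound says we can't, so by a stopping-time/first-violation argument both hold simultaneously a.a.s.); or at some step the protocol moves to a vertex of weight $\ge w_0$, which by the component bound must happen within $O(w_0^4)$ visited vertices. I would also need to handle the condition $\phi(s) \le e^{-w_0}$: this is used to guarantee (via Lemma~\ref{lem:nojumps}~(ii), whose precondition is just $\w v \le w_0$, but more importantly to keep us in phase $\probSpace_1$ and away from the target) that the greedy objective does not already allow a jump straight to a high-objective vertex near $t$ — concretely it ensures $s$ and all nearby low-weight vertices are far from $t$, so the ``good'' direction is genuinely toward higher weight rather than toward $t$. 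The main obstacle I anticipate is the bookkeeping to make the first-violation argument rigorous: one must set up a stopping time at the first step where the trajectory either leaves the ball $A$ or visits its $(C w_0^4 + 1)$-th distinct low-weight vertex, show that before this stopping time the Lemma~\ref{lem:nojumps}~(ii) event and the component-size bound both hold a.a.s., and deduce that the stopping time is in fact $+\infty$ or coincides with reaching weight $\ge w_0$; getting the quantifiers ($n \to \infty$ versus the growing but sub-$n$ parameter $w_0$) straight in the union bounds is where the care is needed.
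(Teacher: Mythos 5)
Your geometric-confinement step is correct and matches the paper: apply Lemma~\ref{lem:nojumps}~(ii) with Lemma~\ref{lem:correlatedprobs} and a union bound over the first $O(w_0^4)$ visited low-weight vertices to trap the trajectory (as long as it stays in $V_{<w_0}$) inside a ball $A$ of radius $O(w_0^4\cdot(w_0^{O(1)}/n)^{1/d})$ around $\x s$, and the hypothesis $\phi(s)\le e^{-w_0}$ is exactly what makes this radius $o(\|\x s-\x t\|)$.

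However, the central step — that the connected component of $s$ in $G[A\cap V_{<w_0}]$ has size $O(w_0^4)$ — is unjustified and is false in general. Rescaling $A$ via Lemma~\ref{lem:subgirg} gives a GIRG of intensity $n'=n\Vol(A)=\Theta(w_0^{4d+c})$, and Lemma~\ref{lem:giant} tells you that the \emph{giant} component of $G[A]$ has size $\Theta(n')\gg w_0^4$; it says nothing about small components of the low-weight subgraph. Deleting the high-weight vertices of $A$ does not shrink the giant appreciably: their expected number is $\Theta(n' w_0^{1-\beta})$, a vanishing fraction of $A$, and there is no cited result asserting that removal shatters the giant into pieces of size $O(w_0^4)$. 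So in exactly the case where the lemma must do real work (namely when $s$ sits in a large connected region), your bound does not hold.

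The paper argues differently, and this is the idea you are missing. Suppose the protocol visits $2w_0^4$ distinct low-weight vertices, all confined to $A$. Lay a $w_0^3$-grid: a.a.s.\ each cell contains at most $2w_0^3$ low-weight vertices, so $2w_0^4$ distinct vertices must spread over at least $w_0^3$ cells. The Bulk Lemma~\ref{lem:bulk} then yields, a.a.s., a visited vertex $v_i$ with a neighbor $u$ of weight $\ge w_0^3$ in the same cell. Because $u$ sits in the same tiny cell as $v_i$, and because the whole ball is at distance $(1\pm o(1))\|\x s-\x t\|$ from $t$ (this is where $\phi(s)\le e^{-w_0}$ is really used), one gets $\phi(u)\ge (1-o(1))w_0^2\phi(v_i)$; whereas every low-weight neighbor $u''$ of $v_i$ is within distance $r$ of $v_i$ and so has $\phi(u'')\le (1+o(1))w_0\phi(v_i)$. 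Hence $u$ strictly dominates all low-weight neighbors of $v_i$, and by (P1) the algorithm must route to a vertex of weight $\ge w_0$ upon first visiting $v_i$. That forcing argument — Bulk Lemma plus the objective comparison — is what proves the lemma; a component-size bound cannot replace it.
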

\begin{proof}
We denote by $G'$ the subgraph induced by all vertices of weight less than $w_0$ and by $G''$ the subgraph induced by all vertices of weight less than $w_0^3$. Let $r=(w_0^c/n)^{1/d}$, where we take $c>0$ sufficiently large. 

We first prove that a.a.s., during the first few hops the algorithm stays close to the starting position $\x s$, unless it finds a large-weight vertex very soon. Suppose that we first uncover all vertices of weight less than $w_0$ and see only $G'$. Let $s'=v_0, v_1, v_2, \ldots$ be the sequence of different vertices visited by algorithm $A$ when we run it on $G'$ (so if a vertex is visited several times, we only list it once). By Lemma~\ref{lem:nojumps}~(ii), with probability $1-O(w_0^{-5})$ there exists no neighbor of $s$ in $G'$ with geometric distance at least $r$ to $\x s$, i.e., there is no \emph{jump} outgoing from $s$. For the following vertices $v_1, v_2, \ldots$, we argue similarly: If we condition on that there was no outgoing jump on the previous vertices $s, v_1, \ldots, v_i$, then by Lemma~\ref{lem:correlatedprobs}, the probability that there is an outgoing jump from $v_{i+1}$ is even smaller than without conditioning, and by Lemma~\ref{lem:nojumps}~(ii) for $v_{i+1}$ this probability is at most $1-O(w_0^{-5})$, regardless whether we condition on the same event for the previous vertices or not. We apply a union bound over the first $2w_0^4$ vertices that the routing algorithm finds in $G'$ and see that with probability $1-O(w_0^{-1})$, we do not see such an outgoing jump during the first $2w_0^4$ steps. Note that if the algorithm visits less vertices in $G'$, we can apply union bound just over all visited vertices and obtain the same result.

As a next step, let us consider first a $w_0^3$-grid as given by Definition~\ref{def:grid} and second the ball $B(2w_0^4 r,s)$ of radius $2w_0^4 r$ around $\x s$. This ball has volume $4^d w_0^{4d} r^d = 4^d w_0^{4d+c} n^{-1}$ and intersects $\Theta(w_0^{4d+c-3})=w_0^{O(1)}$ cells of the $w_0^3$-grid. Let $\mathcal{C}$ be the collection of all these cells contained in our ball $B(2w_0^4 r,s)$. The expected number of vertices of weight at most $w_0$ contained in a single cell $C_i \in \mathcal{C}$ is at most $w_0^3$, and by a Chernoff bound, with probability $1-\exp(-\Omega(w_0^3))$ there are not more than $2w_0^3$ vertices in $C_i$. The same is true for all $C_i \in \mathcal{C}$, and by a union bound we deduce that with probability $1-O(w_0^{O(1)}\exp(-\Omega(w_0^3)))=1-o(1)$ every cell $C_i \in \mathcal{C}$ contains at most $2w_0^3$ vertices.

It follows that after visiting $2w_0^4$ different vertices, the algorithm either has escaped from the ball $B(2w_0^4 r,s)$ or it has visited at least $w_0^3$ different cells. However, from the above calculation we know that a.a.s.\ none of the first $2w_0^4$ visited vertices has a neighbor inside $G'$ with geometric distance at least $r$. Thus for all $i \le 2w_0^4$ it holds 
$$\|\x s - \x {v_i}\| \le 2w_0^4 r.$$
Therefore a.a.s.\ the routing does not escape the ball $B(2w_0^4 r,s)$ during the first $2w_0^4$ hops, and in this case, the algorithm running on $G'$ either stops before visiting $2w_0^4$ vertices or it visited vertices in at least $w_0^3$ different cells. 

In particular, we can lower-bound the geometric distance to $t$ for all these first vertices. Observe that $\|\x s - \x t\|^d = \frac{\w s}{\phi(s)\wmin n} \ge \frac{e^{w_0}}{n}$ as we are assuming $\phi(s) \le e^{-w_0}$. On the other hand it holds
$$(2w_0^4 r)^d = 2^d w_0^{O(1)}n^{-1}= o\left(e^{w_0} n^{-1}\right),$$
and by the triangle inequality it follows that for all $i \le 2w_0^4$ we have
$$\|\x {v_i} - \x t\|^d \ge \frac{e^{w_0}}{2 n}.$$

Now we add the vertices of weights between $w_0$ and $w_0^3$ to the graph in order to obtain $G''$. Notice that due to the Poisson point process, the distribution of the new vertices is completely independent from what we have observed in $G'$. The vertices with even higher weight will be uncovered later. Then we rerun the algorithm $A$ on $G''$, yielding a sequence $s''$ of different visited vertices. There are three cases. First, if $|s''| < 2w_0^4$ then we are done, since then on $G$ the algorithm either visits less than $2w_0^4$ different vertices, or it visits a vertex that is not in $G''$. Similarly, if $|s''| \geq 2w_0^4$, and among the first $2w_0^4$ vertices in $s''$ there is at least one with weight at least $w_0$, then we are also done: on $G$ the algorithm must either visit a vertex that is not in $G''$, or it visits the same first $2w_0^4$ different vertices as in $s''$, thus visiting a vertex of weight at least $w_0$.


It remains the case where $|s''| \geq 2w_0^4$, and the first $2w_0^4$ vertices in $s''$ all have weight at most $w_0$. We have seen that in this case, it visits vertices in at least $w_0^3$ different cells of the $w_0^3$-grid. By the Bulk Lemma~\ref{lem:bulk}, we deduce that when adding the heavy vertices of weight at least $w_0^3$ to the graph, a.a.s.\ there exists a smallest index $i<2w_0^4$ such that $v_i$ has a neighbor $u$ with $\w u \ge w_0^3$ which is located in the same cell. We know that a.a.s.\
$\|\x {v_i} - \x t\|^d \ge \frac{e^{w_0}}{2 n}$. Since $u$ is in the same cell as $v_i$, it holds $\|\x {u} - \x t\|^d \le (1+o(1))\|\x {v_i} - \x t\|^d$ and therefore 
$$\phi(u) \ge (1-o(1))w_0^2 \phi(v_i).$$ 

On the other hand, we know that a.a.s.\ $v_i$ has no neighbor $u'$ satisfying both $\w u' < w_0$ and $\|\x {v_i} - \x u'\|\ge r$, therefore every neighbor $u''$ of $v$ with weight less than $w_0$ satisfies 
$$\|\x {v_i} - \x u''\| < r = o(\|\x {v_i} - \x t\|).$$ Again by the triangle inequality it follows that such a neighbor $u''$ satisfies 
$$\|\x {u''} - \x t\|=(1-o(1))\|\x {v_i} - \x t\|.$$ 
We conclude that in this case we have $$\phi(u'') \le (1+o(1)) w_0 \phi(v_i)$$
and indeed the objective of the vertex $u$ is large enough such that by (P1) the algorithm will prefer $u$ to all neighbors of $v_i$ of weight at most $w_0$. Hence, the neighbor of $v_i$ with maximal objective has weight at least $w_0$ and at the latest at this point, $A$ visits a vertex of weight at least $w_0$.
\end{proof}

\begin{proof}[Proof of Theorem~\ref{thm:length}]
Let $f_0=f_0(n)=\omega(1)$ be a growing function. Furthermore let $w_0=w_0(n) := \max\{\log f_0(n),\w s\}$ and $\phi_0 := \min\{(c_1 \w t)^{-1},f_0^{-1}\}$. 
We first assume $\phi(s) \le \phi_0$. Suppose $\w s \le w_0$. Then by Lemma~\ref{lem:patchingstart}, a.a.s.\ either greedy routing ends in a dead end after visiting $O(w_0^4)=o(f_0)$ vertices, or one of the first $O(w_0^4)$ vertices has weight at least $w_0$. If $\w s \ge w_0$, we already start at such a vertex. Let $u_1$ be the first visited vertex of weight at least $w_0$. Clearly $\phi(u_1)\ge\phi(s)$. Suppose $\phi(u_1)\le \phi_0$. Then we apply Lemma~\ref{lem:main} with $w_0$ and $\phi_0$. By Statements (v) and (vi), a.a.s.\ after at most 
$$L = \frac{1+o(1)}{|\log\beta-2|}\left(\log\log_{w_0} (\phi(s)^{-1})+\log\log_{\phi_0} (\phi(s)^{-1})\right)+O\left(f_0(n)\right)$$
steps, the routing process arrives at a vertex of objective at least $\phi_0$. Notice that by construction, $w_0\ge \w t$ and $\phi_0^{-1}\ge \w t$, thus $L$ is not larger than the claimed length of the greedy path. 

Hence we can assume that the routing reaches a vertex $u'$ such that $\phi(u')\ge\phi_0$. Note that in the case $\phi(s)\ge\phi_0$, we would use $s=u'$. If $\phi_0 = f_0^{-1}$, then we apply Lemma~\ref{lem:verticesofhighobjective} to see that a.a.s., there exist at most $O(f_0)$ vertices with objective at least $\phi_0$. Even if the algorithm visits every single vertex with this property, this needs only $O(f_0(n))$ steps. It remains to treat the case $\phi_0^{-1}=c_1\w t$. Then
$$c_1 \frac{\w {u'} \w t}{\wmin n \|\x {u'}-\x t\|^d} = c_1 \phi(u')\w t \ge 1,$$
and we observe that every vertex $u$ that the algorithm visits after $u'$ has probability $p_{ut}=\Omega(1)$ to directly connect to the target $t$. Suppose for contradiction that from now on, the routing process visits more than $f_0(n)=\omega(1)$ additional vertices. Since each of these vertices independently connects to $t$ with constant probability, it follows that with probability $1-o(1)$ at least one of the $f_0$ additional vertices connects to $t$. However, in this case the algorithm indeed goes to $t$ and stops. We conclude that after reaching a vertex of objective at least $\phi_0$, a.a.s.\ the routing stops before visiting $f_0(n)$ additional vertices. Adding up all steps then proves the second statement of the Theorem. For the first, general bound of $\frac{2+o(1)}{|\log \beta-2|} \log \log n$, we pick $f_0(n)=o(\log \log n)$. Then this statement follows immediately since $\phi(s) = \Omega(n^{-1})$ and $\w s, \w t = \Omega(1)$.

\end{proof}

\section{Proof: Patching}\label{sec:proofpatching}


Similar to our analysis of the basic routing process, we prove Theorem~\ref{thm:patching} in three steps. We first use Lemma~\ref{lem:patchingstart} in order to show that the algorithm soon finds vertices of high weight. Next we can apply Lemma~\ref{lem:main} for analyzing the process until an objective $\phi_0$ is reached. For the end of the routing process, we need a preparatory result. We do this with the following lemma. It is analogue to Lemma~\ref{lem:patchingstart} but considers the very last phase before the algorithm hits the target vertex $t$.

\begin{lemma} \label{lem:patchingend}
There is a constant $c>0$ such that the following holds. Let $w_0=w_0(n)=\omega(1)$ be a function growing in $n$ such that $w_0(n) = O(\log\log n)$, and suppose that the target vertex $t$ satisfies $\w t \le w_0$ and that the start vertex satisfies $\phi(s) \leq w_0^{-c}$. Then a.a.s., either $s$ and $t$ are not in the same component, or there exists a path $P$ of length $O(w_0^4)$ from $t$ to a vertex $u$ with $\w u \ge w_0$ such that every vertex $v \in P$ (including $u$) fufills $\phi(v)\ge w_0^{-O(1)}$.
\end{lemma}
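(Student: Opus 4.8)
The plan is to run the argument of \lemref{patchingstart} ``in reverse'': instead of following the greedy walk out of a low-weight neighbourhood of $s$, I would explore the low-weight subgraph around $t$ and show that it cannot be joined to the rest of the graph except either trivially --- the whole component of $t$ is small and does not contain $s$ --- or through a nearby high-weight vertex. The objective bound comes essentially for free: since $\phi$ is maximised at $t$ and decays like $\|\cdot-\x t\|^{-d}$, every vertex inside a small geometric ball around $\x t$ automatically has objective $w_0^{-O(1)}$.

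\textbf{Setup and confinement.} Let $c_{\mathrm J}$ be the constant hidden in \lemref{nojumps} and pick the constant $c$ in the statement large enough (in particular $c>4d+c_{\mathrm J}$, and large enough for the application of \lemref{nojumps}(iii) below). Put $r:=(w_0^{c_{\mathrm J}}/n)^{1/d}$, let $G':=G[\{v\in V:\w v<w_0\}]$, write $C$ for the component of $t$ in $G'$, and $B:=B(2w_0^4r,\x t)$. From $\phi(s)\le w_0^{-c}$ one gets $\|\x s-\x t\|^{d}=\w s/(\wmin n\,\phi(s))\ge w_0^{c}/n>(2w_0^4r)^{d}$ for $n$ large, so $s\notin B$; moreover every $v\in B$ has $\phi(v)\ge \wmin/(\wmin n(2w_0^4r)^d)=\Omega(w_0^{-(4d+c_{\mathrm J})})$, i.e.\ objective $w_0^{-O(1)}$. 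Now I uncover $G'$ first and run a BFS from $t$ inside $G'$; exactly as in \lemref{patchingstart} --- invoking \lemref{nojumps}(ii) for each newly discovered vertex, \lemref{correlatedprobs} to absorb the conditioning ``no outgoing jump so far'', and a union bound over the first $\min\{|C|,2w_0^4\}$ discovered vertices --- a.a.s.\ none of these vertices has a $G'$-neighbour at geometric distance $\ge r$, so all of them lie in $B$ and each is joined to $t$ inside $C\cap B$ by a path of length $\le 2w_0^4$ along which every vertex has objective $w_0^{-O(1)}$.

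\textbf{The two cases.} If $|C|<2w_0^4$, the confinement covers all of $C$, so $C\subseteq B$ and $s\notin C$; if $s$ and $t$ are nonetheless in the same component of $G$, a $t$--$s$ path must leave $C$, and since $C$ is a component of $G'=G[\{v:\w v<w_0\}]$ it can only be left through an edge $\{x,u\}$ with $x\in C$ and $\w u\ge w_0$. The concatenation of a $C$-path from $t$ to $x$ with $xu$ then has length $O(w_0^4)$, and the only vertex on it not yet known to have objective $w_0^{-O(1)}$ is $u$; that $\phi(u)\ge w_0^{-O(1)}$ follows from \lemref{nojumps}(iii) applied to $x$ --- a.a.s.\ $x$ has no weight-$\ge w_0$ neighbour of objective at most $\phi(x)^{O(1)}\ge(\min_{v\in B}\phi(v))^{O(1)}=w_0^{-O(1)}$ --- together with a union bound over the $O(w_0^4)$ vertices of $C$ and \lemref{correlatedprobs}; the few vertices of $C$ too close to $\x t$ for \lemref{nojumps}(iii) to apply lie in a ball of radius $(w_0^{O(1)}/n)^{1/d}$, hence number only $w_0^{O(1)}$, and a direct first-moment estimate over all far heavy candidate neighbours (as in the proof of \lemref{nojumps}) rules out a bad edge for those. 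If instead $|C|\ge2w_0^4$, the BFS finds $2w_0^4$ distinct low-weight vertices in $B$; a Chernoff bound and a union bound over the $w_0^{O(1)}$ cells of a $w_0$-grid meeting $B$ show that a.a.s.\ each such cell holds $\le 2w_0$ vertices, so the discovered vertices occupy $\ge w_0^3\ge w_0$ distinct cells, and the Bulk Lemma \lemref{bulk} then yields a.a.s.\ a discovered vertex $x$ with a weight-$\ge w_0$ neighbour $u$ in the same cell; since that cell has diameter $o(w_0^4r)$ and $x\in B$, we get $u\in B(3w_0^4r,\x t)$ and $\phi(u)=w_0^{-O(1)}$, so appending $xu$ to a $C$-path from $t$ to $x$ gives the required path unconditionally. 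A union bound over the finitely many $o(1)$ failure events shows that a.a.s.\ either $s$ and $t$ are in different components, or one of the two cases produces a path of length $O(w_0^4)$ from $t$ to a weight-$\ge w_0$ vertex, all of whose vertices have objective $w_0^{-O(1)}$.

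\textbf{Main obstacle.} The one genuinely delicate point is the lower bound $\phi(u)\ge w_0^{-O(1)}$ for the heavy vertex $u$ reached by the ``escape edge'' in the first case: when $\alpha<\infty$ a single edge may a priori span a long geometric distance, so one must rule out, by first-moment bounds over all candidate escape edges (this is exactly what \lemref{nojumps}(iii) provides), that $u$ ends up far from $t$. Making the various polynomial-in-$w_0$ exponents line up --- which is precisely what dictates how large $c$ has to be chosen relative to the fixed constant $c_{\mathrm J}$ --- is the only real bookkeeping.
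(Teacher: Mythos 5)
Your proposal follows essentially the same route as the paper: uncover the low-weight subgraph $G'$, explore the component $C$ of $t$ up to $2w_0^4$ vertices, use \lemref{nojumps}(ii) together with \lemref{correlatedprobs} and a union bound to confine the exploration to a small ball around $\x t$ (which gives the objective bound for free), and then split into the ``small $C$'' case handled via \lemref{nojumps}(iii) and the ``large $C$'' case handled via the Bulk Lemma~\lemref{bulk} on a grid. Your version is in fact a bit more careful than the paper's write-up in two places where you patch over real bookkeeping gaps: you use a $w_0$-grid so that the cell count actually satisfies the hypothesis of \lemref{bulk} (the paper's $w_0^3$-grid only yields $\gtrsim w_0$ occupied cells, which is short of the $w_0^3$ that the Bulk Lemma with that grid parameter would require), and you explicitly observe that the precondition $\phi(v)\le w_0^{-\Omega(1)}$ of \lemref{nojumps}(iii) may fail for the $w_0^{O(1)}$ explored vertices extremely close to $\x t$, fixing this with a direct first-moment estimate against a fixed threshold $w_0^{-C}$ (which removes the dependence on $\phi(v)$ in the bound and also works in the $\alpha=\infty$ case).
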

\begin{proof}
Let $G'$ be the subgraph induced by vertices of weight at most $w_0$, and suppose that we first uncover this subgraph. Furthermore let $C$ be the connected component of $G'$ which contains $t$, and let $r:= (w_0^c/n)^{1/d}$, where $c>0$ is a sufficiently large constant. By Lemma~\ref{lem:nojumps}~(ii), with probability $1-O(w_0^{-5})$ the target vertex $t$ has no neighbor in $G'$ with geometric distance at least $r$ to $\x t$. Next we explore $C$ by depth first search (choosing vertices in arbitrary order) until  either $2w_0^4$ vertices have been visited or $C$ has been completely traversed. Let $v$ be any vertex which has been explored in this process. By Lemma~\ref{lem:correlatedprobs} and Lemma~\ref{lem:nojumps}~(ii), with probability $1-O(w_0^{-5})$ the vertex $v$ has no neighbor in $G'$ with geometric distance at least $r$ to $\x v$. As we explored at most $2w_0^4$ vertices, by a union bound it follows that with probability $1-O(w_0^{-1})=1-o(1)$, we didn't explore any edge of geometric length at least $r$. Therefore, every explored vertex $v$ satisfies $\|\x v - \x t\|^d \le 2w_0^4 r^d= w_0^{O(1)}/n$ and thus $\phi(v) \ge w_0^{-O(1)}$.

We proceed by considering a $w_0^3$-grid as given by Definition~\ref{def:grid}, and uncovering the vertices of weight at least $w_0$. Similarly as in the proof of Lemma~\ref{lem:patchingstart}, we deduce that there are $w_0^{O(1)}$ cells of the grid which are in distance at most $2w_0^4 r$ to $\x t$, and a.a.s.\ all of these cells contain at most $2w_0^3$ vertices. It follows that if we explored $2w_0^4$ vertices, then they cover at least $w_0$ different cells. In this case, by Lemma~\ref{lem:bulk} there exists a vertex $u$ and a vertex $v$ such that (i) $\w u \ge w_0$, (ii) $u$ and $v$ are in the same cell, (iii) $v$ is among the explored vertices, and (iv) $u$ and $v$ are connected with an edge. From (ii) it follows in particular that $\phi(u) \ge w_0^{-O(1)}$, and clearly there exists a $v$-$t$-path among the explored vertices. Hence we find a $u$-$v$-$t$-path with all desired properties.

Suppose that the DFS stops after exploring less than $2w_0^4$ vertices, and let $v$ be an explored vertex. Recall that $\|\x v - \x t\|^d \le 2w_0^4 r^d= w_0^{O(1)}/n$. Then by Lemma~\ref{lem:nojumps}~(iii), with probability $1-O(w_0^{-5})$ there exists no neighbor $u$ of $v$ such that $\phi(u) \le w_0^{-c}$ and $\w u \ge w_0$, for a suitable choice of $c$. By a union bound we deduce that a.a.s.\ this holds for all explored vertices. However, this is only possible if either $C$ is itself the whole connected component of $t$ (and in this case $s$ is not in the same coomponent as $t$) or there exists a vertex $u$ with $\w u \ge w_0$ and $\phi(u) \ge w_0^{-O(1)}$ which has an \emph{explored} neighbor $v$. This yields again an $u$-$v$-$t$ path with all desired properties.
\end{proof}

\begin{proof}[Proof of Theorem~\ref{thm:patching}]~
We first treat the main case $\phi(s) = o(1)$, and only come to the exceptional case $\phi(s) = \Omega(1)$ at the very end.
\paragraph{The Case \boldmath{$\phi(s) = o(1)$}: finding a vertex $u_1$ of large weight.} 
Let $w_0 = O(\log\log\log n)$. We assume for now that the starting vertex $s$ satisfies $\phi(s) =o(1)$, which happens a.a.s.\ if $s$ is a random vertex. Note that we may choose $w_0$ such that $\phi(s) \le e^{-w_0}$. Furthermore let $A$ be an algorithm satisfying $(P1)-(P3)$. We first show that either $s$ is not in the same connected component as $t$, or $A$ finds $t$ quickly, or $A$ visits soon a vertex of weight at least $w_0$. If $\w s \ge w_0$, this is trivial. Otherwise, either $A$ finds $t$ among the first $O(w_0^4)$ (different) visited vertices, or the connected component of $s$ has only $O(w_0^4)$ vertices, or the algorithm visits $O(w_0^4)$ without finding $t$. Note that by condition $(P2)$, the algorithm $A$ spends at most $w_0^{O(1)}=o(\log\log n)$ steps in this phase, which settles the first two cases. In the last case, $A$ visits $O(w_0^4)$ vertices without finding $t$, and by Lemma~\ref{lem:patchingstart} a.a.s.\ at least one of these vertices has weight at least $w_0$. Let us call the first such vertex $u_1$. 

\paragraph{The length of middle part: application of the main lemma.} As a next step, we want to apply Lemma~\ref{lem:main} to bound the number of steps in the middle phase. 
Let $f_1(n)=\omega(1)$ be a function which grows sufficiently small in $n$, and let $w_1 := \log w_0$, $w_2 := \log  w_1=\log\log w_0$,
and $\phi_0 := (w_2)^{-1/f_1}=o(1)$. We consider the induced graph $G[V_{\leq \phi_0}]$ of vertices of objective at most $\phi_0$, and denote by $P=\{s, \ldots, u_1, \ldots\}$ the path obtained from running $A$ on $G[V_{\leq \phi_0}]$. We apply Lemma~\ref{lem:main} for $w_0$ and $\phi_0$. It follows that a.a.s., after $u_1$ the patching algorithm always finds an objective-improving neighbor until it reaches a vertex $u_{\ell}$ which has a.a.s.\! $\omega(1)$ neighbors in $V^+(u_{\ell},\varepsilon_1)$ of objective at least $\phi_0$. Let $P'$ be the subpath $\{u_1, \ldots, u_{\ell}\}$. By Statement~(v) of Lemma~\ref{lem:main}, and by our choice of $w_0$ and $\phi_0$, a.a.s.\ it holds
$$|P'|  \le \frac{1+o(1)}{|\log(\beta-2)|} (\log \log_{\w 0} \phi(u_1)^{-1} +  \log \log_{\phi_0^{-1}} \phi(u_1)^{-1}) +o(\log\log n) \le \frac{2+o(1)}{|\log \beta-2|} \log\log n.$$

\paragraph{Existence of $\mathbf{u_j,u_i}$: vertices of objective $\mathbf{\phi(u_j) \approx w_1^{-1}, \phi(u_i) \approx w_2^{-1}}$.} Next, we claim that a.a.s.\! there exist a vertex $u_j \in P'$ such that 
\begin{equation}\label{eq:propuprime}
w_1^{-\Omega(1)} \ge \phi(u_{j}) \ge w_1^{-1-\Omega(1)} ,
\end{equation}
and such that $A$ also visits $u_j$ if we let it run on $G \setminus \{t\}$ instead of $G[V_{\leq \phi_0}]$. 

By Lemma~\ref{lem:nojumps}~(i), there exists a constant $\varepsilon'>0$ such that a.a.s. the graph has no vertices of weight at least $w_0$ and objective at least $w_0^{-\varepsilon'}$ (except potentially $t$, if we fix $t$ and let $\w t$ be large). Moreover, we already know that the vertices which $A$ visited before $u_1$ have no neighbor with distance at least $r$ and weight less than $w_0$, thus for $\varepsilon'$ small enough, they have no neighbor of objective at least $w_0^{-\varepsilon'} < w_1^{-1-\Omega(1)}$. In particular it follows that $\phi(u_1) \le w_1^{-1-\Omega(1)}$. 

Now let $u \in P'$ be a vertex such that $\phi(u) \le w_1^{-1-\Omega(1)}$. Then by Lemma~\ref{lem:nojumps}~(iv), with probability $1-O(\min\{\w {u},\phi(u)^{-1}\}^{-\Omega(1)})$ the vertex $u$ has no neighbor of objective at least $w_1^{-\Omega(1)}$ and weight at most $w_1$. Moreover, by Lemma~\ref{lem:main}~(iii) the weights of the vertices in $P'$ increase doubly exponentially during the first phase, and by Lemma~\ref{lem:main}~(iv) the objectives increase doubly exponentially during the second phase. This allows us to apply union bound over all vertices on $P'$ with objective at most $w_1^{-1-\Omega(1)}$. We deduce that a.a.s., no vertex $u \in P'$ with $\phi(u)\le w_1^{-1-\Omega(1)}$ has a neighbor of objective at least $w_1^{-\Omega(1)}$ and weight at most $w_1$. By Lemma~\ref{lem:nojumps}~(i), we know that a.a.s.\ there are no vertices except $t$ with objective at least $w_1^{-\Omega(1)}$ and weight at least $w_1$. We finally see that indeed, a.a.s.\ the path $P'$ contains at least one vertex $u_j$ satisfying (\ref{eq:propuprime}). Moreover, for sufficiently large $n$ we have $w_1^{-\Omega(1)} \leq \phi_0$. Then, since none of the preceding vertices has a neighbor in $V_{> \phi_0}$, this vertex $u_j$ is also visited if $A$ runs on $G$. This proves that $u_j$ exists.

Let $w_2=\log(w_1)$. We claim that analogously to~\eqref{eq:propuprime}, a.a.s.\! there exist a vertex $u_i \in P'$ such that 
\begin{equation}\label{eq:propuprime2}
w_2^{-\Omega(1)} \ge \phi(u_{i}) \ge w_2^{-1-\Omega(1)},
\end{equation}
and such that $A$ also visits $u_i$ if we let it run on $G \setminus \{t\}$ instead of $G[V_{\leq \phi_0}]$. Indeed, the proof is completely analogous to~\eqref{eq:propuprime}.

\paragraph{End phase: finding a path from $\mathbf{u_i}$ to $\mathbf{t}$ in $\mathbf{G[V_{> \phi(u_j)}]}$.} To finish the argument, we consider the graph $G[V_{> \phi(u_j)}]$. Clearly, if $t$ is already connected to a vertex on the path $P$ which is located before $u_i$ on $P$, then the algorithm succeeds. Otherwise, we claim that it is sufficient to show that either $t$ is not in same connected component as $s$, or that among all vertices of objective at least $\phi(u_j)$ there exists a path from $u_j$ to $t$. Suppose we find such a path. Then $u_j$ and $t$ are in the same connected component $S$ in $G[V_{> \phi(u_{j})}]$. In particular, by (P3) the algorithm $A$ visits all vertices of $S$ (including $t$) during the next $|S|^{O(1)}$ steps. By Lemma~\ref{lem:verticesofhighobjective}, a.a.s.\ there are at most $O(\phi(u_j)^{-1}) = w_0^{O(1)}$ vertices of objective at least $\phi(u_j)$, therefore $|S|^{O(1)} = w_0^{O(1)}=o(\log\log n)$ as desired. Hence it remains to find a $u_j$-$t$-path in $G[V_{> \phi(u_{j})}]$. Since we already know that $u_i$ and $u_j$ are connected by $P'$, it also suffices to find a path from $u_i$ to $t$ in $G[V_{> \phi(u_{j})}]$. (Mind the two different indices $i, j$ here.)

We start by finding a path from $t$ to a vertex $u$ of weight at least $\w u \ge w_2^{\eps}$, where $\eps>0$ is a sufficiently small constant. If $\w t \ge w_2^{\eps}$, this is trivial by setting $u=t$, otherwise we apply Lemma~\ref{lem:patchingend} with weight $w_2^{\eps}$. Note that the condition $\phi(s)< w_2^{-c}$ is satisfied for sufficiently large $n$. Then either $s$ and $t$ are in different components, or there is a path from $t$ to a vertex $u$ with $\w u \geq w_2^{\eps}$ such that every vertex in the path has objective at least $w_2^{-O(1)} \geq \phi(u_j)$, where the latter inequality holds for sufficiently large $n$. So it only remains to show that $u$ and $u_i$ are in the same connected component of $G[V_{> \phi(u_{j})}]$. For this last step, fix a sufficiently large constant $C>0$ and consider the set $B := \{v \in \probSpace \mid \|\x v - \x t\|^d \le w_2^C/n\}$. Clearly, $B$ contains $t$. If $u \neq t$, then by Lemma~\ref{lem:nojumps}~(i), a.a.s.\ we have $\phi(u) \leq \w u^{-\Omega(1)}$, and thus $w_2^{-O(1)} \leq \phi(u) \leq \w u^{-\Omega(1)} \leq w_2^{-\Omega(1)}$. This bounds both $\w u$ and $\phi(u)^{-1}$, so we get $\|\x u - \x t\|^d = \phi(u)^{-1}\w u/n \leq w_2^{O(1)}/n$. Hence, if $C$ is large enough then $u \in B$. Similarly, using~\eqref{eq:propuprime2} we find that $u_i \in B$. The set $B$ contains in expectation $n' = w_2^C$ vertices, and by Corollary~\ref{cor:subcore} a.a.s.\! the induced graph $G[B]$ has a giant component which contains all vertices of weight at least $(\log n')^{O(1)}$. In particular, $u$ and $u_i$ are in the giant component of $G[B]$, so there is a path from $u$ to $u_i$ in $G[B]$. Moreover, every vertex $v$ in $B$ has objective at least $\phi(v) \geq \w u/(n\|\x v - \x t\|^d) \geq \wmin w_2^{-C} > \phi(u_j) $ (if $n$ is sufficiently large), so $B \subseteq V_{> \phi(u_j)}$. Altogether, we have found a path from $u_i$ to $u$ in $V_{> \phi(u_j)}$, and thus a path from $u_j$ to $t$ in $V_{> \phi(u_j)}$. This concludes the proof under the initial assumption
 $\phi(s) = o(1)$. 
 

\paragraph{The Case \boldmath{$\phi(s) = \Omega(1)$}.} In the remaining case $\phi(s) = \Omega(1)$, condition (P3) ensures that there exists an objective $\phi_1 = o(1)$, $\phi_1 \geq (\log \log n)^{o(1)}$, such that we explore the connected component $S$ of $s$ in $G[V_{> \phi_1}]$ in $|S|^{O(1)}$ steps. We note that by Lemma~\ref{lem:verticesofhighobjective}, a.a.s.\! $|V_{> \phi_1}| =O(\phi_1^{-1})$, so in particular we need at most $|S|^{O(1)} = o(\log \log n)$ steps to explore $S$. It only remains to show that a.a.s.\! either $s$ and $t$ are in different connected components in $G$, or they are in the same component in $G[V_{> \phi_1}]$. This argument resembles closely the argument in the previous paragraph, and we only repeat it briefly. By Lemma~\ref{lem:patchingend} either $s$ and $t$ are in different components in $G$, or there is path from $s$ to $t$ among vertices of objective at least $\phi_1$ and of weight at most $\phi_1^{-\eps}$, or in $G[V_{> \phi_1}]$ there are paths from $s$ and from $t$ to vertices $u_s$ and $u_t$ of weights at least $\phi_1^{-\eps}$, where $u_s=s$ or $u_t=t$ are possible if $\w s$ or $\w t$ is large. Then by the same argument as we used for $u$ and $u_j$ above, a.a.s.\! $u_s$ and $u_t$ are in the same connected component of $G[V_{> \phi_1}]$. Summarizing, either there is no path from $s$ to $t$ in $G$, or there is a path from $s$ to $t$ in $G[V_{> \phi_1}]$, and in the latter case the path is explored in $o(\log \log n)$ steps, as required.

\end{proof}

\section{Proof: Relaxations} \label{sec:relaxations}

In our analysis of the basic greedy algorithm and the patching algorithm we always assumed that a node is aware of the exact objectives of its neighbors and packets are forwarded always to the neighbor with largest objective. In practice the exact objective of a neighbor might not be available. In the following we show that all our theorems also hold if objective values are only approximately known. 
In particular, we generalize from our original objective function $\phi$ to a class of objective functions, and prove Theorem~\ref{thm:relaxations}. Before we come to the proof, we remark without formal proof that Theorem~\ref{thm:relaxations} is essentially best possible.

\begin{remark}\label{rem:patching}
The relaxations in Theorem~\ref{thm:relaxations} are best possible. In Section~\ref{sec:proofsketch} we have described typical trajectories of greedy routing. In particular, it turned out that there are two phases in which the weight and the objective improve by an exponent of $1/(\beta-2)$, respectively. If we would replace the function $f_0(n)$ in Theorem~\ref{thm:relaxations} by a constant, then in each step the algorithm could pick vertices which would increase the weight (in the first phase) and the objective (in the second phase) only by an exponent $c < 1/(\beta-2)$. This would increase the number of steps to at least $\tfrac{2+o(1)}{\log c}\log \log n$, thus making the algorithm considerably slower. 
\end{remark}

Below we show that our analysis of the greedy routing w.r.t.\ $\phi$ applies as well for such a relaxed objective function $\tilde{\phi}$ which fufills the preconditions of Theorem~\ref{thm:relaxations}. Note that all technical definitions of sets such as $\probSpace_1$, $\probSpace_2$, $V^+(v,\varepsilon)$, $V^-(v,\varepsilon)$, \ldots are still defined via $\phi$ and not via $\tilde{\phi}$.

\begin{proof}[Proof of Theorem~\ref{thm:relaxations}]
We first study how Theorems~\ref{thm:greedysuccess1}, \ref{thm:greedysuccess2}, and~\ref{thm:length} for the basic routing process can be transferred for relaxed objective functions, and defer the corresponding analysis for patching algorithms to the end of this section. Let $f_0(n)=\omega(1)$ be a growing function, let $\delta>0$, and recall from our basic analysis that $\varepsilon_1$ is a constant chosen sufficiently small. Moreover we put $\varepsilon_2 = (\log\log f_0(n))^{-1}$. We want to prove for every $g_0(n)=o(1)$ that the results about greedy routing w.r.t.\ $\phi$ transfer to every relaxed objective function $\tilde{\phi}$ which is maximized at the target $t$ and satisfies
$$\tilde{\phi}(v) = \Theta\big( \phi(v) \cdot \min\{\w v, \phi(v)^{-1}\}^{\pm g(n)}\big)$$
for all $v$ with $\phi(v)  \le \overline{c}\w t^{-1+\delta}$, where $\overline{c}>0$ is a constant chosen sufficiently large later. Furthermore we assume that all other vertices satisfy $\tilde{\phi}(v) = \Omega(\w t^{-1+\delta/2})$. We observe that by choosing $f_0(n)$ sufficiently small compared to $g(n)$, we can assume as well that every vertex $v$ with $\phi(v) \le \overline{c}\w t^{-1+\delta}$ satisfies
\begin{equation}\label{eq:newrelax}
\tilde{\phi}(v) = \Theta\big( \phi(v) \cdot \min\{\w v^{\gamma(\varepsilon_1)},\phi(v)^{-1}\}^{\pm \varepsilon_2/c)}\big)
\end{equation}
for $c=4\gamma(\varepsilon_1)$,
which is a weaker relaxation and thus gives a stronger statement. Notice that for vertices $v \in \probSpace_1$, the minimum will be attained by $\w v^{\gamma(\varepsilon_1)}$, and vice-versa for vertices $v \in \probSpace_2$, the maximum will be attained by $\phi(v)^{-1}$. 

Let $\tilde{A}$ be a greedy routing algorithm which performs routing w.r.t.\ $\tilde{\phi}$. In the following, we show that we can modify our analysis of the routing process such that whenever $\tilde{A}$ proceeds to a new vertex $v$, this vertex still has all desired properties and the routing trajectory is essentially the same. We will describe how the proofs of Theorem~\ref{thm:greedysuccess1}, \ref{thm:greedysuccess2}, and \ref{thm:length} can be adapted for the relaxed routing algorithm $\tilde{A}$, but we will not prove this in detail. Recall that all these proofs basically contain three parts: First we studied the start of the routing, which contains the very first steps until a certain weight $w_0$ is reached. Then in the second step we applied Lemma~\ref{lem:main} in order to reach a vertex of objective at least $\phi_0$, and in the final part we studied how the process finds $t$. When describing how our results can be transferred to the relaxed algorithm $\tilde{A}$, we therefore handle these three parts separately. Thereby, we assume that $\phi_0 \le \underline{c} \w t^{-1+\delta/2}$, where $\underline{c}>0$ is a constant chosen sufficiently small later. Below we will see that never a larger $\phi_0$ for transferring the main theorems to $\tilde{A}$ is needed. We start by studying the main part of the routing process.

\para{Main part of routing process:} In all proofs we applied Lemma~\ref{lem:main} for analyzing the main part of the routing process. Hence we indicate why this main lemma still applies for the relaxed algorithm $\tilde{A}$. In fact, we will slightly strengthen part (vi) of the lemma such that for the end phase we can still guarantee that $\tilde{A}$ reaches a vertex of real objective at least $\phi_0$. Recall that we put $\phi_0 \le \underline{c} \w t^{-1+\delta/2}$. We require that $\underline{c}$ is small enough such that for every vertex $v \in V_{\le \phi_0}$ and every vertex $u$ with $\phi(u)\ge \overline{c} \w t^{-1+\delta}$, it holds $\tilde{\phi}(v) < \tilde{\phi}(u)$. Notice that by the weak relaxation \eqref{eq:weakrelaxation}, indeed this is possible. Then we can assume that whenever we use this lemma, all vertices satisfy (\ref{eq:newrelax}). 

Lemma~\ref{lem:main} treats two phases of the routing. In the first phase of the routing process, our goal was to prove for a given vertex $v \in \probSpace_1$ that its best neighbor w.r.t.\ $\phi$ has weight at least $\w v^{\gamma(\zeta\varepsilon)}$, where $\varepsilon \in \{\varepsilon_1,\varepsilon_2\}$ and this choice was depending on the weight $\w v$. Recall that our analysis assumed that all good events occur. Then in particular $v$ has no neighbor in the set $V^-(v,\varepsilon)$, and every neighbor $u$ with weight at most $\w v^{\gamma(\zeta\varepsilon)}$ can have objective at most $\phi(v) \w v^{\gamma(\varepsilon)-1}$. Let $u$ be such a neighbor. We claim that $\tilde{\phi}(u)$ is small enough such that $u$ is not the best neighbor of $v$ w.r.t.\ $\tilde{\phi}$. Due to the relaxation (\ref{eq:newrelax}), the vertex $u$ satisfies
$$\tilde{\phi}(u) \le \phi(u) \w u^{\gamma(\varepsilon_1)\varepsilon_2/c} \le O\big(\phi(v)\w v^{\gamma(\varepsilon)-1+\gamma(\zeta\varepsilon)\gamma(\varepsilon_1)\varepsilon_2/c}\big).$$

In order to guarantee that the routing does not proceed with such a vertex, we need to find a neighbor of weight at least $\w v^{\gamma(\zeta\varepsilon)}$ and higher relaxed objective $\tilde{\phi}$. Before, we found such good neighbors in the set $V^+(v,\varepsilon)$. Instead, we could as well consider the set $V^+(v,\varepsilon/3)$. We apply Lemma~\ref{lem:firstphase} to see that
$$\Ex[|\Gamma(v) \cap V^+(v,\varepsilon/3)|] = \Omega\big(\wmin^{\beta-2} \w v^{\Omega(\varepsilon)}\big)$$
which is fine for all error bounds.
Let $\gamma' := \frac{1+\varepsilon/3}{\beta-2}$. By Lemma~\ref{lem:weighttoohigh}, the expected number of neighbors with weight larger than $\w v^{\gamma'}$ is $O(\wmin^{\beta-2}\w v^{-\Omega(\varepsilon)})$. Hence we expect $\Omega(\wmin^{\beta-2} \w v^{\Omega(\varepsilon)})$ neighbors in $V^+(v,\varepsilon/3)$ with weight at most $\w v^{(1+\varepsilon/3)/(\beta-2)}$. By our relaxations, a vertex $u'$ in this set satisfies
$$\tilde{\phi}(u') \ge \phi(u') \w u'^{-\gamma(\varepsilon_1)\varepsilon_2/c} \ge \Omega\big(\phi(v) \w v^{\gamma(\varepsilon_i/3)-1-\gamma'\gamma(\varepsilon_1)\varepsilon_2/c}\big).$$
Since we assume that $\varepsilon_1$ is small enough and $\varepsilon \le \varepsilon_1$, we observe that
$$\gamma(\varepsilon/3)-\gamma(\varepsilon) = \frac{2\varepsilon}{3(\beta-2)} > \frac{(2+\varepsilon-\zeta\varepsilon)\varepsilon_2}{4(\beta-2)} = (\gamma(\zeta\varepsilon)+\gamma')\frac{\gamma(\varepsilon_1)\varepsilon_2}{c}$$
and indeed, $\tilde{\phi}(u')$ is larger than the relaxed objective of $v$ itself and of every neighbor of $v$ with weight less than $\w v^{\gamma(\zeta\varepsilon)}$. It follows that our analysis of the first phase is valid for routing w.r.t.\ $\tilde{\phi}$ as well and thus applies to $\tilde{A}$.

For the second phase in $\probSpace_2$, we can apply the same arguments in order to generalize our analysis. For $v \in \probSpace_2$, we consider again $V^+(v,\varepsilon/3)$ instead of $V^+(v,\varepsilon)$. Given all good events, the set $V^-(v,\varepsilon)$ is again empty and every vertex $u \in \Gamma(v) \cap \probSpace_1$ has relaxed objective at most
$\tilde{\phi}(u) = O(\phi(v)^{(1-\varepsilon_2/c)/\gamma(\varepsilon)})$. However, every vertex $u' \in \Gamma(v) \cap V^+(v,\varepsilon/3)$ with $\phi(u') \le \overline{c}\w t^{1-\delta}$ satisfies $\tilde{\phi}(u') = \Omega( \phi(v)^{(1+\varepsilon_2/c)/\gamma(\varepsilon/3)})$ by the first condition on the relaxation $\tilde{\phi}$. Note that since $c>4$ and $\varepsilon_2 \le \varepsilon$, we have $\varepsilon_2/c \le \varepsilon/3$. With a short calculation we deduce
$(1+\varepsilon_2/c)(1-\varepsilon) < (1-\varepsilon_2/c)(1-\varepsilon/3)$, which is equivalent to $$\frac{1+\varepsilon_2/c}{\gamma(\varepsilon/3)} < \frac{1-\varepsilon_2/c}{\gamma(\varepsilon)}.$$
Indeed, $\tilde{\phi}(u') > \tilde{\phi}(u)$ and the routing algorithm $\tilde{A}$ favours vertices in $V^+(v,\varepsilon)$ as desired. 

Moreover, regarding statement~(vi) of the main lemma, the above arguments imply that every vertex $v \in \Gamma(u_{\ell}) \cap V^+(u_{\ell},\varepsilon)$ with $\phi(v) \le \overline{c} \w t^{-1+\delta}$ satisfies $\tilde{\phi}(v) > \tilde{\phi}(u_{\ell})$. Furthermore, for every vertex $v \in \Gamma(u_{\ell}) \cap V^+(u_{\ell},\varepsilon)$ with $\phi(v) > \overline{c} \w t^{-1+\delta}$ the same is true by our choice of $\phi_0$, since $u_{\ell} \in V_{\le \phi_0}$. It follows
$$\{v \in \Gamma(u_\ell) \cap V^+(u_\ell,\varepsilon) \cap V_{> \phi_0} \mid \tilde\phi(v) > \tilde \phi(u_\ell)\} = \Gamma(u_\ell) \cap V^+(u_\ell,\varepsilon) \cap V_{> \phi_0}, $$
and using $M := \min\{w_0,\phi_0^{-1}\}$ we obtain
\begin{equation}\label{eq:relaxmainlemma}
\Ex_{> \phi_0}\big[\big|\{v \in \Gamma(u_\ell) \cap V^+(u_\ell,\varepsilon) \cap V_{> \phi_0} \mid \tilde\phi(v) > \tilde \phi(u_\ell)\}\big|\big]= \Omega\left(\wmin^{\beta-2}M^{\Omega(1)}\right)
\end{equation}
for a suitable choice of $\varepsilon \in \{\varepsilon_1,\varepsilon_2\}$. Note that the only difference to part (vi) of Lemma~\ref{lem:main} is the additional condition $\tilde\phi(v) > \tilde \phi(u_\ell)$. In particular, if $\tilde A$ visits $u_\ell$ and the set described in \eqref{eq:relaxmainlemma} is non-empty, then $\tilde A$ proceeds to a vertex in $V_{> \phi_0}$ which has in particular higher \emph{relaxed} objective than $u_{\ell}$. Notice that it is not guaranteed that this next vertex lies in $V^+(u_{\ell},\varepsilon)$, because vertices with real objective at least $\overline{c} \w t^{-1+\delta}$ can have arbitarily large relaxed objective, thus vertices in $V^+(u_{\ell},\varepsilon)$ don't need to be the best neighbors of $u_{\ell}$ w.r.t.\ $\tilde{\phi}$.

\para{Start of routing process:} We can assume that all considered vertices are contained in the set $V_{\le \phi_0}$, as the analysis of the end of the routing will handle all other vertices. Let us start by considering the proof of Theorem~\ref{thm:greedysuccess1}. There, we used that if $\w s < w_1(\varepsilon_1)$, then with constant probability $s$ has a neighbor in a superior set $A_s$. More precisely, $A_s$ contains vertices of weight at least $w_1(\varepsilon_1)$, and the proof required that with constant probability, $s$ has at least one neighbor $u_1 \in A_s$ and no neighbor with smaller weight than $w_1$ but higher objective than $u_1$. We observe that this argument can be transferred to the analysis $\tilde{A}$ as long as $\tilde{\phi}(u_1)$ is large enough and in particular larger than $\tilde{\phi}(s)$. However, due to (\ref{eq:newrelax}) it holds $\tilde{\phi}(u_1)=\Omega(\phi(u_1) \w {u_1}^{-o(1)})=\Omega(\phi(s))$. If we make the set $A_s$ slightly smaller and require that the weight of its vertices is by a constant factor larger than $w_1(\varepsilon_1)$, then still with constant probability there exists a vertex $u_1 \in \Gamma(s) \cap A_s$, and $\tilde{\phi}(u_1)$ will be large enough. Then this starting argument applies as well for $\tilde{\phi}$.

The proof of Theorem~\ref{thm:greedysuccess2} used Lemma~\ref{lem:expwminfirststeps} for the very first steps. There we constructed a set of layers $A_i$, defined via weights, and proved that with sufficiently high probability, the routing process traverses these layers and thereby the weight of the current vertex increases as desired. By the same arguments as seen above for the main lemma, we can apply  Lemma~\ref{lem:expwminfirststeps} for the relaxations.

When proving Theorem~\ref{thm:length}, we used Lemma~\ref{lem:patchingstart} for the starting phase.
The crucial argument behind Lemma~\ref{lem:patchingstart} was that after visiting many vertices of weight at most $w_0(n)=\omega(1)$, the routing will arrive at a vertex $v$ which has a neighbor $u$ in the same cell of a $w_0$-grid such that $\w u \ge w_0^3$. We observe that due to (\ref{eq:newrelax}), $u$ still has the property $\tilde{\phi}(u) \ge (1-o(1)) w_0^{2-o(1)} \phi(v)$. On the other hand, every neighbor $v'$ of $v$ which has weight less than $w_0$ has relaxed objective $\tilde{\phi}(v') \le (1+o(1)) w_0^{1+o(1)} \phi(v)$. We see that the since $w_0=\omega(1)$, the algorithm $\tilde{A}$ still prefers $u$ to $v'$. Therefore Lemma~\ref{lem:patchingstart} can be applied as well for studying the start of the routing algorithm $\tilde{A}$.

\para{End of routing process:} 
We have seen that as long as the visited vertices have real objective at most $\phi_0 \le \underline{c}\w t^{-1+\delta/2}$, the routing w.r.t.\ $\tilde{\phi}$ has the same properties as the original routing. It remains to study the end phase where the vertices have higher objective. We start by considering the proof of Theorem~\ref{thm:greedysuccess2}~(ii). There, we used $\phi_0 = \frac{1}{c_1 \w t}$. Since $\w t=\omega(1)$ by assumption, we have $\w t^{-1}=o(\w t^{-1+\delta})$. Hence, there is nothing to change. For adapting the proof of Theorem~\ref{thm:greedysuccess2}~(i), we recall that we are assuming that $\wmin$ is sufficiently large and thus $\w t$ as well. In particular, we can assume that $(c_1 \w t)^{-1} \le \underline{c} \w t^{-1+\delta/2}$. Then, we slightly change this proof by setting
$$\phi_0:=\min\{(c_1 \w t)^{-1}, \exp(-\wmin^{c'})\},$$
where $c'>0$ is the same large constant used previously in the proof of Theorem~\ref{thm:greedysuccess2}~(i).  
We distinguish two cases: If $\phi_0 = (c_1 \w t)^{-1}$, then our choice of $\phi_0$ is valid for applying the main lemma as well with relaxations. We also know by~\eqref{eq:relaxmainlemma} that in this case the algorithm $\tilde{A}$ reaches a vertex $u$ satisfying $\phi(u) \ge (c_1 \w t)^{-1}$. In this case, $p_{ut}=1$, and the routing finds $u$. On the other hand, if $\phi_0 = \exp(-\wmin^{c'})$, then it follows that $\w t = O(1)$. In this case, we need the sequence of additional layers defined via $\phi$, as given in the proof of the theorem, until we reach real objective $(c_1 \w t)^{-1}$. Notice that for all these vertices in-between the relaxation (\ref{eq:newrelax}) holds. That is, every vertex $v$ with $\phi_0 \le \phi(v) \le \frac{1}{c_1 \w t}$ satisfies $\tilde{\phi}(v) = \Theta(\phi(v))$. As the two objective functions are very similar in this additional set of layers, all arguments can be transferred to $\tilde{\phi}$, and then Theorem~\ref{thm:greedysuccess2} holds as well for the algorithm $\tilde{A}$.

For adapting the proof of Theorem~\ref{thm:greedysuccess1}, we assume again that $\underline{c}$ is chosen sufficiently small and $\overline{c}$ is chosen sufficiently large. Then we distinguish again two cases. If $c_1 \underline{c} \w t^{\delta/2} \ge 1$, we would like to choose $\phi_0 := (c_1 \w t)^{-1}$. First we have $\underline{c} \w t^{-1+\delta/2} \ge (c_1 \w t)^{-1}$, thus our choice is valid in terms of applying the main lemma with relaxations. Second, for $\underline{c}$ small enough it also follows that the error probability given by the main lemma is a sufficiently small constant. Hence we can take $\phi_0$ as claimed before, and by the same arguments as above we see that once the routing reaches a vertex $u$ of real objective at least $\phi_0$, it connects to $t$ with constant probability, which is already sufficient for the theorem. On the other hand, if $c_1 \underline{c} \w t^{\delta/2} < 1$, then $\w t = O(1)$. In this case, we choose $\phi_0=\Omega(1)$ small enough such that (a) $\phi_0$ is still a valid choice, i.e., $\phi_0 \le \underline{c} \w t^{-1+\delta/2}$, and such that (b) the error probability when applying Lemma~\ref{lem:main} is still small enough. Then we know that $\tilde{A}$ founds at least a vertex $u$ such that $\phi(u) \ge \phi_0=\Omega(1)$. However, since $c_1 \underline{c} \w t^{\delta/2} < 1$, it follows that all vertices with real objective at most $\overline{c} (c_1 \underline{c})^{2(1-\delta)/\delta}$ satisfy the relaxation (\ref{eq:newrelax}), which means that in this regime, $\phi$ and $\tilde{\phi}$ deviate by at most a bounded constant factor.  Recall that in the proof of Theorem~\ref{thm:greedysuccess1}, we constructed a superior set $A_t$. By picking $\overline{c}$ large enough, with constant probability $A_t$ contains vertices which still satisfy (\ref{eq:newrelax}). Then again, all arguments can be transferred to $\tilde{\phi}$.

For adapting the proof of Theorem~\ref{thm:length}, we slightly modify our choice of $\phi_0$ compared to the original proof and use $\phi_0 := \min\{(c_1 \w t)^{-1+\varepsilon_2/c},f_0^{-1+\varepsilon_2/c}\}=o(1)$. Note that since $\varepsilon_2=o(1)$, we satisfy $\phi_0 \le (\underline{c} \w t^{-1+\delta/2})$ and thus the choice is valid. If the minimum is attained with the first term, then as soon as $\tilde{A}$ reaches a vertex of real objective at least $\phi_0$, it will never visit a vertex of real objective less than $(c_1 \w t)^{-1}$ later on, and then the argument from the proof can be adapted. On the other hand, if the minimum is attained with the second term, then again $\tilde{A}$ will stay at vertices of real objective at least $f_0^{-1}$, once it reached a vertex of real objective at least $\phi_0$. And then again the proof of Theorem~\ref{thm:length} can be adapted. 

\para{Patching algorithms:} 
We first want to explain why the situation for patching algorithms is slightly different than for the basic routing algorithm, and why we can not transfer the proof of Theorem~\ref{thm:patching} in the case where $\tilde{\phi}$ satisfies only the weaker relaxation and $t$ can be an arbitrary vertex. Above, we have seen that if we only assume the weak relaxation \eqref{eq:weakrelaxation}, then we need to take $\phi_0 \le \underline{c} \w t^{-1+\delta/2}$ when applying the main lemma. However, our proof of Theorem~\ref{thm:patching} requires that $\phi_0=o(1)$ falls arbitrarily slow in $n$. We see that in a model where $\w t$ can be chosen arbitrarily large, our choice of $\phi_0$ is no longer valid when considering relaxations. Vice-versa, if $t$ is a random vertex, then a.a.s.\ the weight $\w t$ is small enough such that this conflict does not appear. Similarly, if we assume the stronger relaxation \eqref{eq:relaxation} for  \emph{all} vertices, then it is not difficult to observe that we get rid of the restriction $\phi_0 \le \underline{c} \w t^{-1+\delta/2}$ and our choice of $\phi_0$ is also valid when considering the relaxations.

For the start, we analyzed patching algorithms by applying Lemma~\ref{lem:patchingstart}. Above, we have already seen that this Lemma can be transferred for relaxed objectives $\tilde{\phi}$. For the main part of the routing, we argued before that under the additional assumptions, we still can apply the main lemma. Thus it remains to study the end of the routing process. There, we proved the existence of two specific vertices $u_j$ and $u_i$ on the Greedy path. We required that $\phi(u_j)$ and $\phi(u_i)$ lie in a specific range. We observe that since $u_j,u_i \in V_{\le \phi_0}$, the stronger relaxation \eqref{eq:relaxation} holds in the considered ranges. Therefore, $\tilde{\phi}$ deviates only by an $o(1)$-exponent from $\phi$, and we don't need to modify the arguments. Finally, we concluded the proof by finding a path from $u_i$ to $t$ inside a certain subset $B$. Clearly, this path is still present in the graph, and again the relaxations ensure that $B \subset V_{> \phi(u_j)}$.  This allows us to transfer the analysis of patching algorithms to relaxed objective functions $\tilde{\phi}$.
\end{proof}

\section{Proof: Greedy Routing on Hyperbolic Random Graphs} \label{sec:hyperbolic}
Hyperbolic random graphs were introduced by Krioukov et al.\ \cite{KrioukovPKVB} and have attracted a lot of attention during the last years, both in theoretical and experimental studies. Let us give a short formal explanation of the model, following the notation introduced by Gugelmann et al.\ \cite{gugelmann2012random}. It can be described by a disk $H$ of radius $R$ around the origin~$0$, where the position of every point $x$ is given by its polar coordinates $(r_x,\nu_x)$. The model is isotropic around the origin. Then the hyperbolic distance between two points $x$ and $y$ is given by the non-negative solution $d_H=d_H(x,y)$ of the equation
$$
\cosh(d_H) = \cosh(r_x)\cosh(r_y)-\sinh(r_x)\sinh(r_y)\cos(\nu_x-\nu_y).
$$
Then hyperbolic random graphs are defined as follows.
\begin{definition} \label{def:hyperrand}
Let $\alpha_H,T_H>0, C_H\in \R,n\in \N$, and set $R:=2\log n+C_H$. The hyperbolic random graph $G_{\alpha_H,C_H,T_H}(n)$ is a distribution on graphs with vertex set $V=[n]$ s.t.

\begin{itemize}
\item Every vertex $v \in [n]$ independently draws random coordinates $(r_v,\nu_v)$, where the angle $\nu_v$ is chosen uniformly at random in $[0,2\pi)$ and the radius $r_v \in [0,R]$ is random with density $f(r) := \frac{\alpha_H\sinh(\alpha_H r)}{\cosh(\alpha_H R)-1}$.
\item Every potential edge $e=\{u,v\}$, $u,v \in [n]$, is independently present with probability
$$p_H(d_H(u,v)) = \left(1+e^{\frac{1}{2T_H}(d_H(u,v)-R)}\right)^{-1}.$$
\end{itemize}

In the limit $T_H \rightarrow 0$, we obtain the threshold hyperbolic random graph $G_{\alpha_H,C_H}(n)$, where every edge $e=\{u,v\}$ is present if and only if $d_H(u,v) \le R$.
\end{definition}

In~\cite[Theorem~6.3]{bringmann2015euclideanGIRG} it was shown that such graphs are a special case of GIRGs. However, the mapping is a bit counterintuitive, so we describe it briefly here. In particular, although hyperbolic random graphs are defined on a two-dimensional hyperbolic disc, they give rise to a one-dimensional GIRG. The reason is that in GIRGs the weights provide one additional dimension. Notice that a single point on the hyperbolic disk has measure zero, so we can assume that no vertex has radius $r_v=0$. For the parameters, we put 
$$d:=1,\quad\beta := 2\alpha_H + 1,\quad\alpha := 1/T_H,\quad \wmin := e^{-C_H/2}.$$ 
Then the mapping that transforms hyperbolic random graphs into the GIRG framework is given by
$$\w{v} := ne^{-r_v/2} \quad\text{and}\quad \x{v} := \frac{\nu_v}{2\pi}.$$
Since this is a bijection between $H \setminus \{0\}$ and $[1,e^{R/2}) \times \mathds{T}^1$, there exists an inverse function $g(\w{u},\x{u})=(r_u,\nu_u)$. Finally for any two vertices $u \neq v$ on the torus, we set 
$$p_{uv} := p_H(d(g(u),g(v))) = p_H(d_H(g(\w{u},\x{u}),g(\w{v},\x{v}))).$$ 
This finishes our embedding. In \cite{bringmann2015euclideanGIRG} we proved that it is a special case of GIRGs, in a rigorous sense. Compared to~\cite{bringmann2015euclideanGIRG}, we have slightly changed the embedding such that we do not loose the freedom of choosing $\wmin$.

Greedy routing on hyperbolic random graphs is completely geometric: From a vertex $v$, the packet is sent to the neighbor $u \in \Gamma(v)$ which minimizes the (hyperbolic) distance $d_H(u,t)$. Notice that minimizing $d_H(u,t)$ is equivalent to maximizing $(\cosh(d_H(u,t)))^{-1/2}$, as the function $\cosh$ is monotone increasing on positive values, and we are free to multiply this expression by any terms which are independent of $u$. 

Let us assume that $d=1$ and that a target vertex $t$ is given. Then on such a GIRG, we can define the objective function \label{defn:phi_H}
$$\phi_H(v) := \frac{n}{\w t \wmin \sqrt{\cosh(d_H(g(v),g(t)))}},$$
where $g(v)$ and $g(t)$ are the vertices on the hyperbolic disk that correspond to $u$ and $t$. Suppose that the one-dimensional GIRG was obtained by embedding a hyperbolic random graph, as described above. Then, maximizing $\phi_H$ on the GIRG is the same as minimizing the hyperbolic distance to the target on the original hyperbolic graph. Thus routing w.r.t.\ $\phi_H$ on the GIRG is equivalent to geometric greedy routing on hyperbolic random graphs. The following lemma states that for $d=1$, with high probability, for most vertices the objective function $\phi_H$ differs by at most a constant factor from $\phi$.

\begin{lemma} \label{lem:hypobjective}
Let $d=1$ and let $\delta>0$ be a sufficiently small constant. Then with probability $1-n^{-\Omega(1)}$, for every vertex $v$ with $\phi(v) \le O(\w t^{-1+\delta})$ it holds
$$\phi_H(v) = \Theta(\phi(v)),$$
and for every vertex $v$ with $\phi(v) = \Omega(\w t^{-1+\delta})$ it holds
$$\phi_H(v) = \Omega(\w t^{-1+\delta}).$$
\end{lemma}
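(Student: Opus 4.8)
The plan is to compare $\phi_H(v)$ and $\phi(v)$ directly by expanding the hyperbolic cosine appearing in the definition of $\phi_H$. Recall that under the embedding we have $\w v = n e^{-r_v/2}$, so $e^{r_v/2} = n/\w v$ and likewise $e^{r_t/2} = n/\w t$, while $\nu_v = 2\pi \x v$ and $\nu_t = 2\pi \x t$. First I would write out
$$\cosh(d_H(g(v),g(t))) = \cosh r_v \cosh r_t - \sinh r_v \sinh r_t \cos(\nu_v - \nu_t),$$
and approximate $\cosh r \approx \sinh r \approx \tfrac12 e^{r}$ up to $1+o(1)$ factors (valid since a.a.s.\ no relevant vertex has $r_v$ close to $0$; the single point $r_v=0$ has measure zero, and for the statement we only need vertices with $\phi(v)$ not too small, which forces $r_v$ bounded away from $0$). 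This gives
$$\cosh d_H = \tfrac14 e^{r_v + r_t}\bigl(1 - \cos(\nu_v-\nu_t)\bigr)(1+o(1)) + O(e^{r_v+r_t} \cdot e^{-2\min\{r_v,r_t\}}/4),$$
so the leading term is $\tfrac14 e^{r_v+r_t}(1-\cos(\nu_v-\nu_t))$. Using $1 - \cos\theta = \Theta(\min\{|\theta|, 2\pi - |\theta|\}^2) = \Theta(\|\x v - \x t\|^2)$ for $d=1$ (recall $\|\cdot\|$ is the torus distance), we get $\cosh d_H = \Theta(e^{r_v+r_t}\|\x v - \x t\|^2)$ whenever this leading term dominates the lower-order correction.

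Second, I would substitute back: $\sqrt{\cosh d_H} = \Theta\bigl(e^{(r_v+r_t)/2}\|\x v - \x t\|\bigr) = \Theta\bigl(\tfrac{n}{\w v}\cdot\tfrac{n}{\w t}\cdot \|\x v - \x t\|\bigr)$, hence
$$\phi_H(v) = \frac{n}{\w t \wmin \sqrt{\cosh d_H}} = \Theta\!\left(\frac{n \cdot \w v \w t}{n^2 \w t \wmin \|\x v - \x t\|}\right) = \Theta\!\left(\frac{\w v}{\wmin n \|\x v - \x t\|}\right) = \Theta(\phi(v)),$$
since for $d=1$, $\phi(v) = \w v/(\wmin n \|\x v - \x t\|^d) = \w v/(\wmin n \|\x v - \x t\|)$. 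So the two objectives agree up to constants precisely as long as the leading term in $\cosh d_H$ dominates. The condition for that is $e^{r_v+r_t}\|\x v-\x t\|^2 \gg$ the correction terms, which (chasing through the substitutions) translates into $\|\x v - \x t\|^2 \gg \wmin^2 n^2/(\w v^2 \w t^2)$-type bounds, i.e.\ exactly the regime $\phi(v) = O(\w t^{-1+\delta})$ combined with $\phi(v)$ not being astronomically large; a short computation should confirm that $\phi(v) \le O(\w t^{-1+\delta})$ is the right cutoff, using that $\phi(v) = \Theta(\phi_H(v))$ forces $\|\x v - \x t\|$ bounded below which keeps us away from the degenerate angular regime. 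For the complementary case $\phi(v) = \Omega(\w t^{-1+\delta})$, I would argue that $\phi_H$ is monotone in the hyperbolic distance and a vertex that is geometrically that close to $t$ still has $d_H(g(v),g(t))$ small enough that $\phi_H(v) = \Omega(\w t^{-1+\delta})$; one can get this by noting $\cosh d_H \le \cosh r_v \cosh r_t + \sinh r_v \sinh r_t = \cosh(r_v+r_t) = O(e^{r_v+r_t}) = O(n^4/(\w v^2\w t^2))$ in the worst case, but more carefully, since $\|\x v - \x t\|$ is small we recover $\cosh d_H = O(n^4 \|\x v-\x t\|^2/(\w v^2 \w t^2)) + O(n^4 e^{-2\min r}/(\w v^2 \w t^2))$, and both terms are controlled, giving the stated lower bound on $\phi_H(v)$.

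The main obstacle I anticipate is the bookkeeping around the lower-order term $O(e^{r_v + r_t - 2\min\{r_v,r_t\}})$ in the expansion of $\cosh d_H$: one has to verify carefully that this term does not dominate in the claimed regime, and this is where the precise exponent $-1+\delta$ (rather than, say, $-1$) enters — the slack of $\delta$ in the exponent is exactly what buys enough room to absorb the correction. The probabilistic ingredient ($1-n^{-\Omega(1)}$) enters only through needing, say, no vertex to have $r_v$ within $O(\log n)$ of $0$ beyond what the density $f(r) \propto \sinh(\alpha_H r)$ permits — a routine union bound / first-moment calculation on the radial density — and through ensuring the relevant constants from the GIRG embedding (Theorem 6.3 of \cite{bringmann2015euclideanGIRG}) hold; everything else is deterministic real-analysis estimation. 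I would structure the proof as: (1) expand $\cosh d_H$ and isolate leading vs.\ correction terms; (2) show the correction is negligible when $\phi(v) = O(\w t^{-1+\delta})$, concluding $\phi_H = \Theta(\phi)$; (3) handle the large-$\phi$ regime by a direct upper bound on $\cosh d_H$; (4) collect the negligible-probability exceptional events.
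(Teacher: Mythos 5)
Your first part (the regime $\phi(v) = O(\w t^{-1+\delta})$) is essentially the paper's argument. The paper also decomposes $\cosh(d_H)$ into a leading term $\Theta(\nu_v^2 e^{r_v+r_t})$ plus a correction $\cosh(r_v-r_t)=O(e^{|r_v-r_t|})$ (the same quantity as your $O(e^{r_v+r_t-2\min\{r_v,r_t\}})$), uses a Taylor expansion of $1-\cos$, and needs the a.a.s.\ event ``no vertex has weight above $n^{1-2\delta}$'' (equivalently $r\ge 2\delta\log n$) so that $\sinh r\sim e^r/2$ and so that the weaker of the two angular conditions $\nu_v\ge e^{-r_v}$ holds automatically. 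Your stated cutoff condition ``$\|\x v-\x t\|^2\gg\wmin^2n^2/(\w v^2\w t^2)$'' is not quite the right form (the actual dominance condition chases out to $\|\x v-\x t\|^2\gtrsim\max\{\w v^4,\w t^4\}/n^4$, and it is the $\nu_v\ge e^{-r_t}$ half that genuinely needs $\phi(v)=O(\w t^{-1+\delta})$ together with $\w t\le n^{1-2\delta}$), but you flag this as a computation to be filled in, and it does close.

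For the second part you take a genuinely different route. You propose a direct two-term upper bound on $\cosh(d_H)$ and argue that whichever term dominates, $\phi_H(v)=\Omega(\w t^{-1+\delta})$. This does work after a short case analysis (when the correction $\Theta(e^{|r_v-r_t|})$ dominates, one gets $\phi_H(v)=\Omega(n\min\{1/\w v,\,\w v/\w t^2\}/\wmin)$, which is $\Omega(\w t^{-1+\delta})$ using $\w t^{1+\delta}=o(n)$ in one case and $\w v\le n^{1-2\delta}$ in the other). The paper instead introduces an auxiliary vertex $u$ with $\w u=\w v$ placed so that either $\phi(u)=\phi_0$ or $\|\x u-\x t\|=1/2$, applies the first part to $u$, and concludes by monotonicity of $d_H$ in the angular separation. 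The paper's comparison trick is cleaner because it sidesteps the case split on which term of $\cosh(d_H)$ dominates; your direct bound is more explicit but carries the bookkeeping you would need to discharge.
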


The statement of this lemma directly implies Corollary~\ref{cor:hyperbolic}.
\begin{proof}[Proof of Corollary~\ref{cor:hyperbolic}]
By Lemma~\ref{lem:hypobjective}, with probability $1-n^{-\Omega(1)}$ the objective function $\phi_H$ falls into the general class of objective functions considered in Theorem~\ref{thm:relaxations}. Note that the small error probability $n^{-\Omega(1)}$ coming from Lemma~\ref{lem:hypobjective} is negligible compared to the error probabilities of all previous theorems. Then the statement of the corollary follows directly from Theorem~\ref{thm:relaxations}, that is, our results transfer from GIRGs to hyperbolic random graphs. 
\end{proof}

\begin{proof}[Proof of Lemma~\ref{lem:hypobjective}]
Let $\delta>0$ and put $\phi_0 := O(\w t^{-1+\delta})$. For the proof of this lemma, we assume that there exists no vertex with weight larger than $n^{1-2\delta}$, which happens with probability $1-n^{-\Omega(1)}$ if $\delta$ is chosen small enough. Let $v \in \probSpace$ such that $\phi(v) \le \phi_0$, and denote by $(r_v, \nu_v) := g(\w v, \x v)$ and $(r_t, \nu_t) := g(\w t, \x t)$ the mappings of $v$ and $t$ onto the hyperbolic disk, and assume without loss of generality that $\nu_t=0$ and $\nu_v \le \pi$. We claim that on the hyperbolic disk, the assumption $\phi(v) \le \phi_0$ implies the geometric property 
\begin{equation} \label{eq:hypprop}
\nu_v \ge e^{-\min\{r_v,r_t\}}.
\end{equation}

We first show $\nu_v \ge e^{-r_v}$. By assumption $\phi(v)=O(1)$, and then for $n$ large enough our mapping implies
$$\nu_v = 2\pi \|\x v-\x t\|=\frac{2\pi\w v}{\wmin n \phi(v)} \ge \w v n^{-1-2\delta}.$$
Furthermore $\w v \le n^{1-2\delta}$ as there are no veritces of higher weight, and thus
$$\nu_v   \ge \w v n^{-1-2\delta} \ge \w v^2n^{-2}=e^{-r_v}.$$

Next, we observe that because there are no vertices of weight higher than $n^{1-2\delta}$, we have $\w t^{1+\delta} \le n^{(1+\delta)(1-2\delta)}=o(n)$ and thus $\w t^{1-\delta}n^{-1} = \omega(\w t^2 n^{-2})$. On the other hand we are assuming $\phi(v)=O(\w t^{-1+\delta})$. Then for $n$ large enough, we obtain
$$\nu_v = \frac{2\pi\w v}{\wmin n \phi(v)} \ge \frac{2\pi}{n\phi(v)} \ge \w t^2 n^{-2} = e^{-r_t}$$
which proves (\ref{eq:hypprop}).

In the following we estimate the hyperbolic cosine in terms of $\w v$, $\w t$ and $\|\x v - \x t\|$. We apply the definition of $d_h$ and the identity 
$$\cosh(x-y)=\cosh(x)\cosh(y)-\sinh(x)\sinh(y),$$ 
which yields
\begin{align*}
\cosh(d_H(g(v),g(t))) &= \cosh(r_v)\cosh(r_t)-\sinh(r_v)\sinh(r_t)\cos(\nu_v)\\
&= \cosh(r_v-r_t) + (1-\cos(\nu_v))\sinh(r_v)\sinh(r_t).
\end{align*}
Mapping our assumption $\w v \le n^{1-2\delta}$ on the hyperbolic geometry gives $r_v \ge 2\delta\log n$, therefore we get $\sinh(r_v)=(1-o(1))e^{r_v}/2$. Similarly it follows $\sinh(r_t)=(1-o(1))e^{r_t}/2$. Then
\begin{align*}
\cosh(d_H(g(v),g(t))) &= \cosh(r_v-r_t)+(1-\cos(\nu_v))(1-o(1))e^{r_v+r_t}/4\\&=\cosh(r_v-r_t)+\Theta(\nu_v^2 e^{r_v+r_t}),
\end{align*}
where we applied the Taylor approximation $1-\cos(\nu_v)=\nu_v^2/2-\nu_v^4/24+O(\nu_v^6)$. Finally, Equation~(\ref{eq:hypprop}) implies $\cosh(r_v-r_t) =O(\nu_v^2 e^{r_v+r_t})$. It follows
\begin{equation}\label{eq:hypcalc}
\phi_H(v) = \Theta\left(\frac{e^{r_t/2}}{\wmin \nu_v e^{(r_v+r_t)/2}}\right)=\Theta\left(\frac{n e^{-r_v/2}}{\wmin n \nu_v}\right)=\Theta\left(\frac{\w v}{\wmin n \|\x v - \x t\|^d}\right)=\Theta(\phi(v)).
\end{equation}
This proves the first part of the lemma. For the second part, let $v$ be a vertex such that $\phi(v) \ge \phi_0$. We try to add an artificial vertex $u$ of weight $\w u = \w v$ in the graph such that $\phi(u) = \phi_0$. If this is not possible because $\w v$ is too large, then we place $u$ such that $\|\x u - \x t\|=1/2$. In both cases $u$ has larger geometric distance to $t$ than $v$. If $\phi(u) = \phi_0$, we know by the first part of the proof that $\phi_H(u)=\Theta(\phi_0)$. In the second case, $\nu_u=\Omega(1)$ and therefore it holds
$$\cosh(d_H(g(u),g(t))) = \cosh(r_u - r_t) + \Theta(\nu_u^2 e^{r_u+r_t}) = \Theta(\nu_u^2 e^{r_u+r_t})= \Theta(e^{r_u+r_t}),$$
and by (\ref{eq:hypcalc}) again we obtain $\phi_H(u)=\Theta(\phi(u))=\Omega(\phi_0)$. Considering $g(v)$ and $g(u)$, we see that $\nu_u \ge \nu_v$. Since the hyperbolic distance in monotone decreasing in $\nu$, it follows $$d_H(g(v),g(t)) \le d_H(g(u),g(t))$$ and thus in both cases it holds $\phi_H(v) \ge \phi_H(u) = \Theta(\phi_0)$. 
\end{proof}

\clearpage
\bibliographystyle{plain}
\bibliography{../girg}

\end{document}